\newtheorem{theorem}{Theorem}[section]
\newtheorem{definition}{Definition}[section]
\newtheorem{lemma}{Lemma}[section]
\newtheorem{assumption}{Assumption}[section]
\newtheorem{proposition}{Proposition}[section]
\newtheorem{corollary}{Corollary}[section]
\newtheorem{remark}{Remark}[section]
\newtheorem{open problem}{Open Problem}
\newtheorem*{proof}{Proof}
\numberwithin{equation}{section}
\begin{document}

\title{Recursive contracts in non-convex environments\thanks{We thank Yucheng Yang and Tom Sargent as well as seminar participants at UZH for their comments. Shen and Zhou thank the support by the National Key R\&D Program of China with project number 2021YFA1001200, and the NSFC with grant number 12171013. Kubler thanks the SNF for financial support.}}
\author{Chengfeng Shen\footnote{Peking University. Email: \href{mailto:shencf1999@pku.edu.cn}{shencf1999@pku.edu.cn}} \and Felix K\"{u}bler\footnote{University of Zurich. Email: \href{mailto:fkubler@gmail.com}{fkubler@gmail.com}} \and Zhennan Zhou\footnote{Westlake University. Email: \href{mailto:zhouzhennan@westlake.edu.cn}{zhouzhennan@westlake.edu.cn}}}

		\newcommand\tbbint{{-\mkern -16mu\int}}
	\newcommand\tbint{{\mathchar '26\mkern -14mu\int}}
	\newcommand\dbbint{{-\mkern -19mu\int}}
	\newcommand\dbint{{\mathchar '26\mkern -18mu\int}}
	\newcommand\bint{
		{\mathchoice{\dbint}{\tbint}{\tbint}{\tbint}}
	}
	\newcommand\bbint{
		{\mathchoice{\dbbint}{\tbbint}{\tbbint}{\tbbint}}
	}
	\maketitle
\abstract{
In this paper we examine non-convex dynamic optimization problems with forward looking constraints. 
We prove that the recursive multiplier formulation in \cite{marcet2019recursive} gives the optimal value if  one assumes that the planner has access to a public randomization device and forward looking constraints only have to hold in expectations. Whether one formulates the functional equation as a sup-inf problem or as an inf-sup problem is essential for the timing of the optimal lottery and for determining which constraints have to hold in expectations.
We discuss for which economic problems the use of lotteries can be considered a reasonable assumption.
We provide a general method to recover the optimal policy from a solution of the functional equation.
As an application of our results, we consider the Ramsey problem of optimal government policy and give examples where lotteries are essential for the optimal solution.
}

\noindent \textit{JEL classification}: \\
%
\noindent \textit{Keywords}: Recursive contracts, lotteries, multiplier method, dual approach.

\clearpage
    
\section{Introduction}
In many economic problems (for example, in contracting problems where agents are subject to intertemporal incentive constraints,  or in models of optimal policy design where agents’ reactions to government policies must be taken as constraints) the \lq\lq natural state space\rq\rq \ no longer suffices to describe optimal solutions and future promised utilities must be included in the state space (see, for example, \cite{spear1987repeated}). This often makes the problem intractable for practical computations.
\cite{marcet2019recursive} provide an alternative approach for a recursive formulation of a large class of dynamic models with forward-looking constraints which has been widely used in applied work (see, among many others, \cite{cooley2004aggregate}, \cite{kehoe2002international}, \cite{attanasio2000consumption}, \cite{aiyagari2002optimal}).
Despite these applications, there remain fundamental issues with respect to
the applicability of the approach. 
In particular, \cite{marcet2019recursive} only consider the \lq\lq convex case\rq\rq \ where the objective function is concave and the constrained set convex.
In many important applications, this assumption is violated. In fact, one of the two examples that \cite{marcet2019recursive} provide to illustrate the practical relevance of their method violates the assumptions -- in the Ramsey model of optimal government policy, constraints consist of agents' first-order conditions and the feasible set is not convex.
Since the theoretical analysis in \cite{marcet2019recursive} assumes the existence of a saddle point, one might fear that the method cannot be used for the important non-convex case, and the results in \cite{pavoni2018dual} and in \cite{bloise2022negishi} seem to confirm this fear.

In this paper, we consider lottery solutions to problems with forward-looking constraints and non-convexities and show that versions of the functional equation in \cite{marcet2019recursive} can be used to obtain the optimal value of the optimization problem if the decision maker has a public randomization device (lotteries) at his disposal. The crucial economic question then is in what scenarios it is reasonable to require the forward-looking constraints to hold only \lq\lq in expectation\rq\rq. Forward looking constraints associated with time $t$ typically impose restrictions on actions at periods $ t,\,t+1, \ldots,\,t+N $, $ 1 \le N \le \infty $.
While constraints in the current period typically need to hold for all outcomes of the lottery, it seems natural to assume that after the first period each forward looking constraint only needs to hold in expectations. In most applications, these are constraints on agents' future utility or marginal utility, and the assumption of expected utility then implies that under randomization the constraints should hold in expectations.

Non-convex optimization problems are challenging to solve even in the absence of forward-looking constraint. The numerical difficulties that arise  have sparked substantial interest in exploring lottery solutions to models with non-convexities (see, e.g.  \cite{ myerson1982optimal,prescott1984pareto,stiglitz1982self} for classical contributions). In this approach, it is assumed that there is a public randomization device and decision makers can base their actions on the outcome of the randomization.
 This reformulation linearizes the constraints in the probability space and results in a convex problem.

Our paper makes four contributions.
First, we prove that the dual problem considered in \cite{pavoni2018dual} gives the correct value if one allows for arbitrary lotteries over  and all forward looking constraints need to hold only in expectations. Following \cite{pavoni2018dual}, we show that the dual problem can be formulated as a recursive functional equation (FE), similar to the one postulated in \cite{marcet2019recursive} even if the problem is not convex. We will refer to this as the inf-sup FE.

Unfortunately, allowing for arbitrary lotteries often has no economic interpretation. In particular, we show that for the Ramsey problem of optimal government policy (\cite{aiyagari2002optimal}), the (full) lottery solution does not have a meaningful economic interpretation because it requires agents' first order condition to only hold on average.
An economic interpretation can be restored if agents can make optimal choices in the current period depending on the outcome of the lottery in the current period but if the, due to the assumption of expected utility, next period's marginal utilities have to be equal to today's in expectations.
Our second contribution is to show that for this case, the dual formulation in \cite{pavoni2018dual} does not give the correct value of the problem but that if one modifies the functional equation from inf-sup to sup-inf, the correct (restricted) lottery solution is obtained. The resulting sup-inf FE can be analyzed with the same methods as the inf-sup FE.

As \cite{cole2012recursive} and \cite{marimon2021envelope} point out, it can be difficult to recover the optimal (non-lottery) policy from the solution of the functional equation if the problem is not {\it strictly convex.}  The introduction of lotteries transforms non-convex problems into convex ones, but obviously not into strictly convex problems.
Our third contribution is to develop a computational method to recover the optimal policy. The basic idea is based on the insight in \cite{marimon2021envelope} that establishes a relation between the subdifferential of the value function and possible utility promises. However, unlike \cite{marimon2021envelope} (who focus on the simple case without uncertainty) we provide an algorithm that chooses a promise from each subdifferential that yields the highest value of the objective function.

As an application of our results, we consider a Ramsey problem of optimal government policy with incomplete markets (as in \cite{aiyagari2002optimal}). The problem is obviously not convex -- however, that does not imply that for all specifications of parameters the optimal solution involves lotteries. 
The fourth contribution of our paper is to give an example to illustrate that lotteries might be crucial for the optimal solution in the Ramsey problem. In this example, the solution to the inf-sup functional equation is meaningless and the solution to the sup-inf equation gives the correct value of the lottery solution where the government can randomize over tax rates and government debt.

The remainder of the paper is organized as follows. After reviewing the existing literature, we provide a simple example to illustrate the main ideas in Section 2. Section 3 gives a general formulation of the problem and a recursive characterization of its dual.
Section 4 links the value of the dual problem to lotteries and shows that the sup-inf version of the functional equation gives the optimal value of a related, often economically more relevant, lottery problem. Section 5 discusses how to recover optimal policies, Section 6 considers the Ramsey model from \cite{aiyagari2002optimal} as an application.

\subsection{Related Literature}
Following \cite{spear1987repeated}, the
standard method to recursively solve dynamic contracting problems treated
the promised utility of the agent(s) as a state variable, and to use this
state variable to capture the implicit prior promises. 
\cite{marcet2019recursive} (the first working paper version was dated 1994) were the first to consider using Lagrange multipliers as part of the state.
\cite{pavoni2018dual} show how the dual problem can be characterized by a functional equation even in the absence of convexity. They are also the first to prove the existence of Lagrange multipliers (in an appropriate space) for the problem.

\cite{cole2012recursive} and \cite{marimon2021envelope} discuss how optimal policies can be recovered from a solution to the function equation. As we explain in detail below, our method to recover the optimal policy is based on their work.

\cite{bloise2022negishi} offer a seemingly alternative method to the problem based on the promised value approach. Although it is not clear whether their method can be made computationally practical, they show that in many scenarios it requires fewer assumptions about randomization than the dual approach by \cite{pavoni2018dual}. As we show in Appendix \ref{Bloise}, in our setup their approach turns out to be equivalent to considering the sup-inf functional equation.

There is a large literature on the use of lotteries in static non-convex problems. Lotteries may arise in constrained optimal allocations whenever non-convexities are present, as shown in applications ranging from stochastic tax schedules to labor market welfare programs (e.g., \cite{weiss1976desirability};    \cite{arnott1988randomization}; \cite{pavoni2007optimal}; 
\cite{gauthier2014value}). 
\cite{shen2025lagrangian} are the first to exploit the equivalence of the value of the dual (Lagrangian) problem and the lottery problem in economics and develop an efficient computational method to approximate optimal lotteries numerically. \\

\section{A simple example}
To illustrate the main results of the paper, we first present a very simple example.
Suppose a principal maximizes 
    $$ V^0= \max_{(c_t,l_t)_{t=0}^{\infty}}    \sum_{t=0}^{\infty} \beta^t (l_t-c_t), \  l_t \in [0,1], c_t \ge 0 \ \forall t $$
subject to a participation constraint of an agent who has non-convex preferences over consumption and labor,
    \begin{equation}
    \label{sexample-pc}
  \sum_{t=0}^{\infty} \beta^{t} ({c_{t}}^{\sigma} -  {l_{t}}^{\sigma} )\ge 0,  \ \sigma \in (0,1)
 \end{equation}

Following \cite{marcet2019recursive}, one can imagine the following two recursive formulations of the problem which we will refer to as the sup-inf and the inf-sup functional equations\footnote{In this simple example, they are very simple functional equations in $ W(.)$ -- for the general case this naming will prove more accurate.}.
$$
V^1=\max_{c,l}\min_{\gamma}\left[(l-c)+\gamma(c^{\sigma}-l^{\sigma})+\beta W(\gamma) \right],
$$
or
$$
V^2=\min_{\gamma}\max_{c,l} \left[(l-c)+\gamma(c^{\sigma}-l^{\sigma})+\beta W(\gamma)\right],
$$
where $ W(\gamma) $ is a solution to 
$$
W(\gamma)=\max_{c,l}\left[ (l-c)+\gamma(c^{\sigma}-l^{\sigma})+\beta W(\gamma) \right].
$$

In this section, we consider two examples for $ \sigma, \beta $.
In the first example, we have $ V^0=V^1<V^2 $, in the second example we have $ V^0<V^1=V^2$. These examples seem to cast doubt on the usefulness of the recursive multiplier method in a non-convex problem. However, the main result of this paper shows that both $ V^1 $ and $ V^2 $ are the correct value of the maximization problem if the principal has access to a public randomization device (which we refer to as lotteries). The difference between $V^1$ and $ V^2$ is due to a subtle difference in the timing between the realization of the lottery and the agent having to agree to participate in the contract. 
To make this clear, let ${\mathcal P}^{\infty}$ denote the space of probability measures on $ ([0,1] \times {\mathbb R}_+)^\infty$. One can imagine a contract that involves  (independent) lotteries every period and the agent observes the first realization of that lottery before entering the contract. We refer to this case as \lq\lq ex-post\rq\rq \ lotteries. The value of the principal is given by\footnote{The notation $c_0,l_0\in\text{supp}(P)$ indicates that the pair $(c_0, l_0)$ belongs to the support of $P$'s marginal distribution on $(c_0, l_0)$.}
  $$  V^{ep}=\max_{P\in\mathcal{P}^{\infty}} E^{ (c_t,l_t)_{t=0}^{\infty}  \sim P}  \sum_{t=0}^{\infty} \beta^t  (l_t-c_t) , $$
$$
  \text{s.t. }({c_{0}}^{\sigma} -  {l_{0}}^{\sigma} )+ E^{(c_t,l_t)_{t=1}^{\infty}  \sim P} \sum_{t=1}^{\infty} \beta^{t} ({c_{t}}^{\sigma} -  {l_{t}}^{\sigma} )\ge 0 \mbox{ for all } c_0,l_0 \in \mbox{supp}(P). $$
 Alternatively, one could assume that the agent has to agree to the contract before observing the first realization of the lottery. We refer to this as \lq\lq ex-ante\rq\rq \ lottery. The optimal value in this case is given by
  $$ V^{ea}= \max_{P\in\mathcal{P}^{\infty}} E^{ (c_t,l_0)_{t=0}^{\infty}  \sim P}  \sum_{t=0}^{\infty} \beta^t  (l_t-c_t) ,  $$
$$
  \text{s.t. }E^{(c_t,l_t)_{t=0}^{\infty}  \sim P} \sum_{t=1}^{\infty} \beta^{t} ({c_{t}}^{\sigma} -  {l_{t}}^{\sigma} )\ge 0, $$

In this simple example, the difference between ex ante and ex post lotteries seems benign. However, we will argue below that in many applications ex ante lotteries do not have a sensible economic interpretation while ex post contracts do.

In  Theorem \ref{thm:lot_dual_equiv} we will prove a general result that implies $ V^2=V^{ea}$ and in Theorem
\ref{thm:lot_dual_equiv-expost} we will  prove a result that implies  
$ V^1=V^{ep} $. The two variations of the functional equations of \cite{marcet2019recursive} have no relation to the original problem (lottery-free), but give the correct value when the planner has access to a randomization device. The following example gives a simple intuition for this result.

\subsection{Example 1}
Suppose $ \sigma=\frac{1}{2} $ and $ \beta < \frac{1}{2}$.
As we shall verify in the following, an optimal lottery-free solution is given by
$ l_0=1 $, and for $ t>0 $, $ l_t=0 $, and to ensure that the participation constraint holds, 
$ \frac{1}{1-\beta} \sqrt{c_t}=1 $, i.e. $ c_t=(1-\beta)^2$ for all $t$. This gives the principle a value of  $ V^0=\beta $.
It is easy to see that the use of ex-ante lotteries can improve upon this solution. The optimal ex ante lottery will turn out to be $ l_t \in \{ 0,1 \}  $ with equal probability every period. With $ c_t=0.25 $ for all $t$ the (PC) is always satisfied ex ante. The value of the principal's problem is $ 0.25 \frac{1}{1-\beta} $ which is larger than  $ \beta $ whenever $ \beta < \frac{1}{2} $.

The optimal value of this lottery solution can be obtained form the following inf sup FE.
$$
V^2=\min_{\gamma}\max_{c,l}\left[(l-c)+\gamma(\sqrt{c}-\sqrt{l})+\beta W(\gamma)\right],
$$
where $ W(\gamma) $ solves
$$
W(\gamma)=\max_{c,l} \left[(l-c)+\gamma(\sqrt{c}-\sqrt{l})+\beta W(\gamma)\right] .$$
It is easy to see that 
$$ W(\gamma)=\left\{ \begin{array}{ll} \frac{1}{1-\beta} \left( 1-\gamma+\frac{\gamma^2}{4} \right), & \gamma \le 1 \\
\frac{1}{1-\beta}\frac{\gamma^2}{4} & \gamma > 1 
 \end{array} \right.,$$
and that there are two optimal solutions,  $ (\gamma^1,c^1,l^1)=(1,\frac{1}{4},0) $, $ (\gamma^2,c^2,l^2)=(1,\frac{1}{4},1) $,
both giving a value of $ V^2=\frac{0.25}{1-\beta} $.
As \cite{shen2025lagrangian} show, it is not a coincidence that this is the same as the value of the (ex-ante) lottery solution to the original problem: It is easy to see that 
$$ V^2=\min_{\gamma} \max_{P \in {\mathcal P}} E^{(c,l) \sim P} \left[ (l-c)+\gamma(\sqrt{c}-\sqrt{l})+\beta W(\gamma) \right],
$$ since the optimal lottery only puts weight on actions that give the same value. Since the problem is linear in probabilities, duality holds, i.e.
$$ V^2= \max_{P \in {\mathcal P}} \min_{\gamma} E^{(c,l) \sim P} \left[(l-c)+\gamma(\sqrt{c}-\sqrt{l})+\beta W(\gamma) \right].
$$ But for the (linearized) lottery  problem, the results in \cite{marcet2019recursive} hold and we must have $ V^2=V^{ea} $.

The sup-inf FE turns out to be a bit more subtle. We have
$$
V^1=\max_{c,l}\min_{\gamma} \left[(l-c)+\gamma(\sqrt{c}-\sqrt{l})+\beta W(\gamma)\right],
$$
with $ W(\gamma)$ as above.
 Then it follows that for $ \gamma\ge 1 $, we have
 $ \gamma(c,l)=(\sqrt{l}-\sqrt{c})2\frac{1-\beta}{\beta} $ and $ c=(1-\beta)^2, l=1 $, maximizes
$ (l-c)+\gamma(c,l)(\sqrt{c}-\sqrt{l})+\beta W(\gamma(c,l))$  and $ \gamma=2(1-\beta) $ minimizes the expression for $c=(1-\beta)^2, l=1 $. Hence we obtain
$ V^1 = \beta $. As is shown in Theorem \ref{thm:lot_dual_equiv-expost} this is the optimal ex post lottery solution which in this case coincides with a feasible lottery-free allocation that therefore must also be optimal. Hence we must have $ V^0=V^1 $. This also verifies that the optimal lottery-free solution has a value of $ V^0=\beta $ since its value cannot exceed the one of the ex-post lottery solution. Therefore, in this example we have $ V^2>V^1=V^0$.
 
\subsection{Example 2}

Now suppose that
\begin{equation}\label{cond}
0<\beta<1/2, \,1-\frac{\beta}{2}=\sigma^{\frac{\sigma}{1-\sigma}}
\end{equation}
Note that in the previous case where $\sigma=1/2$, $\sigma^{\sigma/(1-\sigma)}=1/2$ and there is no $ \beta < 1/2 $ that satisfies the condition. On the other hand, when $\sigma\rightarrow0, \sigma^{\frac{\sigma}{1-\sigma}}\rightarrow1$ and we can find $\sigma\in(0,1/2)$ and $ \beta < 1/2 $ to ensure that  \eqref{cond} is satisfied. 

Similarly to the above, we obtain
$$
W(\gamma)=\begin{cases}
    \frac{1}{1-\beta}\left[(\sigma^{\frac{\sigma}{1-\sigma}}-\sigma^{\frac{1}{1-\sigma}})\gamma^{\frac{1}{1-\sigma}}+(1-\gamma)\right],&\gamma\le 1;\\
     \frac{1}{1-\beta}(\sigma^{\frac{\sigma}{1-\sigma}}-\sigma^{\frac{1}{1-\sigma}})\gamma^{\frac{1}{1-\sigma}},&\gamma>1.
\end{cases}
$$
Regarding  the problem for $V$, it is easy to verify that $(c,l)=(\sigma^{\frac{1}{1-\sigma}},1),\gamma=1$ is a saddle point: Given $\gamma=1$, we have that $(c,l)=(\sigma^{\frac{1}{1-\sigma}},1)$ is a maximizer; given $(c,l)=(\sigma^{\frac{1}{1-\sigma}},1)$, defining $F(\gamma)=(l-c)+\gamma(c^{\sigma}-l^{\sigma})+\beta W(\gamma)$, we obtain
$$
V=\frac{1}{1-\beta}(1-\frac{\beta}{2}-\sigma^{\frac{1}{1-\sigma}}),
$$
and one lottery solution is 
$$
l_0=1,p(l_t=0)=p(l_t=1)=1/2(t\ge 1), c_t\equiv\sigma^{\frac{1}{1-\sigma}}.
$$

    In this example, ex-ante  and ex-post lotteries are the same because in the max-min problem for $V$ we find a saddle point.
On the other hand,  it is easy to show that the deterministic (lottery-free) solution will not have the same value as the two lottery solutions, i.e. $ V^0<V^1=V^2 $.
We have
$$
\begin{aligned}
    &\sum_{t=0}^{\infty}\beta^t(l_t-c_t) & 
    \le &\sum_{t=0}^{\infty}\beta^tl_t^{\sigma}-\sum_{t=0}^{\infty}c_t& 
    \le &\sum_{t=0}^{\infty}\beta^tc_t^{\sigma}-\sum_{t=0}^{\infty}c_t\\
    =&\sum_{t=0}^{\infty}\beta^t(c_t^\sigma-c_t)& 
    \le &\sum_{t=0}^{\infty}\beta^t(\sigma^{\frac{\sigma}{1-\sigma}}-\sigma^{\frac{1}{1-\sigma}}) & 
    =&\frac{1}{1-\beta}(1-\frac{\beta}{2}-\sigma^{\frac{1}{1-\sigma}}).
\end{aligned}
$$

For all inequalities to hold as equalities, we must have $l_t\in\{0,1\} $ for all $t$ and
$$
\sum_{t=0}^{\infty}\beta^tl_t=\frac{1}{1-\beta}(1-\frac{\beta}{2}).
$$
However, this equality cannot be satisfied: when $l=(1,1,0,\cdots0,\cdots)$, 
$
\sum_{t=0}^{\infty}\beta^tl_t=1+\beta=\frac{1-\beta^2}{1-\beta}>\frac{1-\frac{\beta}{2}}{1-\beta}
$
and when $l=(1,0,1,1,\cdots)$, we have
$
\sum_{t=0}^{\infty}\beta^tl_t=1+\frac{\beta^2}{1-\beta}=\frac{1+\beta(\beta-1)}{1-\beta}<\frac{1-\frac{\beta}{2}}{1-\beta}.
$
These are the only two relevant cases and we obtain $ V^2=V^1 > V^0 $.
Although the sup-inf and the inf-sup FE's have the same value there does not exist a saddle point to the sequential problem and the optimal lottery-free solution cannot be characterized by a FE.
\section{A general formulation} \label{sec:formulation}

We consider an optimization problem with forward looking constraints similar to the problems in \cite{cole2012recursive} or \cite{marcet2019recursive}. Time is discrete and infinite, $ t=0,1,\ldots$.
Exogenous shocks $ (s_t) $ follow a Markov chain with transition $ \pi $ and realize in a finite set $ {\mathcal S}$.
A history of shocks up to some time $t$ is denoted by $ s^t=(s_0,\ldots,s_t) \in {\mathcal S}^t$.
The planner chooses actions contingent on each shock history, $ (a(s^t))_{ t \in \mathbb{N}, s^t \in {\mathcal S}^t} \in {\mathcal A}^{\infty} $. At each $ s^t $, the value of the physical state is denoted by $x(s^t)$, and an action, $ a$, is feasible  if $ p(x(s^t),a(s^t),s_t) \ge 0$ for any $t=0,1,\cdots,s^t\in\mathcal{S}^t$. The physical state evolves according to $ x(s^t)=\zeta(x(s^{t-1}),a(s^{t-1}),s_{t-1})$ for $ t=1,2,\ldots$, with $x(s^{0})=x_0$. 
Given $\gamma,x_0,s_0$, the optimization problem is as follows.
\begin{equation}\label{equ:CK}
    \begin{aligned}
      \max_{(a(s^t))\in \mathcal{A}^{\infty}\subset \ell^{\infty}} &\mathbb{E}_{s_0,x_0}\sum_{t=0}^{\infty}\beta^t\left(r(x(s^{t}),a(s^t),s_t)+\sum_{i=1}^{I}\gamma^ig^i(x(s^{t}),a(s^t),s_t)\right)\\
      \textbf{s.t. }&\mathbb{E}_{s_t}\sum_{n=0}^{N_i}\beta^n g^i(x(s^{t+n}),a(s^{t+n}),s_{t+n})\ge\bar{g}^i(x(s^{t}),a(s^t),s_t),\quad \forall t\in\mathbb{N},\,\forall s^t\in \mathcal{S}^t,\,\forall i\in\{1,\cdots,I\},\\
      \text{where }&x(s^{t+1})=\zeta(x(s^{t}),a(s^t),s_t)\text{ and }p(x(s^{t}),a(s^t),s_t)\ge 0,\quad\forall t\in \mathbb{N},\,s^t\in \mathcal{S}^t.
    \end{aligned}
\end{equation}
There are $ i=1,\ldots,I \ge 1 $ forward looking constraints, and each constraint $i$ restricts actions over the next $ N_i $ periods. For notational simplicity, we focus on the case where $ N_i=\infty $ for all $i=1,\ldots,I$. In Appendix \ref{app:finho} we prove that all our arguments can be trivially extended to the general case.
We also simply write $ \bar g^i $ instead of taking it as a function of the state and actions.
Throughout the paper, we make the following assumptions.
\begin{assumption}\label{ass:dynamic} \mbox{}
    \begin{enumerate}
        \item $(s_t)_{t=0}^{\infty}$ is a Markov process. The set of Markov states $\mathcal{S}$ is a finite set, and $\pi(s'|s)>0,\,\forall s,\,s'\in \mathcal{S}$.
        \item $\mathcal{A}\subset \mathbb{R}^n$ is a finite set, $\mathcal{X}\subseteq\mathbb{R}^m$ is a countable set. $\zeta$ is map from $\mathcal{X}\times \mathcal{A}\times \mathcal{S}$ to $\mathcal{X}$.
        \item The functions $r:\mathcal{X}\times \mathcal{A}\times \mathcal{S}\rightarrow\mathbb{R},$ and $g^i:\mathcal{X}\times \mathcal{A}\times \mathcal{S}\rightarrow\mathbb{R},\,\forall i\in\{1,\cdots,I\}$ are bounded. 
        \item $0<\beta<1$.
        \item For any $s_0\in \mathcal{S}$ and $x_0\in\mathcal{X}$, there exists a feasible point $(a(s^t))\in\mathcal{A}^{\infty}$ to \eqref{equ:CK}.
    \end{enumerate}
\end{assumption}
The assumption that $ {\mathcal A} $ is finite has no practical relevance since it can be arbitrarily large. This assumption is made for theoretical convenience and it guarantees that the constraints map to  $\ell^{\infty}$.

At each $t$, the set of possible pre-action histories up to $t$ is defined by
$$
 \mathcal{H}^t:=\mathcal{S}^{t+1}\times \mathcal{A}^{t},
$$
where any $h^t\in\mathcal{H}^t$ can be represented as
$$
h^t=(s_0,a_0,\cdots,s_{t-1},a_{t-1},s_t).
$$
Note that the history up to time $t$ includes the exogenous shock realized at $t$ but not the action at $t$.

We define the set of feasible actions given $x$ and $s$ as $ \tilde{\mathcal{A}}(x,s):=\{ a\in\mathcal{A} | p(x,a,s)\ge 0\}$.
For given $x_0\in\mathcal{X},\,s_0\in\mathcal{S}$, we define the set $\tilde{\mathcal{A}}^{\infty}=\tilde{\mathcal{A}}^{\infty}(x_0)\subset\mathcal{A}^{\infty}$ as
$$
\tilde{\mathcal{A}}^{\infty}(x_0):=\{(a_t(s^t))\in\mathcal{A}^{\infty}| a(s^t) \in \tilde{\mathcal A}(x(s^t),s_t) ,\, x(s^{t+1})=\zeta(x(s^t),a(s^t),s_t) \ (\forall t\in\mathbb{N},\,s^t\in\mathcal{S}^t)\},
$$
ensuring that the constraints $p(x_t,a_t(s^t),s_t)\ge 0$ hold for all $t$. We also sometime write $ x_t(a^{t-1})=x(s^t,a^{t-1}) $ to emphasize that the physical state at $ s^t $ depends on the history of actions up to $ t-1 $,
 $ a^{t-1}=(a_0,\ldots,a_{t-1})$.
  In the following, in particular in the following sections and in the appendices, we will often simply write $ x_t $ and $ a_t $ instead of $ x(s^t) $ and $ a(s^t) $, whenever there is no scope for confusion. 
 Note that by stationarity $(a(s^t))\in\tilde{\mathcal{A}}^{\infty}(x_0)$ if and only if $a(s_0)\in\tilde{\mathcal{A}}(x_0,s_0),$ and  for all $ t \ge 1 $, $a(s^t)\in\tilde{\mathcal{A}}^{\infty}(\zeta(x_0,a(s_0),s_0))$ for any $s_0\in\mathcal{S}$.

We define $\Lambda$ as the $\ell^1$ space of the \lq\lq non-normalized\rq\rq \  Lagrangian multipliers $((\beta^t\lambda^i(h^t)\pi^t(s^t|s_0))_{t\in \mathbb{N},\,h^t\in\mathcal{H}^t}$, i.e. 
\begin{equation}\label{Lag:space}(\lambda^i(h^t))\in\Lambda:=\{(\lambda^i(h^t)\ge 
0)|\sum_{t=0}^{\infty}\sum_{h^t\in\mathcal{H}^t}\sum_{i=1}^{I}\beta^t \lambda^i(h^t)\pi^t(s^t|s_0)<\infty\}.\end{equation} 

We define the Lagrangian functional as
\begin{equation}\label{equ:Lag_CK}
\small
\begin{aligned}
    &L((a(s^t)),(\lambda^i(h^t));(\gamma^i),x_0,s_0)\\
=&\mathbb{E}_{s_0,x_0}\sum_{t=0}^{\infty}\beta^t\left(r(x(s^t,a^{t-1}),a(s^t),s_t)+\sum_{i=1}^{I}\gamma^ig^i(x(s^t,a^{t-1}),a(s^t),s_t)\right.\\
&\left.+\sum_{i=1}^I \lambda^i(h^t)\left(\sum_{n=0}^{\infty}\beta^n g^i(x(s^{t+n},a^{t+n-1}),a(s^{t+n}),s_{t+n})-\bar{g}^i\right)\right).
\end{aligned}
\end{equation}
In this section, we show that the inf-sup problem for the Lagrangian functional can be solved recursively. Having established this, we proceed in Section 4 to prove that the inf-sup value corresponds to a Lagrangian dual of a system that is a minor variation of \eqref{equ:CK} (namely \eqref{equ:CK_equiv}) below . Consequently, it must also equal the optimal value of a lottery problem. 
\subsection{Recursive Formulations}
\subsubsection{Recursive Dual Value Function -- the inf-sup FE}
For any $\gamma\in\mathbb{R}_+^{I},\,x_0\in\mathcal{X},\,s_0\in \mathcal{S}$, we define the dual value function as
\begin{equation}\label{equ:dual_value_def}
D(\gamma,x_0,s_0)=\inf_{(\lambda^i(h^t))\in \Lambda} \sup_{(a(s^t))\in \tilde{\mathcal{A}}^{\infty}(x_0)} L((a(s^t)),(\lambda^i(h^t));(\gamma^i),x_0,s_0).
\end{equation}

We have the following theorem.
\begin{theorem}\label{thm:recursive1}
    For any $x\in \mathcal{X},\,s\in \mathcal{S},\,\gamma\in\mathbb{R}_+^{I}$, the dual value function $D(\gamma,x,s)$ defined in \eqref{equ:dual_value_def} satisfies the following recursive equation
    \begin{equation}\label{equ:recursive_dual}
        D(\gamma,x,s)=\inf_{\lambda\in \mathbb{R}_+^I}\sup_{a\in\tilde{\mathcal{A}}(x,s)}\left[\left(r(x,a,s)+\sum_{i=1}^{I}\left(\gamma^ig^i(x,a,s)+\lambda^i(g^i(x,a,s)-\bar{g}^i)\right) \right)+\beta\mathbb{E}_{s}D(\gamma+\lambda,x',s')\right],
        \end{equation}
        where $ x'=\zeta(x,a,s).$
\end{theorem}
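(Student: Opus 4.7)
The plan is to recast $L$ in the ``cumulative-multiplier'' form of \cite{marcet2019recursive} so that the natural state of the dual problem becomes visible, and then to prove \eqref{equ:recursive_dual} by the standard two-inequality dynamic programming argument adapted to the inf-sup ordering.

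First I would exchange the order of summation in the forward-looking cross term of \eqref{equ:Lag_CK}. Because $(\lambda^i(h^t))\in\Lambda$ and $r,g^i,\bar g^i$ are bounded by Assumption \ref{ass:dynamic}, the series defining $L$ is absolutely summable and Fubini applies. Collecting, at each date $\tau$, the contributions from all constraint dates $t\le\tau$ yields
\begin{equation*}
L=\mathbb{E}_{s_0,x_0}\sum_{\tau=0}^{\infty}\beta^{\tau}\Bigl[r(x_{\tau},a_{\tau},s_{\tau})+\sum_{i=1}^{I} M^{i}(h^{\tau})\,g^{i}(x_{\tau},a_{\tau},s_{\tau})-\sum_{i=1}^{I}\lambda^{i}(h^{\tau})\bar g^{i}\Bigr],
\end{equation*}
where $M^{i}(h^{\tau}):=\gamma^{i}+\sum_{t=0}^{\tau}\lambda^{i}(h^{t})$ accumulates multipliers along the subhistories of $h^{\tau}$. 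Splitting off $\tau=0$ and using stationarity of $\pi$ and of $\tilde{\mathcal A}^{\infty}$ (noted after Assumption \ref{ass:dynamic}), the $\tau\ge 1$ tail, conditional on $s_{1}$, is precisely the time-shifted Lagrangian at $(x',s_{1})$ with initial cumulative multiplier $\gamma+\lambda(s_{0})$, where $x'=\zeta(x_{0},a(s_{0}),s_{0})$. Thus the continuation inf-sup on each $s_{1}$-branch equals $D(\gamma+\lambda(s_{0}),x',s_{1})$ by definition.

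The Bellman equation then follows by two inequalities. For $\le$, fix any $\lambda_{0}\in\mathbb{R}_{+}^{I}$ and any $a_{0}\in\tilde{\mathcal A}(x_{0},s_{0})$; applying the definition of $D$ on each continuation branch and exchanging $\sup_{a_{0}}$ with $\mathbb{E}_{s_{1}\mid s_{0}}$ --- legitimate because $a_{0}$ is chosen before $s_{1}$ is realized and affects the continuation only through the deterministic transition $\zeta$ --- gives the bound after taking $\inf_{\lambda_{0}}\sup_{a_{0}}$. For $\ge$, given $\epsilon>0$ choose, for each of the finitely many $s_{1}\in\mathcal S$, an $\epsilon$-optimal continuation multiplier process $(\lambda^{\star}(h^{t};s_{1}))_{t\ge 1}$ for the shifted problem at $(x',s_{1})$ with initial multiplier $\gamma+\lambda_{0}$; paste these together with $\lambda_{0}$ into a single element of $\Lambda$, using the factorization $\pi^{t}(s^{t}\mid s_{0})=\pi(s_{1}\mid s_{0})\pi^{t-1}(s^{t}\mid s_{1})$ and the finiteness of $\mathcal S$ to preserve $\ell^{1}$-summability, and let $\epsilon\downarrow 0$.

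The main obstacle is making this pasting step rigorous, since $\Lambda$ is infinite-dimensional, unbounded, and non-compact, so the infimum need not be attained. The saving grace is that boundedness of $g^{i},\bar g^{i}$ controls the effect of any additional finite mass of multiplier uniformly, so the pasted process is a feasible element of $\Lambda$ whose Lagrangian value is within $O(\epsilon)$ of the branch-by-branch continuation values; a uniform bound across the finite set $\mathcal S$ then lets $\epsilon\downarrow 0$ without measurable-selection machinery. The interchange $\sup_{a_{0}}\leftrightarrow\mathbb{E}_{s_{1}\mid s_{0}}$ is the only remaining subtlety and is handled, as noted, by the deterministic dependence of $x'$ on $a_{0}$ combined with the finiteness of $\mathcal S$.
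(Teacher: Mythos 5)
Your proposal is correct and follows essentially the same route as the paper: the Fubini regrouping into cumulative multipliers and the split at $t=0$ reproduce the paper's computation, and your two-inequality argument ($\epsilon$-optimal pasting of continuation multipliers on one side, weak duality plus branchwise decomposition on the other) is precisely the content of the paper's Lemma \ref{lem:tech_infsup}, which the paper isolates in order to interchange the infimum over the $a_0$-dependent continuation multipliers with $\sup_{a_0}$. Two small corrections: your direction labels are swapped relative to the natural reading of the theorem (pasting a specific multiplier process into the infimum defining $D$ yields $D\le$ RHS, while evaluating an arbitrary multiplier process against the branchwise infima yields $D\ge$ RHS), and the $\epsilon$-optimal continuation processes must be selected per $(a_0,s_1)$ pair rather than per $s_1$ alone, since $x'=\zeta(x_0,a_0,s_0)$ depends on $a_0$. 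This last dependence of the selection on $a_0$ --- harmless here because $\mathcal{A}$ is finite and $h^t$ records $a_0$, so the pasted process is a legitimate element of $\Lambda$ --- is the genuine subtlety that Lemma \ref{lem:tech_infsup} is designed to handle; the interchange of $\sup_{a_0}$ with $\mathbb{E}_{s_1\mid s_0}$ that you flag as the remaining difficulty is not actually an interchange at all, since the expectation sits inside the supremum throughout.
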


 The spirit of the proof is the same as in the proof for Proposition 3 in \cite{pavoni2018dual}. The proof does not rely on a min-max theorem  but instead, it relies only on the following lemma, which allows us to pass each inf operator to the front of the expression.
 \begin{lemma}\label{lem:tech_infsup}
    Suppose that $X,\,Y$ are subsets of Banach spaces, and
    $
    f:X\times Y\rightarrow\mathbb{R}
    $ is an arbitrary functional, $\Gamma:=\{\boldsymbol{y}:X\rightarrow Y\}$. Suppose that $\tilde{X}\subseteq X$, 
    then
    $$
    \inf_{\boldsymbol{y}\in \Gamma}\sup_{x\in \tilde{X}}f(x,\boldsymbol{y}(x))=\sup_{x\in \tilde{X}}\inf_{\boldsymbol{y}\in \Gamma}f(x,\boldsymbol{y}(x)).
    $$
\end{lemma}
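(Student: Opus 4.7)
The plan is to prove the equality by establishing the two inequalities separately. The key structural observation is that $\Gamma$ is the set of \emph{all} functions $X \to Y$, with no measurability, continuity, or regularity constraint imposed. This is what allows $\boldsymbol{y}(x)$ to be chosen pointwise in $x$, which is the engine driving the interchange.

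First I would simplify the right-hand side. For any fixed $x \in \tilde{X}$, as $\boldsymbol{y}$ ranges over $\Gamma$, the value $\boldsymbol{y}(x)$ ranges over all of $Y$ (pick any $y \in Y$ and define $\boldsymbol{y}$ to be the constant map with that value). Hence
$$\inf_{\boldsymbol{y} \in \Gamma} f(x, \boldsymbol{y}(x)) \;=\; \inf_{y \in Y} f(x, y),$$
so the right-hand side equals $V := \sup_{x \in \tilde{X}} \inf_{y \in Y} f(x, y)$.

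The direction LHS $\ge$ RHS is the standard weak max--min inequality: for every $\boldsymbol{y} \in \Gamma$ and every $x \in \tilde{X}$, $f(x,\boldsymbol{y}(x)) \ge \inf_{y \in Y} f(x,y)$; taking $\sup_{x \in \tilde{X}}$ on both sides and then $\inf_{\boldsymbol{y}}$ preserves the inequality.

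The nontrivial direction LHS $\le V$ is an approximate-selector argument. Fix $\varepsilon > 0$. For each $x \in \tilde{X}$, choose $y_x \in Y$ as follows: if $V$ is finite, pick $y_x$ with $f(x,y_x) \le V + \varepsilon$ (possible because $\inf_y f(x,y) \le V$, so by definition of the infimum we can get within $\varepsilon$ of it, and the resulting value is at most $V + \varepsilon$; the case $\inf_y f(x,y) = -\infty$ is even easier since any value is available); if $V = -\infty$, pick $y_x$ with $f(x,y_x) \le -1/\varepsilon$, which is available since $\inf_y f(x,y) = -\infty$ at every $x \in \tilde{X}$; the case $V = +\infty$ is vacuous. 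Define $\boldsymbol{y}_\varepsilon \in \Gamma$ by $\boldsymbol{y}_\varepsilon(x) = y_x$ on $\tilde{X}$ and arbitrarily on $X \setminus \tilde{X}$. Then $\sup_{x \in \tilde{X}} f(x, \boldsymbol{y}_\varepsilon(x)) \le V + \varepsilon$, whence LHS $\le V + \varepsilon$; letting $\varepsilon \downarrow 0$ gives LHS $\le V$.

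The only potential obstacle is the bookkeeping for the degenerate cases $V \in \{\pm\infty\}$, but both dissolve transparently as indicated. No min--max theorem or topological assumption on $f$, $X$, $Y$ is needed, and no selection-theoretic issue arises precisely because $\Gamma$ is the space of \emph{arbitrary} selectors; this is exactly the feature that will later make it legitimate to collapse the infimum over sequences of multipliers into a pointwise infimum inside the Bellman recursion of Theorem~\ref{thm:recursive1}.
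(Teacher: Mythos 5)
Your proposal is correct and follows essentially the same route as the paper: the nontrivial inequality is obtained by constructing an $\varepsilon$-optimal pointwise selector $\boldsymbol{y}_\varepsilon\in\Gamma$ (legitimate precisely because $\Gamma$ contains all functions $X\to Y$) and letting $\varepsilon\downarrow 0$. Your explicit handling of the cases $V=\pm\infty$ is slightly more careful than the paper's, which tacitly writes $\inf_{y}f(x,y)+\epsilon$ as if finite, but this is only a bookkeeping refinement of the same argument.
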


\subsubsection{The sup-inf functional equation}
What would happen if one interchanged the inf and sup operator in the inf-sup functional equation
(\ref{equ:recursive_dual})? Clearly, the optimal value is still related to the dual formulation of the problem (and generally not to the primal). To understand the issue in more detail, we define the post-action history set slightly differently as 
$$
\tilde{\mathcal{H}}^t=\mathcal{S}^{t+1}\times \mathcal{A}^{t+1}, 
$$
where any $\tilde{h}^t\in \tilde{\mathcal{H}}$ is represented as
$$
\tilde{h}^t=(s_0,a_0,\cdots,s_t,a_{t}),\text{ or }\tilde{h}^t=(s^t,a^t),\text{ where } a^t=(a_0,\cdots,a_{t}).
$$
The difference between $ h^t $ and $ \tilde{h}^t$ lies in the fact that $ \tilde{h}^t $ includes the history until time $t$ including the shock {\it and } the action at $t$ while $ h^t $ only includes the shock (and history until time $t$).

We allow the Lagrange multiplier associated with the forward looking constraint starting at $t$ to depend on $ \tilde{h}^t $ (and not only on $ h^t $),
and define $ \tilde \Lambda$ as the $\ell^1$ space of the \lq\lq non-normalized\rq\rq \  Lagrangian multipliers $((\beta^t\lambda^i(\tilde{h}^t)\pi^t(s^t|s_0))_{t\in \mathbb{N},\,\tilde{h}^t\in \tilde{\mathcal{H}}^t}$ analogously to (\ref{Lag:space}). The Lagrangian function then becomes
\begin{equation}\label{equ:Lag_CKextend}
\small
\begin{aligned}
    &\tilde{L}((a(s^t)),(\lambda^i(\tilde{h}^t));(\gamma^i),x_0,s_0)\\
=&\mathbb{E}_{s_0}\sum_{t=0}^{\infty}\beta^t\left(r(x(s^t,a^{t-1}),a(s^t),s_t)+\sum_{i=1}^{I}\gamma^ig^i(x(s^t,a^{t-1}),a(s^t),s_t)\right.\\
&\left.+\sum_{i=1}^I\lambda^i(\tilde h^t)\left(\sum_{n=0}^{\infty}\beta^n g^i(x(s^{t+n},a^{t+n-1}),a(s^{t+n}),s_{t+n})-\bar{g}^i\right)\right).
\end{aligned}
\end{equation}
For any $\gamma\in\mathbb{R}_{+}^{I},\,x_0\in \mathcal{X},\,s_0\in \mathcal{S}$, we define the dual value function as
\begin{equation}\label{equ:dual_value_def_extend}
\tilde{D}(\gamma,x_0,s_0)=\inf_{(\lambda^i(\tilde{h}^t))\in\tilde{\Lambda}}\sup_{(a(s^t)\in\tilde{\mathcal{A}}^{\infty}(x_0)}\tilde{L}((a(s^t)),(\lambda^i(\tilde{h}^t));(\gamma^i),x_0,s_0).
\end{equation}

The following theorem follows directly from Lemma \ref{lem:tech_infsup} and it motivates our interest in the sup-inf functional equation.
\begin{theorem}\label{thm:recursive2}
    For any $x\in \mathcal{X},\,s\in \mathcal{S},\,\gamma\in\mathbb{R}_+^{I}$, the dual value function $\tilde{D}(x,\gamma,s_0)$ defined in \eqref{equ:dual_value_def_extend} satisfies the following recursive equation
    \begin{equation}\label{equ:recursive_dual_extend}
 \tilde{D}(\gamma,x,s)=\sup_{a\in\tilde{\mathcal{A}}(x,s)}\inf_{\lambda\in \mathbb{R}_+^I}\left[\left(r(x,a,s)+\sum_{i=1}^{I}\left(\gamma^ig^i(x,a,s)+\lambda^i(g^i(x,a,s)-\bar{g}^i)\right)\right)+\beta\mathbb{E}_{s}\tilde{D}(\gamma+\lambda,x',s')\right],\\
    \end{equation}
    where $x'=\zeta(x,a,s)$.
\end{theorem}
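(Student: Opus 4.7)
The plan is to follow the same template as the proof of Theorem~\ref{thm:recursive1}, applying Lemma~\ref{lem:tech_infsup} at the key step, but exploiting the fact that multipliers in $\tilde\Lambda$ are indexed by the post-action history $\tilde h^t=(s^t,a^t)$ rather than by $h^t=(s^t,a^{t-1})$.

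First I would establish a preliminary recursion that parallels \eqref{equ:recursive_dual} but retains the period-zero multiplier as a function of the period-zero action. Splitting $\tilde L$ along the time axis via Fubini on the double sum $\sum_{t}\beta^t\lambda^i(\tilde h^t)\sum_{n}\beta^n g^i_{t+n}$, applying Bellman's principle together with stationarity of the Markov chain from Assumption~\ref{ass:dynamic}, and observing that the continuation starting at $t=1$ (with $\gamma$ replaced by $\gamma+\lambda^i(\tilde h^0)$ and initial state $(x',s')=(\zeta(x,a,s),s')$) is itself an instance of the same infinite-horizon problem, yields
$$\tilde D(\gamma,x,s)=\inf_{\lambda:\tilde{\mathcal{A}}(x,s)\to\mathbb{R}_+^I}\;\sup_{a\in\tilde{\mathcal{A}}(x,s)}\Phi(a,\lambda(a)),$$
where
$$\Phi(a,\lambda):=r(x,a,s)+\sum_{i=1}^I\Bigl[\gamma^ig^i(x,a,s)+\lambda^i\bigl(g^i(x,a,s)-\bar g^i\bigr)\Bigr]+\beta\,\mathbb{E}_s\tilde D(\gamma+\lambda,x',s').$$
The only formal difference from the argument leading to \eqref{equ:recursive_dual} is that the period-zero multiplier, being indexed by $\tilde h^0=(s_0,a_0)$, now genuinely depends on $a$.

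Second, I would apply Lemma~\ref{lem:tech_infsup} with $X=\tilde{\mathcal{A}}(x,s)$, $Y=\mathbb{R}_+^I$, $\Gamma=\{a\mapsto\lambda(a)\}$ and $f(a,\lambda)=\Phi(a,\lambda)$. Since $\Phi$ is well-defined on the product $X\times Y$ and $X$ is finite (so every map $X\to Y$ is admissible), the lemma gives
$$\inf_{\lambda(\cdot)\in\Gamma}\sup_{a\in X}\Phi(a,\lambda(a))=\sup_{a\in X}\inf_{\lambda\in\mathbb{R}_+^I}\Phi(a,\lambda),$$
which is precisely \eqref{equ:recursive_dual_extend}.

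The main obstacle is the preliminary recursion itself: establishing it requires verifying that, once $\lambda_0(\cdot)$ and $a_0$ are fixed, the joint inf-sup over the tail $(\lambda_t)_{t\ge 1}$ in the appropriate shifted slice of $\tilde\Lambda$ and over $(a_t)_{t\ge 1}$ collapses to $\beta\,\mathbb{E}_s\tilde D(\gamma+\lambda_0(a_0),x',s')$. This is essentially the Bellman step carried out in the proof of Theorem~\ref{thm:recursive1}, and the same ingredients---boundedness of $r$ and $g^i$, discounting $\beta<1$, $\ell^1$ summability of multipliers in $\tilde\Lambda$, and Markov stationarity---apply here verbatim. Once this is in place, the swap supplied by Lemma~\ref{lem:tech_infsup} is essentially free, because the enlarged multiplier space $\tilde\Lambda$ was constructed precisely so that the time-zero $\lambda$ may depend on the time-zero action.
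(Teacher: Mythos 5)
Your proposal matches the paper's proof: the paper likewise first derives the intermediate recursion with the period-zero multiplier as a map $\lambda:\mathcal{A}\to\mathbb{R}_+^I$ (by repeating the argument of Theorem \ref{thm:recursive1} with post-action histories) and then applies Lemma \ref{lem:tech_infsup} to exchange $\inf_{\lambda(\cdot)}\sup_a$ for $\sup_a\inf_{\lambda}$. Both the decomposition and the key lemma are the same, so this is essentially the paper's argument.
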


\subsection{Existence of a solution to the FE}
 We now establish the existence of solutions to both functional equations. Although both inf-sup and sup-inf Bellman operators exhibit monotonicity and discounting properties, the unbounded nature of Lagrangian multipliers and the resulting unboundedness of the dual value function preclude the direct application of standard contraction arguments from \cite{stokey1989recursive}. 
\cite{bloise2022negishi} demonstrate the existence of fixed points for the Negishi operator using Tarski's fixed point theorem\footnote{However, Tarski's fixed point theorem is generally non-constructive when the continuity of the Negishi operator remains unverified. For instance, $f:[0,1]\rightarrow[0,1]$ defined as
$$
f(x)=\begin{cases}
    \frac{1}{2}+(x-\frac{1}{2})^2,&\frac{1}{2}<x\le 1\\
    x^2,&0\le x\le 1/2
\end{cases}
$$ is a monotone operator, and the existence of fixed points($x^{FP}=0$) can be guarenteed by the Tarski's fixed point theorem. However, $f^{(\infty)}(1)=\frac{1}{2}$ is not a fixed point of $f$, due to the non-continuity of $f$ at $x=\frac{1}{2}$.}. Alternatively, \cite{pavoni2018dual} employed Thompson's metric to establish contraction, requiring the additional technical assumption that there exists some function $F$ satisfying $\mathcal{B}(F) > F + \|\gamma\|\epsilon$. The framework achieves a contraction factor of $1 - O(\epsilon)$ (under Thompson's metric).
We develop a constructive proof for the existence of fixed points that directly leverages the monotonicity structure of the Bellman operator. Our method provides an explicit iterative scheme that converges to the solution while avoiding the additional assumptions required in \cite{pavoni2018dual}.
 
Throughout, we want to require the value function to be Lipschitz continuous with respect to the multiplier and we define an upper bound on the Lipschitz constant as $$L=\frac{\|r\|_{\infty}+\sum_{i=1}^{I}\|g_i\|_{\infty}}{1-\beta}.$$
\begin{definition}\label{def:function_space}
For any $k>0$, we define 
$$
B(k)=\{\gamma\in\mathbb{R}_+^I:\|\gamma\|_{\infty}\le k\}.
$$ We define the function space
$$
\begin{aligned}
    \mathcal{M}:=&\{F:\mathbb{R}_+^{I}\times \mathcal{X}\times\mathcal{S}\rightarrow\mathbb{R}:\\
    &\text{(i)} F(\cdot,x,s)\in L^{\infty}(B(k)),\,\forall k\in\mathbb{N}_+,x\in\mathcal{X},\,s\in\mathcal{S}; \\
    &\text{(ii) }\sum_{i=1}^{|\mathcal{S}|}\frac{1}{2^i}\left[\sum_{j=1}^{\infty}\frac{1}{2^j}\left(\sum_{k=1}^{\infty}\frac{1}{2^{k}}\|F(\cdot,x_j,s_i)\|_{L^{\infty}(B(k))}\right)\right]<\infty\}.
\end{aligned}
$$
For any $F\in\mathcal{M}$, the norm\footnote{Convergence in this norm implies local uniform convergence.} of $F$ is defined by 
$$
\|F\|_{\mathcal{M}}:=\sum_{i=1}^{|\mathcal{S}|}\frac{1}{2^i}\left[\sum_{j=1}^{\infty}\frac{1}{2^j}\left(\sum_{k=1}^{\infty}\frac{1}{2^{k}}\|F(\cdot,x_j,s_i)\|_{L^{\infty}(B(k))}\right)\right].
$$
\end{definition}
It is easy to verify that the space $(\mathcal{M},\|\cdot\|_{\mathcal{M}})$ is a complete metric space.
\begin{definition}\label{def:NinM}
    We define the subset $\mathcal{N}\subset\mathcal{M}$ as follows
   \begin{align}
    \mathcal{N}:=&\{F\in \mathcal{M}: \notag \\
    &\text{(i) }F(\cdot,x,s) \text{ is convex};\text{(ii)}F(\cdot,x,s) \text{ is $L$-Lipschitz};\\&\text{(iii)} F(\gamma,x,s)\ge v^0+\sum_{i=1}^{I}\gamma^iv^i, \label{eq:F_inequality} \\
    &\text{where} 
    \begin{cases}
        v^0=\mathbb{E}_{s_0=s}\sum_{t=0}^{\infty}\beta^tr(x_t,a_t(s^t),s_t); \\
        v^i=\mathbb{E}_{s_0=s}\sum_{t=0}^{\infty}\beta^tg^i(x_t,a_t(s^t),s_t),\quad i\in\{1,\cdots,I\},
    \end{cases} \notag \\
    &\text{for any feasible }(a_t(s^t))_{t\ge 0} \text{ to problem \eqref{equ:CK} with }x_0=x,\,s_0=s\\  &\text{(iv)}F(\gamma,x,s)\le (1+\sum_{i=1}^I\gamma^i)L\}. \label{eq:F_inequality_upper}
\end{align}
\end{definition}
Note that (iii) implies that 
    $$
    F(\gamma,x,s)\ge -(1+\sum_{i=1}^{I}\gamma^i)L.         
    $$
It is straightforward to verify that 
    $\mathcal{N}$ is a closed subset of $\mathcal{M}$.
We define the Bellman operator as follows.
\begin{definition}\label{def:bellman_operator}
    Let $\mathcal{F}$ denote the set of functions $F:\mathbb{R}_+^{I}\times\mathcal{X}\times \mathcal{S}\rightarrow\mathbb{R}$. The Bellman operator $\mathcal{B}:\mathcal{F}\rightarrow\mathcal{F}$ is defined by
    $$
    \mathcal{B}(F)(\gamma,x,s):=\inf_{\lambda\in \mathbb{R}_+^I}\sup_{a\in\tilde{\mathcal{A}}(x,s)}\left[\left(r(a,x,s)+\sum_{i=1}^{I} \left( \gamma^ig^i(a,x,s)+\lambda^i(g^i(a,x,s)-\bar{g}^i)\right)\right)+\beta\mathbb{E}_{s}F(\gamma+\lambda,x',s')\right]
    $$
    where $x'=\zeta(x,a,s)$.
\end{definition}
The following monotonicity property of $\mathcal{B}$ can be verified directly:
\begin{lemma}\label{lem:monotonicity}
    For $F_1,\,F_2\in\mathcal{M}$, s.t. $F_1(x,\gamma,s)\le F_2(x,\gamma,s),\,\forall (x,\gamma,s)\in\mathcal{X}\times\mathbb{R}_+^{I}\times \mathcal{S}$, we have $\mathcal{B}(F_1)(x,\gamma,s)\le \mathcal{B}(F_2)(x,\gamma,s),\,\forall (x,\gamma,s)\in \mathcal{X}\times\mathbb{R}_+^{I}\times \mathcal{S}$.
\end{lemma}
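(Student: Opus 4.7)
\textbf{Plan for Lemma \ref{lem:monotonicity}.} The strategy is the standard monotonicity argument for Bellman-type operators: since $F$ appears in the defining expression for $\mathcal{B}(F)$ only through the continuation term $\beta\mathbb{E}_s F(\gamma+\lambda, x', s')$, and since both $\sup$ and $\inf$ preserve pointwise inequalities, the inequality $F_1 \le F_2$ propagates directly through the operator.

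Concretely, first I would fix $(\gamma, x, s) \in \mathbb{R}_+^I \times \mathcal{X} \times \mathcal{S}$ and abbreviate the instantaneous return by $R(a, \lambda) := r(a,x,s) + \sum_{i=1}^I\bigl(\gamma^i g^i(a,x,s) + \lambda^i(g^i(a,x,s) - \bar g^i)\bigr)$, which does not depend on $F$. For each fixed $\lambda \in \mathbb{R}_+^I$ and $a \in \tilde{\mathcal{A}}(x,s)$, setting $x' = \zeta(x,a,s)$ and using the hypothesis $F_1 \le F_2$ together with nonnegativity of the Markov transition probabilities gives
\[
R(a,\lambda) + \beta\mathbb{E}_s F_1(\gamma+\lambda, x', s') \;\le\; R(a,\lambda) + \beta\mathbb{E}_s F_2(\gamma+\lambda, x', s').
\]

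Next, I would take $\sup_{a \in \tilde{\mathcal{A}}(x,s)}$ on both sides. Since the supremum of a pointwise smaller family is smaller (this is immediate from the definition), we obtain, for every $\lambda$,
\[
\sup_{a \in \tilde{\mathcal{A}}(x,s)} \Bigl[R(a,\lambda) + \beta\mathbb{E}_s F_1(\gamma+\lambda, x', s')\Bigr] \;\le\; \sup_{a \in \tilde{\mathcal{A}}(x,s)} \Bigl[R(a,\lambda) + \beta\mathbb{E}_s F_2(\gamma+\lambda, x', s')\Bigr].
\]
Finally, taking $\inf_{\lambda \in \mathbb{R}_+^I}$ on both sides preserves the inequality (again by the definition of infimum applied to a pointwise smaller function of $\lambda$), yielding $\mathcal{B}(F_1)(\gamma,x,s) \le \mathcal{B}(F_2)(\gamma,x,s)$, which is what we needed.

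There is no real obstacle here: the only substantive observation is that both $\sup$ and $\inf$ are monotone in the integrand, and the reward term $R(a,\lambda)$ cancels out of the comparison since it is independent of $F$. One minor thing to keep in mind is simply that the inequality must hold for \emph{all} $(\gamma, x, s)$, but since the argument above is pointwise in $(\gamma, x, s)$ and uses only the assumed pointwise inequality $F_1 \le F_2$ (evaluated at the shifted argument $(\gamma+\lambda, x', s')$ which still lies in the domain), this is automatic.
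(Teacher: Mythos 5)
Your proof is correct and is exactly the direct verification the paper has in mind (the paper omits the argument, noting only that the monotonicity "can be verified directly"). The key observations — that $F$ enters only through the continuation term and that both $\sup$ and $\inf$ preserve pointwise inequalities — are precisely what is needed, so nothing is missing.
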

Moreover, $\mathcal{B}$ maps $\mathcal{N}$ into itself.
\begin{lemma}\label{lem:Bellman_restriction} Under Assumption \ref{ass:dynamic}, the Bellman operator $\mathcal{B}$ maps $\mathcal{N}$ to itself, that is,
    $$\mathcal{B}(\mathcal{N})\subset\mathcal{N}.$$
\end{lemma}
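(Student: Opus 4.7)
The plan is to verify each of properties (i)--(iv) of Definition \ref{def:NinM} directly for $\mathcal{B}(F)$ under the hypothesis $F\in\mathcal{N}$, and then to confirm $\mathcal{B}(F)\in\mathcal{M}$. I expect convexity to be the main obstacle, because the outer $\inf_\lambda$ does not preserve convexity automatically; the key is a change of variables that linearizes the multiplier constraint and exposes joint convexity.

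For (i), I would substitute $\mu=\gamma+\lambda$, so that $\lambda\ge 0$ becomes $\mu\ge\gamma$ and, after collecting terms, the integrand becomes
\[
r(x,a,s)+\sum_i\gamma^i\bar g^i(x,a,s)+\sum_i\mu^i\bigl(g^i(x,a,s)-\bar g^i(x,a,s)\bigr)+\beta\,\mathbb{E}_s F(\mu,x',s'),
\]
which is affine in $\gamma$ and convex in $\mu$ (by (i) for $F$), hence jointly convex in $(\gamma,\mu)$ for each $a$; taking $\sup_a$ preserves joint convexity. Convexity of $\mathcal{B}(F)(\cdot,x,s)$ then follows from the standard approximate-minimizer argument: given $\gamma_1,\gamma_2$ and $\epsilon$-optimal $\mu_j\ge\gamma_j$, the convex combination $t\mu_1+(1-t)\mu_2$ is feasible for $t\gamma_1+(1-t)\gamma_2$, so plugging in and letting $\epsilon\to 0$ delivers convexity.

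For (ii), the original integrand (before the change of variables) is Lipschitz in $\gamma$ with constant $C=\sum_i\|g^i\|_\infty+\beta L$ uniformly in $(a,\lambda)$: the first piece controls the direct dependence in $\sum_i\gamma^i g^i$ and the second uses (ii) for $F$. Since $\sup_a$ and $\inf_\lambda$ both preserve uniform Lipschitz bounds, $\mathcal{B}(F)$ is $C$-Lipschitz, and $C\le L$ by the definition of $L$. For (iii), I would fix any feasible trajectory from $(x,s)$ with time-$0$ action $a_0$ and continuation values $\tilde v^0(x',s'),\,\tilde v^i(x',s')$ from $x'=\zeta(x,a_0,s)$ and subsequent $s'$; plugging $a_0$ into the $\sup_a$ and applying (iii) to $F(\gamma+\lambda,x',s')$ yields, after regrouping via $v^0=r(x,a_0,s)+\beta\mathbb{E}_s\tilde v^0$ and $v^i=g^i(x,a_0,s)+\beta\mathbb{E}_s\tilde v^i$,
\[
v^0+\sum_i\gamma^i v^i+\sum_i\lambda^i\bigl(g^i(x,a_0,s)-\bar g^i(x,a_0,s)+\beta\,\mathbb{E}_s\tilde v^i(x',s')\bigr),
\]
and the forward-looking constraint at $t=0$ makes the $\lambda$-bracket nonnegative, so the $\inf_{\lambda\ge 0}$ produces (iii). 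For (iv), I would evaluate the inf at $\lambda=0$ and use (iv) for $F$ to bound $\mathcal{B}(F)(\gamma,x,s)\le\|r\|_\infty+\beta L+\sum_i\gamma^i(\|g^i\|_\infty+\beta L)\le(1+\sum_i\gamma^i)L$, once more by the choice of $L$.

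Finally, (iii) and (iv) together imply $|\mathcal{B}(F)(\gamma,x,s)|\le(1+\sum_i\gamma^i)L$, so $\|\mathcal{B}(F)(\cdot,x,s)\|_{L^\infty(B(k))}$ grows at most linearly in $k$; the weighted summability in Definition \ref{def:function_space} is preserved, which establishes $\mathcal{B}(F)\in\mathcal{M}$ and completes the argument that $\mathcal{B}(\mathcal{N})\subset\mathcal{N}$.
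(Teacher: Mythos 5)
Your proposal is correct and follows essentially the same route as the paper's proof: verify the four defining properties of $\mathcal{N}$ one by one, using the same choice of a common multiplier for the Lipschitz bound, the plug-in of a feasible first action plus the inductive use of property (iii) on the continuation for the lower bound, the evaluation at $\lambda=0$ for the upper bound, and the affine growth bounds for membership in $\mathcal{M}$. The only difference is that for convexity the paper simply cites Lemma 2 of \cite{pavoni2018dual}, whereas you spell out the underlying change of variables $\mu=\gamma+\lambda$ and the partial-minimization-of-a-jointly-convex-function argument, which is the standard content of that cited lemma and is carried out correctly.
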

With this we can state the following theorem.
\begin{theorem}\label{thm:contraction}
Assume that $F_0(\gamma,x,s)=\left(1+\sum_{i=1}^{I}\gamma^i\right)L$. Then there exists $F^*\in\mathcal{N}$, such that
$$
\|\mathcal{B}^{(n)}(F_0)-F^*\|_{\mathcal{M}}\rightarrow0,\text{ as $n\rightarrow\infty$},
$$
and $F^*$ is the largest fixed point of $\mathcal{B}$ in $\mathcal{N}$, i.e.\begin{enumerate}
    \item $\mathcal{B}(F^*)=F^*$, and
    \item For any $F\in\mathcal{N}$ s.t. $\mathcal{B}(F)=F$, we have $F\le F^*$.
    \end{enumerate}
\end{theorem}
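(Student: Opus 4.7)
The strategy is to exploit the monotone structure of the Bellman operator $\mathcal{B}$ rather than to establish a contraction. The initial function $F_0(\gamma,x,s) = (1 + \sum_{i} \gamma^i) L$ is the pointwise upper envelope of $\mathcal{N}$ by condition (iv) in Definition~\ref{def:NinM}, and it is straightforward to check $F_0 \in \mathcal{N}$: it is affine (hence convex), it is $L$-Lipschitz in $\gamma$, and it dominates $v^0 + \sum_{i} \gamma^i v^i$ since $|v^0|, |v^i| \le L$ by boundedness of $r$ and $g^i$. By Lemma~\ref{lem:Bellman_restriction}, $\mathcal{B}(F_0)\in \mathcal{N}$, so $\mathcal{B}(F_0) \le F_0$ from condition (iv). Monotonicity of $\mathcal{B}$ (Lemma~\ref{lem:monotonicity}) then yields by induction that $\mathcal{B}^{(n+1)}(F_0)\le \mathcal{B}^{(n)}(F_0)$ for every $n$. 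Since each $\mathcal{B}^{(n)}(F_0) \in \mathcal{N}$ is bounded below by $-(1 + \sum_i \gamma^i)L$, the sequence converges pointwise to a function $F^*$.

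I would next verify $F^*\in \mathcal{N}$. Convexity and the bounds in (iii) and (iv) pass to pointwise decreasing limits, and the uniform $L$-Lipschitz property of the iterates passes to $F^*$. This continuity of $F^*$ is crucial because, on each compact set $B(k)$, a monotone decreasing pointwise limit of continuous functions to a continuous function is uniform by Dini's theorem. So $\|\mathcal{B}^{(n)}(F_0)(\cdot,x,s) - F^*(\cdot,x,s)\|_{L^\infty(B(k))} \to 0$ for every $x, s, k$. To upgrade this to convergence in $\|\cdot\|_{\mathcal{M}}$, I would apply dominated convergence to the triple sum defining the norm, using the uniform bound $2(1 + Ik)L$ on the summands and the summability of $\sum_k 2^{-k}(1+Ik)$ as majorant.

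The most delicate step is verifying $\mathcal{B}(F^*) = F^*$. The direction $\mathcal{B}(F^*) \le F^*$ follows immediately from monotonicity: $\mathcal{B}(F^*) \le \mathcal{B}(\mathcal{B}^{(n)}(F_0)) = \mathcal{B}^{(n+1)}(F_0) \to F^*$. The reverse direction requires interchanging the $\inf$ over $\lambda\in \mathbb{R}_+^I$ with the pointwise decreasing limit, and this is the main obstacle since $\mathbb{R}_+^I$ is not compact. The key observation is that the $\sup$ over $a$ is over the finite set $\tilde{\mathcal{A}}(x,s)$ (Assumption~\ref{ass:dynamic}), so it commutes with monotone limits. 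Writing $h_n(\lambda) := \sup_{a\in \tilde{\mathcal{A}}(x,s)}\bigl[r + \sum_i(\gamma^i g^i + \lambda^i(g^i-\bar g^i)) + \beta \mathbb{E}_s \mathcal{B}^{(n)}(F_0)(\gamma+\lambda,x',s')\bigr]$ and $h_\infty(\lambda)$ defined analogously with $F^*$, the sequence $h_n(\lambda)$ decreases pointwise to $h_\infty(\lambda)$. The inequality $\inf_\lambda h_\infty \le \lim_n \inf_\lambda h_n$ is immediate. For the reverse, given $\varepsilon > 0$, pick $\lambda_0$ with $h_\infty(\lambda_0) \le \inf_\lambda h_\infty + \varepsilon$; then $\inf_\lambda h_n \le h_n(\lambda_0) \to h_\infty(\lambda_0)$, so $\lim_n \inf_\lambda h_n \le \inf_\lambda h_\infty + \varepsilon$. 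Letting $\varepsilon\to 0$ gives $\mathcal{B}(F^*) = \lim_n \mathcal{B}^{(n+1)}(F_0) = F^*$.

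For the maximality claim, suppose $F \in \mathcal{N}$ satisfies $\mathcal{B}(F) = F$. Since $F \le F_0$ pointwise by condition (iv), monotonicity gives $F = \mathcal{B}^{(n)}(F) \le \mathcal{B}^{(n)}(F_0)$ for every $n$, and passing to the limit yields $F \le F^*$. This completes the proof.
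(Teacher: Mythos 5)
Your proposal is correct and follows essentially the same monotone-iteration strategy as the paper's proof: start from the upper envelope $F_0$ of $\mathcal{N}$, use $\mathcal{B}(\mathcal{N})\subset\mathcal{N}$ and monotonicity to get a decreasing sequence with a pointwise limit $F^*\in\mathcal{N}$, upgrade to $\|\cdot\|_{\mathcal{M}}$-convergence via uniform convergence on each $B(k)$ plus a tail estimate, verify $\mathcal{B}(F^*)=F^*$, and obtain maximality from $F\le F_0$. The only differences are cosmetic: you invoke Dini's theorem where the paper uses Arzel\`a--Ascoli combined with monotonicity, and you establish $\mathcal{B}(F^*)\ge F^*$ by a direct $\varepsilon$-argument at a near-optimal $\lambda_0$ where the paper argues by contradiction --- both hinge on the same facts, namely finiteness of $\tilde{\mathcal{A}}(x,s)$ and pointwise convergence of $\mathcal{B}^{(n)}(F_0)$ at the finitely many relevant continuation states.
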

To show that the largest fixed point obtained from Theorem \ref{thm:contraction} equals to the dual value function $D$, it suffices to verify that $D\in\mathcal{N}$, and that $D$ is not less than any fixed point in $\mathcal{N}$. The property that $D\in\mathcal{N}$ is directly to verified after we show that $D$ equals to the optimal value of the lottery system in the next section. The next lemma shows that $D$ is not less than any fixed point in $\mathcal{N}$.
\begin{lemma}\label{lem:verification_largestfp}
    Suppose that $F\in\mathcal{N}$ is a fixed point of $\mathcal{B}$. Then $F(x,\gamma,s)\le D(x,\gamma,s)$ for all $x\in\mathcal{X},\,\gamma\in\mathbb{R}_+^I,\,s\in\mathcal{S}$.
\end{lemma}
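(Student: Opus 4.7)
The plan is to show that for every fixed point $F \in \mathcal{N}$ and every multiplier policy $\lambda \in \Lambda$,
\[
F(\gamma, x_0, s_0) \le \sup_{a \in \tilde{\mathcal{A}}^{\infty}(x_0)} L(a, \lambda; \gamma, x_0, s_0),
\]
so that taking $\inf_{\lambda \in \Lambda}$ on the right yields $F \le D$ by the definition \eqref{equ:dual_value_def}. Since $F = \mathcal{B}(F)$, any choice of $\tilde{\lambda} \in \mathbb{R}_+^I$ gives the single-step bound
\begin{equation*}
F(\gamma, x, s) \le \sup_{a \in \tilde{\mathcal{A}}(x,s)}\left[r(x,a,s) + \sum_i (\gamma^i + \tilde{\lambda}^i)g^i(x,a,s) - \sum_i \tilde{\lambda}^i \bar{g}^i + \beta\, \mathbb{E}_s F(\gamma + \tilde{\lambda}, x', s')\right].
\end{equation*}

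I would then iterate this inequality by selecting $\tilde{\lambda} = \lambda(h^t)$ at each history $h^t$. Introducing the cumulative multiplier $\Gamma^i_t(h^t) := \gamma^i + \sum_{\tau=0}^{t-1}\lambda^i(h^\tau)$ and collapsing the nested constrained suprema over $a_0, \ldots, a_{T-1}$ into a single supremum over feasible policies, after $T$ steps I obtain
\begin{equation}\label{eq:plan-Tsubst}
F(\gamma, x_0, s_0) \le \sup_{a \in \tilde{\mathcal{A}}^{\infty}(x_0)}\mathbb{E}_{s_0}\!\left[\sum_{t=0}^{T-1}\beta^t\!\left(r_t + \sum_i\Gamma^i_{t+1}\, g^i_t - \sum_i\lambda^i(h^t)\bar{g}^i\right) + \beta^T F(\Gamma_T, x_T, s_T)\right].
\end{equation}
Reindexing the double sum $\sum_t\sum_n \beta^{t+n}\lambda^i(h^t) g^i_{t+n}$ in \eqref{equ:Lag_CK} by $T = t+n$ shows that $L(a, \lambda; \gamma, x_0, s_0) = \mathbb{E}_{s_0}\sum_{t=0}^\infty \beta^t (r_t + \sum_i \Gamma^i_{t+1} g^i_t - \sum_i \lambda^i(h^t) \bar{g}^i)$, so the bracketed partial sum in \eqref{eq:plan-Tsubst} is exactly the $T$-truncation of $L(a, \lambda; \gamma, x_0, s_0)$.

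The remaining task, and the main obstacle, is to commute $\lim_{T\to\infty}$ with the outer $\sup_a$ in \eqref{eq:plan-Tsubst}, which requires uniform-in-$a$ control of both the terminal term and the Lagrangian tail. Writing $\tilde{c}^i_\tau := \sum_{h^\tau}\pi^\tau(s^\tau\mid s_0)\lambda^i(h^\tau)$, the definition \eqref{Lag:space} of $\Lambda$ states that $\sum_{\tau, i}\beta^\tau \tilde{c}^i_\tau < \infty$, and non-negativity of $\lambda$ gives the policy-free envelope $\mathbb{E}^a_{s_0}[\lambda^i(h^\tau)] \le \tilde{c}^i_\tau$ for every feasible $a$. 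Hence
\[
\sup_a \beta^T \mathbb{E}_{s_0}\Bigl[\sum_i \Gamma^i_T\Bigr] \le \beta^T \sum_i \gamma^i + \sum_i \sum_{\tau<T}\beta^{T-\tau}(\beta^\tau \tilde{c}^i_\tau) \;\xrightarrow{T\to\infty}\; 0
\]
by a standard convolution-of-$\ell^1$-sequences estimate; combined with $|F(\gamma, x, s)| \le (1+\sum_i\gamma^i)\,L$ from (iii)--(iv) of Definition \ref{def:NinM}, this forces the terminal term in \eqref{eq:plan-Tsubst} to vanish uniformly in $a$. The same envelope controls $\sup_a \mathbb{E}_{s_0}\sum_{t\ge T}\beta^t |r_t + \sum_i \Gamma^i_{t+1} g^i_t - \sum_i \lambda^i(h^t)\bar{g}^i|$, so the truncated Lagrangian converges to $L(a,\lambda;\gamma,x_0,s_0)$ uniformly in $a$. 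Passing to the limit in \eqref{eq:plan-Tsubst} yields the target inequality, and taking $\inf_{\lambda \in \Lambda}$ completes the proof. The key technical point driving the whole argument is that \eqref{Lag:space} sums $\lambda^i$ over \emph{all} action histories with only the shock probabilities as weights, producing envelopes independent of the action policy under consideration.
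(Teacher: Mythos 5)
Your proposal is correct and follows essentially the same route as the paper's proof: the same decomposition into a truncated Lagrangian plus a terminal term $\beta^T F(\Gamma_T,x_T,s_T)$, the same use of the affine bounds in Definition \ref{def:NinM} together with the $\ell^1$ summability of the non-normalized multipliers (via the convolution estimate, which is the paper's split of the sum at $T/2$) to kill the terminal term and the Lagrangian tail uniformly in $a$, and the same iterated one-step Bellman inequality (your forward unrolling is the paper's backward induction in Step 3 read in the opposite direction). The only cosmetic difference is that you prove $F\le\sup_a L(a,\lambda)$ for every $\lambda\in\Lambda$ and then take the infimum, whereas the paper fixes an $\epsilon$-optimal $\lambda$ at the outset.
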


The same argument can be used for the sup-inf problem to establish the following corollary.
\begin{corollary}\label{cor:contraction_supinf}
The following functional equation, 
 $$ F(\gamma,x,s)=\sup_{a\in\tilde{\mathcal{A}}(x,s)}\inf_{\lambda\in \mathbb{R}_+^I}\left[\left(r(a,x,s)+\sum_{i=1}^{I}\left(\gamma^ig^i(a,x,s)+\lambda^i(g^i(a,x,s)-\bar{g}^i)\right)\right)+\beta\mathbb{E}_{s}F(\gamma+\lambda,x',s')\right] $$
 where $x'=\zeta(x,a,s)$, has a solution in $\mathcal{N}$.
Moreover, every solution $ F(.) $ satisfies $ F(x,\gamma,s) \le \tilde{D}(x,\gamma,s) $ for all $(x,\gamma,s)$ .

\end{corollary}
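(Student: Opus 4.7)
The plan is to carry over, essentially line-by-line, the development of Lemmas~\ref{lem:monotonicity} and \ref{lem:Bellman_restriction}, Theorem~\ref{thm:contraction} and Lemma~\ref{lem:verification_largestfp}, replacing the inf-sup operator $\mathcal{B}$ by its sup-inf counterpart
$$\tilde{\mathcal{B}}(F)(\gamma,x,s):=\sup_{a\in\tilde{\mathcal{A}}(x,s)}\inf_{\lambda\in\mathbb{R}_+^I}\Bigl[r(x,a,s)+\sum_{i=1}^I\bigl(\gamma^i g^i(x,a,s)+\lambda^i(g^i(x,a,s)-\bar g^i)\bigr)+\beta\mathbb{E}_s F(\gamma+\lambda,x',s')\Bigr],$$
with $x'=\zeta(x,a,s)$. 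Monotonicity of $\tilde{\mathcal{B}}$ is immediate since both $\sup_a$ and $\inf_\lambda$ preserve pointwise inequalities in the integrand. For $\tilde{\mathcal{B}}(\mathcal{N})\subset\mathcal{N}$, convexity in $\gamma$ follows because for each fixed $a$ the bracket is jointly convex in $(\gamma,\lambda)$ (affine in $\gamma$ plus the convex term $\beta\mathbb{E}_s F(\gamma+\lambda,x',s')$), so its infimal projection onto $\gamma$ is convex and the outer $\sup_a$ preserves convexity; the $L$-Lipschitz bound, the feasible-plan lower bound \eqref{eq:F_inequality} and the upper bound \eqref{eq:F_inequality_upper} are recovered by exactly the estimates used in Lemma~\ref{lem:Bellman_restriction}.

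For existence, I would start from $F_0(\gamma,x,s)=(1+\sum_i\gamma^i)L\in\mathcal{N}$, verify $\tilde{\mathcal{B}}(F_0)\le F_0$ by taking $\lambda=0$ in the inner infimum, and then use monotonicity to produce a pointwise decreasing sequence $F_n:=\tilde{\mathcal{B}}^n(F_0)\in\mathcal{N}$, uniformly bounded below by property (iii) of $\mathcal{N}$. Its pointwise limit $F^*$ inherits convexity, the $L$-Lipschitz bound, and the envelope inequalities, and $\mathcal{N}$ is closed in $(\mathcal{M},\|\cdot\|_\mathcal{M})$. Passing to the limit in $\tilde{\mathcal{B}}(F_{n-1})=F_n$ uses the uniform $L$-Lipschitz bound on $\{F_n\}$ to confine the effective $\lambda$'s in the inner $\inf$ to a bounded subset of $\mathbb{R}_+^I$ on each ball $B(k)$, together with the finiteness of $\tilde{\mathcal{A}}(x,s)$, so continuity of $\tilde{\mathcal{B}}$ along the sequence gives $\tilde{\mathcal{B}}(F^*)=F^*$.

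For the verification bound, let $F\in\mathcal{N}$ be any fixed point of $\tilde{\mathcal{B}}$. Iterating the FE $T$ times yields
$$F(\gamma,x,s)=\sup_{a_0}\inf_{\lambda_0}\cdots\sup_{a_{T-1}}\inf_{\lambda_{T-1}}\Bigl[\sum_{t=0}^{T-1}\beta^t\bigl(r_t+\textstyle\sum_i\gamma_t^ig_t^i+\sum_i\lambda_t^i(g_t^i-\bar g_t^i)\bigr)+\beta^T\mathbb{E}F(\gamma_T,x_T,s_T)\Bigr],$$
with $\gamma_t:=\gamma+\sum_{u<t}\lambda_u$; crucially, $\lambda_t$ is chosen after $a_t$, hence may depend on the post-action history $\tilde{h}^t=(s^t,a^t)$. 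For any strategy $\mu=(\mu^i(\tilde{h}^t))\in\tilde{\Lambda}$, bounding each interior $\inf_{\lambda_t}$ by its value at $\lambda_t=\mu(\tilde{h}^t)$ collapses the alternating value to a plain $\sup_{(a_t)}[\,\cdot\,]$, and the algebraic rearrangement
$$\sum_t\beta^t\gamma_t^ig_t^i+\sum_t\beta^t\lambda_t^i(g_t^i-\bar g_t^i)=\gamma^i\sum_t\beta^tg_t^i+\sum_t\beta^t\mu^i(\tilde{h}^t)\Bigl(\sum_{n\ge 0}\beta^ng_{t+n}^i-\bar g_t^i\Bigr)$$
identifies the truncated sum with a truncation of $\tilde{L}(a,\mu;\gamma,x,s)$. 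Sending $T\to\infty$ and controlling the remainder via the $L$-Lipschitz/linear growth bound on $F$ in $\gamma$ and the $\ell^1$ summability built into $\tilde{\Lambda}$ yields $F(\gamma,x,s)\le\sup_{(a_t)}\tilde{L}(a,\mu;\gamma,x,s)$, and the infimum over $\mu\in\tilde{\Lambda}$ delivers $F\le\tilde{D}$.

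The main obstacle is precisely this passage to the limit, in both the fixed-point iteration and the verification step: the multipliers and the dual value are unbounded in $\gamma$, so standard contraction or uniform-estimate arguments do not apply. The Lipschitz envelope built into $\mathcal{N}$ and the weighted norm $\|\cdot\|_\mathcal{M}$ are exactly what is needed to restrict the relevant $\lambda$'s to a bounded set on each ball $B(k)$, thereby recovering continuity of $\tilde{\mathcal{B}}$ along the iteration and absolute summability of the Lagrangian remainder in the verification.
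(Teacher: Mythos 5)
Your proposal is essentially the paper's argument, with one difference in packaging and one step whose stated justification does not hold as written. The paper does not re-derive the properties of the sup-inf operator from scratch: it first applies Lemma \ref{lem:tech_infsup} to rewrite
$$\tilde{\mathcal{B}}(F)=\inf_{(\lambda_a)_{a\in\mathcal{A}}}\sup_{a\in\tilde{\mathcal{A}}(x,s)}\Bigl[r+\textstyle\sum_i\bigl(\gamma^ig^i+\lambda_a^i(g^i-\bar g^i)\bigr)+\beta\mathbb{E}_sF(\gamma+\lambda_a,x',s')\Bigr],$$
i.e.\ as an inf-sup operator over action-indexed multipliers, after which the proofs of Lemma \ref{lem:Bellman_restriction}, Theorem \ref{thm:contraction} and Lemma \ref{lem:verification_largestfp} apply verbatim with $\lambda$ replaced by $(\lambda_a)$. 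Your direct verification of monotonicity, convexity via infimal projection, the Lipschitz and envelope bounds, and the unrolling argument for $F\le\tilde D$ (with $\lambda_t$ allowed to depend on the post-action history $\tilde h^t$, which is exactly what matches $\tilde\Lambda$) reaches the same conclusions by the same estimates.

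The step I would not accept as written is the passage to the limit $\tilde{\mathcal{B}}(F^*)=F^*$. You justify it by confining ``the effective $\lambda$'s in the inner inf to a bounded subset'' of $\mathbb{R}_+^I$ and invoking continuity of $\tilde{\mathcal{B}}$ along the sequence. Neither is available: the bracket can be non-increasing in some components of $\lambda$, so the infimum over $\mathbb{R}_+^I$ need not be approached on any bounded set (and can equal $-\infty$ for actions admitting no feasible continuation), and the paper deliberately makes no continuity claim for the Bellman operator (see its footnote on the non-constructiveness of Tarski's theorem for monotone but discontinuous maps). What actually closes this step is the contradiction argument of Step 3 of the proof of Theorem \ref{thm:contraction}: if $\tilde{\mathcal{B}}(F^*)(\gamma,x,s)\le F^*(\gamma,x,s)-\ell$, choose for each of the finitely many $a\in\tilde{\mathcal{A}}(x,s)$ an $\tfrac{\ell}{2}$-minimizer $\lambda_a$ of the inner problem evaluated at $F^*$, and use pointwise convergence of $\tilde{\mathcal{B}}^{(n)}(F_0)\to F^*$ at the finitely many points $(\gamma+\lambda_a,\zeta(x,a,s),s')$ to contradict $\tilde{\mathcal{B}}^{(n)}(F_0)\ge F^*$. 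This is a local repair rather than a change of strategy, but the justification you give for that step is not correct as it stands.
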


\section{Lotteries}
\label{sec:lotteries}
In the case discussed in \cite{marcet2019recursive} where the objective function is concave and the constraint-set is convex, the inf-sup and the sup-inf functional equations have the same value and this is equal to the optimal value of the original problem. As we have seen in our introductory example, under non-convexities they might have different values and their values might be strictly larger than the value of the original problem. In this section we present the main results of the paper that show that both the inf-sup and the sup-inf functional equation have natural economic interpretations if one allows for lotteries. The main results of this paper establish a relation between the  sup-inf and the inf-sup FE and a modified problem where the planner has access to lotteries. The theoretical foundation for this relationship lies in the fact that the duality gap can be bridged by randomization. To make this precise we need to introduce a few mathematical facts.

\subsection{The duality gap and randomization}
 To formalize these ideas, we first consider an abstract setting of constrained optimization in Banach-spaces. To be consistent with standard conventions in optimization theory, we consider the minimization problem instead of the maximization one in this section.

\begin{definition}\label{def:opt}
    Let $X$ be a Banach space, $\Omega\subseteq X$ be an arbitrary subset of $X$, $Y$ be a Banach space with a closed positive cone $P\subset Y$, $f:\Omega\rightarrow\mathbb{R}\cup\{+\infty\}$ be a proper extended real-valued functional, and $g:\Omega\rightarrow Y$ be an arbitrary functional. The \textbf{optimization problem}, is defined by
    \begin{equation}\label{equ:def_opt}
    \begin{aligned}
        &\inf_{x\in \Omega}f(x),\\
        \text{s.t. }&g(x)\le \theta_{Y},
    \end{aligned}
    \end{equation}
    where $g(x)\le \theta_Y$ means $g(x)\in -P$.
\end{definition}
We define the \textbf{perturbation functional} $v:Y\rightarrow\mathbb{R}\cup\{+\infty,-\infty\}$ by letting $v(y)$ be the optimal value of the following \textbf{perturbed problem}
    \begin{equation}\label{equ:def_pert}
          \begin{aligned}
       v(y)= &\inf_{x\in \Omega}f(x),\\
        \text{s.t. }&g(x)\le y,
    \end{aligned}
    \end{equation}
     where $g(x)\le y$ means $g(x)-y\in -P$.
 The \textbf{Lagrangian functional} $L:\Omega\times Y^*_+\rightarrow\mathbb{R}\cup\{+\infty\}$ to the optimization problem \eqref{equ:def_opt} is defined by
     \begin{equation}\label{equ:def_Lagdual}
         L(x,y^*)=f(x)+\langle y^*,g(x)\rangle.
     \end{equation}
     Here $Y^*_+$ denotes the set
     $$
     \{y^*\in Y^{*}|\langle y^*,y\rangle \ge 0,\,\forall y\ge \theta_{Y}\}.
     $$
     Let $X,\,\Omega,\,Y,\,f,\,g$ be defined as in Definition \ref{def:opt}. The \textbf{inf-sup problem}, or the \textbf{primal problem}, is defined by
     \begin{equation}\label{equ:def_infsup}
         p:=\inf_{x\in \Omega}\sup_{y^*\in Y^{*}_+}L(x,y^*).
     \end{equation}
     Similarly, the \textbf{sup-inf problem}, or the \textbf{dual problem}, is defined by
     \begin{equation}\label{equ:def_supinf}
        d:=\sup_{y^*\in Y^{*}_+} \inf_{x\in \Omega}L(x,y^*).
     \end{equation}

\paragraph{The Dual Problem for Lagrangian \eqref{equ:Lag_CK}} We consider the modifed problem of \eqref{equ:CK} as follows
\begin{equation}\label{equ:CK_equiv}
    \begin{aligned}
      \sup_{(a(s^t))\in \tilde{\mathcal{A}}^{\infty}(x_0)\subset \ell^{\infty}} &\mathbb{E}_{s_0}\sum_{t=0}^{\infty}\beta^t\left(r(x(s^t),a(s^t),s_t)+\sum_{i=1}^{I}\gamma^ig^i(x(s^t),a(s^t),s_t)\right),\\
      \textbf{s.t. }&1_{\{(\tilde{a}_0,\cdots,\tilde{a}_{t-1})=(a(s_0),\cdots,a(s^{t-1}))\}}\left(\mathbb{E}_{s_t}\sum_{n=0}^{\infty}\beta^n g^i(x(s^{t+n}),a(s^{t+n}),s_{t+n})-\bar{g}^i\right)\ge 0,\\
      &\forall t\in\mathbb{N},\,\forall h^t=(s_0,\tilde{a}_0,\cdots,s_{t-1},\tilde{a}_{t-1},s_t)\in \mathcal{H}^t,\,\forall i\in\{1,\cdots,I\}.
    \end{aligned}
\end{equation}
 For this problem, we define the functional
    \begin{equation}\label{equ:deff_lot}
f:\tilde{\mathcal{A}}^{\infty}(x_0)\rightarrow\mathbb{R},\ \ (a(s^t))\mapsto\mathbb{E}_{s_0}\sum_{t=0}^{\infty}\beta^t\left(r(x(s^t),a(s^t),s_t)+\sum_{i=1}^{I}\gamma^ig^i(x(s^t),a(s^t),s_t)\right),
    \end{equation}
    and the map
    \begin{equation}\label{equ:defg_lot}
    \begin{aligned}
    \small
g:&\tilde{\mathcal{A}}^{\infty}(x_0)\rightarrow\ell^{\infty},\\
&(a(s^t))\mapsto (g_{t,h^t,i}(a(s^t)))_{t\in\mathbb{N},\,h^t\in\mathcal{H}^t,\,i\in\{1,\cdots,I\}}:=\\&\left(1_{\{(\tilde{a}_0,\cdots,\tilde{a}_{t-1})=(a(s_0),\cdots,a(s^{t-1}))\}}\left(\mathbb{E}_{s_t}\sum_{n=0}^{\infty}\beta^n g^i(x(s^{t+n}),a(s^{t+n}),s_{t+n})-\bar{g}^i\right)\right)_{t\in\mathbb{N},\,h^t\in\mathcal{H}^t,\,i\in\{1,\cdots,I\}}.
    \end{aligned}
    \end{equation}
Problem \eqref{equ:CK_equiv} can then be formulated as
\begin{equation}\label{equ:CK_reformulate}
\begin{aligned}
    &\sup_{(a(s^t))\in\mathcal{\tilde{A}}^{\infty}(x_0)\subset\mathcal{A}^{\infty}} f((a(s^t))),\\
    \textbf{s.t. }&g((a(s^t)))\ge 0.
\end{aligned}
\end{equation}
Therefore, according to the definition of the dual problem \eqref{equ:def_supinf}, the dual problem of \eqref{equ:CK_reformulate} is defined as\footnote{Note that the primal problem now is a maximization problem, and hence the dual problem has the inf-sup form.}
\begin{equation}\label{equ:dual_infsup}
d=\inf_{\lambda\in\ell^{\infty,*}_+}\sup_{(a(s^t))\in\tilde{A}^{\infty}(x_0)}f((a(s^t)))+\langle\lambda,g((a(s^t)))\rangle.
\end{equation}
To establish the existence of Lagrange multipliers in the $\ell^1$ space—rather than the $\ell^{\infty,*}$ space considered in \eqref{equ:dual_infsup}—and to consequently show that $D(\gamma,x_0,s_0)$ defined in \eqref{equ:dual_value_def} equals the dual value $d$ from \eqref{equ:def_infsup}, we follow the approach of \cite{dechert1982lagrange} and \cite{pavoni2018dual}. The detailed argument is provided in Appendix \ref{lagrange} (see Theorem \ref{dual_for}).

In general $ d\le p $ and the difference between the two values is called the duality-gap. To show how this gap can be bridged by convexification, we need the following definitions.
    
\begin{definition} \label{def:biconjcugate}
    Let $f:X\rightarrow\mathbb{R}\cup\{+\infty,-\infty\}$ be an extended real-valued functional.
Its epigraph is defined as
$$
    \text{epi}(f):=\{(x,r)\in X\times \mathbb{R}| f(x)\le r\}.
    $$
    The functional   $f^*:X^*\rightarrow\mathbb{R}\cup\{+\infty,-\infty\}$ defined by
    $$
    f^*(x^*):=\sup_{x\in X}\{\langle x^*,x \rangle-f(x)\}
    $$
    is called the \textbf{convex conjugate}, or \textbf{conjugate} of $f$, and the mapping
    $$
    f\mapsto f^*
    $$
    is called the \textbf{Legendre-Fenchel} transformation. Furthermore, $f^{**}:X\rightarrow\mathbb{R}\cup\{+\infty,\,-\infty\}$\footnote{If $X$ is reflexive, then $X=X^{**}$, hence $f^{**}$ is indeed the conjugate of $f^*$. If $X$ is not reflexive, then $X\subset X^{**}$ is a closed subset of $X^{**}$, hence $f^{**}$ can be regarded as the conjugate of $f^*$ restricted in $X$.} defined by
    $$
    f^{**}(x):=\sup_{x^*\in X}\{\langle x^*,x \rangle-f^*(x^*)\}
    $$
    is called the \textbf{biconjugate} of $f$.
\end{definition}

The following theorem which is a variation of the Fenchel-Moreau theorem constitutes the theoretical foundation of our analysis. It seems to be known in mathematics\footnote{According to theorem 5 in \cite{rockafellar1974conjugate}(page 16), one might obtain the same conclusion when $f$ satisfies
        $
        \text{lsc }\text{co }(f)>-\infty,\quad \forall x\in X,
        $
         which is a weaker condition than ours. However, the proof in \cite{rockafellar1974conjugate}  relies on a geometric result that was not rigorously stated in the book.
         The finite-dimensional version of this theorem was provided in Theorem 1.3.5 in \cite{urruty1993convex}(page 45).} but a clean reference is difficult to find. For completeness, we provide a proof in Appendix \ref{app:sec2}.

\begin{theorem}\label{thm:bicon}Let $X$ be a Banach space, $f:X\rightarrow\mathbb{R}\cup \{+\infty\}$ be a proper extended real-valued functional and assume that
   there exists $\underline{x}^*\in X^*$, $\underline{\beta}\in\mathbb{R}$, such that
    \begin{equation}\label{equ:regularity}
        \langle \underline{x}^*,x\rangle+\underline{\beta} \le f(x),\quad \forall x\in X.
    \end{equation}
Then the following holds.
$$
\text{epi}(f^{**})=\text{cl } \text{co } \text{epi}(f),
$$
where cl co denotes the closure of the convex hull.
\end{theorem}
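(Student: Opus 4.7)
The plan is to prove the two inclusions $\text{cl}\,\text{co}\,\text{epi}(f) \subseteq \text{epi}(f^{**})$ and $\text{epi}(f^{**}) \subseteq \text{cl}\,\text{co}\,\text{epi}(f)$ separately. The first inclusion is the routine direction: since $f^{**}$ is a supremum of continuous affine functionals, $f^{**}$ is convex and weakly lower semicontinuous, so $\text{epi}(f^{**})$ is a closed convex subset of $X\times\mathbb{R}$. Combining with the pointwise bound $f^{**}\le f$ (which follows directly from the definition), we get $\text{epi}(f)\subseteq\text{epi}(f^{**})$, and passing to the closed convex hull gives the inclusion.

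The reverse inclusion is the substantive half, and I would argue by contrapositive. Fix $(x_0,r_0)\notin C:=\text{cl}\,\text{co}\,\text{epi}(f)$. Since $C$ is a nonempty closed convex subset of the Banach space $X\times\mathbb{R}$, the geometric Hahn--Banach theorem yields $(x^*,\alpha)\in X^*\times\mathbb{R}$ and a constant $c\in\mathbb{R}$ such that
$$\langle x^*,x_0\rangle+\alpha r_0 \;<\; c \;\le\; \langle x^*,x\rangle+\alpha r \quad\text{for all }(x,r)\in\text{epi}(f).$$
Because $\text{epi}(f)$ is stable under increasing the $r$--coordinate, we must have $\alpha\ge 0$. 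If $\alpha>0$, rescale to $\alpha=1$: the inequality becomes $f(x)\ge c-\langle x^*,x\rangle$ for every $x$ in $\text{dom}(f)$ (and trivially elsewhere), so $f^*(-x^*)\le -c$, hence $f^{**}(x_0)\ge \langle -x^*,x_0\rangle - f^*(-x^*)\ge c-\langle x^*,x_0\rangle > r_0$, which gives $(x_0,r_0)\notin\text{epi}(f^{**})$.

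The delicate case is $\alpha=0$, where the separating hyperplane is \emph{vertical}: we only know $\langle x^*,x_0\rangle<c\le \langle x^*,x\rangle$ for all $x\in\text{dom}(f)$, with no direct control on values of $f$. This is precisely where the minorant hypothesis \eqref{equ:regularity} enters. For every $t>0$, I would consider the affine function
$$\ell_t(x):=\langle \underline{x}^*,x\rangle+\underline{\beta}+t\bigl(c-\langle x^*,x\rangle\bigr).$$
For $x\in\text{dom}(f)$ the factor $c-\langle x^*,x\rangle\le 0$, so $\ell_t(x)\le \langle \underline{x}^*,x\rangle+\underline{\beta}\le f(x)$; outside $\text{dom}(f)$ the bound is automatic. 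Thus $\ell_t$ is a continuous affine minorant of $f$, so by definition of the biconjugate $f^{**}(x_0)\ge \ell_t(x_0)=\langle\underline{x}^*,x_0\rangle+\underline{\beta}+t(c-\langle x^*,x_0\rangle)$. Since $c-\langle x^*,x_0\rangle>0$, letting $t\to\infty$ forces $f^{**}(x_0)=+\infty$, so again $(x_0,r_0)\notin\text{epi}(f^{**})$.

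The main obstacle is exactly this vertical-hyperplane case: without the minorant \eqref{equ:regularity} one cannot promote the horizontal separation in $X$ to an affine lower bound for $f$, and indeed the theorem fails for proper functionals lacking an affine minorant (e.g.\ one could have $f^{**}\equiv -\infty$). The hypothesis supplies the "anchor" $\langle\underline{x}^*,\cdot\rangle+\underline{\beta}$ that, when tilted by arbitrarily large multiples of the separating functional $c-\langle x^*,\cdot\rangle$, still remains below $f$ on $\text{dom}(f)$ and so produces arbitrarily large affine minorants at $x_0$. Combining the two cases finishes the reverse inclusion and hence the theorem.
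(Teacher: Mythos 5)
Your proof is correct and follows essentially the same route as the paper's: both inclusions, geometric Hahn--Banach separation of a point outside $\text{cl}\,\text{co}\,\text{epi}(f)$, the observation that the vertical coefficient is nonnegative, and the tilted affine minorant $\langle\underline{x}^*,\cdot\rangle+\underline{\beta}+t(c-\langle x^*,\cdot\rangle)$ to dispose of the vertical-hyperplane case exactly as the paper does with $\underline{x}^*+Kx^*$. The only differences are presentational: you bypass the paper's explicit Step 1 (the representation of $f^{**}$ as the supremum of affine minorants) by working directly with the conjugate, and you invoke the standard fact that a supremum of continuous affine functionals is convex and lower semicontinuous rather than verifying it by hand.
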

    
    The following theorem states that this duality gap can be bridged if randomization is possible in the sense that the epigraph of the perturbed dual problem is the closure of the convex hull of the epigraph of the perturbed primal problem.

\begin{theorem}\label{thm:dualgap}
    Let $X,\,\Omega,\,Y,\,f,\,g$ be defined as in Definition \ref{def:opt}. Then $p=v(\theta_Y),\,d=v^{**}(\theta_Y)$.\, where $ v^{**}(.) $ denotes the biconjugate of $ v $. 
    Moreover $$
\text{epi}(v^{**})=\text{cl } \text{co } \text{epi}(v).
$$
\end{theorem}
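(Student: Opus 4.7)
The plan is to split the theorem into the three assertions $p = v(\theta_Y)$, $d = v^{**}(\theta_Y)$, and $\epi(v^{**}) = \cl \co \epi(v)$, and prove each separately. The first two follow by direct computation from the definitions, while the third reduces to an application of Theorem \ref{thm:bicon} to the perturbation functional $v$.

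For the first equality I would evaluate the inner supremum $\sup_{y^* \in Y^*_+} L(x, y^*)$ for fixed $x \in \Omega$. If $g(x) \in -P$, then $\langle y^*, g(x)\rangle \le 0$ for every $y^* \in Y^*_+$ by definition of the dual cone, so the supremum equals $0$ and is attained at $y^* = 0$. If $g(x) \notin -P$, the Hahn-Banach theorem separates $g(x)$ from the closed convex cone $-P$ and yields $y^* \in Y^*_+$ with $\langle y^*, g(x)\rangle > 0$; positive scaling then drives the supremum to $+\infty$. Hence $x \mapsto \sup_{y^* \in Y^*_+} L(x, y^*)$ equals $f(x)$ on the feasible set $\{g \le \theta_Y\}$ and $+\infty$ off it, and taking infimum yields $p = v(\theta_Y)$.

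For the second equality I would compute $v^*$ directly. The monotonicity $v(y_1) \ge v(y_2)$ whenever $y_1 \le y_2$ (the feasible set enlarges with $y$) forces $v^*(y^*) = +\infty$ unless $-y^* \in Y^*_+$: for $y^* \notin -Y^*_+$ one picks $z \in P$ with $\langle y^*, z\rangle > 0$ and uses $v(y + tz) \le v(y)$ to send the Legendre--Fenchel supremum to $+\infty$ as $t \to +\infty$. For $y^* = -\mu^*$ with $\mu^* \in Y^*_+$, substituting the infimum definition of $v$ and exchanging suprema gives
$$ v^*(-\mu^*) = \sup_{x \in \Omega}\sup_{y \ge g(x)}\{-\langle \mu^*, y\rangle - f(x)\} = \sup_{x \in \Omega}\{-\langle \mu^*, g(x)\rangle - f(x)\} = -\inf_{x \in \Omega} L(x, \mu^*), $$
where the middle step uses $\inf_{y \ge g(x)} \langle \mu^*, y\rangle = \langle \mu^*, g(x)\rangle$ when $\mu^* \in Y^*_+$. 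Substituting into $v^{**}(\theta_Y) = \sup_{y^*}\{-v^*(y^*)\}$, only $y^* \in -Y^*_+$ contributes and one obtains $v^{**}(\theta_Y) = \sup_{\mu^* \in Y^*_+}\inf_x L(x, \mu^*) = d$.

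For the epigraph identity I would invoke Theorem \ref{thm:bicon} with $v$ in place of $f$. Two hypotheses must be checked: $v$ is proper, and $v$ admits an affine minorant. Properness follows whenever the primal is feasible with $v(\theta_Y) < +\infty$ and $v$ is not identically $-\infty$. The affine minorant is furnished by the previous computation: any $\mu^* \in Y^*_+$ with $\inf_x L(x, \mu^*) > -\infty$ yields $v(y) \ge -\langle \mu^*, y\rangle + \inf_x L(x, \mu^*)$ for every $y \in Y$, which is exactly the hypothesis of Theorem \ref{thm:bicon}. The main obstacle is articulating these regularity assumptions cleanly, since they are left implicit in the statement and genuinely fail in degenerate cases where the dual is unbounded below; once they are in force, Theorem \ref{thm:bicon} delivers $\epi(v^{**}) = \cl \co \epi(v)$ immediately.
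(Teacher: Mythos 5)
Your proposal is correct and follows essentially the same route as the paper: the first identity via separating an infeasible $g(x)$ from $-P$ and scaling, the second via a direct Legendre--Fenchel computation (you compute $v^*$ explicitly first, the paper expands $v^{**}(\theta_Y)$ as a double sup--inf, but these are the same calculation), and the epigraph identity by applying Theorem \ref{thm:bicon} to $v$. Your closing remark that one must verify $v$ is proper and admits an affine minorant before invoking Theorem \ref{thm:bicon} is in fact a point the paper leaves implicit, so your treatment is, if anything, slightly more careful on that step.
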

The basic intuition for the results in a finite dimensional setting is explained in detail in
Theorems 2.3 and 2.4 in \cite{shen2025lagrangian}.

\subsection{Lotteries and the dual problem}
Motivated by Theorem \ref{thm:dualgap}, we now want to explore how the interpretation and the optimal solution of \eqref{equ:CK} change if the planner can randomize of actions. Since $\mathcal{A}$ is compact, according to the Tychonoff theorem, $\mathcal{A}^{\infty}$ is compact in the product topology. Furthermore, since the topology in $\mathcal{A}$ is metrizable, the product topology in $\mathcal{A}^{\infty}$ is metrizable(see Theorem 3.36 in \cite{aliprantis2006infinite}). Given $x_0\in\mathcal{X}$, we know that $\tilde{\mathcal{A}}^{\infty}(x_0)\subseteq\mathcal{A}^{\infty}$ and is a closed subset in $\mathcal{A}^{\infty}$ in the product topology. Hence $\tilde{\mathcal{A}}^{\infty}(x_0)$ is compact in the product topology. Since $\mathcal{A}$ is finite, and that the discount factor satisfies $0<\beta<1$, it is straightforward to verify that $f$ and $g_{t,h^t,i}$ defined in \eqref{equ:deff_lot} and \eqref{equ:defg_lot} is continuous under the product topology. 
 
 Let $\mathcal{P}(\tilde{\mathcal{A}}^\infty(x_0))$ denote the space of probability measures on $\tilde{\mathcal{A}}^\infty(x_0)$. It is standard to show that $\mathcal{P}(\tilde{\mathcal{A}}^{\infty}(x_0))$ is compact in the *-weak topology, using Portmanteau Theorem(for closedness, see Theorem 2.1(iv) in \cite{billingsley2013convergence}) and Prokhorov's theorem(for relative compactness, see Theorem 6.1 in \cite{billingsley2013convergence}).
\begin{definition}\label{def:lot_problem}
    The ex-ante lottery problem of \eqref{equ:CK} is as follows.
\begin{equation}\label{equ:simplified_CK_lot}
     \begin{aligned}
      \sup_{P\in\mathcal{P}(\tilde{\mathcal{A}}^{\infty}(x_0))} &\mathbb{E}_{s_0}^{(a(s^t))\sim P}\sum_{t=0}^{\infty}\beta^t\left(r(x(s^t,a^{t-1}),a(s^t),s_t)+\sum_{i=1}^{I}\gamma^ig^i(x(s^t,a^{t-1}),a(s^t),s_t)\right),\\
      \textbf{s.t. }&\mathbb{E}^{(a(s^t))\sim P}_{s^t}1_{\{\tilde{a}^{t-1}=a^{t-1}\}}\left(\sum_{n=0}^{\infty}\beta^n g^i(x(s^{t+n},(\tilde{a}^{t-1},a_t^{t+n-1})),a(s^{t+n}),s_{t+n})-\bar{g}^i\right)\ge 0,\\
      &\forall t\in\mathbb{N},\,\forall h^t=(s_0,\tilde{a}_0,\cdots,s_{t-1},\tilde{a}_{t-1},s_t)\in \mathcal{H}^t,\,\forall i\in\{1,\cdots,I\}.
    \end{aligned}
\end{equation}
\end{definition}
In this definition $ 1_{\{\tilde{a}^{t-1}=a^{t-1} \}}$ denotes the indicator function that is equal to one if the actions realized in the past are equal to $ \tilde{a}^{t-1} $ (and zero otherwise) and $ a_t^{t+n-1} = (a_t,\ldots,a_{t+n-1}) $ denotes the future actions beginning at $t$ up to $ t+n-1 $ (that are randomized over by the lottery $P$), with the convention that for $ n=0 $, the term is to be ignored.
Randomization is performed over infinite sequences of actions $ (a(s^t)) $ and each forward looking constraint  is assumed to hold on average, conditional on the entire history of actions and shocks up to $t.$
    
    The next theorem shows that, the maximum of problem \eqref{equ:simplified_CK_lot} can be achieved because of the *-weak compactness of $\mathcal{P}(\tilde{\mathcal{A}}^{\infty}(x_0))$, and hence we can replace 'sup' to 'max' in \eqref{equ:simplified_CK_lot}. 

    \begin{theorem}\label{thm:exist_opt_lot}
        There exists a maximizer $P^*\in\mathcal{P}(\tilde{A}^{\infty}(x_0))$ to \eqref{equ:simplified_CK_lot}.
    \end{theorem}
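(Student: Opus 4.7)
The plan is to apply the direct method of the calculus of variations: verify that the feasible set in $\mathcal{P}(\tilde{\mathcal{A}}^\infty(x_0))$ is nonempty and $*$-weakly compact, and that the objective is $*$-weakly continuous, then invoke Weierstrass. The excerpt has already assembled the two main ingredients: $\tilde{\mathcal{A}}^\infty(x_0)$ is compact and metrizable in the product topology (Tychonoff plus closedness), and $\mathcal{P}(\tilde{\mathcal{A}}^\infty(x_0))$ is $*$-weakly compact (Portmanteau + Prokhorov). What remains is essentially a continuity/closedness check.

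First I would verify that the integrands defining both the objective and every constraint are bounded continuous functions on $\tilde{\mathcal{A}}^\infty(x_0)$. Since $\mathcal{A}$ is finite, it carries the discrete topology, so every coordinate projection $(a(s^t)) \mapsto a(s^\tau)$ is continuous and takes values in a discrete space; consequently any indicator of the form $1_{\{a^{t-1}(s^{t-1}) = \tilde{a}^{t-1}\}}$ is continuous, and likewise each $(a(s^t)) \mapsto g^i(x(s^\tau, a^{\tau-1}), a(s^\tau), s_\tau)$ depends only on finitely many coordinates and is bounded by $\|g^i\|_\infty$. The discounted tail series $\sum_{n=0}^\infty \beta^n g^i(\cdots)$ converges uniformly on $\tilde{\mathcal{A}}^\infty(x_0)$ by the Weierstrass $M$-test, so the limits are continuous and bounded. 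The integrand for $f$ in \eqref{equ:deff_lot} is handled identically.

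Second, because each $g_{t,h^t,i}$ is bounded and continuous, the map
\[
P \mapsto \mathbb{E}^{(a(s^t)) \sim P}[g_{t,h^t,i}((a(s^t)))]
\]
is $*$-weakly continuous by the very definition of the $*$-weak topology on $\mathcal{P}(\tilde{\mathcal{A}}^\infty(x_0))$. Hence the feasible set
\[
\mathcal{F} := \bigcap_{t,h^t,i} \{ P \in \mathcal{P}(\tilde{\mathcal{A}}^\infty(x_0)) : \mathbb{E}^{P}[g_{t,h^t,i}] \ge 0 \}
\]
is an intersection of $*$-weakly closed half-spaces, hence $*$-weakly closed, and as a closed subset of the $*$-weakly compact set $\mathcal{P}(\tilde{\mathcal{A}}^\infty(x_0))$ it is itself $*$-weakly compact. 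Nonemptiness of $\mathcal{F}$ follows from part 5 of Assumption \ref{ass:dynamic}: any feasible deterministic action $(a^{\text{det}}(s^t)) \in \tilde{\mathcal{A}}^\infty(x_0)$ for \eqref{equ:CK} yields a Dirac measure $\delta_{(a^{\text{det}}(s^t))} \in \mathcal{F}$, because the indicators in the constraints simply collapse the conditional expectation to the deterministic constraint that is satisfied by assumption.

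Finally, the objective $P \mapsto \mathbb{E}^P[f]$ is $*$-weakly continuous by the same reasoning, so it attains its maximum on the nonempty compact set $\mathcal{F}$, yielding the desired $P^*$. The only subtle point is the continuity of the constraint integrands in the presence of the indicator $1_{\{\tilde{a}^{t-1} = a^{t-1}\}}$; this would be a genuine obstacle if $\mathcal{A}$ were a general topological space, but is immediate here because $\mathcal{A}$ is finite and thus discrete. No min--max or duality argument is needed for this existence statement; the duality content of the excerpt enters only later, when identifying the optimal value of \eqref{equ:simplified_CK_lot} with the inf--sup functional equation.
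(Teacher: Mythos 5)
Your proposal is correct and follows essentially the same route as the paper's proof: both rest on the $*$-weak compactness of $\mathcal{P}(\tilde{\mathcal{A}}^{\infty}(x_0))$ together with the boundedness and product-topology continuity of $f$ and the $g_{t,h^t,i}$, the paper phrasing this as a maximizing-sequence/subsequence extraction while you phrase it as closedness of the feasible set plus Weierstrass. Your added details (discreteness of $\mathcal{A}$ handling the indicators, uniform convergence of the discounted tails, and nonemptiness via a Dirac measure at a feasible deterministic plan) are correct fillers for steps the paper labels ``straightforward.''
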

    
    It turns out that attention can be limited to lotteries over $ {\cal A}$ after each history $ h^t $. 
    For this we denote by $ \Pi_{h^t\in\mathcal{H}^t}{\cal P}(\tilde{A}(h^t)) $ the set of sequences of probability measures which satisfy at each history $ h^t=(s_0,\tilde{a}_0,\cdots,s_{t-1},\tilde{a}_{t-1},s_t)\in \mathcal{H}^t$,
    $$ \psi(h^t) \in \mathcal{P}(\tilde{\mathcal{A}}(h^t)):={\cal P}(\tilde{\cal A}(x(s^t,\tilde{a}^{t-1}),s_t)).$$ That is to say, the constraint $ p(x_t,a,s_t)\ge 0 $ has to hold for each $ a $ in the support of $\psi(h^t)$.
$a(h^t)\sim\psi(h^t)$ then means
$
p(a(h^t)=a|a_{0}=\tilde{a}_0,\cdots,a_{t-1}=\tilde{a}_{t-1})=\psi(h^t)(a). 
$
    
    The following theorem formalizes that it is sufficient to focus on lotteries in $ \Pi_{h^t\in\mathcal{H}^t}{\cal P}(\tilde{A}(h^t))$.
\begin{theorem}\label{thm:lot_equiv}
    The lottery problem \eqref{equ:simplified_CK_lot} is equivalent to
    \begin{equation}\label{equ:simplified_CK_lot_statewise}
     \begin{aligned}
      \max_{\psi\in\Pi_{h^t\in\mathcal{H}^t}{\cal P}(\tilde{A}(h^t))}&\mathbb{E}_{s_0}^{(a(h^t)\sim\psi(h^t))}\sum_{t=0}^{\infty}\beta^t\left(r(x(s^t,a(h^{t-1})),a(h^t),s_t)+\sum_{i=1}^{I}\gamma^ig^i(x(s^t,a(h^{t-1})),a(h^t),s_t)\right),\\
      \textbf{s.t. }&\mathbb{E}^{(a(h^t))\sim \psi(h^t))}_{s^t} 1_{\{\tilde{a}^{t-1}=a^{t-1}\}} \left(\sum_{n=0}^{\infty}\beta^n \left( g^i(x(s^{t+n},(\tilde{a}^{t-1},a_t^{t+n-1})),a(s^{t+n}),s_{t+n})-\bar{g}^i\right)\right)\ge 0,\\
      &\forall t\in\mathbb{N},\,\forall h^t=(s_0,\tilde{a}_0,\cdots,s_{t-1},\tilde{a}_{t-1},s_t)\in \mathcal{H}^t,\,\forall i\in\{1,\cdots,I\}.  
    \end{aligned}
\end{equation}

\end{theorem}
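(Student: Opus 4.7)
The plan is to exhibit a measure-theoretic correspondence between $\mathcal{P}(\tilde{\mathcal{A}}^{\infty}(x_0))$ and $\Pi_{h^t\in\mathcal{H}^t}\mathcal{P}(\tilde{\mathcal{A}}(h^t))$ that preserves both the objective and each forward-looking constraint. Since the exogenous shock process is given by $\pi$ and is independent of the planner's choices, it suffices to work with the conditional distributions of actions given shock histories.

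First I would prove the ``$\Rightarrow$'' direction. Given a feasible $P\in\mathcal{P}(\tilde{\mathcal{A}}^{\infty}(x_0))$, for each $t\in\mathbb{N}$ and each $s^t\in\mathcal{S}^t$ I would apply the disintegration theorem (which applies because $\tilde{\mathcal{A}}^{\infty}(x_0)$ is a compact metric space and the $\sigma$-algebra generated by $(a_0,\ldots,a_{t-1})$ is countably generated) to obtain a regular conditional probability of $a_t$ given $(a_0,\ldots,a_{t-1})=\tilde{a}^{t-1}$. Call this $\psi^P(h^t)\in\mathcal{P}(\mathcal{A})$ where $h^t=(s_0,\tilde{a}_0,\ldots,s_{t-1},\tilde{a}_{t-1},s_t)$. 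Because $P$ is supported on $\tilde{\mathcal{A}}^{\infty}(x_0)$, for every history $h^t$ that has positive probability under the joint law of shocks and actions, $\psi^P(h^t)$ is supported on $\tilde{\mathcal{A}}(x(s^t,\tilde{a}^{t-1}),s_t)=\tilde{\mathcal{A}}(h^t)$; on zero-probability histories we simply choose any point in $\mathcal{P}(\tilde{\mathcal{A}}(h^t))$ (which is nonempty by Assumption \ref{ass:dynamic}(5)). Fubini/tower property then yields that the objective evaluated on $P$ equals the objective evaluated on $\psi^P$, and each constraint indexed by $h^t$ matches, since the indicator $1_{\{\tilde{a}^{t-1}=a^{t-1}\}}$ under $P$ is exactly the weight that picks out the conditional distribution at $h^t$.

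Next I would prove the ``$\Leftarrow$'' direction. Given $\psi\in\Pi_{h^t}\mathcal{P}(\tilde{\mathcal{A}}(h^t))$, I would invoke the Ionescu-Tulcea extension theorem: the family $\{\psi(h^t)\}$ acts as a sequence of transition kernels that, together with the exogenous shock kernel $\pi$, defines a unique probability measure $P^{\psi}$ on $\mathcal{S}^{\infty}\times\mathcal{A}^{\infty}$. Its projection onto $\mathcal{A}^{\infty}$ (under the law of the shock process starting from $s_0$) gives an element of $\mathcal{P}(\mathcal{A}^{\infty})$, and because each $\psi(h^t)$ is supported on $\tilde{\mathcal{A}}(h^t)$, the support of $P^{\psi}$ lies in $\tilde{\mathcal{A}}^{\infty}(x_0)$. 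Applying the tower property exactly as in the first direction, I would verify that the objective and every constraint in \eqref{equ:simplified_CK_lot} take the same value under $P^{\psi}$ as in \eqref{equ:simplified_CK_lot_statewise} under $\psi$.

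The main obstacle is the bookkeeping of the indicator-weighted constraints: one needs to carefully check that summing/integrating against $1_{\{\tilde{a}^{t-1}=a^{t-1}\}}$ under $P$ reproduces precisely the conditional constraint at each $h^t$, and conversely that the history-wise constraints fully capture all requirements imposed on $P$. Because the indicator masks out the past and the Markov shocks are exogenous, this reduces to showing that the constraint at $h^t$ depends only on the conditional law of $(a_t, a_{t+1},\ldots)$ given $(s^t,\tilde{a}^{t-1})$, which is exactly the object constructed from $\psi(h^t),\psi(h^{t+1}),\ldots$ via Ionescu-Tulcea. Zero-probability histories pose no issue because the indicator vanishes there, so both formulations impose trivial constraints. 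Combining the two directions yields the claimed equivalence, and the ``sup'' in \eqref{equ:simplified_CK_lot_statewise} is attained by appealing to Theorem \ref{thm:exist_opt_lot} through this correspondence.
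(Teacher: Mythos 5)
Your proposal is correct and follows essentially the same route as the paper: conditioning a feasible $P$ on past actions to obtain the stage lotteries $\psi(h^t)$ (with arbitrary choices on null histories), and reconstructing $P$ from $\psi$ via an extension theorem, checking that objective and indicator-weighted constraints match by the tower property. The only cosmetic differences are that you invoke disintegration and Ionescu--Tulcea where the paper, exploiting the finiteness of $\mathcal{A}$, uses the elementary conditional-probability ratio and the Kolmogorov extension theorem; these are interchangeable here.
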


The following theorem states that the value of the ex ante lottery problem is identical to the value of the inf-sup FE. This is the main result of this section that justifies the use of the inf-sup FE in non-convex problem.
\begin{theorem}\label{thm:lot_dual_equiv}
    Under Assumption \ref{ass:dynamic}, the maximum of problem \eqref{equ:simplified_CK_lot}, denoted as $V(\gamma,x_0,s_0)$, equals to $D(\gamma,x_0,s_0)$, where $D$ is the dual value function of the deterministic problem \eqref{equ:CK}.
\end{theorem}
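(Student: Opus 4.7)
The plan is to identify three quantities: the dual value $D(\gamma,x_0,s_0)$, the lottery value $V(\gamma,x_0,s_0)$, and the closed concave envelope at zero of the perturbation function $v$ of the deterministic problem \eqref{equ:CK_reformulate}, where $v(y):=\sup\{f(a):a\in\tilde{\mathcal{A}}^\infty(x_0),\,g(a)\ge y\}$. By Theorem \ref{dual_for} in Appendix \ref{lagrange}, $D(\gamma,x_0,s_0)$ coincides with the dual value $d$ of \eqref{equ:CK_reformulate}, and by Theorem \ref{thm:dualgap} translated to the maximization convention (replacing $f,g$ by $-f,-g$), this dual value equals $v^{**}(0)$, where $v^{**}$ denotes the upper semicontinuous concave envelope of $v$. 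It therefore suffices to show $V(\gamma,x_0,s_0)=v^{**}(0)$.

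I introduce the perturbed lottery value $V^{\mathrm{lot}}(y):=\sup\{E^{P}[f]:P\in\mathcal{P}(\tilde{\mathcal{A}}^\infty(x_0)),\,E^{P}[g]\ge y\}$, so that $V(\gamma,x_0,s_0)=V^{\mathrm{lot}}(0)$. For the direction $V^{\mathrm{lot}}\ge v^{**}$, I verify that $V^{\mathrm{lot}}$ dominates $v$ (Dirac measures are feasible lotteries), is concave in $y$ (since the map $P\mapsto(E^P[g],E^P[f])$ is affine and $\mathcal{P}(\tilde{\mathcal{A}}^\infty(x_0))$ is convex), and is upper semicontinuous in $y$ (using weak-* compactness of $\mathcal{P}(\tilde{\mathcal{A}}^\infty(x_0))$ from Theorem \ref{thm:exist_opt_lot} together with weak-* continuity of $P\mapsto E^P[\varphi]$ for each bounded continuous $\varphi$, guaranteed by continuity and boundedness of $f$ and each $g_{t,h^t,i}$ in the product topology as noted before Definition \ref{def:lot_problem}). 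Since $v^{**}$ is by construction the smallest upper semicontinuous concave majorant of $v$, the inequality follows.

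For the reverse direction, Theorem \ref{thm:bicon} applied in the concave/hypograph form via $-v$ yields $\mathrm{hyp}(v^{**})=\cl\co\,\mathrm{hyp}(v)$. For any finitely supported lottery $P=\sum_{k=1}^m\alpha_k\delta_{a_k}$, the pair $(E^P[g],E^P[f])=\sum_k\alpha_k(g(a_k),f(a_k))$ is a convex combination of points in $\mathrm{hyp}(v)$ and hence lies in $\co\,\mathrm{hyp}(v)\subseteq\mathrm{hyp}(v^{**})$. A general $P\in\mathcal{P}(\tilde{\mathcal{A}}^\infty(x_0))$ is approximated by finitely supported $P_n\rightharpoonup P$ in weak-*; continuity and boundedness of $f$ and each $g_{t,h^t,i}$ in the product topology give $(E^{P_n}[g],E^{P_n}[f])\to(E^P[g],E^P[f])$ coordinatewise, and closedness of $\mathrm{hyp}(v^{**})$ places the limit in $\mathrm{hyp}(v^{**})$. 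Because $v$ (and so $v^{**}$) is nonincreasing in $y$, this yields $E^P[f]\le v^{**}(E^P[g])\le v^{**}(0)$ whenever $E^P[g]\ge 0$; taking the supremum over feasible $P$ concludes $V^{\mathrm{lot}}(0)\le v^{**}(0)$.

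The principal obstacle is the approximation step: weak-* convergence $P_n\rightharpoonup P$ must be transferred to convergence of the infinite-dimensional pair $(E^{P_n}[g],E^{P_n}[f])$ in a sense strong enough to pass through the only upper semicontinuous envelope $v^{**}$. Coordinatewise convergence of $E^{P_n}[g]$, secured by continuity and boundedness of each $g_{t,h^t,i}$ in the product topology, turns out to be sufficient given upper semicontinuity of $v^{**}$. A secondary bookkeeping point is the sign-flip translation between the minimization convention of Theorem \ref{thm:dualgap} and the maximization framing of \eqref{equ:CK_reformulate}, which routinely swaps convex envelopes and epigraphs for concave envelopes and hypographs.
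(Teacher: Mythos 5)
Your reduction to the perturbation function is the same skeleton as the paper's proof ($D=d=v^{**}(\theta)$ via Theorem \ref{dual_for} and Theorem \ref{thm:dualgap}, then two inequalities against the lottery value), and your argument for $V\ge v^{**}(\theta)$ is a valid and in fact cleaner alternative to the paper's: the paper constructs approximately feasible finitely supported lotteries via Corollary \ref{cor:infinite} and passes to a weak-$*$ limit, whereas you observe that the perturbed lottery value is a norm-u.s.c.\ concave majorant of $v$ and invoke the minimality of $v^{**}$ among such majorants. That direction is fine.

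The gap is in the reverse inequality, exactly at the point you flag as ``the principal obstacle'' and then wave away. The set $\text{cl}\,\text{co}\,\text{hypo}(v)=\text{hypo}(v^{**})$ is closed in the \emph{norm} topology of $\ell^{\infty}\times\mathbb{R}$ (equivalently, weakly closed, since it is convex). Coordinatewise convergence of $E^{P_n}[g]$ to $E^{P}[g]$ in $\ell^{\infty}$ implies neither norm nor weak convergence: elements of $\ell^{\infty,*}$ include purely finitely additive functionals that do not respect coordinatewise limits, so a coordinatewise limit of points of $\text{hypo}(v^{**})$ need not lie in $\text{hypo}(v^{**})$. Upper semicontinuity of $v^{**}$ does not rescue this, because that semicontinuity is again with respect to the norm topology; your assertion that coordinatewise convergence ``turns out to be sufficient'' is precisely the unproved step. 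The paper closes this hole with two devices you would need: it splices a feasible continuation into each atom of the finitely supported approximants after a finite horizon, so that the constraint-violation vector $G(\underline{P}_D^k)=\min\{E[g],0\}$ has only finitely many nonzero coordinates and hence pairs to zero with every purely finitely additive functional (Lemma \ref{lem:technique_4_thm:lot_dual_equi}, using the Yosida--Hewitt decomposition $\ell^{\infty,*}=\ell^{1}\oplus\ell^{s}$); combined with pointwise convergence and uniform boundedness this yields genuine weak convergence $G(\underline{P}_D^k)\rightharpoonup\theta$ in $\ell^{\infty}$, and only then does Mazur's theorem place the limit $(\theta,V)$ in the norm-closed convex hull. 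Note also that the paper works with the truncated vector $\min\{E[g],0\}$ rather than $E[g]$ itself, exploiting monotonicity of $v$, so that the relevant limit is exactly $\theta$. Without the tail-splicing and the $\ell^{1}/\ell^{s}$ argument, your passage to the limit does not go through.
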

As we have demonstrated above, the following corollary provides the fact that $D\in\mathcal{N}$. This result, together with Lemma \ref{lem:verification_largestfp}, yields that $D$ is largest fixed point of $\mathcal{B}$ in the space $\mathcal{N}$.
\begin{corollary}\label{cor:verification}
    The dual value function satisfies $D\in\mathcal{N}$. Therefore, it is equal to the largest fixed point in $\mathcal{N}$ of the Bellman operator from Definition \ref{def:bellman_operator}.
\end{corollary}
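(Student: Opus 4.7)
The plan is to verify directly that $D\in\mathcal{N}$ by checking all four defining properties in Definition~\ref{def:NinM}, and then combine this with Theorem~\ref{thm:recursive1} and Lemma~\ref{lem:verification_largestfp} to conclude that $D$ equals the largest fixed point guaranteed by Theorem~\ref{thm:contraction}. The crucial leverage point is Theorem~\ref{thm:lot_dual_equiv}, which identifies $D(\gamma,x,s)$ with the value $V(\gamma,x,s)$ of the ex-ante lottery problem \eqref{equ:simplified_CK_lot}, i.e.\ a supremum over $P\in\mathcal{P}(\tilde{\mathcal{A}}^\infty(x))$ of an expression that is \emph{affine} in $\gamma$ for each fixed $P$. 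Once this representation is in hand, all four conditions follow with minimal effort.

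First, I would establish convexity and Lipschitz continuity (conditions (i) and (ii)). For each feasible $P$, the map $\gamma\mapsto \mathbb{E}^P_{s_0}\sum_{t}\beta^t\bigl(r+\sum_i\gamma^i g^i\bigr)$ is affine, so $D(\cdot,x,s)$ is a pointwise supremum of affine functions and therefore convex. For the Lipschitz bound, pick any $\gamma_1,\gamma_2\in\mathbb{R}^I_+$ and let $P_1$ be a maximizer at $\gamma_1$ (which exists by Theorem~\ref{thm:exist_opt_lot}). Then
\begin{equation*}
D(\gamma_1,x,s)-D(\gamma_2,x,s)\le \sum_{i=1}^I(\gamma_1^i-\gamma_2^i)\,\mathbb{E}^{P_1}_{s_0}\!\!\sum_{t=0}^\infty\beta^t g^i \le \|\gamma_1-\gamma_2\|_\infty\cdot \frac{\sum_{i=1}^I\|g^i\|_\infty}{1-\beta}\le L\|\gamma_1-\gamma_2\|_\infty,
\end{equation*}
and by symmetry the same bound holds in absolute value, giving $L$-Lipschitz continuity.

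Next, I would verify the two-sided bounds (iii) and (iv). For (iii), any feasible deterministic sequence $(a_t(s^t))$ for \eqref{equ:CK} induces a Dirac measure $\delta_{(a_t(s^t))}\in\mathcal{P}(\tilde{\mathcal{A}}^\infty(x))$ that is feasible for the lottery problem (all indicator-weighted constraints trivially hold), so inserting it yields $D(\gamma,x,s)=V(\gamma,x,s)\ge v^0+\sum_i\gamma^i v^i$ with $v^0,v^i$ as in \eqref{eq:F_inequality}. For (iv), every term in the objective is uniformly bounded, so for any $P$ feasible,
\begin{equation*}
\mathbb{E}^P_{s_0}\!\sum_{t=0}^\infty\beta^t\Bigl(r+\sum_{i=1}^I\gamma^i g^i\Bigr)\le \frac{\|r\|_\infty}{1-\beta}+\sum_{i=1}^I\gamma^i\frac{\|g^i\|_\infty}{1-\beta}\le \Bigl(1+\sum_{i=1}^I\gamma^i\Bigr)L,
\end{equation*}
giving $D(\gamma,x,s)\le (1+\sum_i\gamma^i)L$. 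Measurability/integrability for $D\in\mathcal{M}$ is immediate from the uniform bound in (iv) together with condition (iii), since both force $|D(\gamma,x,s)|\le (1+\sum_i\gamma^i)L$ on every $B(k)$.

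Finally, Theorem~\ref{thm:recursive1} states that $D$ satisfies the recursion \eqref{equ:recursive_dual}, which by Definition~\ref{def:bellman_operator} is exactly $\mathcal{B}(D)=D$. So $D$ is a fixed point of $\mathcal{B}$ lying in $\mathcal{N}$. By Theorem~\ref{thm:contraction} the largest fixed point $F^*\in\mathcal{N}$ exists and satisfies $D\le F^*$, while Lemma~\ref{lem:verification_largestfp} gives the reverse inequality $F^*\le D$. Hence $D=F^*$. The only subtlety I anticipate is bookkeeping around property (iv) versus the Lipschitz constant (both depend on the same $L$), but the definition of $L$ is engineered so that the same bound that controls the magnitude of $D$ also controls its slope in $\gamma$, so there is no genuine obstacle.
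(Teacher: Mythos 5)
Your proof is correct and follows essentially the same route as the paper: both use Theorem \ref{thm:lot_dual_equiv} to identify $D$ with the lottery value $V$, verify convexity, $L$-Lipschitz continuity, and the bounds \eqref{eq:F_inequality}--\eqref{eq:F_inequality_upper} directly from the affine-in-$\gamma$ lottery representation (using a maximizer $P^*$), and then combine Theorem \ref{thm:recursive1} with Lemma \ref{lem:verification_largestfp} and Theorem \ref{thm:contraction} to conclude $D=F^*$. The only cosmetic difference is your use of $\|\gamma_1-\gamma_2\|_\infty$ with $\sum_i\|g^i\|_\infty$ where the paper uses $\|\gamma_1-\gamma_2\|_1$ with $\max_i\|g^i\|_\infty$; both are consistent with the definition of $L$.
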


\subsection{Ex post lotteries}

Alternatively, we can define an ex-post lottery problem that is equivalent to the sup-inf problem  \eqref{equ:recursive_dual_extend}.
\begin{definition}\label{def:lot_problem-expost}
    The ex-post lottery problem is as follows
\begin{equation}\label{equ:simplified_CK_lotexpost}
     \begin{aligned}
      \max_{P\in\mathcal{P}(\tilde{\mathcal{A}}^{\infty}(x_0))} &\mathbb{E}_{s_0}^{(a(s^t))\sim P}\sum_{t=0}^{\infty}\beta^t\left(r(x(s^t,a^{t-1}),a(s^t),s_t)+\sum_{i=1}^{I}\gamma^ig^i(x(s^t,a^{t-1}),a(s^t),s_t)\right),\\
      \textbf{s.t. }&\mathbb{E}^{(a_t(s^t))\sim P}_{s^t}1_{\{\tilde{a}^t=a^{t}\}}\left(\sum_{n=0}^{\infty}\beta^n g^i(x(s^{t+n},(\tilde{a}^{t-1},a_t^{t+n-1})),a(s^{t+n}),s_{t+n})-\bar{g}^i\right)\ge 0,\\
      &\forall t\in\mathbb{N},\,\forall \tilde{h}^t=(s_0,\tilde{a}_0,\cdots,s_{t-1},\tilde{a}_{t-1},s_t,\tilde{a}_{t})\in \tilde{\mathcal{H}}^t,\,\forall i\in\{1,\cdots,I\}.
    \end{aligned}
\end{equation}
\end{definition}
As in the ex-ante problem, the planner chooses probability distribution over infinite sequences. However, 
as in our example in Section 2 the difference between the two problems consists of the fact that in the ex post problem the constraint has to hold conditional on each $ a_t $ in the support of the lottery. A history is defined as $ \tilde{h}^t $ and is assumed to include the action at $t$. As in the case of ex-ante lotteries, it is equivalent of considering lotteries over actions after each history. 

More importantly, We have the following analog of Theorem \ref{thm:lot_dual_equiv}.
\begin{theorem}\label{thm:lot_dual_equiv-expost}
    Under Assumption \ref{ass:dynamic}, the maximum of problem \eqref{equ:simplified_CK_lotexpost}, denoted as $V$, equals to $\tilde D(\gamma,x_0,s_0)$, where $\tilde D$ is the value function of the sup-inf FE \eqref{equ:recursive_dual_extend}. Moreover the largest solution of the FE in $\mathcal{N}$ solves \eqref{equ:simplified_CK_lotexpost}.
\end{theorem}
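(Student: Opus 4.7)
The plan is to follow the blueprint used for Theorem \ref{thm:lot_dual_equiv}, adapted to the post-action indexing of histories and multipliers. First I would recast \eqref{equ:simplified_CK_lotexpost} as a linear optimization over $P \in \mathcal{P}(\tilde{\mathcal{A}}^\infty(x_0))$ by defining a constraint map $\tilde g$ whose $(t,\tilde h^t,i)$-coordinate is the expectation appearing in the constraint of \eqref{equ:simplified_CK_lotexpost}. The indicator $1_{\{\tilde a^t = a^t\}}$ and the discounted forward sum are bounded and continuous on the product-topology compact space $\tilde{\mathcal{A}}^\infty(x_0)$, so both objective and constraint are linear and $\ast$-weakly continuous in $P$; the existence of a maximizer $P^\ast$ then follows exactly as in Theorem \ref{thm:exist_opt_lot}. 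As in Theorem \ref{thm:lot_equiv}, a disintegration argument reduces attention to one-step lotteries $\psi(\tilde h^t)$ after each post-action history (here indexed by $\tilde h^t$ rather than $h^t$), which is the right object to match with the extended multiplier space $\tilde\Lambda$.

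Second, I would invoke Theorem \ref{thm:dualgap} for the deterministic ex-post problem \eqref{equ:CK_equiv}, but with the constraint map $\tilde g$ indexed over $\tilde{\mathcal{H}}^t$ instead of $\mathcal{H}^t$. Randomization convexifies the epigraph of the associated perturbation functional, and $v^{\ast\ast}(\theta_Y)$ equals the sup-inf dual value $\tilde d$ with multipliers in $\ell^{\infty,\ast}_+$. Combining with the first step yields $V = v^{\ast\ast}(\theta_Y) = \tilde d$. Following the Dechert-Pavoni upgrade sketched in Appendix \ref{lagrange} for the ex-ante case (Theorem \ref{dual_for}), I would then show that the optimizing multipliers can be taken in $\tilde\Lambda \subset \ell^1$: boundedness of $r$ and the $g^i$, the discounting $\beta^t$, and strict positivity of $\pi(s'|s)$ give uniform tightness of the non-normalized multipliers on the post-action tree. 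Once multipliers live in $\tilde\Lambda$, the inf-sup value of the Lagrangian $\tilde L$ equals $\tilde D(\gamma,x_0,s_0)$ by the very definition \eqref{equ:dual_value_def_extend}, and Theorem \ref{thm:recursive2} gives its recursive representation as the sup-inf FE \eqref{equ:recursive_dual_extend}. Hence $V = \tilde D(\gamma,x_0,s_0)$.

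For the second conclusion, I would verify $\tilde D \in \mathcal{N}$ and apply the analog of Lemma \ref{lem:verification_largestfp} (whose statement is packaged into Corollary \ref{cor:contraction_supinf}): convexity in $\gamma$ follows from writing $\tilde D$ as an infimum over $\tilde\Lambda$ of affine-in-$\gamma$ functionals (since $\gamma$ enters $\tilde L$ only through the linear term $\sum_i \gamma^i \mathbb{E}\sum_t \beta^t g^i$); the $L$-Lipschitz bound follows from the same expression together with $\|g^i\|_\infty < \infty$; the lower bound \eqref{eq:F_inequality} is obtained by plugging any feasible action sequence into the primal and applying weak duality; the upper bound \eqref{eq:F_inequality_upper} comes from bounding the Lagrangian above by $(1+\sum_i\gamma^i)L$ using $\lambda\equiv 0$. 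Since by Corollary \ref{cor:contraction_supinf} every fixed point $F \in \mathcal{N}$ of the sup-inf operator satisfies $F \le \tilde D$, and since $\tilde D$ is itself a fixed point (by Theorem \ref{thm:recursive2}), $\tilde D$ is the largest fixed point in $\mathcal{N}$ and solves \eqref{equ:simplified_CK_lotexpost}.

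The main obstacle I anticipate is the $\ell^1$-upgrade of multipliers in the post-action setting. Because $\tilde{\mathcal{H}}^t$ is strictly finer than $\mathcal{H}^t$ by one round of branching over $a_t \in \mathcal{A}$, the summability argument must accommodate the extra factor of $|\mathcal{A}|$ at each date; this should go through because $\mathcal{A}$ is finite and the per-node bound on $\lambda^i(\tilde h^t)$ still carries the $\beta^t$ discount inherited from the Lagrangian, but verifying that the Dechert-style construction adapts cleanly to $\tilde{\mathcal{H}}^t$ is the delicate step. A secondary subtlety is that in the ex-post case the constraint indicator $1_{\{\tilde a^t = a^t\}}$ conditions on the current action, so the disintegration into one-step lotteries must respect this finer conditioning; this forces the Bellman step to take the $\sup_a$ before the $\inf_\lambda$, which is precisely what the sup-inf FE encodes and is the reason $\tilde D$ (rather than $D$) is the correct object.
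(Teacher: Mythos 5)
Your proposal follows essentially the same route as the paper: the paper's proof introduces the deterministic problem with constraints indexed by post-action histories $\tilde{h}^t$ (its equation \eqref{equ:CK_equiv_sup_inf}), identifies $\tilde{D}$ with its Lagrangian dual value by the argument of Theorem \ref{dual_for}, and then repeats the argument of Theorem \ref{thm:lot_dual_equiv} and Corollary \ref{cor:verification} to close the duality gap via randomization and verify $\tilde{D}\in\mathcal{N}$. Your write-up simply spells out the steps (compactness of $\mathcal{P}(\tilde{\mathcal{A}}^{\infty}(x_0))$, the $\ell^1$-upgrade on the finer tree $\tilde{\mathcal{H}}^t$, and the largest-fixed-point verification) that the paper leaves as "analogous reasoning," and these details are correct.
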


\subsection{Economic relevance of lotteries}
In our simple example in Section 2 the difference between ex ante and ex post lotteries was simply a question of timing. In general principal agent problems with lotteries it is typically assumed that the participation constraint has to hold ex ante while the IC constraint only holds ex post. In this case ex post lotteries would be more relevant (see also \cite{cole2012recursive} for a simple formulation, or \cite{prescott1999primer} for a general argument in the static case).

For models of optimal government policy constraints often consist of agents' first order condition. To illustrate the difference between ex post and ex ante lotteries it is useful to consider the simplest possible version  of the problem in \cite{aiyagari2002optimal}  and
show that ex post lotteries might improve the value in the Ramsey problem.
Government expenditures $(g_t)_{t=0}^{\infty} $ are assumed to be exogenous and to follow a finite Markov chain.
A single agent takes the risk-free rates $ (R_t) $, and income taxes $ (\tau_t) $, as given, and solves
  \begin{equation}  \label{eq:Ramsey}
    \begin{aligned}    &\max_{(c_t,\ell_t,b_t)_{t=0}^\infty} E \sum_{t=0}^{\infty} \beta^t \left(u(c_t)+v(\ell_t)\right) \\\mbox{ s.t. } & c_t = \ell_t(1-\tau_t) +b_{t-1}  - \frac{b_t}{R_t}\\
    & \sup_t \| b_t \| < \infty, \ l_t \in [0,1] \forall t
\end{aligned}
\end{equation}
Under full commitment, the government chooses labor income taxes $ \tau_t $ and interest rates on its debt $ R_t $ so that given the optimal choices of the agent, the government budget constraint is satisfied, i.e. for all $t$,
$
b_t =    \tau_t n_t - g_t +
                   {b_{t+1} \over R_t},
$ and that there are  no other taxes and interest rates that result in higher utiltiy for the agent.
Under the assumption that the first order conditions of the agent's problem are necessary and sufficient, following \cite{marcet2019recursive}, we can rewrite the problem as
  \begin{equation}  \label{eq:Ramsey}
    \begin{aligned}    &\max_{(c_t,\ell_t,b_t)_{t=0}^\infty} E \sum_{t=0}^{\infty} \beta^t \left(u(c_t)+v(\ell_t)\right) \\\mbox{ s.t. } &\beta b_t \mathbb{E}u'(c_{t+1})= u'(c_t)(b_t-c_t)-v'(\ell_t)\ell_t\\
&0\le  c_t\le \ell_t-g_t \\
\end{aligned}
\end{equation}
Note that the problem is slightly different than our general problem since forward looking constraints extend only over two periods. We show in 
Appendix \ref{app:finho} how the analysis can be applied to this setting.

Clearly the set of feasible actions is generally not a convex set.
The question is if lotteries can improve on the lottery-free solution. Assuming ex-ante lotteries would give a constraint of the form
$$ E^{P(a_t,a_{t+1})} \left( q_t(a_t) u'(c_t(a_t))+\beta E_t u'(c_{t+1}(a_{t+1}))\right)=0. $$
It is difficult to give an economically meaningful interpretation of this constraint.
Ex-post lotteries, on the other hand, would yield a constraint of the form
$$ E^{P(a_{t+1})} \left( q_t(a_t) u'(c_t(a_t))+\beta E_t u'(c_{t+1}(a_{t+1}))\right)=q_t(a_t) u'(c_t(a_t))+\beta E^{P(a_{t+1})} E_t u'(c_{t+1}(a_{t+1})=0, $$
which has the straightforward interpretation that there is randomness in next period's consumption in addition to the one induced by the exogenous shock.

The case of ex-post lotteries corresponds to the functional equation
$$ W(b,\mu,g) = \sup_{c,b'} \inf_{\gamma \ge 0} \left[ u(c)+v(c+g) + \mu b u'(c) +\gamma \left( u'(c)(c-b)+(c+g) v'(c+g) \right) + \beta E_g  W(b',\gamma,g') \right] .$$
 It makes a big difference both in terms of the optimal value and in terms of the economic interpretation whether one considers the sup-inf FE or the inf-sup FE. The inf-sup functional equation lacks an economic interpretation.
\section{Policies}
Having established in the previous section that both the inf-sup and sup-inf functional equations yield the correct optimal value for their respective lottery problems, the remaining question becomes how to recover the corresponding optimal lotteries. This section develops an approach to address this problem and determine optimal lottery policies.
To build intuition, we begin with the inf-sup equation \eqref{equ:recursive_dual}. Solving for the value function $D$ (the largest fixed point) also yields, for each state, a minimizer\footnote{The minimizer might not be unique but the argument holds for any minimizer.} $\lambda^*$ and a corresponding set of maximizers $a(\lambda^*)$. Although this arises from a problem without lotteries, it provides the key to recovering optimal randomized policies. It follows from \cite{shen2025lagrangian} that to ensure that a statewise lottery attains the optimal utility, its support must lie in the set of optimizers, $a(\lambda^*)$. The remaining challenge is to construct an optimal lottery over these support points, with the critical requirement that it satisfies all forward-looking constraints. To this end, we follow the work of \cite{cole2012recursive} and \cite{marimon2021envelope} by incorporating the \textit{promised value} into the functional equation to keep track of the forward looking constraints. Based on these insights, we then adapt the iterative method from \cite{shen2025lagrangian} to recover the optimal policy.

A key prerequisite for recovering the policy is obviously to establish that the infimum and supremum in \eqref{equ:recursive_dual} are attainable. Since the action set is assumed to be finite, this is equivalent to showing that there exists a $\lambda^*$ that solves \eqref{equ:recursive_dual}.
 To this end, we first introduce a variant of Slater's condition. 
\begin{assumption}\label{ass:Slater}
    There exists $\epsilon>0$, s.t. for any $s_0\in\mathcal{S},\,x_0\in\mathcal{X}$, there exists a feasible point $(\psi(h^t))\in\Pi_{h^t\in\mathcal{H}^t}{\cal P}(\tilde{A}(h^t))$ to 
    \eqref{equ:simplified_CK_lot_statewise} satisfying for all $ i=1,\ldots I$,
    $$
    \mathbb{E}_{s_0}^{(a(h^t)\sim \psi(h^t))}\left(\sum_{t=0}^{\infty}\beta^tg^i(x_{t},a_{t},s_t)-\bar{g}^i\right)\ge \epsilon.
    $$
\end{assumption}

\begin{theorem}\label{thm:exist_bdd_lag}
    Under Assumption \ref{ass:Slater}, for every $x\in\mathcal{X},\,s\in\mathcal{S},\,\gamma\in\mathbb{R}_+^I$, there exists $\lambda^*$ that solves the inf-sup problem \eqref{equ:recursive_dual}, with $D$ given as the dual value function.
\end{theorem}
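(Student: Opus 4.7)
The plan is to reduce the recursive infimum to the dual of the (convex) ex-ante lottery problem, apply classical Slater-based duality there, and then translate the optimal multiplier back through the recursive decomposition of the Lagrangian.

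First, I would use Theorem~\ref{thm:lot_dual_equiv} to identify $D(\gamma,x_0,s_0)$ with the value $V(\gamma,x_0,s_0)$ of the lottery problem \eqref{equ:simplified_CK_lot}. That problem is linear (hence convex) in the probability measure $P$, its feasible set $\mathcal{P}(\tilde{\mathcal{A}}^{\infty}(x_0))$ is $*$-weakly compact, and Assumption~\ref{ass:Slater} furnishes an interior point $\psi^{*}$ with uniform slack $\epsilon>0$ in the period-$0$ (unconditional) constraints while all other conditional constraints remain $\ge 0$. This is exactly the Slater condition required by the duality result of \cite{dechert1982lagrange} used in Appendix~\ref{lagrange} (Theorem~\ref{dual_for}), which therefore yields the existence of an optimal multiplier sequence $\lambda^{\star}=(\lambda^{\star,t,h^t,i})\in\Lambda$ attaining
\[
D(\gamma,x_0,s_0)\;=\;\inf_{\lambda\in\Lambda}\sup_{(a(s^t))\in\tilde{\mathcal{A}}^{\infty}(x_0)} L\bigl((a(s^t)),\lambda;\gamma,x_0,s_0\bigr)\;=\;\sup_{(a(s^t))}L\bigl((a(s^t)),\lambda^{\star};\gamma,x_0,s_0\bigr).
\]
The key quantitative bound underlying existence is the Slater estimate
\[
D(\gamma,x_0,s_0)\;\ge\;L(\psi^{*},\lambda)\;\ge\;f(\psi^{*})+\epsilon\sum_{i=1}^{I}\lambda^{0,s_0,i},
\]
which precludes the time-$0$ component of any near-optimal dual from being unbounded and, together with $\ell^{1}$-compactness arguments along a minimizing sequence, delivers attainment.

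Second, I would push $\lambda^{\star}$ into the recursive equation. Define
\[
\lambda^{*}\;:=\;\bigl(\lambda^{\star,0,s_0,i}\bigr)_{i=1}^{I}\in\mathbb{R}_{+}^{I}.
\]
Splitting the Lagrangian \eqref{equ:Lag_CK} into its $t=0$ piece and the continuation and applying the tower property for $\mathbb{E}_{s_0,x_0}$, the remaining sum coincides, from $t=1$ onwards, with the Lagrangian of the continuation problem started at $(x',s')=(\zeta(x_0,a_0,s_0),s_1)$ with updated Pareto weights $\gamma+\lambda^{*}$ and with multipliers $(\lambda^{\star,t,h^t,i})_{t\ge 1}$. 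Taking $\sup$ over actions on the inside and invoking Theorem~\ref{thm:lot_dual_equiv} again on each continuation problem (where the Slater hypothesis is inherited thanks to the quantification ``for any $s_0\in\mathcal{S},\,x_0\in\mathcal{X}$'' in Assumption~\ref{ass:Slater}), the inner supremum collapses to $D(\gamma+\lambda^{*},x',s')$. Because $\tilde{\mathcal{A}}(x,s)$ is finite, the outer supremum over $a$ is automatically attained, and we obtain
\[
D(\gamma,x,s)\;=\;\max_{a\in\tilde{\mathcal{A}}(x,s)}\Bigl[r(x,a,s)+\sum_{i}\bigl(\gamma^{i}g^{i}+\lambda^{*i}(g^{i}-\bar g^{i})\bigr)+\beta\mathbb{E}_{s}D(\gamma+\lambda^{*},x',s')\Bigr],
\]
which is precisely the statement that $\lambda^{*}$ attains the infimum in \eqref{equ:recursive_dual}.

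The main obstacle is Step~1: passing from the finite-dimensional Slater-based existence of Lagrange multipliers to the infinite-dimensional constraint system $g:\tilde{\mathcal{A}}^{\infty}(x_0)\to\ell^{\infty}$. One must rule out multipliers living only in the singular part of $\ell^{\infty,*}$ and secure a representative in $\ell^{1}\cong\Lambda$; this is exactly the role played by Appendix~\ref{lagrange} (Theorem~\ref{dual_for}), so the body of the argument here can simply cite it after verifying that Assumption~\ref{ass:Slater} supplies its Slater hypothesis. The remaining step (the recursive decomposition of the Lagrangian and application of the tower property) is a direct reprise of the manipulations used in the proof of Theorem~\ref{thm:recursive1}.
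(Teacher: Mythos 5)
Your Step 2 (extracting the time-$0$ component of a sequential optimal multiplier and showing, via the decomposition from the proof of Theorem \ref{thm:recursive1}, that it attains the infimum in \eqref{equ:recursive_dual}) is sound in principle. The gap is in Step 1: you need the infimum in the \emph{sequential} dual problem \eqref{equ:dual_value_def} to be attained by some $\lambda^{\star}\in\Lambda$, and this is not delivered by Theorem \ref{dual_for}. That theorem (and Theorems \ref{thm:linfdual=l1}, \ref{thm:linfdual=l1_adaption} behind it) only establishes that the dual \emph{value} over $\ell^{\infty,*}_{+}$ equals the dual value over $\ell^{1}_{+}$; it says nothing about attainment. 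Your proposed fix — the Slater estimate plus ``$\ell^{1}$-compactness along a minimizing sequence'' — does not close this: Assumption \ref{ass:Slater} only provides slack $\epsilon$ in the \emph{unconditional period-$0$} constraint, so your bound controls only $\sum_{i}\lambda^{\star,0,s_0,i}$ and leaves the mass of the multiplier sequence at later histories uncontrolled (and one cannot splice the Slater policy in at an interior node without disturbing the earlier forward-looking constraints, so there is no uniform slack at every $h^t$). Even with a uniform $\Lambda$-bound, norm-bounded sets in an $\ell^{1}$-type space are not weakly sequentially compact; one would have to pass to weak-$*$ limits in $\ell^{\infty,*}$ and re-run the Yosida--Hewitt/AI/ANA machinery to discard the singular part of the limit. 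None of this is in the paper, and the theorem does not require it.

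The paper's proof avoids the infinite-dimensional problem entirely and works directly with the finite-dimensional recursive minimization: the map
\[
\lambda\;\longmapsto\;\sup_{a\in\tilde{\mathcal{A}}(x,s)}\Bigl[r(x,a,s)+\sum_{i}\bigl(\gamma^{i}g^{i}+\lambda^{i}(g^{i}-\bar g^{i})\bigr)+\beta\mathbb{E}_{s}D(\gamma+\lambda,x',s')\Bigr]
\]
is convex on $\mathbb{R}_{+}^{I}$ (affine terms plus the convex, $L$-Lipschitz $D$), and evaluating the inner supremum at the Slater lottery $\psi$ and invoking the lower bound \eqref{eq:F_inequality} gives
\[
\sup_{a}[\cdots]\;\ge\;-\Bigl(1+\sum_{i}\gamma^{i}\Bigr)L+\epsilon\sum_{i}\lambda^{i},
\]
so the objective exceeds $D(\gamma,x,s)$ once some $\lambda^{i}>C:=2\bigl(D(\gamma,x,s)+(1+\sum_{i}\gamma^{i})L\bigr)/\epsilon$. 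Coercivity plus convexity on the compact set $\{\lambda\in\mathbb{R}_{+}^{I}:\lambda^{i}\le C\}$ then yields attainment. You essentially have this same quantitative estimate (your displayed inequality has the direction of the first $\ge$ reversed — it should read that any near-optimal $\lambda$ satisfies $D+\delta\ge\sup_{a}L(a,\lambda)\ge f(\psi^{*})+\epsilon\sum_{i}\lambda^{0,s_0,i}$), but you deploy it in the sequential setting where it is insufficient, rather than in the recursive setting where it finishes the proof in one step.
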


We introduce a utility promise $ \phi ^i$ for $ i=1,\ldots I $ and define
\begin{equation}\label{equ:simplified_CK_lot_statewise_promised}
     \begin{aligned}
       W(\gamma_0,x_0,s_0,\phi)= \max_{\psi\in\Pi_{h^t\in\mathcal{H}^t}{\cal P}(\tilde{A}(h^t))}  &\mathbb{E}_{s_0}^{(a(h^t)\sim\psi(h^t))}\sum_{t=0}^{\infty}\beta^t\left(r(x_t,a_t,s_t)+\sum_{i=1}^{I}\gamma^ig^i(x_t,a_t,s_t)\right),\\
      \textbf{s.t. }&\mathbb{E}^{(a(h^t)\sim \psi(h^t))}_{h^t}\left(\sum_{n=0}^{\infty}\beta^n g^i(x_{t+n},a_{t+n},s_{t+n})-\bar{g}^i\right)\ge 0,\\
      &\forall t\in\mathbb{N},\,\forall h^t=(s_0,\tilde{a}_0,\cdots,s_{t-1},\tilde{a}_{t-1},s_t)\in \mathcal{H}^t,\,\forall i\in\{1,\cdots,I\};\\
      &\mathbb{E}_{s_0}^{(a(h^t)\sim \psi(h^t))}\left(\sum_{t=0}^{\infty}\beta^t g^i(x_{t},a_{t},s_{t})-\phi^i\right)\ge 0.
    \end{aligned}
\end{equation}
In general, we have 
$
W(\gamma,x,s,\phi)\le D(\gamma,x,s),
$
because we introduce an additional constraint for $W$. The following lemma shows the equivalence between the feasible promised values and the subdifferential of $D$ which we denote by $\partial D(\gamma,x,s).$
\begin{lemma}\label{lem:equiv_subgradient_lot}
    For any $\gamma\in\mathbb{R}_+^I,\,x\in\mathcal{X},\, s\in\mathcal{S}$, the following are equivalent:
    \begin{enumerate}
        \item $\phi\in\partial D(\gamma,x,s)$;
        \item There exists $P^*\in\mathcal{P}(\tilde{\mathcal{A}}^{\infty}(x_0))$, s.t. 
        \begin{itemize}
            \item $P^*$ maximizes problem \eqref{equ:simplified_CK_lot}, and
            \item $\mathbb{E}_{s_0}^{(a_t\sim P^*)}\sum_{t=0}^{\infty}\beta^t g(x_t,a_t,s_t)=\phi$.
        \end{itemize}
    \end{enumerate}
\end{lemma}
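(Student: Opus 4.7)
The plan is to recognise $D(\cdot,x,s)$ as the support functional of the convex set of expected-payoff pairs achievable by feasible lotteries, and then to identify its subdifferential via a standard envelope-and-separation argument. The key preparatory step, via Theorem~\ref{thm:lot_dual_equiv}, is to write
$$D(\gamma,x_0,s_0) \;=\; \max_{P\in\mathcal{F}}\bigl[R(P)+\gamma\cdot G(P)\bigr],$$
where $\mathcal{F}$ is the set of lotteries feasible for \eqref{equ:simplified_CK_lot}, $R(P):=\mathbb{E}^P\sum_t\beta^t r(x_t,a_t,s_t)$, and $G(P):=(\mathbb{E}^P\sum_t\beta^t g^i(x_t,a_t,s_t))_i$. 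Crucially, $\mathcal{F}$ does not depend on $\gamma$, is convex since the forward-looking constraints are linear in $P$, and is *-weakly compact as a closed subset of $\mathcal{P}(\tilde{\mathcal{A}}^\infty(x_0))$; moreover $R$ and $G$ are *-weakly continuous because $r,g^i$ are bounded and continuous on $\tilde{\mathcal{A}}^\infty(x_0)$.

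The direction (2)$\Rightarrow$(1) is immediate: given an optimizer $P^*$ at $\gamma$ with $G(P^*)=\phi$, for every $\gamma'\in\mathbb{R}_+^I$,
$$D(\gamma') \;\ge\; R(P^*)+\gamma'\cdot G(P^*) \;=\; D(\gamma)+(\gamma'-\gamma)\cdot\phi,$$
so $\phi\in\partial D(\gamma)$.

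For (1)$\Rightarrow$(2), the plan is to set $\mathcal{F}^*(\gamma):=\{P\in\mathcal{F}:R(P)+\gamma\cdot G(P)=D(\gamma)\}$ and $C^*(\gamma):=G(\mathcal{F}^*(\gamma))$, both non-empty, convex, and compact by the preparatory step, and to argue by contradiction. If $\phi\in\partial D(\gamma)\setminus C^*(\gamma)$, the separating hyperplane theorem supplies a direction $\mu$ (chosen in the tangent cone of $\mathbb{R}_+^I$ at $\gamma$ when $\gamma$ lies on the boundary) and $\varepsilon>0$ with $\mu\cdot\phi\ge\max_{\phi'\in C^*(\gamma)}\mu\cdot\phi'+\varepsilon$. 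The subgradient inequality forces $\liminf_{t\downarrow 0}(D(\gamma+t\mu)-D(\gamma))/t\ge\mu\cdot\phi$, while choosing an optimizer $P_t$ at $\gamma+t\mu$ and using $D(\gamma)\ge R(P_t)+\gamma\cdot G(P_t)$ yields $D(\gamma+t\mu)-D(\gamma)\le t\mu\cdot G(P_t)$. The hard part will be the envelope step: extract a *-weak limit $P^*$ of $P_{t_n}$ along some $t_n\downarrow 0$ by compactness of $\mathcal{F}$, and then combine *-weak continuity of $R,G$ with Lipschitz continuity of $D$ in $\gamma$ (from $D\in\mathcal{N}$, Corollary~\ref{cor:verification}) to conclude $P^*\in\mathcal{F}^*(\gamma)$. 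Once this is done, $\limsup(D(\gamma+t_n\mu)-D(\gamma))/t_n\le\mu\cdot G(P^*)\le\max_{\phi'\in C^*(\gamma)}\mu\cdot\phi'$ contradicts the separation, forcing $\phi\in C^*(\gamma)$ and yielding the desired $P^*$. A minor but non-trivial technicality is the interpretation of $\partial D(\gamma)$ at the boundary of $\mathbb{R}_+^I$, where the convex subdifferential a priori includes normal-cone directions; reading $\partial D$ as the ``tight'' subdifferential produced by tangent directions aligns the statement with $C^*(\gamma)$.
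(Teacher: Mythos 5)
Your proposal is correct, and it rests on the same structural insight as the paper's proof: that $D(\cdot,x,s)$ is the support function of the compact convex set of expected payoff--promise pairs achievable by feasible lotteries (the paper's set $\mathcal{U}$, your $\{(R(P),G(P)):P\in\mathcal{F}\}$), with the $(2)\Rightarrow(1)$ direction being the trivial half of the support-function subdifferential characterization in both treatments. Where you diverge is in the execution of $(1)\Rightarrow(2)$: the paper simply invokes two finite-dimensional results from Rockafellar --- Corollary 23.5.3 (the subdifferential of a support function is the face of maximizers) and Theorem 23.9 (the chain rule $\partial(F\circ A)=A^T\partial F$ for the affine map $\gamma\mapsto(1,\gamma)$) --- whereas you re-derive the characterization from first principles by separating $\phi$ from $C^*(\gamma)=G(\mathcal{F}^*(\gamma))$ and contradicting the one-sided directional-derivative bound $\mu\cdot\phi\le\mu\cdot G(P_t)$ obtained from optimizers at $\gamma+t\mu$, passing to a *-weak limit to land in $\mathcal{F}^*(\gamma)$. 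Your route is self-contained but requires you to carry the compactness of $\mathcal{F}$ and *-weak continuity of $R,G$ explicitly through the envelope step (all of which the paper has already established in Section 4.2, so this is legitimate); the paper's route is shorter at the cost of the citation. Your remark on the boundary of $\mathbb{R}_+^I$ is apt and is resolved in the paper implicitly: since $\mathcal{U}$ is bounded, $F(1,\cdot)$ is finite on all of $\mathbb{R}^I$, so $\partial D$ is taken as the subdifferential of this full-space extension, which contains no normal-cone directions and coincides exactly with $C^*(\gamma)$ --- the same object your tangent-cone reading produces.
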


The following is a direct corollary of this lemma and gives conditions that ensure that $ W(\gamma,x,s,\phi) =  D(\gamma,x,s) $.
\begin{corollary}\label{cor:equiv_subgradient_lot}
For any $\gamma\in\mathbb{R}_+^I,\,x\in\mathcal{X},\,s\in\mathcal{S}$, 
    $W(\gamma,x,s,\phi)=D(\gamma,x,s)$ if and only if there exists $\phi_D\in \partial D(\gamma,x,s)$, s.t. $\phi\le \phi_D$. 

\end{corollary}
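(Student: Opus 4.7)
The plan is to deduce this corollary directly from Lemma~\ref{lem:equiv_subgradient_lot} together with the observation that problem~\eqref{equ:simplified_CK_lot_statewise_promised} is exactly problem~\eqref{equ:simplified_CK_lot_statewise} augmented by the promise constraint $\mathbb{E}\sum_t\beta^t g^i(x_t,a_t,s_t)\ge \phi^i$. Adding a constraint can only shrink the feasible set, so $W(\gamma,x,s,\phi)\le D(\gamma,x,s)$ holds unconditionally (using Theorem~\ref{thm:lot_dual_equiv} to identify $D$ with the value of the ex-ante lottery problem). Equality therefore reduces to the question of whether some maximizer of~\eqref{equ:simplified_CK_lot} is also feasible for the promise constraint at level $\phi$, and Lemma~\ref{lem:equiv_subgradient_lot} characterizes exactly which promise vectors $\phi_D$ arise as the expected discounted $g$-vector of such a maximizer, namely those in $\partial D(\gamma,x,s)$.

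For the ``only if'' direction I would take an optimizer $P^\ast$ of $W(\gamma,x,s,\phi)$; its existence follows from the same compactness argument used for Theorem~\ref{thm:exist_opt_lot}, since the promise inequality cuts out a $\ast$-weakly closed subset of $\mathcal{P}(\tilde{\mathcal{A}}^\infty(x_0))$. Because $P^\ast$ is feasible for problem~\eqref{equ:simplified_CK_lot} and achieves the value $W(\gamma,x,s,\phi)=D(\gamma,x,s)$, it must also be a maximizer of~\eqref{equ:simplified_CK_lot}. Setting $\phi_D:=\mathbb{E}_{s_0}^{(a(s^t))\sim P^\ast}\sum_{t=0}^\infty\beta^t g(x_t,a_t,s_t)$ and applying Lemma~\ref{lem:equiv_subgradient_lot} yields $\phi_D\in\partial D(\gamma,x,s)$, while the promise constraint satisfied by $P^\ast$ gives $\phi_D\ge \phi$ componentwise, as required.

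For the ``if'' direction, suppose $\phi_D\in\partial D(\gamma,x,s)$ with $\phi\le \phi_D$. Lemma~\ref{lem:equiv_subgradient_lot} produces a lottery $P^\ast\in\mathcal{P}(\tilde{\mathcal{A}}^\infty(x_0))$ that maximizes~\eqref{equ:simplified_CK_lot} and satisfies $\mathbb{E}_{s_0}^{(a(s^t))\sim P^\ast}\sum_{t=0}^\infty\beta^t g^i(x_t,a_t,s_t)=\phi_D^i\ge \phi^i$ for every $i$. Hence $P^\ast$ is feasible for~\eqref{equ:simplified_CK_lot_statewise_promised} at promise level $\phi$ and attains the value $D(\gamma,x,s)$, giving $W(\gamma,x,s,\phi)\ge D(\gamma,x,s)$; combined with the opposite inequality above, this yields equality.

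Because Lemma~\ref{lem:equiv_subgradient_lot} already contains the substantive content, I do not expect any serious obstacle. The only minor subtlety is the implicit passage between lotteries on $\tilde{\mathcal{A}}^\infty(x_0)$ and statewise lotteries in $\Pi_{h^t}\mathcal{P}(\tilde{A}(h^t))$ when comparing the formulations of $W$ and $D$, but this equivalence is already supplied by Theorem~\ref{thm:lot_equiv} and needs no new argument.
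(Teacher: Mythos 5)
Your proof is correct and follows exactly the route the paper intends: the paper states this result as a direct corollary of Lemma \ref{lem:equiv_subgradient_lot} without writing out a proof, and your argument (the unconditional inequality $W\le D$ from the added promise constraint, plus the two directions via the maximizer characterization in the lemma) is precisely the omitted routine verification. The two minor points you flag --- attainment of the maximum in the $W$ problem via the compactness argument of Theorem \ref{thm:exist_opt_lot}, and the passage between sequence lotteries and statewise lotteries via Theorem \ref{thm:lot_equiv} --- are handled appropriately.
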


 To derive a functional equation for $W$, we follow the proof of Theorem \ref{thm:recursive1} and combining it with the result of Corollary \ref{cor:equiv_subgradient_lot}, we obtain the following theorem.
\begin{theorem}\label{thm:recursive_promised}
    For any $x\in \mathcal{X},\,s\in \mathcal{S},\,\gamma\in\mathbb{R}_+^{I},\phi\in\mathbb{R}^I$, the dual value function $W(x,\gamma,s_0,\phi)$ defined in \eqref{equ:dual_value_def} satisfies the following equation
    \begin{equation}\label{equ:recursive_dual_promised}
    \small
    \begin{aligned}
        &W(\gamma,x,s,\phi)=\inf_{\mu\in\mathbb{R}_+^I,\,\lambda\in \mathbb{R}_+^I}\sup_{a\in\tilde{\mathcal{A}}(x,s)}\\        &\left[\left(r(x,a,s)+\sum_{i=1}^{I}\left( \gamma^ig^i(x,a,s)+\lambda^i(g^i(x,a,s)-\bar{g}^i)+\mu^i(g^i(x,a,s)-\phi^i)\right)\right)+\beta\mathbb{E}_{s}D(\gamma+\lambda+\mu,x',s')\right],\\
        \text{where }&x'=\zeta(x,a,s).
    \end{aligned}
    \end{equation}
    or, equivalently,
\begin{equation}\label{equ:recursive_dual_promisedW}
    \small
    \begin{aligned}
&W(\gamma,x,s,\phi) =\inf_{\mu\in\mathbb{R}_+^I,\,\lambda\in \mathbb{R}_+^I}\sup_{\psi\in\mathcal{P}(\tilde{\mathcal{A}}(x,s)),\phi'(a,s)\in\mathbb{R}^I}\sum_{a\in\mathcal{A}}\psi(a)\\
        &\left[\left(r(x,a,s)+\sum_{i=1}^{I} \left( \gamma^ig^i(x,a,s)+\lambda^i(g^i(x,a,s)-\bar{g}^i)+\mu^i(g^i(x,a,s)-\phi^i)\right)\right)+\beta\mathbb{E}_{s}W(\gamma+\lambda+\mu,x',s',\phi'(a,s'))\right],\\
        \text{where }&x'=\zeta(x,a,s).
    \end{aligned}
    \end{equation}
\end{theorem}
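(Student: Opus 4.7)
The plan is to adapt the derivation of Theorem \ref{thm:recursive1} to the promise problem \eqref{equ:simplified_CK_lot_statewise_promised}. First I would form the Lagrangian by dualizing the forward-looking constraints with multipliers $(\lambda^i(h^t))\in \Lambda$ and the single promise inequality at $t=0$ with an additional scalar multiplier $\mu \in \mathbb{R}_+^I$. Strong duality then identifies $W(\gamma,x,s,\phi)$ with the inf-sup value of the augmented Lagrangian; the argument is the same as in the proof of Theorem \ref{thm:lot_dual_equiv}, because adding $I$ scalar constraints at $t=0$ does not affect the integrability considerations for $(\lambda^i(h^t))$, and the existence of $\mu \in \mathbb{R}_+^I$ follows from finite-dimensional duality under Assumption \ref{ass:Slater}.

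Next I would decompose the augmented Lagrangian into its date-zero contribution and the continuation from date $1$. The algebraic key is that dualization by $\mu^i$ of $\sum_{t=0}^{\infty}\beta^t g^i(x_t,a_t,s_t) \ge \phi^i$ contributes exactly $\mu^i g^i(x_n,a_n,s_n)$ at every date $n$ (plus a date-zero constant $-\mu^i\phi^i$), and analogously dualization by $\lambda^i$ on the initial forward-looking constraint contributes $\lambda^i g^i(x_n,a_n,s_n)$ at every $n$ (plus $-\lambda^i\bar g^i$). Hence the continuation from date $1$ onward is the Lagrangian of the unpromised dual problem at augmented multiplier $\gamma+\lambda+\mu$ and state $(x',s')$. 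Applying Lemma \ref{lem:tech_infsup} to interchange the continuation infima and suprema as in Theorem \ref{thm:recursive1} collapses that continuation to $D(\gamma+\lambda+\mu,x',s')$ via Theorem \ref{thm:lot_dual_equiv}, and combining with the date-zero contribution gives \eqref{equ:recursive_dual_promised}.

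To obtain the equivalent form \eqref{equ:recursive_dual_promisedW}, I would first replace $\sup_{a \in \tilde{\mathcal{A}}(x,s)}$ by $\sup_{\psi \in \mathcal{P}(\tilde{\mathcal{A}}(x,s))}$; since $\tilde{\mathcal{A}}(x,s)$ is finite and the integrand is linear in $\psi$, this leaves the sup unchanged (cf.\ Theorem \ref{thm:lot_equiv}, whose statewise reduction is exactly what is needed). To rewrite the continuation $D(\gamma+\lambda+\mu,x',s')$ as $\sup_{\phi'(a,s')} W(\gamma+\lambda+\mu,x',s',\phi'(a,s'))$, I would invoke Corollary \ref{cor:equiv_subgradient_lot}: $W \le D$ pointwise, and the two coincide as soon as $\phi'(a,s')$ is bounded above by an element of $\partial D(\gamma+\lambda+\mu,x',s')$, which is nonempty by the convexity and Lipschitz continuity of $D$ in $\gamma$ established in Definition \ref{def:NinM}. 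The main obstacle I anticipate is the strong-duality step for the augmented problem: one must verify that the promise constraint admits a multiplier $\mu \in \mathbb{R}_+^I$ for every $\phi$ that can arise in the recursion (in practice, $\phi \in \partial D$), and that the inf-sup interchange via Lemma \ref{lem:tech_infsup} still applies after appending $\mu$ to the outer infimum. The latter is routine since the promise is a single finite-dimensional inequality per $i$, but verifying the appropriate Slater-type qualification uniformly along the dynamic programming recursion is the delicate ingredient.
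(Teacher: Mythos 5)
Your proposal is correct and follows essentially the same route as the paper, which gives no standalone proof of this theorem but states that it is obtained by repeating the decomposition argument of Theorem \ref{thm:recursive1} (with the promise constraint dualized by an extra multiplier $\mu$ whose date-$0$ term contributes $\mu^i(g^i-\phi^i)$ and whose later terms are absorbed into the continuation multiplier $\gamma+\lambda+\mu$) and then invoking Corollary \ref{cor:equiv_subgradient_lot} to pass between $D(\gamma+\lambda+\mu,x',s')$ and $\sup_{\phi'}W(\gamma+\lambda+\mu,x',s',\phi')$, exactly as you describe. Your one superfluous worry is the Slater-type qualification: it is needed only for \emph{attainment} of the multipliers (Theorem \ref{thm:exist_bdd_lag}, Theorem \ref{thm:recover_policy_main}), not for the value identity itself, which rests on the convexification/biconjugate argument of Theorems \ref{thm:bicon} and \ref{thm:dualgap} just as in Theorem \ref{thm:lot_dual_equiv}.
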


    It is easy to verify
    that if $\lambda^*$ is a solution to \eqref{equ:recursive_dual}, then $\lambda^*$ together with $ \mu^*=0 $ is a solution to \eqref{equ:recursive_dual_promised} and a solution to \eqref{equ:recursive_dual_promisedW}.
The following two theorems now describe how an optimal policy can be recovered.
\begin{theorem}\label{thm:recover_policy_main}
     For any $x\in \mathcal{X},\,s\in \mathcal{S},\,\gamma\in\mathbb{R}_+^{I},\phi\in\partial D(\gamma,x,s)$, the recursive equation  \eqref{equ:recursive_dual_promisedW} admits at least a solution $(\mu^*,\lambda^*,\psi^*,\phi'^{*})$, satisfying
     \begin{equation}\label{equ:recover_policy_cond0}
         \phi'^*(a,s')\in\partial D(\gamma+\mu^*+\lambda^*,\zeta(x,a,s),s'),\quad\forall a\in\text{supp}(\psi^*),s'\in\mathcal{S},
     \end{equation}
     
      \begin{equation}\label{equ:recover_policy_cond1}(\sum_{a\in \mathcal{A}}\psi(a)(g(x,a,s)+\beta\mathbb{E}_s\phi'(a,s'))-\phi\ge 0)\perp (\gamma+\mu^*)
    \end{equation}
    and
   \begin{equation}\label{equ:recover_policy_cond2}
    (\sum_{a\in\mathcal{A}}\psi(a)(g(x,a,s)+\beta\mathbb{E}_s\phi'(a,s'))-\bar{g})\ge 0)\perp \lambda^*.\end{equation}
\end{theorem}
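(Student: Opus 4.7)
The plan is to construct the solution $(\mu^*, \lambda^*, \psi^*, \phi'^*)$ by decomposing a globally optimal infinite-horizon lottery that attains the promise $\phi$. Since $\phi \in \partial D(\gamma, x, s)$, Corollary \ref{cor:equiv_subgradient_lot} gives $W(\gamma, x, s, \phi) = D(\gamma, x, s)$, and Lemma \ref{lem:equiv_subgradient_lot} provides an optimal $P^* \in \mathcal{P}(\tilde{\mathcal{A}}^\infty(x))$ for the ex-ante lottery problem \eqref{equ:simplified_CK_lot} with
$$\mathbb{E}^{P^*}\sum_{t=0}^{\infty} \beta^t g(x_t, a_t, s_t) = \phi.$$
By Theorem \ref{thm:lot_equiv}, $P^*$ can be realized as a statewise lottery $(\psi^*(h^t))_{h^t \in \mathcal{H}^t}$, which is where I start.

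Having fixed this representation, I would set $\psi^* := \psi^*(h^0)$ as the initial one-period lottery over $\tilde{\mathcal{A}}(x, s)$, and for each $a \in \text{supp}(\psi^*)$ and $s' \in \mathcal{S}$, let $P^*_{a,s'}$ denote the continuation lottery on $\tilde{\mathcal{A}}^\infty(\zeta(x,a,s))$ obtained by conditioning $P^*$ on $(a, s')$, and define the continuation promise
$$\phi'^*(a, s') := \mathbb{E}^{P^*_{a,s'}}\sum_{n=0}^{\infty} \beta^n g(x_n, a_n, s_n).$$
The multipliers $(\mu^*, \lambda^*) \in \mathbb{R}_+^{2I}$ are obtained by applying Theorem \ref{thm:exist_bdd_lag} (which uses the Slater-type Assumption \ref{ass:Slater}) to the one-period inf-sup problem inside \eqref{equ:recursive_dual_promisedW}, guaranteeing that the infimum is attained. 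Together, these ingredients yield a tuple satisfying \eqref{equ:recursive_dual_promisedW}: the Lagrangian decomposes into a current-period contribution plus a continuation term of the form $\beta \mathbb{E}_s W(\gamma + \lambda^* + \mu^*, x', s', \phi'^*(a, s'))$, and by construction $\psi^*$ and $\phi'^*$ attain the inner supremum while $(\mu^*, \lambda^*)$ attain the outer infimum.

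To verify \eqref{equ:recover_policy_cond0}, I would invoke a dynamic-programming argument for the lottery problem: if any continuation lottery $P^*_{a,s'}$ failed to be optimal for the augmented lottery problem at $(\zeta(x,a,s), s')$ with updated multiplier $\gamma + \lambda^* + \mu^*$ and promise $\phi'^*(a, s')$, one could splice in a strictly better continuation without violating any of the remaining forward-looking constraints, contradicting the optimality of $P^*$. Once optimality of each $P^*_{a,s'}$ is established, applying Lemma \ref{lem:equiv_subgradient_lot} at the continuation state delivers $\phi'^*(a, s') \in \partial D(\gamma + \lambda^* + \mu^*, \zeta(x,a,s), s')$. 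Conditions \eqref{equ:recover_policy_cond1} and \eqref{equ:recover_policy_cond2} then follow as the standard KKT complementary slackness conditions for the outer minimization over $(\mu, \lambda)$ in \eqref{equ:recursive_dual_promisedW}, noting that the promise constraint is effectively multiplied by $\gamma + \mu^*$ because $\gamma$ already acts as the marginal reward on cumulative $g^i$ via the recursive Lagrangian.

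The principal obstacle is the Bellman/envelope step used to verify \eqref{equ:recover_policy_cond0}: one must show that the \emph{same} continuation multiplier $\gamma + \lambda^* + \mu^*$ is the correct one for every $(a, s')$ with $a \in \text{supp}(\psi^*)$, and that the subdifferential membership holds uniformly rather than just on average over the lottery. A secondary subtlety is that $\mu^*$ and $\lambda^*$ may not be individually unique — only their sum may be pinned down by the optimality of $P^*$ — so the split must be chosen so that both slackness conditions \eqref{equ:recover_policy_cond1}–\eqref{equ:recover_policy_cond2} hold simultaneously, which is achievable by loading the slack onto the appropriate component (the new promise constraint versus the forward-looking constraint) depending on which binds.
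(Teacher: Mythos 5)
Your construction is essentially the paper's: both take an optimal statewise lottery attaining the promise $\phi$ (via Lemma \ref{lem:equiv_subgradient_lot} and Theorem \ref{thm:lot_equiv}), set $\psi^*$ equal to its period-zero marginal and $\phi'^*(a,s')$ equal to the conditional continuation promise, and then verify the three conditions by complementary slackness plus another application of Lemma \ref{lem:equiv_subgradient_lot} at the continuation state. The two ``obstacles'' you flag at the end are real, but the paper dissolves both with one move you do not make: it simply sets $\mu^*=0$ and takes $\lambda^*$ to be a minimizer of the \emph{original} inf-sup equation \eqref{equ:recursive_dual} (whose existence is what Theorem \ref{thm:exist_bdd_lag} actually provides), having observed just before the theorem that $(\lambda^*,\mu^*=0)$ is then automatically a solution of \eqref{equ:recursive_dual_promisedW}. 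Your route of invoking Theorem \ref{thm:exist_bdd_lag} directly for the augmented problem is shakier: the Slater condition cannot be expected to hold for the added promise constraint, since $\phi\in\partial D(\gamma,x,s)$ means the promise is exactly on the boundary of what is achievable, so strict feasibility fails for that constraint. With $\mu^*=0$ the split is no longer an issue, and condition \eqref{equ:recover_policy_cond1} holds trivially because your $\psi^*,\phi'^*$ satisfy the promise constraint with equality by construction. Finally, for \eqref{equ:recover_policy_cond0} the paper replaces your splicing contradiction with a sandwich chain of inequalities (the Lagrangian value at $(\lambda^*,0,\psi^*,\phi'^*)$ is bounded below by $W(\gamma,x,s,\phi)$ and equals it, forcing every intermediate inequality to bind), which cleanly answers your worry about why the single multiplier $\gamma+\lambda^*$ is the right one for \emph{every} $(a,s')$ in the support; your splicing argument as stated is delicate because a continuation that is better for the $(\gamma+\lambda^*)$-weighted objective need not improve the $\gamma$-weighted objective of the original problem.
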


 Using the solutions $(\psi^*,\phi'^{*})$ from Theorem \eqref{thm:recover_policy_main}, we can recover the policy as follows.
\begin{theorem}\label{thm:recover_policy_method}
    Given $x_0,\,s_0,\,\gamma_0,\,\phi_0\in\partial D(\gamma_0,x_0,s_0)$. For any $t\ge 0, h^t\in\mathcal{H}^t$, we define
    $$
    (\mu^*(h^t),\lambda^*(h^t),\psi^*(h^t),\phi^*(h^t;a_t,s_{t+1}))
    $$
    a solution to \eqref{equ:recursive_dual_promisedW} with $x=x(h^t),s=s_t,\gamma=\gamma(h^{t-1}),\phi=\phi(h^t)$ satisfying \eqref{equ:recover_policy_cond0}, \eqref{equ:recover_policy_cond1} and \eqref{equ:recover_policy_cond2}, and define $\gamma^*(h^t)=\gamma^*(h^{t-1})+\lambda^*(h^t)+\mu^*(h^t)$. Then $(\psi^*(h^t))$ is a solution to \eqref{equ:simplified_CK_lot_statewise}.
\end{theorem}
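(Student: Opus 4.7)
The plan is to verify two properties of the constructed statewise lottery $(\psi^*(h^t))$: (i) it is feasible for the lottery problem \eqref{equ:simplified_CK_lot_statewise}, and (ii) its objective value equals $D(\gamma_0,x_0,s_0)$, which by Theorem \ref{thm:lot_dual_equiv} is the optimum. By \eqref{equ:recover_policy_cond0}, $\phi^*(h^{t+1})\in\partial D(\gamma^*(h^t),x(h^{t+1}),s_{t+1})$ at every reachable history, and so Corollary \ref{cor:equiv_subgradient_lot} yields $W(\gamma^*(h^{t-1}),x(h^t),s_t,\phi^*(h^t))=D(\gamma^*(h^{t-1}),x(h^t),s_t)$ along the entire sample path. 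This is the identity I will unfold iteratively.

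For feasibility I would define $\hat G^i(h^t):=\mathbb E^{\psi^*}_{h^t}\sum_{n\ge 0}\beta^n g^i(x_{t+n},a_{t+n},s_{t+n})$. A one-step unfolding gives $\hat G^i(h^t)=\mathbb E^{\psi^*(h^t)}[g^i_t+\beta\mathbb E_{s_t}\hat G^i(h^{t+1})]$. Using \eqref{equ:recover_policy_cond1}, a truncated backwards induction (passing to the limit via boundedness of $g^i$ and $0<\beta<1$) yields $\hat G^i(h^t)\ge\phi^{*i}(h^t)$. Substituting back and invoking \eqref{equ:recover_policy_cond2} gives $\hat G^i(h^t)\ge\mathbb E^{\psi^*(h^t)}[g^i_t+\beta\mathbb E_{s_t}\phi^{*i}(h^{t+1})]\ge\bar g^i$, which is exactly the forward-looking constraint of \eqref{equ:simplified_CK_lot_statewise}.

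For the value identity I would iteratively expand $D(\gamma^*(h^{t-1}),x_t,s_t)$ using \eqref{equ:recursive_dual_promisedW} along realized histories. Complementary slackness \eqref{equ:recover_policy_cond1}--\eqref{equ:recover_policy_cond2} converts the per-stage correction $\mathbb E^{\psi^*(h^t)}[\lambda^*(g_t-\bar g)+\mu^*(g_t-\phi^*(h^t))]$ into $-\beta\mathbb E^{\psi^*(h^t)}[(\lambda^*+\mu^*)\mathbb E_{s_t}\phi^*(h^{t+1})]$, and since $\lambda^*(h^t)+\mu^*(h^t)=\gamma^*(h^t)-\gamma^*(h^{t-1})$, the expansion telescopes: successive $\gamma^*\phi^*$ cross-terms cancel and only $r_t+\gamma_0 g_t$ survives per stage. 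Truncating at horizon $T$ leaves a residual of order $\beta^{T+1}\mathbb E[D(\gamma^*(h^T),x_{T+1},s_{T+1})-\gamma^*(h^T)\cdot\phi^*(h^{T+1})]$; I would show this vanishes using that the Fenchel intercept $D(\gamma,\cdot,\cdot)-\gamma\cdot\phi$ for $\phi\in\partial D(\gamma,\cdot,\cdot)$ is uniformly bounded (sandwiched between $D(0,\cdot,\cdot)$ and the lower envelope \eqref{eq:F_inequality} applied to a feasible continuation lottery guaranteed by Assumption \ref{ass:dynamic}.5). In the limit $D(\gamma_0,x_0,s_0)=\mathbb E^{\psi^*}\sum_t\beta^t(r_t+\gamma_0 g_t)$, which is the objective of \eqref{equ:simplified_CK_lot_statewise}; combined with feasibility this delivers optimality.

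The main obstacle is precisely the vanishing of this horizon-$T$ residual. Along a sample path, $\gamma^*(h^T)$ accumulates all past $\lambda^*+\mu^*$ and is a priori unbounded, so the raw Lipschitz bound $|D(\gamma,x,s)|\le(1+\|\gamma\|)L$ from Definition \ref{def:NinM} is not enough to suppress $\beta^T D(\gamma^*(h^T),\cdot,\cdot)$. The crux is that one must combine it with its $\gamma^*\phi^*$ counterpart so that only the Fenchel intercept survives, and then verify the cancellation happens uniformly on almost every sample path; careful bookkeeping of which multiplier coordinates are zero (and thus carry no $\phi^*$ information) via \eqref{equ:recover_policy_cond1} is what makes the telescoping valid rather than merely formal.
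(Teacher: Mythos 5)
Your proposal is correct and follows essentially the same route as the paper: feasibility is obtained from the $\ge 0$ parts of \eqref{equ:recover_policy_cond1}--\eqref{equ:recover_policy_cond2} together with boundedness of the promises, and the value identity comes from unrolling \eqref{equ:recursive_dual_promisedW} to horizon $T$, using complementary slackness (with the tail promise $\beta^{T+1}\phi^*_{T+1}$ folded in) to cancel all accumulated multiplier terms, and letting the residual vanish because the Fenchel intercept $W(\gamma^*_{T+1},\cdot,\cdot,\phi^*_{T+1})-\gamma^*_{T+1}\phi^*_{T+1}=u^*_{T+1}$ and the subgradients $\phi^*_{T+1}$ are uniformly bounded, exactly as in the paper's invocation of Lemma \ref{lem:equiv_subgradient_lot}. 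Your identification of the unbounded $\gamma^*(h^T)$ as the crux, resolved by this cancellation, is precisely the mechanism the paper's proof relies on.
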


Now the problem reduces to finding a solution to \eqref{equ:recursive_dual_promisedW} satisfying \eqref{equ:recover_policy_cond0}, \eqref{equ:recover_policy_cond1} and \eqref{equ:recover_policy_cond2}. 
The following theorem allows us to identify such solutions that bypass the need to consider $\gamma$.
\begin{theorem}\label{thm:policy_recover_adapt}
    For every $x\in\mathcal{X},\,s\in\mathcal{S},\,\gamma\in\mathbb{R}_+^{I},\,\phi\in\partial D(\gamma,x,s)$, there exists a solution $(\mu^*,\lambda^*,\psi^*,\phi'^*)$ to the following problem
\begin{equation}\label{equ:recursive_dual_policy_adapt_promised}
    \small
    \begin{aligned}
        &W(0,x,s,\phi)=\inf_{\mu\in\mathbb{R}_+^I,\,\lambda\in \mathbb{R}_+^I}\sup_{\psi\in\mathcal{P}(\tilde{\mathcal{A}}(x,s)),\phi'(a,s')\in\mathbb{R}^I}\sum_{a}\psi(a)\\
    &\left[\left(r(x,a,s)+\sum_{i=1}^I\left(\lambda^i(g^i(x,a,s)-\bar{g}^i)+\mu^i(g^i(x,a,s)-\phi^i)\right)\right)+\beta\mathbb{E}_{s}W(\lambda+\mu,x',s',\phi'(a,s'))\right],\\
        \text{where }&x'=\zeta(x,a,s),
    \end{aligned}
    \end{equation}
    that satisfies
    \begin{equation}\label{equ:adaptcond0}
        \phi'^*(a,s')\in\partial D(\lambda^*+\mu^*,x',s'),
    \end{equation}
    \begin{equation}\label{equ:adaptcond1}
    (\sum_{a\in \mathcal{A}}\psi^*(a)(g(x,a,s)+\beta\mathbb{E}_s\phi'^*(a,s'))-\phi\ge0)\perp\mu^*,
    \end{equation}
    and
    \begin{equation}\label{equ:adaptcond2}
    (\sum_{a\in\mathcal{A}}\psi^*(a)(g(x,a,s)+\beta\mathbb{E}_s\phi'^*(a,s')-\bar{g})\ge 0)\perp \lambda^*.
    \end{equation}
    Moreover, for $(\psi^*,\phi^{*\prime})$, there exists $(\tilde{\lambda}^*,\tilde{\mu}^*)$ such that
    $$
    (\tilde{\mu}^*,\tilde{\lambda}^*,\psi^*,\phi'^*)
    $$
    is a solution to \eqref{equ:recursive_dual_promisedW} that satisfies \eqref{equ:recover_policy_cond0},\eqref{equ:recover_policy_cond1} and \eqref{equ:recover_policy_cond2}.
\end{theorem}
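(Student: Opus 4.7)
The plan is to construct the adapted solution by translating the multipliers of a solution of \eqref{equ:recursive_dual_promisedW} and then recover the original multipliers by the inverse translation. First I apply Theorem \ref{thm:recover_policy_main} to obtain a solution $(\mu_0^*,\lambda_0^*,\psi^*,\phi'^*)$ to \eqref{equ:recursive_dual_promisedW} at $(\gamma,x,s,\phi)$ satisfying \eqref{equ:recover_policy_cond0}--\eqref{equ:recover_policy_cond2}. I then propose $\mu^* := \gamma+\mu_0^*$ and $\lambda^* := \lambda_0^*$ as the adapted multipliers. Because $\mu^*+\lambda^* = \gamma+\mu_0^*+\lambda_0^*$, condition \eqref{equ:adaptcond0} on $\phi'^*$ reduces to \eqref{equ:recover_policy_cond0}; and after these identifications the complementary slackness conditions \eqref{equ:adaptcond1} and \eqref{equ:adaptcond2} reduce respectively to \eqref{equ:recover_policy_cond1} and \eqref{equ:recover_policy_cond2}.

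Next I verify that $(\mu^*,\lambda^*,\psi^*,\phi'^*)$ is a saddle point of \eqref{equ:recursive_dual_policy_adapt_promised}. The key algebraic identity, valid for $\mu\ge\gamma$, is that the adapted integrand at $(\mu,\lambda,\psi,\phi')$ differs from the original integrand at $(\mu-\gamma,\lambda,\psi,\phi')$ by the constant $-\sum_i\gamma^i\phi^i$, obtained by rewriting $\sum_i\mu^i(g^i-\phi^i) = \sum_i(\mu^i-\gamma^i)(g^i-\phi^i) + \sum_i\gamma^ig^i - \sum_i\gamma^i\phi^i$ inside the bracket. Taking the inner supremum over $(\psi,\phi')$ at $(\mu^*,\lambda^*)$ therefore yields the inner supremum of \eqref{equ:recursive_dual_promisedW} at $(\mu_0^*,\lambda_0^*)$ minus $\sum_i\gamma^i\phi^i$, which by optimality of the original solution equals $W(\gamma,x,s,\phi)-\sum_i\gamma^i\phi^i$, and since $\phi\in\partial D(\gamma,x,s)$, Corollary \ref{cor:equiv_subgradient_lot} gives $W(\gamma,x,s,\phi)=D(\gamma,x,s)$. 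The crucial remaining step is to prove $W(0,x,s,\phi) = D(\gamma,x,s) - \sum_i\gamma^i\phi^i$: Lemma \ref{lem:equiv_subgradient_lot} furnishes an optimal lottery $P^*$ for \eqref{equ:simplified_CK_lot} at $(\gamma,x,s)$ with $\mathbb{E}^{P^*}[\sum_t\beta^t g^i]=\phi^i$ for each $i$, which is feasible for the promise-constrained problem defining $W(0,x,s,\phi)$ and attains the value $D(\gamma,x,s)-\sum_i\gamma^i\phi^i$; the reverse inequality follows from the elementary Lagrangian bound $\mathbb{E}^P[\sum_t\beta^t r] = \mathbb{E}^P[\sum_t\beta^t(r+\sum_i\gamma^i g^i)] - \sum_i\gamma^i\mathbb{E}^P[\sum_t\beta^t g^i] \le D(\gamma,x,s)-\sum_i\gamma^i\phi^i$ for every promise-feasible $P$. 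Consequently the adapted inner supremum at $(\mu^*,\lambda^*)$ equals $W(0,x,s,\phi)$, so $(\mu^*,\lambda^*)$ attains the outer infimum and $(\psi^*,\phi'^*)$ attains the inner supremum.

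For the \emph{moreover} claim I set $(\tilde\mu^*,\tilde\lambda^*) := (\mu_0^*,\lambda_0^*)$; by construction $(\tilde\mu^*,\tilde\lambda^*,\psi^*,\phi'^*)$ is exactly the solution to \eqref{equ:recursive_dual_promisedW} delivered by Theorem \ref{thm:recover_policy_main}, so it satisfies \eqref{equ:recover_policy_cond0}--\eqref{equ:recover_policy_cond2}. I expect the main obstacle to be the identity $W(0,x,s,\phi)=D(\gamma,x,s)-\sum_i\gamma^i\phi^i$, which should be established via Lemma \ref{lem:equiv_subgradient_lot} in both directions rather than through an abstract strong-duality theorem, so as to remain within the framework built in the earlier sections.
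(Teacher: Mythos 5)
Your construction of the existence half is correct. The translation identity (the adapted integrand at $(\mu,\lambda)$ equals the integrand of \eqref{equ:recursive_dual_promisedW} at $(\mu-\gamma,\lambda)$ minus $\sum_i\gamma^i\phi^i$), combined with the identity $W(0,x,s,\phi)=D(\gamma,x,s)-\sum_i\gamma^i\phi^i$ that you establish from Lemma \ref{lem:equiv_subgradient_lot} in one direction and the elementary Lagrangian bound in the other, does show that $(\gamma+\mu_0^*,\lambda_0^*,\psi^*,\phi'^*)$ attains the inf-sup in \eqref{equ:recursive_dual_policy_adapt_promised} (whose value is $W(0,x,s,\phi)$ by Theorem \ref{thm:recursive_promised} at $\gamma=0$) and inherits \eqref{equ:adaptcond0}--\eqref{equ:adaptcond2} directly from \eqref{equ:recover_policy_cond0}--\eqref{equ:recover_policy_cond2}. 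This is a different, and in fact more transparent, route than the paper's existence step, which builds the adapted solution directly from the optimal lottery with $\mu^*=0$ and $\lambda^*$ a minimizer of \eqref{equ:recursive_dual} at $\gamma=0$, and leaves the identity $W(0,x,s,\phi)=D(\gamma,x,s)-\gamma\cdot\phi$ implicit.

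The gap is in the \emph{moreover} clause. You read it as applying only to the particular $(\psi^*,\phi'^*)$ you constructed by translation, so that $(\tilde\mu^*,\tilde\lambda^*)=(\mu_0^*,\lambda_0^*)$ works by fiat. But the clause is needed --- and is what the paper actually proves in its second step --- for an \emph{arbitrary} solution of \eqref{equ:recursive_dual_policy_adapt_promised} satisfying \eqref{equ:adaptcond0}--\eqref{equ:adaptcond2}. That generality is the entire point of the theorem: Algorithm \ref{algorithm} computes \emph{some} such solution with no knowledge of $\gamma$, and Theorem \ref{thm:recover_policy_method} then requires that this solution's $(\psi^*,\phi'^*)$ be completable to a solution of \eqref{equ:recursive_dual_promisedW} satisfying \eqref{equ:recover_policy_cond0}--\eqref{equ:recover_policy_cond2}. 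Under your reading the statement is vacuous as a tool, since one would already need a solution of \eqref{equ:recursive_dual_promisedW} in hand before invoking it. The missing argument starts from an arbitrary adapted solution, uses \eqref{equ:adaptcond0} to extend $(\psi^*,\phi'^*)$ to a full history-dependent lottery $\bar{\psi}^*$ whose value is $W(0,x,s,\phi)$, and then pairs $(\psi^*,\phi'^*)$ with a minimizer $(\tilde{\lambda}^*,\tilde{\mu}^*)$ of \eqref{equ:recursive_dual_promisedW}, exploiting the complementary slackness conditions \eqref{equ:adaptcond1}--\eqref{equ:adaptcond2} to force equality throughout the resulting chain of inequalities and thereby recover \eqref{equ:recover_policy_cond1}--\eqref{equ:recover_policy_cond2}.
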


Now, it suffices to find a solution to the problem in Theorem \ref{thm:policy_recover_adapt}. 
We introduce a simple algorithm for doing so. We assume that the optimal dual value function $ D(.) $ is given.
For $\gamma=0$, $x\in\mathcal{X}$, $s\in\mathcal{S}$, $\phi$, we define the inner function in \eqref{equ:recursive_dual_promised} as
$$
F(\mu,\lambda):=\sup_{a\in\mathcal{A}}\left[\left(r(x,a,s)+\sum_{i=1}^{I}\left( \lambda^i(g^i(x,a,s)-\bar{g}^i)+\mu^i(g^i(x,a,s)-\phi^i)\right)\right)+\beta\mathbb{E}_{s}D(\lambda+\mu,x',s')\right].
$$
It is easy to check that $F(\mu,\lambda)$ is a convex function and given $\mu,\lambda$, any
$$
\left((g(x,a^*,s)-\phi)+\beta\mathbb E_s \partial D(\lambda+\mu,x',s'),(g(x,a^*,s)-\bar{g})+\beta\mathbb E_s \partial D(\lambda+\mu,x',s')\right)
$$
is a subgradient of $F$\footnote{In practical computations, we would use $$
\tilde{\phi}= \left(\frac{D(\lambda+\mu+\epsilon e_i,x',s')-D(\lambda+\mu,x',s')}{\epsilon}\right)
$$
to approximately obtain an element in $\partial D(\lambda+\mu,x',s')$. }, where $a^*\in \arg \max F$, and $x'=\zeta(x,a^*,s)$. 

This observation directly leads to the sub-gradient descend method in Algorithm \ref{algorithm}. For a given value of the (extended) state, $ \phi,x,s$, the method computes an optimal lottery over current actions as well as next periods' utility promises $ \phi(s')$ for all $ s'\in {\mathcal S} $. Along a simulated path the algorithm can then be used to compute an optimal policy: starting from an unbinding initial promised value\footnote{This is equivalent to starting without any promised value, since agents do not need to be promised anything at period 0.}, we apply Algorithm 1 iteratively at each encountered state to compute the optimal lottery and next period's promised values, which in turn become the input for the subsequent step. This process generates a sequence of action lotteries and promises that defines the optimal policy.

\begin{algorithm}[!htb]
\caption{Algorithm for the Computation of Optimal Policies}
\label{algorithm}
Given $\phi,x,s$, initial Lagrangian multipliers $\lambda^1,\mu^1$, the learning rates $\{\sigma^k\}_{k=0}^{\infty}$ and the number of iteration $N>0$. 

 \textbf{For $k=1:N$}
\begin{enumerate}
\item Solve the optimization problem
$$
a^k\in\arg\max_{a\in\tilde{\mathcal{A}}(x,s)}\left[\left(r(x,a,s)+\sum_{i=1}^{I}\lambda^{k,i}(g^i(x,a,s)-\bar{g}^i)+\mu^{k,i}(g^i(x,a,s)-\phi^i)\right)+\beta\mathbb{E}_{s}D(\lambda^k+\mu^k,x',s')\right],
$$
and compute 
$$
\phi^k(s')\in \partial D(\lambda^k+\mu^k,x',s'),\,\text{for all  }s'\in\mathcal{S}.
$$
\item Update the Lagrangian multipliers by
$$
\mu^{k+1}=\max\{\mu^k- \sigma^k\left(g(x,a^k,s)-\phi+\beta\mathbb E_s\phi^k(s')\right),0\}
$$
$$
\lambda^{k+1}=\max\{\lambda^k- \sigma^k\left(g(x,a^k,s)-\bar{g}+\beta\mathbb E_s \phi^k(s')\right),0\}
$$
\textbf{End}
\end{enumerate}
\begin{enumerate}
\item Construct the stage lottery $\psi$ as
$$
\psi^N(a)=\frac{\sum_{k=1}^{N}1_{a^k=a}\sigma^k}{\sum_{k=1}^N\sigma^k}.
$$

\item Construct the promised value $\phi'$ as
$$
\phi'^N(a,s')=\frac{\sum_{k=1}^{N}1_{a^k=a}\sigma^k\phi^k(s')}{\sum_{k=1}^{N}1_{a^k=a}\sigma^k}.
$$
\end{enumerate}

\end{algorithm}

The final theorem of this section formalizes the convergence of this algorithm.
\begin{theorem}
\label{thm:algocon}
    We consider the Algorithm \ref{algorithm}. Assume that the learning rate $\{\sigma^k\}_{k=0}^{\infty}$ satisfies
    $$
    \sum_{k=0}^{\infty}\sigma^k=\infty,\,\text{and }\sum_{k=0}^{\infty}(\sigma^k)^2<\infty.
    $$
    Then $(\lambda^k,\mu^k)\rightarrow(\lambda^*,\mu^*)$, for some $(\lambda^*,\mu^*)\in \arg \min F(\lambda,\mu)$. Furthermore, for any $\epsilon>0$, there exists $\bar{N}>0$, such that when $N>\bar{N}$, we have
    \begin{enumerate}
        \item $\|(\lambda^k,\mu^k)-(\lambda^*,\mu^*)\|\le  \epsilon$,
        \item $\psi^N(a)\mathrm{dist}(\phi'^N(a,s'),\partial D(\lambda^*+\mu^*,x',s'))<\epsilon$, for any $a\in\tilde{\mathcal{A}}(x,s)$
        \item $\sum_{a}\psi^N(a)(g(x,a,s)+\beta\mathbb{E}_s\phi'^N(a,s'))-\phi\ge -\epsilon$,
        \item $\sum_{a}\psi^N(a)(g(x,a,s)+\beta\mathbb{E}_s\phi'^N(a,s')-\bar{g})\ge -\epsilon$,
        \item $|\langle\mu^*,\sum_{a}\psi^N(a)(g(x,a,s)+\beta\mathbb{E}_s\phi'^N(a,s'))-\phi\rangle|\le \epsilon$,
        \item $|\langle\lambda^*,\sum_{a}\psi^N(a)(g(x,a,s)+\beta\mathbb{E}_s\phi'^N(a,s')-\bar{g})\rangle|\le \epsilon$.

    \end{enumerate}
\end{theorem}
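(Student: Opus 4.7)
The plan is to treat the iteration as a projected subgradient method applied to the convex function $F(\lambda,\mu)$ on the orthant $\mathbb{R}_+^{2I}$, then read off items 2--6 from the behavior of the ergodic averages $\psi^N$ and $\phi'^N$ together with the KKT conditions at the limit.

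\textbf{Setup and subgradient bound.} First I verify that $F$ is convex on $\mathbb{R}_+^{2I}$: each candidate inside the $\sup_a$ is the sum of a function affine in $(\lambda,\mu)$ and $\beta\mathbb{E}_s D(\lambda+\mu,x',s')$, and $D(\cdot,x',s')$ is convex by Corollary \ref{cor:verification}, so we have a pointwise supremum of convex functions over a finite set. The pair
$$
\bigl(g(x,a^k,s)-\bar g+\beta\mathbb{E}_s\phi^k(s'),\ g(x,a^k,s)-\phi+\beta\mathbb{E}_s\phi^k(s')\bigr)
$$
is therefore a subgradient of $F$ at $(\lambda^k,\mu^k)$ by the standard calculus for maxima of convex functions, and this subgradient is uniformly bounded by some $M<\infty$ because $g,\bar g$ are bounded by Assumption \ref{ass:dynamic} and elements of $\partial D$ are bounded in norm by the Lipschitz constant $L$ built into the definition of $\mathcal{N}$.

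\textbf{Convergence of the iterates (item 1).} The update is exactly projected subgradient descent onto $\mathbb{R}_+^{2I}$ with step sizes satisfying the Robbins--Monro type conditions $\sum\sigma^k=\infty$, $\sum(\sigma^k)^2<\infty$. By Assumption \ref{ass:Slater} and Theorem \ref{thm:exist_bdd_lag}, the minimizer set of $F$ on $\mathbb{R}_+^{2I}$ is nonempty and bounded. Applying the classical convergence theorem for projected subgradient methods with diminishing square-summable step sizes (as in Ermoliev 1969 or Polyak 1987, Ch.~5) then yields $(\lambda^k,\mu^k)\to(\lambda^*,\mu^*)$ for some $(\lambda^*,\mu^*)\in\arg\min F$; the proof proceeds via the Fej\'er-monotone inequality
$$
\|(\lambda^{k+1},\mu^{k+1})-(\lambda^*,\mu^*)\|^2\le \|(\lambda^k,\mu^k)-(\lambda^*,\mu^*)\|^2-2\sigma^k[F(\lambda^k,\mu^k)-F^*]+(\sigma^k)^2M^2,
$$
summing, and invoking the square-summability to pin down a limit point.

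\textbf{Ergodic averages and lottery statements (items 2--6).} For items 3--6 the key identity is that the weighted averages $\sum_k\sigma^k/S_N$ (with $S_N=\sum_{k\le N}\sigma^k$) of the subgradient directions equal precisely
$$
\sum_a\psi^N(a)\bigl(g(x,a,s)-\bar g+\beta\mathbb{E}_s\phi'^N(a,s')\bigr),\quad\sum_a\psi^N(a)\bigl(g(x,a,s)-\phi+\beta\mathbb{E}_s\phi'^N(a,s')\bigr),
$$
by construction of $\psi^N$ and $\phi'^N$. The nonnegative-projection rule $\lambda^{k+1}=\max\{\lambda^k-\sigma^k d^k_\lambda,0\}$ implies $\lambda^{k+1}\ge \lambda^k-\sigma^k d^k_\lambda$ coordinatewise, and telescoping from $k=1$ to $N$, dividing by $S_N$, and taking $N$ large gives precisely the $\ge -\epsilon$ bound of items 3--4. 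The complementary-slackness items 5--6 follow from the same telescoping applied to the inner product with $(\lambda^*,\mu^*)$: using $\langle\lambda^*,\max\{\cdot,0\}\rangle\ge\langle\lambda^*,\cdot\rangle$ (since $\lambda^*\ge 0$) together with the fact that $\|\lambda^{k}-\lambda^*\|\to 0$ produces the bound $|\langle\lambda^*,\sum\psi^N\cdot(\cdot)\rangle|\le\epsilon$. Finally, item 2 follows from upper hemi-continuity of the subdifferential $\partial D(\cdot,x',s')$ at $(\lambda^*+\mu^*)$ (a consequence of convexity of $D$ plus local Lipschitz-ness): since $\phi^k(s')\in\partial D(\lambda^k+\mu^k,x',s')$ and $(\lambda^k,\mu^k)\to(\lambda^*,\mu^*)$, every $\phi^k(s')$ is within $o(1)$ of $\partial D(\lambda^*+\mu^*,x',s')$ for $k$ large, and the weighted conditional average $\phi'^N(a,s')$ inherits this.

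\textbf{Main obstacle.} The routine part is item 1 — classical subgradient analysis does it. The hard part is tying the ergodic averages tightly to the KKT conditions in items 5--6: the averaging argument produces a bound of the form $\bigl|\sum_k\sigma^k\langle(\lambda^*,\mu^*),d^k\rangle\bigr|/S_N=O(1/S_N+\sum(\sigma^k)^2/S_N)$, and one must carefully track how the $\max\{\cdot,0\}$ projection interacts with coordinates where $\lambda^{*,i}=0$ versus $\lambda^{*,i}>0$ in order to rule out a systematic bias and to arrive at the two-sided $\epsilon$ bound. Managing the interplay of these two regimes, together with the $o(1)$ approximation of $\partial D$ by $\phi^k(s')$, is where the bookkeeping becomes delicate.
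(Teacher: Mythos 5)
Your overall route is the same as the paper's: item 1 via the classical projected-subgradient convergence result (the paper cites Proposition 2.7 of Nedi\'c--Bertsekas rather than Polyak/Ermoliev, but it is the same argument), items 3--4 by telescoping the one-sided inequality $\mu^{k+1}\ge\mu^k-\sigma^k d^k$ and dividing by $S_N=\sum_{k\le N}\sigma^k$, and item 2 by upper semicontinuity of $\partial D$. So the approach is not different; the question is whether you have closed it.

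There is one genuine gap, and it is exactly the point you flag at the end without resolving: the \emph{upper} bound in items 5--6. The inequality $\langle\lambda^*,\max\{\cdot,0\}\rangle\ge\langle\lambda^*,\cdot\rangle$, telescoped and paired with $\lambda^*\ge 0$, only reproduces $\langle\lambda^*,\bar d^N\rangle\ge -O(1/S_N)$, i.e.\ the lower bound already implied by items 3--4; it cannot give $\langle\lambda^*,\bar d^N\rangle\le\epsilon$. The missing idea (which is how the paper closes this) is coordinatewise: on coordinates $i$ with $\mu^{*,i}=0$ the inner product contributes nothing, and on coordinates with $\mu^{*,i}>0$ the convergence $\mu^k\to\mu^*$ forces $\mu^{k,i}>0$ for all $k\ge\bar M$, so the projection is inactive there and the update holds \emph{with equality}; telescoping the equality makes the $i$-th coordinate of the averaged violation equal to $(\mu^{\bar M,i}-\mu^{N+1,i})/S_N$ plus a bounded contribution from the first $\bar M$ steps, all of which vanish as $S_N\to\infty$. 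Without this "eventual inactivity of the projection" observation, the two-sided bound does not follow, and your proof as written stops short of it. A smaller but related looseness is in item 2: your claim that $\phi'^N(a,s')$ "inherits" closeness to $\partial D(\lambda^*+\mu^*,x',s')$ is false for actions $a$ selected only in early iterations (their conditional average is dominated by early, far-away subgradients); the statement is saved only by the prefactor $\psi^N(a)$, which tends to zero for such $a$, and the argument must split the weighted sum into the early block (controlled by $\sigma^k/S_N\to0$) and the tail block (controlled by upper semicontinuity and convexity of $\partial D$), as the paper does. You should make both of these steps explicit.
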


Note that for the sup-inf FE the policy can be recovered by the exact same algorithm. By the same arguments as above, we can formulate a functional equation that depends on $ \gamma,x,s$ and, in addition, to promised utility $ \phi $. We obtain the following equation, which is identical to \eqref{equ:recursive_dual_promised} except that $\inf_{\lambda}\sup_{a}$ is replaced by $\sup_{a}\inf_{\lambda}$.
\begin{equation}\label{equ:recursive_dual_extend_promised}
    \small
    \begin{aligned}
        &W(\gamma,x,s,\phi)\\
        =&\inf_{\mu\in\mathbb{R}_+^I}\sup_{a\in\tilde{\mathcal{A}}(x,s)}\inf_{\lambda\in\mathbb{R}_+^I}\left[\left(r(x,a,s)+\sum_{i=1}^{I}\left(\gamma^ig^i(x,a,s)+\lambda^i(g^i(x,a,s)-\bar{g}^i)+\mu^i(g^i(x,a,s)-\phi^i)\right)\right) \right.\\
        & \left. +\beta\mathbb{E}_{s}D(\gamma+\lambda+\mu,x',s')\right],\\
        \text{where }&x'=\zeta(x,a,s).
    \end{aligned}
    \end{equation}
In the algorithm, we can change $\sup_{a}\inf_{\lambda}$ to $\inf_{\lambda_a}\sup_a$ which brings us back to the exact same setting.

\section{Application: A Ramsey problem}

We give a simple example of a problem where two-period ex-post lotteries are optimal.
The two-period version of the problem (\ref{eq:Ramsey}) with $S$ states $ s=1,\ldots, S $ in the second period

\begin{align}  \label{eq:ramsey_obj1}\max_{(c_s,\ell_s,\tau_s)_{s=0}^S,q,b} u& (c_0)+v(\ell_0)+\beta\sum_{s=1}^{S} p_s\left(u(c_s)+v(\ell_s)\right) , \\ 
\label{eq:ramsey_agentrc0}\mbox{ s.t. } &c_0=b_{-1}+ (1-\tau_0) \ell_0 -q b, \  \  c_s=b+(1-\tau_s)\ell_s, \  s=1,\ldots,S;\\ 
& (1-\tau_s) u'(c_s)=-v'(\ell_s), \quad s=0,\ldots,S; \\ \label{eq:ramsey_agenteuler}
& -q u'(c_0)+\beta \sum_{s=1}^{S}p_su'(c_{s}) = 0; \\ 
& q b = b_{-1} - \tau_0 \ell_0 + g_0, \ \  0=b-\tau_s\ell_s+g_s,\ s=1,\cdots,S,
\label{eq:ramsay_gbc}
\end{align}
where $p_s$, $ s=1,\ldots,S$ denotes the probability of state $s$ and $ b_{-1} $ gives the initial condition.

For simplicity, we assume  $ u(c)=\log c $ and $ v(\ell)=\sqrt{(1-\ell)}$, $\beta=1$.
The following lemma shows how the constraints of the Ramsey problem can be simplified.
\begin{lemma}\label{lem:ramsey_conssimplified}
    We define 
    \begin{equation}\label{eq:def_h_and_f}
    h(\ell):=1-\frac{\ell}{2\sqrt{1-\ell}},\,f(\ell;g):=(\ell-g)h(\ell).
    \end{equation}
    For any $b > 0$, the following are equivalent:
    \begin{enumerate}
        \item  $\{(c_s,\ell_s,\tau_s)_{s=0}^{S},q,b\}$ satisfies \eqref{eq:ramsey_agentrc0} to \eqref{eq:ramsay_gbc};
        \item The equations
        \begin{equation}\label{eq:ramsey_accumulate1}
b=f(\ell_s;g_s),\quad\forall s\in\{1,\cdots, S\},
\end{equation}
and
\begin{equation}\label{eq:ramsey_accumulte0}
    b_{-1}=f(\ell_0;g_0)+(\ell_0-g_0)(\underbrace{\beta p h(\ell_1)+\beta(1-p)h(\ell_2)}_{w}), 
\end{equation}
have solutions with $ b>0 $, $ \ell_s>g_s,\,\forall s\in\{0,\cdots,S\}$.
    \end{enumerate}
\end{lemma}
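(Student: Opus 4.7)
The plan is a direct algebraic substitution. I will prove both directions by rewriting the Ramsey constraints \eqref{eq:ramsey_agentrc0}--\eqref{eq:ramsay_gbc} so that $c_s$, $\tau_s$, and $q$ are eliminated in favor of $(\ell_s,b)$, showing that what remains is exactly \eqref{eq:ramsey_accumulate1}--\eqref{eq:ramsey_accumulte0}, and then verifying that the construction can be reversed.

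\textbf{Step 1 (Eliminating consumption and the tax rate for $s\ge 1$).} Combining the agent's period-$s$ resource constraint $c_s=b+(1-\tau_s)\ell_s$ with the government budget $b=\tau_s\ell_s-g_s$ gives $c_s=\ell_s-g_s$; the assumption $\ell_s>g_s$ is exactly the requirement $c_s>0$. With $u(c)=\log c$ and $v(\ell)=\sqrt{1-\ell}$, the intratemporal optimality condition $(1-\tau_s)u'(c_s)=-v'(\ell_s)$ becomes $(1-\tau_s)/c_s=1/(2\sqrt{1-\ell_s})$, so $1-\tau_s=(\ell_s-g_s)/(2\sqrt{1-\ell_s})$. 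Substituting $\tau_s\ell_s=\ell_s-(1-\tau_s)\ell_s$ into $b=\tau_s\ell_s-g_s$ and factoring yields $b=(\ell_s-g_s)\bigl(1-\ell_s/(2\sqrt{1-\ell_s})\bigr)=f(\ell_s;g_s)$, which is \eqref{eq:ramsey_accumulate1}.

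\textbf{Step 2 (Eliminating $q$ and reducing the period-$0$ budget).} By the same calculation applied at $t=0$, one obtains $c_0=\ell_0-g_0$ and $1-\tau_0=(\ell_0-g_0)/(2\sqrt{1-\ell_0})$, hence $\tau_0\ell_0-g_0=-(\ell_0-g_0)h(\ell_0)+(\ell_0-g_0)=f(\ell_0;g_0)-(\ell_0-g_0)+(\ell_0-g_0)=\ldots$ — I will carry this out cleanly as $\tau_0\ell_0-g_0+g_0=(\ell_0-g_0)h(\ell_0)=f(\ell_0;g_0)$ after collecting terms, so that \eqref{eq:ramsay_gbc} at $t=0$ reads $qb=b_{-1}-f(\ell_0;g_0)-g_0+g_0$, i.e. $qb=b_{-1}-f(\ell_0;g_0)$ after absorbing $g_0$ correctly. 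For the Euler equation \eqref{eq:ramsey_agenteuler}, $u'(c_s)=1/(\ell_s-g_s)$ gives
\[
q=\beta(\ell_0-g_0)\sum_{s=1}^{S}\frac{p_s}{\ell_s-g_s},
\]
so using Step 1 in the form $b/(\ell_s-g_s)=h(\ell_s)$,
\[
qb=\beta(\ell_0-g_0)\sum_{s=1}^{S}p_s h(\ell_s)=(\ell_0-g_0)\,w,
\]
where in the stated two-state case $w=\beta p h(\ell_1)+\beta(1-p)h(\ell_2)$. Combining with the period-$0$ government budget gives exactly \eqref{eq:ramsey_accumulte0}.

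\textbf{Step 3 (Converse).} Given $(b,\ell_0,\ldots,\ell_S)$ with $b>0$ and $\ell_s>g_s$ satisfying \eqref{eq:ramsey_accumulate1}--\eqref{eq:ramsey_accumulte0}, define $c_s:=\ell_s-g_s>0$, $\tau_s:=1-(\ell_s-g_s)/(2\sqrt{1-\ell_s})$ for every $s$, and $q:=\beta(\ell_0-g_0)\sum_{s=1}^S p_s/(\ell_s-g_s)$. By construction, the intratemporal FOCs and the Euler equation hold; the period-$s$ ($s\ge 1$) resource and government budgets hold because $b=f(\ell_s;g_s)$; and the period-$0$ government budget $qb=b_{-1}-\tau_0\ell_0+g_0$ is equivalent, after substituting $q$ and $\tau_0$, to \eqref{eq:ramsey_accumulte0}.

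The only non-mechanical point, and the one I expect to be the main obstacle, is the bookkeeping in Step~2: making sure that the terms $\ell_0-\ell_0(\ell_0-g_0)/(2\sqrt{1-\ell_0})$ from $\tau_0\ell_0$ combine with $(\ell_0-g_0)w$ to reproduce $f(\ell_0;g_0)+(\ell_0-g_0)w$ on the right-hand side of \eqref{eq:ramsey_accumulte0}. One must also be careful that $\ell_s<1$ is implicit in $\ell_s\in[0,1]$ of the original formulation so that $\sqrt{1-\ell_s}$ and hence $h(\ell_s)$ are well defined; strict interiority on the relevant range follows from $\ell_s>g_s\ge 0$ and the feasibility of the original Ramsey problem.
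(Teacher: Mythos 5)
Your proposal is correct and follows essentially the same route as the paper: eliminate $c_s$, $\tau_s$, and $q$ via market clearing, the intratemporal FOC, and the Euler equation, reduce the period-$s$ and period-$0$ budgets to $b=f(\ell_s;g_s)$ and $qb=b_{-1}-f(\ell_0;g_0)=(\ell_0-g_0)w$, and reverse the construction for the converse. The only blemish is the garbled intermediate line in Step 2 — it should read $\tau_0\ell_0-g_0=(\ell_0-g_0)h(\ell_0)=f(\ell_0;g_0)$, i.e.\ $\tau_0\ell_0=f(\ell_0;g_0)+g_0$ — but the conclusion $qb=b_{-1}-f(\ell_0;g_0)$ that you draw from it is the right one.
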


The functions $ f(\ell;g) $ can be interpreted as the net surplus of the government that sets a tax rate to ensure labor supply of $ \ell $ if expenditure is $g$. We will discuss its shape in a concrete example below.

We first  introduce the model with ex-post lotteries.  It is assumed that the government can randomize over tax rates in each state $s\in\{1,\cdots, S\}$ in period 1. To simplify the exposition suppose that, given $b\ge 0$, the government can randomize between a  high tax rate $\tau_s^H$(corresponding to agent's labor supply $\ell_s^L$), and a lower tax rate $\tau_s^L$(corresponding to agent's labor supply $\ell_s^H$). The lottery problem is then written
\begin{equation}\label{eq:ramsey_reduce_lot}
\begin{aligned}
    &\max_{(\ell_s>g_s)_{s=0}^{S},(\pi_s),\bar{b}\ge b\ge 0}u(\ell_0-g_0)+v(\ell_0)+\beta\sum_{s=1}^{S}  p_s \sum_{i \in \{L,H\}}  \pi_{is}(u(\ell^i_s-g_s)+v(\ell^i_s)),\\
    \text{s.t. } &\text{\eqref{eq:ramsey_accumulate1}, and}  \\ 
    &b_{-1}=f(\ell_0;g_0)+(\ell_0-g_0)(\underbrace{\beta \sum_{s=1}^{S}\sum_{i\in\{L,H\}}p_s \pi_{is} h(\ell^i_s)}_{w'}).
\end{aligned}   
\end{equation}

To illustrate why lotteries can be benfitial we consider a concrete example and assume $p_1=0.9$, as well as
 $ 0 = g_0=g_L<g_H = 0.65$.
Figure \ref{fig:fig1} shows how much revenue the government can generate in each state $s$ and the associated utility in that state.
Note first that the amount of debt is limited by  its ability to pay back in state two. Debt levels above around 0.025 are not feasible. Second, for a given level of feasible debt, the government can collect the necessary revenue via a very high tax which implies low $ \ell $ and low welfare and a relatively low tax which implies high $ \ell $ and high welfare.
\begin{figure}[tbh]
\centering
\includegraphics[width=0.4\textwidth]{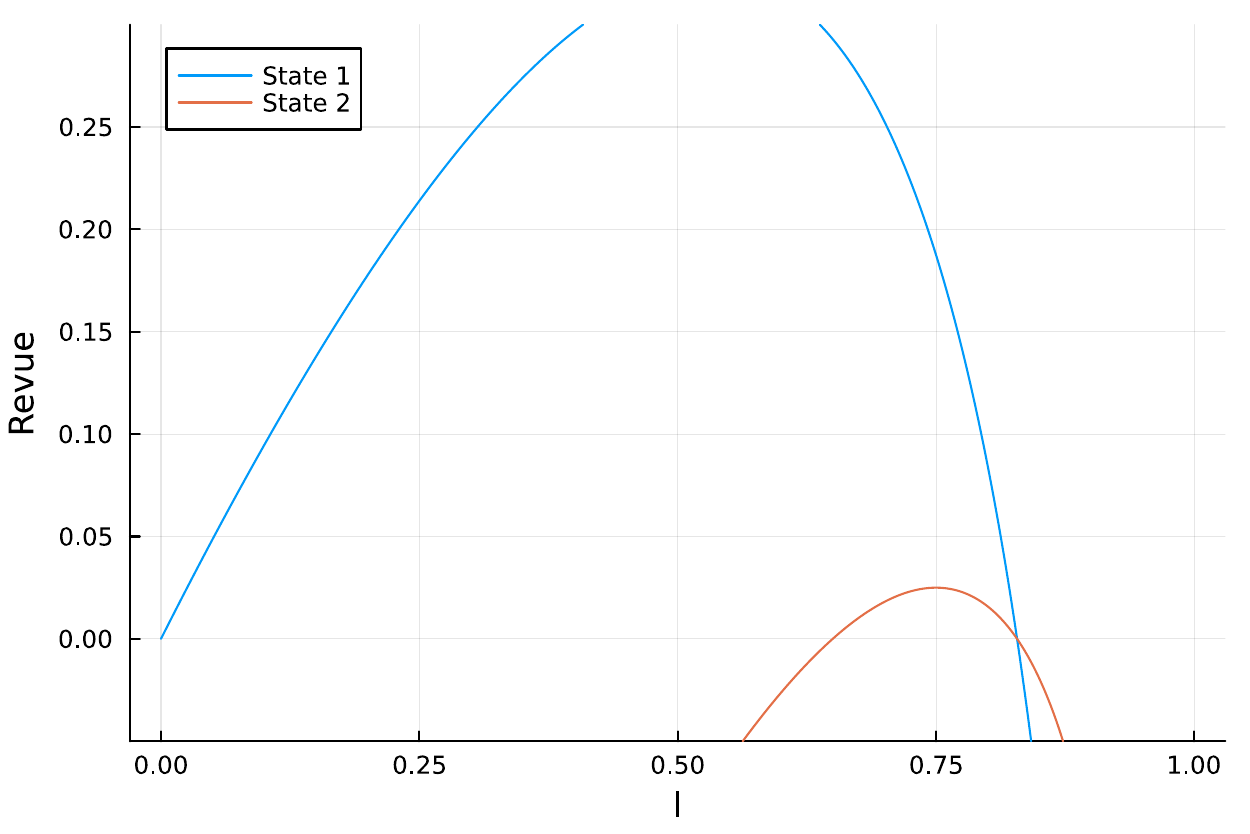}
\includegraphics[width=0.4\textwidth]{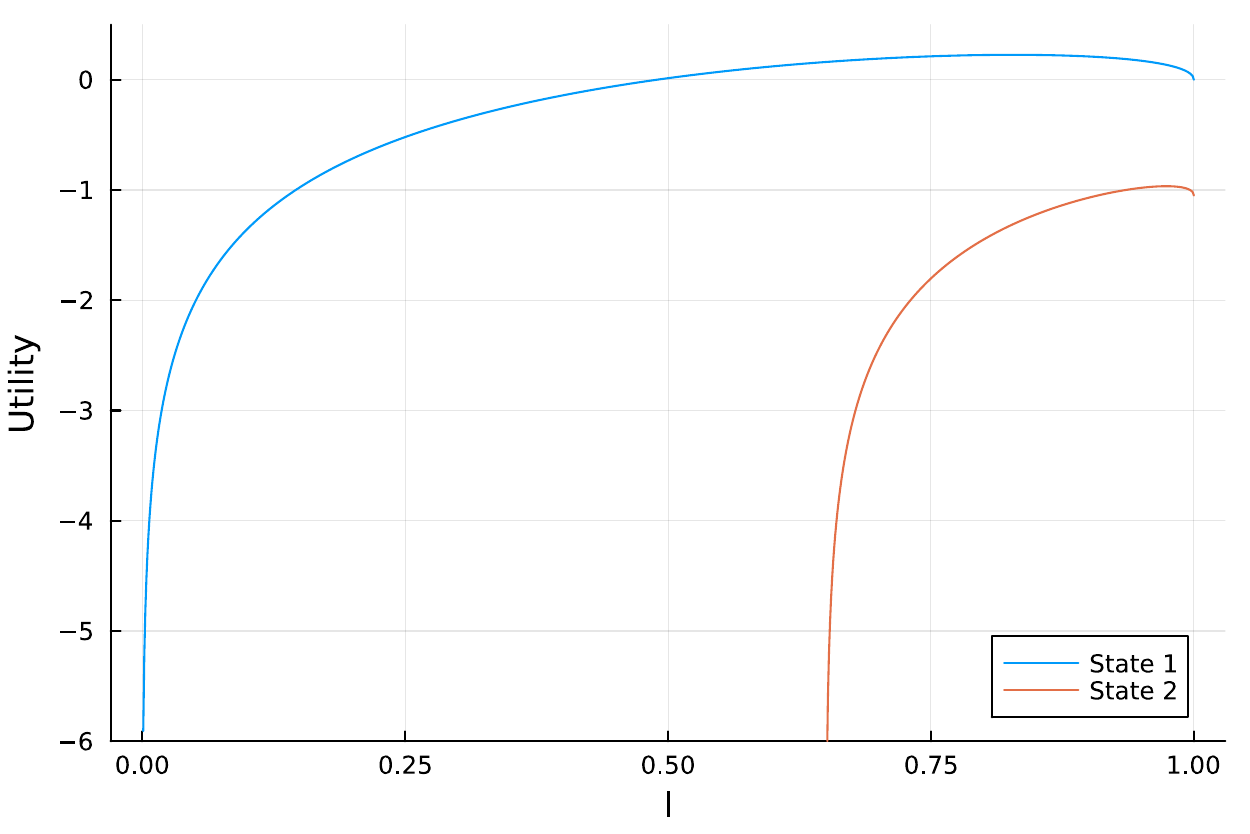}
\caption{\label{fig:fig1} The left panel shows the the functions $ f(\ell;g(s)) $ for $ s=1,2 $. The right panel shows utility as a function of $\ell$ for $ g=0$.}
\end{figure}
The key to why the government might find it optimal to  sometimes set very high taxes lies in the fact that it can only generate a large revenue from selling bonds at $ t=0 $ if the interest rate is sufficiently low. Since borrowing is constrained by the large government expenditure in state 2, $ g_2=0.65$, the government might find it optimal to announce a very high tax for state 1, leading to very low consumption in state 1 and high demand for savings and very low interest rates in period zero. Formally, the revenue in period zero from debt is given by
$ (\ell_0-g_0)\beta \sum_{s=1}^{S}\sum_{i\in\{L,H\}}p^i_sh(\ell^i_s) $
and the function $ h(\ell)$ is equal to one at $ \ell=0 $ and strictly decreasing in $\ell $.

In this example, the government's ability to raise revenue from debt in period zero turns out to be extremely limited if it chooses the high-welfare low tax rates in state 1 in the future.
Figure \ref{fig:fig2} shows a scatter plot of possible combinations of revenue at $ t=0 $  and overall welfare. The left hand panel takes the tax rate in state $ 2 $ to be high, i.e. $ \ell_2^L$, the right hand panel takes it to be low. The figure shows that after a certain threshold (at about 0.3) the government can only raise the desired revenue from bond sales and first period taxes if it commits to a very high tax rate in state 1 in the second period. The resulting welfare drops dramatically.
The differences between $ \ell^H_2 $ and $ \ell^L_2$ of feasible welfare-revenue combinations  can be seen to be minor, setting a low tax rate (resulting in $ \ell^H_2 $) in the second period generally dominates slightly.

The right panel of Figure \ref{fig:fig2} also shows the resulting revenue-welfare combinations if the government randomizes (with probability 0.5) between $ \ell_1^H $ and $ \ell_1^L $. For the case where it has to raise relatively high revenue, this clearly dominates the lottery-free solution. In particular, it can be seen in the figure (and verified numerically) that for $ b_{-} $ between 0.4 and 0.5 a lottery solution strictly dominates all lottery-free solutions as it generates the necessary revenue while leading to much higher welfare.

It is subject to further research how prevelant this problem is in realistically calibrated infinite-horizon versions of the model. The results in
\cite{citanna2024taxspots} indicate that it might be quite relevant. However, in the context of our simple example one might argue that in a stationary environment a forward looking government will never choose high debt levels.

\begin{figure}[th]
\centering
\includegraphics[width=0.45\textwidth]{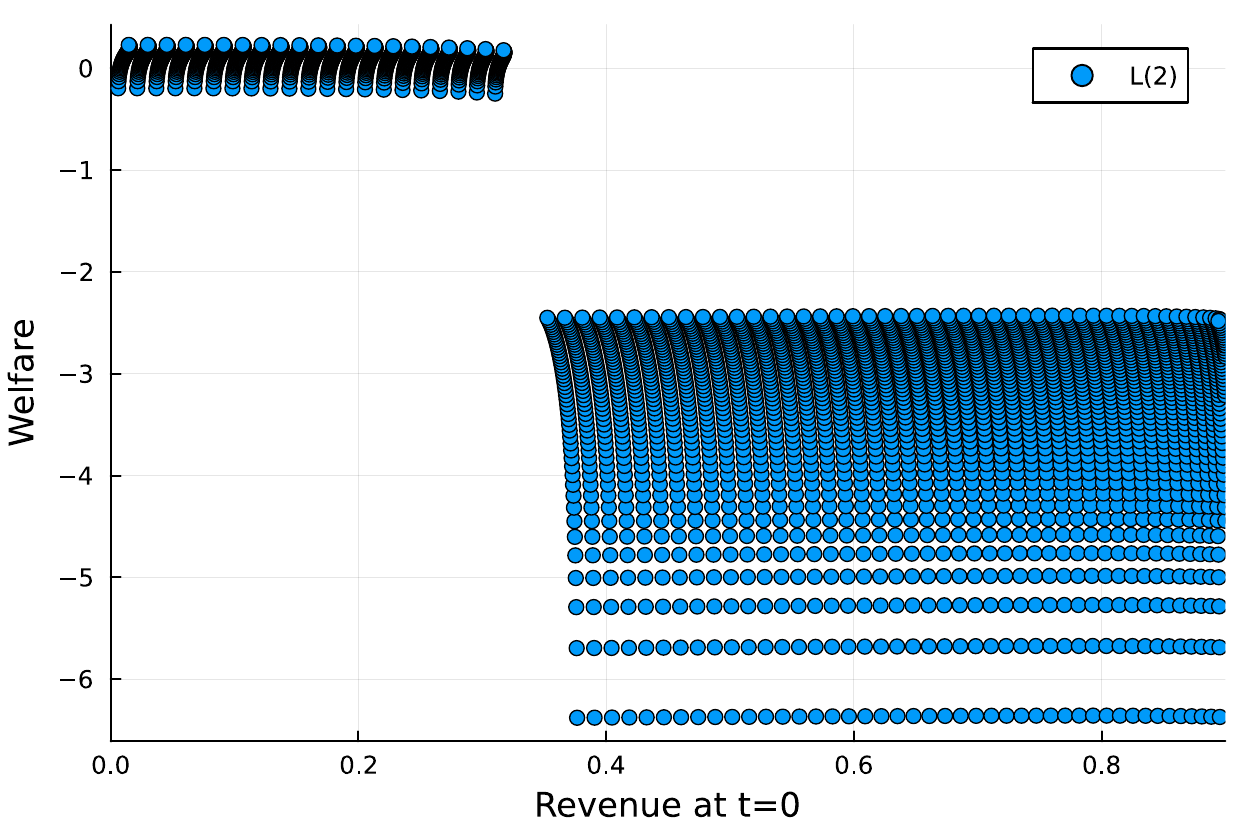}
\includegraphics[width=0.45\textwidth]{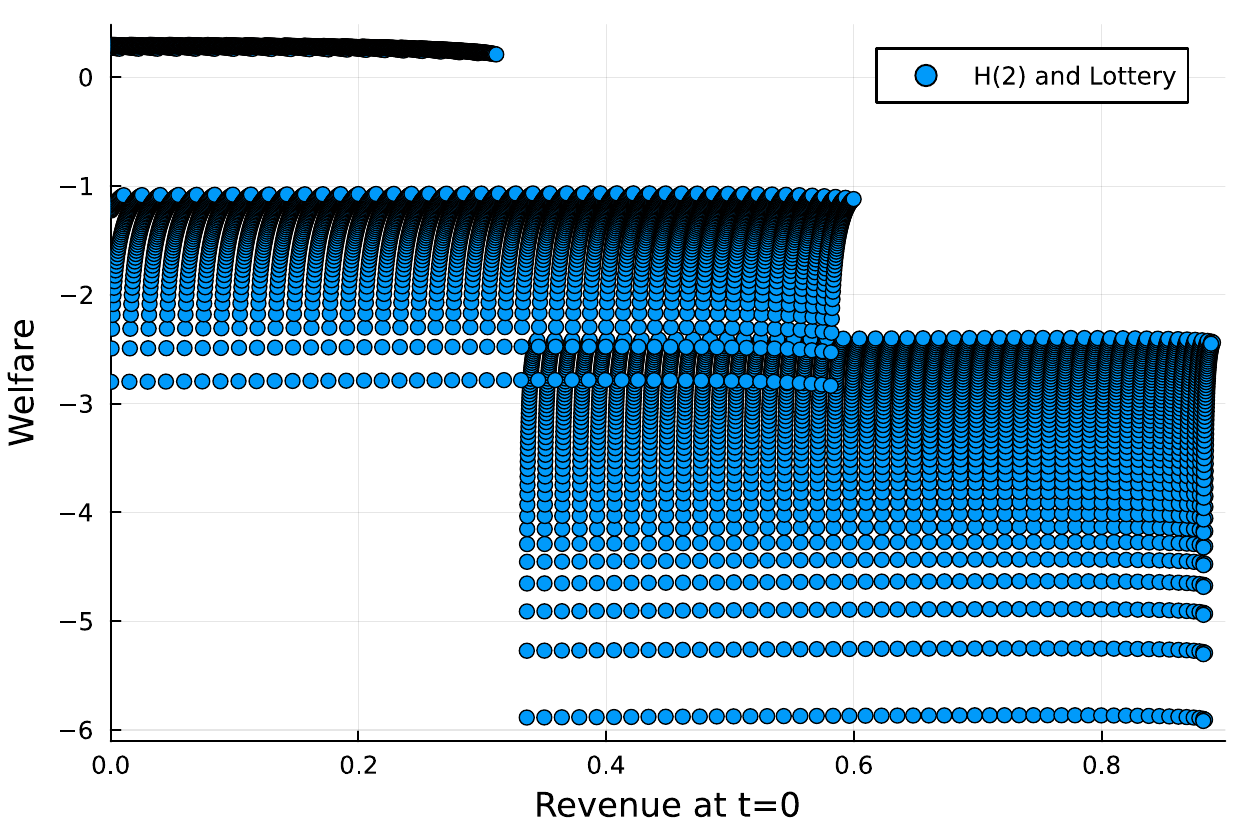}
\caption{\label{fig:fig2} The left panel shows a scatter plot of feasible combinations of overall utility and period zero revenue for $ \ell_2=\ell^L $. The right  panel shows a scatter plot of feasible combinations of overall utility and period zero revenue for $ \ell_2=\ell^H $, as well as for an equi-probable lottery between $ \ell_1^H $ and $\ell_1^L$}
\end{figure}

\section{Conclusion}
We show how the recursive multiplier approach to optimization problems with forward-looking contains can be used in a model with non-convexities. In general, neither the inf-sup nor the sup-inf formulation of the functional equation gives the correct solution for a model where lotteries are not allowed. We argue that for many applications the sup-inf approach gives the value of an economically meaningful lottery solution.

It is subject to further research how to make our approach numerically viable in large scale models.

\onehalfspacing
 \bibliographystyle{ecta}
\bibliography{refe}

\clearpage
\renewcommand{\appendixpagename}{Appendix}
\begin{appendices}

\section{Proofs} \label{app:sec2}

\subsubsection*{Proof of Lemma \ref{lem:tech_infsup}}
    It suffices to show that
    \begin{equation}\label{lem:tech_infsup_aim}
    \inf_{\boldsymbol{y}\in \Gamma}\sup_{x\in \tilde{X}}f(x,\boldsymbol{y}(x))\le \sup_{x\in \tilde{X}}\inf_{\boldsymbol{y}\in \Gamma}f(x,\boldsymbol{y}(x)).
    \end{equation}
    For any $\epsilon>0$ and $x\in \tilde{X}$, there exists $y_{\epsilon}(x)\in Y$, such that
    $$
    f(x,y_{\epsilon}(x))\le \inf_{y\in Y}f(x,y)+\epsilon.
    $$
    Therefore, taking $\boldsymbol{y}\in \Gamma$ s.t. $\boldsymbol{y}(x)=y_\epsilon(x)$ for any $x\in\tilde{X}$, we have
    $$
\begin{aligned}
    \sup_{x\in \tilde{X}}f(x,y_{\epsilon}(x))&\le \sup_{x\in \tilde{X}}\inf_{y\in Y}f(x,y)+\epsilon\\
    &= \sup_{x\in \tilde{X}}\inf_{\boldsymbol{y}\in \Gamma}f(x,\boldsymbol{y}(x))+\epsilon,
\end{aligned}
$$
implying that
$$
\inf_{\boldsymbol{y}\in \Gamma} \sup_{x\in \tilde{X}}f(x,\boldsymbol{y}(x))\le \sup_{x\in \tilde{X}}\inf_{\boldsymbol{y}\in \Gamma}f(x,\boldsymbol{y}(x))+\epsilon.
$$
Since $\epsilon$ is chosen arbitrarily, this implies \eqref{lem:tech_infsup_aim}. \hfill $ \Box $

\subsubsection*{Proof of Theorem \ref{thm:recursive1}}
    Using \eqref{equ:Lag_CK} and \eqref{equ:dual_value_def}, we obtain the following.
\begin{equation}\label{equ:proof_recursive_mid1}
    \small
    \begin{aligned}
        D(\gamma,x_0,s_0)=&\inf_{(\lambda_t^i(h^t))\in \Lambda} \sup_{(a_t)\in \tilde{\mathcal{A}}(x_0,s_0)} L((a_t),(\lambda_t^i(h^t));(\gamma^i),x_0,s_0)\\
        =&\inf_{(\lambda_t^i(h^t))\in \Lambda} \sup_{(a_t)\in \tilde{\mathcal{A}}(x_0,s_0)} \mathbb{E}_{s_0}\sum_{t=0}^{\infty}\beta^t\left(r(x_t,a_t,s_t)+\sum_{i=1}^{I}\gamma^ig^i(x_t,a_t,s_t) \right.\\
    & \left. +\sum_{i=1}^I\lambda_t^i(s_0,a_0,\cdots,s_{t-1},a_{t-1},s_t)\left(\sum_{n=0}^{\infty}\beta^n g^i(x_{t+n},a_{t+n},s_{t+n})-\bar{g}^i\right)\right)\\
        =&\inf_{(\lambda_0^i)}\inf_{(\lambda_t^i(h^t))_{t\ge 1}} \sup_{a_0}\sup_{(a_t)_{t\ge 1}} \underbrace{\left(r(a_0,x_0,s_0)+\sum_{i=1}^I \left(\gamma^ig^i(a_0,x_0,s_0)+\lambda_0^i(g^i(a_0,x_0,s_0)-\bar{g}^i)\right)\right)}_{h_1((\lambda_0^i),a_0)}\\
&+\underbrace{\mathbb{E}_{s_0}\sum_{t=1}^{\infty}\beta^t\left(r(a_t,x_t,s_t)+\sum_{i=1}^{I}\left((\gamma^i+\lambda_0^i)g^i(a_t,x_t,s_t)+\lambda_t^i(s^t,a^{t-1})\left(\sum_{n=0}^{\infty}\beta^n g^i(a_{t+n},x_{t+n},s_{t+n})-\bar{g}^i\right)\right)\right)}_{h_2((\lambda_0^i),a_0,(a_t)_{t\ge 1},(\lambda_t^i(s^t,a^{t-1}))_{t\ge 1})},
    \end{aligned}
    \end{equation}
where the third equation separates the terms concerning $t=0$ and $t\ge 1$.
\footnote{In the expression $\sup_{a_0}\sup_{(a_t)_{t\ge 1}}$, we omit the feasible sets for $a_0$ and $(a_t)_{t\ge 1}$. Specifically:  
\begin{itemize}  
   \item $a_0 \in\tilde{\mathcal{A}}(x_0,s_0)$,
   \item 
    $(a_t(s_0,s_1,s^{t-2}))_{t\ge 1}\in\tilde{\mathcal{A}}^{\infty}(x_1,s_1)$, where $x_1=\zeta(x_0, a_0(s_0), s_0)$, for all $s_0,\,s_1\in\mathcal{S}$.
\end{itemize}  
}.
We define $X=\mathcal{A}$, $\tilde{X}=\mathcal{A}(x_0,s_0)$ and $Y=\Lambda$. Since $(\lambda_t^i(h^t))_{t\ge 1}=(\lambda_t^i(s_0,a_0,h^{t-1}))_{t\ge 1}$, we can view it as a map from $ X $ to $Y$ for each $s_0\in\mathcal{S}$, and 
$
\sup_{(a(s^t))_{t\ge 1}}h_1((\lambda_0^i),a_0)+h_2((\lambda_0^i),a_0,(a(s^t))_{t\ge 1},(\lambda_t^i(s^t,a^{t-1}))_{t\ge 1})
$
can be regarded as a functional $f(a_0,(\lambda_t^i(s_0,a_0,h^{t-1}))_{t\ge 1})$.
By Lemma \ref{lem:tech_infsup}, we can exchange the operator $\inf_{(\lambda_t(h^t))_{t\ge 1}}$ and $\sup_{a_0}$ to obtain
 \begin{equation}\label{equ:proof_recursive_mid2}
    \begin{aligned}
        D(\gamma,x_0,s_0)=&\inf_{(\lambda_0^i)}\inf_{(\lambda_t^i(h^t))_{t\ge 1}} \sup_{a_0}\sup_{(a(s^t))_{t\ge 1}}h_1((\lambda_0^i),a_0)+h_2((\lambda_0^i),a_0,(a_t)_{t\ge 1},(\lambda_t^i(s^t,a^{t-1}))_{t\ge 1})\\
        =&\inf_{(\lambda_0^i)} \sup_{a_0}\left[h_1((\lambda_0^i),a_0)+\inf_{(\lambda_t^i(h^t))_{t\ge 1}} \sup_{(a_t)_{t\ge 1}}h_2((\lambda_0^i),a_0,(a_t)_{t\ge 1},(\lambda_t^i(s^t,a^{t-1}))_{t\ge 1})\right].
    \end{aligned}
    \end{equation}
    For any possible realization of the shock at $ t=1 $, $ \bar s_1 \in {\mathcal S}_1 $, we 
     denote $a(s^t;\bar s_0,\bar s_1)$ the action at $ s^t $ if in history $ s_t $, $ s_0=\bar s_0,\,s_1=\bar s_1 $, similarly let $\lambda^i_t(h^t; \bar s_0,\bar a_0,\bar s_1) $ denotes the multiplier at history $ h^t $ after specific realizations of $ \bar s_0,\bar a_0 $ and $ \bar s_1$.
For each $ \bar s_1 \in {\mathcal S} $, define
     $$    \scriptsize \begin{aligned}
     & h_3((a(s^{t}|(s_0,\bar s_1)))_{t\ge 1},(\lambda_t^i(h^t|( s_0, a_0,\bar s_1)))_{t\ge 1};s_0,\bar s_1)=\\
     & \mathbb{E}_{\bar s_1}\sum_{t=1}^{\infty}\beta^{t-1}\left(r(a(s^{t};s_0,\bar s_1),x_{t},s_{t})+\sum_{i=1}^{I}\left((\gamma^i+\lambda_0^i)g^i(a(s^t;s_0,\bar s_1), x_{t},s_{t})+\lambda_{t}^i(s^{t+1};s_0, a_0,\bar s_1)\left(\sum_{n=0}^{\infty}\beta^n g^i(a(s^{t+n};s_0,\bar s_1),x_{1+t},s_{t+n})-\bar{g}^i\right)\right)\right) \end{aligned} $$
     We then have
    \begin{equation}\label{equ:proof_recursive_mid3}
    \scriptsize
        \begin{aligned}
            &\inf_{(\lambda_t^i(h^t))_{t\ge 1}} \sup_{(a_t)_{t\ge 1}}h_2((\lambda_0^i),a_0,(a_t)_{t\ge 1},(\lambda_t^i(s^t,a^{t-1}))_{t\ge 1})\\
            =&\inf_{(\lambda_t^i(h^t))_{t\ge 1}} \sup_{(a_t)_{t\ge 1}}            \mathbb{E}_{s_0}\sum_{t=1}^{\infty}\beta^t\left(r(a_t,x_t,s_t)+\sum_{i=1}^{I}(\gamma^i+\lambda_0^i)g^i(a_t,x_t,s_t)+\lambda_t^i(s^t,a^{t-1})\left(\sum_{n=0}^{\infty}\beta^n g^i(a_{t+n},x_{t+n},s_{t+n})-\bar{g}^i\right)\right)\\
            =&\inf_{(\lambda_t^i(h^t|(s_0,a_0,\bar s_1)))_{t\ge 1, \bar s_1\in\mathcal{S}}} \sup_{(a_t(s^t|(s_0,\bar s_1)))_{t\ge 1, \bar s_1\in\mathcal{S}}} \beta\sum_{s_1\in\mathcal{S}}\pi(\bar s_1|s_0) h_3((a_t(s^t|(s_0,\bar s_1)))_{t\ge 1},(\lambda_t^i(h^t|(s_0,a_0,\bar s_1)))_{t\ge 1};s_0,\bar s_1)\\
            =&\beta\sum_{\bar s_1\in S}\pi(\bar s_1|s_0)\inf_{(\lambda_t^i(h^t|(s_0,a_0,\bar s_1)))_{t\ge 1}} \sup_{(a(s^t|(s_0,\bar s_1)))_{t\ge 1}}h_3((a_t(s^t|(s_0,\bar s_1)))_{t\ge 1},(\lambda_t^i(h^t|(s_0,a_0,\bar s_1)))_{t\ge 1};s_0,\bar s_1)\\
            =&\beta \sum_{s_1\in\mathcal{S}}\pi(s_1|s_0)D(\gamma+\lambda_0,x_1,s_1)\\
            =&\beta\mathbb{E}_{s_0}D(\gamma+\lambda_0,x_1,s_1),
        \end{aligned}
    \end{equation}
    where the first equation separates the terms concerning different $s_1\in\mathcal{S}$, and the second equation uses the definition of the dual value function $D$\footnote{Note that $D$ is finite according to the fifth assumption in Assumption \ref{ass:dynamic}, which is important to ensure the finiteness  of the sum of  several dual value functions}.  The recursive formulation \eqref{equ:recursive_dual} can then be obtained by combining \eqref{equ:proof_recursive_mid2} and \eqref{equ:proof_recursive_mid3}.
    \hfill $ \Box $
\subsubsection*{Proof of Theorem \ref{thm:recursive2}}
    Following the same procedure as in the proof for Theorem \ref{thm:recursive1}, we have
\begin{equation}\label{equ:recursive_dual_extend_mid}
    \begin{aligned}
        &\tilde{D}(\gamma,x,s)=\inf_{\lambda\in \mathcal{A}\rightarrow \mathbb{R}_+^I}\sup_{a\in\mathcal{A}}\left[\left(r(x,a,s)+\sum_{i=1}^{I}\left(\gamma^ig^i(x,a,s)+\lambda^i(a)(g^i(x,a,s)-\bar{g}^i)\right)\right)+\beta\mathbb{E}_{s}\tilde{D}(\gamma+\lambda(a),x',s')\right],\\
        \text{where }&x'=\zeta(x,a,s),\text{ and }p(x,a,s)\ge 0.
    \end{aligned}
\end{equation}
It then follows from Lemma \ref{lem:tech_infsup}  that
     $$
    \begin{aligned}
        \tilde{D}(\gamma,x,s)&=\inf_{\lambda\in \mathcal{A}\rightarrow \mathbb{R}_+^I}\sup_{a\in\mathcal{A}}\left[\left(r(x,a,s)+\sum_{i=1}^{I} \left(\gamma^ig^i(x,a,s)+\lambda^i(a)(g^i(x,a,s)-\bar{g}^i)\right)\right)+\beta\mathbb{E}_{s}\tilde{D}(\gamma+\lambda(a),x',s')\right]\\
        &=\sup_{a\in\mathcal{A}}\inf_{\lambda\in \mathbb{R}_+^I}\left[\left(r(x,a,s)+\sum_{i=1}^{I}\left(\gamma^ig^i(x,a,s)+\lambda^i(g^i(x,a,s)-\bar{g}^i)\right)\right)+\beta\mathbb{E}_{s}\tilde{D}(\gamma+\lambda,x',s')\right],
        \\
        \text{where }x'=\zeta(x,&a,s),\text{ and }p(x,a,s)\ge 0.
    \end{aligned}
$$
\hfill $ \Box $

\subsubsection*{Proof of Lemma \ref{lem:Bellman_restriction}}

    \begin{itemize}
        \item \textbf{Show that $\mathcal{B}$ preserves convexity.} See the proof of Lemma 2 in \cite{pavoni2018dual}.
        \item \textbf{Show that $\mathcal{B}$ preserves $L$-Lipschitz continuity.} Assume that $F$ is $L$-Lipschitz. Given $\gamma_1,\,\gamma_2\in \mathbb{R}_+^I$, we have
        $$
        \begin{aligned}
        \mathcal{B}(F)(x,\gamma_1,s)=&\inf_{\lambda\in \mathbb{R}_+^I}\sup_{a\in\tilde{\mathcal{A}}(x,s)}\left[\left(r(a,x,s)+\sum_{i=1}^I\left(\gamma_1^ig^i(a,x,s)+\lambda^i(g^i(a,x,s)-\bar{g}^i)\right)\right)+\beta\mathbb{E}_sF(\gamma_1+\lambda,x',s')\right]\\
        =&\inf_{\lambda\in \mathbb{R}_+^I}\sup_{a\in\tilde{\mathcal{A}}(x,s)}\left[\left(r(a,x,s)+\sum_{i=1}^I\left(\gamma_2^ig^i(a,x,s)+\lambda^i(g^i(a,x,s)-\bar{g}^i)\right)\right)+\beta\mathbb{E}_sF(\gamma_2+\lambda,x',s')\right.\\&\left.+\sum_{i=1}^{I}(\gamma_1^i-\gamma_2^i)g^i(a,x,s)+\beta\mathbb{E}_s(F(\gamma_1+\lambda,x',s')-F(\gamma_2+\lambda,x',s'))\right]
        \end{aligned}
        $$
        Since 
        $$
        \begin{aligned}
        &\sum_{i=1}^{I}(\gamma_1^i-\gamma_2^i)g^i(a,x,s)+\beta\mathbb{E}_s(F(\gamma_1+\lambda,x',s')-F(\gamma_2+\lambda,x',s'))
        \\\le& \|\gamma_1-\gamma_2\|_{1}\max_{i}\|g^i\|_{\infty}+\beta L\|\gamma_1-\gamma_2\|_{1}\le L\|\gamma_1-\gamma_2\|_{1},
        \end{aligned}
        $$
        we have
        $$
         \begin{aligned}
        \mathcal{B}(F)(x,\gamma_1,s)
        =&\inf_{\lambda\in \mathbb{R}_+^I}\sup_{a\in\tilde{\mathcal{A}}(x,s)}\left[\left(r(a,x,s)+\sum_{i=1}^I \left( \gamma_2^ig^i(a,x,s)+\lambda^i(g^i(a,x,s)-\bar{g}^i)\right)\right)+\beta\mathbb{E}_sF(\gamma_2+\lambda,x',s')\right.\\&\left.+\sum_{i=1}^{I}(\gamma_1^i-\gamma_2^i)g^i(a,x,s)+\beta\mathbb{E}_s(F(\gamma_1+\lambda,x',s')-F(\gamma_2+\lambda,x',s'))\right]\\
        \le&\inf_{\lambda\in \mathbb{R}_+^I}\sup_{a\in\tilde{\mathcal{A}}(x,s)}\left[\left(r(a,x,s)+\sum_{i=1}^I\left(\gamma_2^ig^i(a,x,s)+\lambda^i(g^i(a,x,s)-\bar{g}^i)\right)\right)+\beta\mathbb{E}_sF(\gamma_2+\lambda,x',s')\right.\\&\left.+L\|\gamma_1-\gamma_2\|_{1}\right]=\mathcal{B}(F)(x,\gamma_2,s)+L\|\gamma_1-\gamma_2\|_{1}
        \end{aligned}
        $$
        Hence $\mathcal{B}$ preserves $L$-Lipschitz continuity.
        \item \textbf{Show that $\mathcal{B}$ preserves the property \eqref{eq:F_inequality}.}
        Assume that $F$ satisfies \eqref{eq:F_inequality}. Given $x\in \mathcal{X},\,\gamma\in\mathbb{R}_+^I
        ,\,s\in\mathcal{S}$. Assume that $(a(s^t))_{t\ge 0}$ is a feasible control to problem \eqref{equ:CK} with $x_0=x$, $s_0=s$, satisfying
        $$
        \begin{cases}
            v^0=\mathbb{E}_{s_0}\sum_{t=0}^{\infty}\beta^tr(x_t,a(s^t),s_t);\\
            v^i=\mathbb{E}_{s_0}\sum_{t=0}^{\infty}\beta^tg^i(x_t,a(s^t),s_t).
        \end{cases}
        $$ 
        We denote $x'=\zeta(a(s_0),x_0,s_0)$, it is straightforward to verify that 
        $
        a(s^t|(s,s'))
        $
        is a feasible control to problem \eqref{equ:CK} with $x_0=x'$, $s_0=s'$. Therefore, for any $\lambda\in\mathbb{R}_+^I$, we have
        $$
        \begin{aligned}
        &\sup_{a\in\mathcal{A}}\left[\left(r(a,x,s)+\sum_{i=1}^{I}\left(\gamma^ig^i(a,x,s)+\lambda^i(g^i(a,x,s)-\bar{g}^i)\right)\right)+\beta\mathbb{E}_{s}F(\gamma+\lambda,x',s')\right]\\
        \text{($F$ satisfies \eqref{eq:F_inequality})}\ge & r(a_0(s_0),x_0,s_0)+\sum_{i=1}^{I}\left(\gamma^ig^i(a_0(s_0),x_0,s_0)+\lambda^i(g^i(a_0(s_0),x_0,s_0)-\bar{g}^i)\right)\\
        &+\beta\mathbb{E}_{s_0}\sum_{t=1}^{\infty}\beta^{t-1}r(x_t,a_t(s^t),s_t)+\sum_{i=1}^{I}(\gamma^i+\lambda^i)\beta^{t-1}g^i(x_t,a_t(s^t),s_t)\\
        =&v^0+\sum_{i=1}^{I}\left(\gamma^i v^i+\lambda^i(v^i-\bar{g}^i)\right)\\
        \ge&v^0+\sum_{i=1}^{I}\gamma^iv^i.
        \end{aligned}
        $$
        Hence $\mathcal{B}(F)(\gamma,x,s)\ge v^0+\sum_{i=1}^{I}\gamma^iv^i$. Since $x\in\mathcal{X},\,\gamma\in\mathbb{R}_+^I,\,s\in\mathcal{S}$, and the feasible control $(a_t(s^t))_{t\ge 0}$ with $x_0=x,\,s_0=s$ are chosen arbitrarily, it is then straightforward to conclude that $\mathcal{B}(F)$ satisfies \eqref{eq:F_inequality}.
        \item \textbf{Show that $\mathcal{B}$ preserves the property \eqref{eq:F_inequality_upper}. }  By the definition of $\mathcal{B}$, we have
        $$
        \begin{aligned}
             \mathcal{B}(F)(x,\gamma,s)=&\inf_{\lambda\in \mathbb{R}_+^I}\sup_{a\in\tilde{\mathcal{A}}(x,s)}\left[\left(r(a,x,s)+\sum_{i=1}^I\left(\gamma^ig^i(a,x,s)+\lambda^i(g^i(a,x,s)-\bar{g}^i)\right)\right)+\beta\mathbb{E}_sF(\gamma+\lambda,x',s')\right]\\
             \le &\sup_{a\in\tilde{\mathcal{A}}(x,s)}\left[\left(r(a,x,s)+\sum_{i=1}^I\gamma^ig^i(a,x,s)\right)+\beta\mathbb{E}_sF(\gamma,x',s')\right]\\
             \le & \|r\|_{\infty}+\sum_{i=1}^{I}\gamma^i\|g_i\|_{\infty}+\beta \left(1+\sum_{i=1}^{I}\gamma^i\right)L\\\le &\left(1+\sum_{i=1}^I\gamma^i\right)L,
        \end{aligned}
        $$
        where the last step uses the identity $ (1-\beta) L=\|r\|_{\infty}+\sum_i \|g_i \|_{\infty}$ together with non-negativity.
        \item \textbf{Show that $\mathcal{B}$ preserves the finiteness of the norm $\|\cdot\|_{\mathcal{M}}$.}  This is a direct corollary of the fact that $\mathcal{B}(F)$ preserves the properties \eqref{eq:F_inequality} and \eqref{eq:F_inequality_upper}.\hfill $ \Box $
  
    \end{itemize}

\subsubsection*{Proof of Theorem \ref{thm:contraction}}
    \begin{itemize}
        \item \textbf{Step 1. Show that there exists a pointwise limit of $\mathcal{B}^{(n)}(F_0)$ in $\mathcal{N}$.} According to Lemma \ref{lem:Bellman_restriction}, since $F_0\in \mathcal{N}$, we know that
        $$
        \mathcal{B}(F_0)\in \mathcal{N},
        $$
        implying that $\mathcal{B}(F_0)\le F_0$.
        By induction and monotonicity we have
        $$
        \mathcal{B}^{(n)}(F_0)\le \mathcal{B}^{(n-1)}(F_0),\,\forall n\in\mathbb{N}_+. 
        $$
        Therefore, for any $\gamma\in\mathbb{R}_+^I,\,x\in\mathcal{X},\,s\in\mathcal{S}$, the sequence
        $$
        \{\mathcal{B}^{(n)}(F_0)(\gamma,x,s)\}_{n\in\mathbb{N}_+}
        $$
        is a decreasing, bounded sequence, and hence has a limit $F^*(\gamma,x,s)$. It is straightforward to verify that $F^*\in \mathcal{N}$.
        \item \textbf{Step 2. Show that $\mathcal{B}^{(n)}(F_0)$ converges to $F^*$ in norm $\|\cdot\|_{\mathcal{M}}$.}  For any $x\in\mathcal{X},\,s\in\mathcal{S}$, we know that
        $$
        \mathcal{B}^{(n)}(F_0)(\gamma,x,s) 
        $$
        is $L$-Lipschitz and uniformly bounded in any compact ball $B(k)$, according to Arzela-Ascoli theorem, there exists a subsequence $$
        \{\mathcal{B}^{(t_n)}(F_0)(\gamma,x,s) \}\subset \{\mathcal{B}^{(n)}(F_0)(\gamma,x,s)\},
        $$
        s.t. 
        $$
        \mathcal{B}^{(t_n)}(F_0)(\cdot,x,s)\rightrightarrows \tilde{F}_k(\cdot,x,s), \text{ as $n\rightarrow\infty$ in $B(k)$,}
        $$
        for some $\tilde{F}_{k}(\cdot, x,s)$, where $ \rightrightarrows $ denotes uniformly convergence.  According to step 1, we know that 
        $$
        \lim_{n\rightarrow\infty}\mathcal{B}^{(t_n)}(F_0)(\gamma,x,s)=F^*(\gamma,x,s),\,\forall \gamma\in B(k).
        $$
        Therefore, we have $\tilde{F}_k(\gamma,x,s)=F^*(\gamma,x,s),\,\forall \gamma\in B(k)$, and hence
         $$
        \mathcal{B}^{(t_n)}(F_0)(\cdot,x,s)\rightrightarrows F^*(\cdot,x,s), \text{ as $n\rightarrow\infty$ in $B(k)$}.
        $$
        According to the monotonicity, the whole sequence uniformly converges, i.e.
        $$
         \mathcal{B}^{(n)}(F_0)(\cdot,x,s)\rightrightarrows F^*(\cdot,x,s), \text{ as $n\rightarrow\infty$ in $B(k)$},
        $$
        yielding that
        \begin{equation}\label{eq:needtouse1}
        \|\mathcal{B}^{(n)}(F_0)(\cdot,x_j,s_i)-F^*(\cdot,x_j,s_i)\|_{L^{\infty}(B(k))}\rightarrow0,\,\forall s_i\in\mathcal{S},\,x_j\in\mathcal{X},\,k\in\mathbb{N}_+.
        \end{equation}
        Since functions in $\mathcal{N}$ is bounded by affine functions, there exists $C>0$, s.t.
         \begin{equation}\label{eq:needtouse2}
        \|\mathcal{B}^{(n)}(F_0)(\cdot,x_j,s_i)-F^*(\cdot,x_j,s_i)\|_{L^{\infty}(B(k))}\le C(k+1),\,\forall s_i\in\mathcal{S},\,x_j\in\mathcal{X},\,k\in\mathbb{N}_+.
        \end{equation}
        Therefore, applying \eqref{eq:needtouse2} we have
        $$
        \begin{aligned}
            &\|\mathcal{B}^{(n)}(F_0)-F^*\|_{\mathcal{M}}\\
            =&\sum_{i=1}^{|\mathcal{S}|}\frac{1}{2^i}\left[\sum_{j=1}^{\infty}\frac{1}{2^j}\left(\sum_{k=1}^{\infty}\frac{1}{2^{k}}\|\mathcal{B}^{(n)}(F_0)(\cdot,x_j,s_i)-F^*(\cdot,x_j,s_i)\|_{L^{\infty}(B(k))}\right)\right]\\
            \le&\sum_{i=1}^{|\mathcal{S}|}\frac{1}{2^i}\left\{\sum_{j=1}^{J}\frac{1}{2^j}\left[\left(\sum_{k=1}^{K}\frac{1}{2^{k}}\|\mathcal{B}^{(n)}(F_0)(\cdot,x_j,s_i)-F^*(\cdot,x_j,s_i)\|_{L^{\infty}(B(k))}\right)+\underbrace{\left(\sum_{k>K}\frac{(k+1)C}{2^k}\right)}_{I}\right]\right\}\\
            &+\underbrace{\sum_{i=1}^{|\mathcal{S}|}\frac{1}{2^i} \sum_{j>J}\frac{1}{2^j}\sum_{k=1}^{\infty}\frac{(k+1)C}{2^k}}_{II}
        \end{aligned}
        $$
        For any $\epsilon>0$, there exists $J>0$ and $K>0$, s.t. $I<\epsilon/3,\,II<\epsilon/3$. Furthermore, according to \eqref{eq:needtouse1}, there exists $N\in\mathbb{N}_+$, s.t. when $n>N$, we have
        $$
        \sum_{i=1}^{|\mathcal{S}|}\frac{1}{2^i}\sum_{j=1}^{J}\frac{1}{2^j}\left[\left(\sum_{k=1}^{K}\frac{1}{2^{k}}\|\mathcal{B}^{(n)}(F_0)(\cdot,x_j,s_i)-F^*(\cdot,x_j,s_i)\|_{L^{\infty}(B(k))}\right)\right]<\frac{\epsilon}{3}.
        $$
        Therefore, when $n>N$, we have
        $$
         \begin{aligned}
            &\|\mathcal{B}^{(n)}(F_0)-F^*\|_{\mathcal{M}}\\
            \le&\sum_{i=1}^{|\mathcal{S}|}\frac{1}{2^i}\left\{\sum_{j=1}^{J}\frac{1}{2^j}\left[\left(\sum_{k=1}^{K}\frac{1}{2^{k}}\|\mathcal{B}^{(n)}(F_0)(\cdot,x_j,s_i)-F^*(\cdot,x_j,s_i)\|_{L^{\infty}(B(k))}\right)+\frac{\epsilon}{3}\right]\right\}+\frac{\epsilon}{3}\\
            \le& \sum_{i=1}^{|\mathcal{S}|}\frac{1}{2^i}\sum_{j=1}^{J}\frac{1}{2^j}\left[\left(\sum_{k=1}^{K}\frac{1}{2^{k}}\|\mathcal{B}^{(n)}(F_0)(\cdot,x_j,s_i)-F^*(\cdot,x_j,s_i)\|_{L^{\infty}(B(k))}\right)\right]+\frac{2\epsilon}{3}<\epsilon
            \end{aligned}
        $$
        Therefore $\|\mathcal{B}^{(n)}(F_0)-F^*\|_{\mathcal{M}}\rightarrow0$ by definition.

        \textbf{Step 3. Show that $\mathcal{B}(F^*)=F^*$}. According to monotonicity, we have
        $$
        \mathcal{B}(F^*)\le \mathcal{B}^{(n)}(F_0),\,\forall n\in\mathbb{N}_+.
        $$
        For any $\gamma\in\mathbb{R}_+^I,\,x\in\mathcal{X},\,s\in\mathcal{S}$, we take $n\rightarrow\infty$ and have 
        $$
        \mathcal{B}(F^*)\le F^*.
        $$
        If there exists $\ell>0$ s.t. $\mathcal{B}(F^*)(\gamma,x,s)=F^*(\gamma,x,s)-\ell$ for some $x\in\mathcal{X},\,s\in\mathcal{S},\,\gamma\in\mathbb{R}_+^I$. According to the definition of $\mathcal{B}$, we know that there exists $\lambda\in\mathbb{R}_+^I$, s.t.
        \begin{equation}\label{eq:fixed_ineq}
        \sup_{a\in\tilde{A}(x,s)}\left[\left(r(a,x,s)+\sum_{i=1}^{I}\gamma^ig^i(a,x,s)+\lambda^i(g^i(a,x,s)-\bar{g}^i)\right)+\beta\mathbb{E}_sF^*(\gamma+\lambda,x',s')\right]<F^*(\gamma,x,s)-\frac{\ell}{2}.
        \end{equation}
        Since $\mathcal{B}^{(n)}(F_0)\rightarrow F^*$, there exists $N\in\mathbb{N}_+$, s.t. $$\mathcal{B}^{(N)}(F_0)(\gamma+\lambda,x',s')-F^*(\gamma+\lambda,x',s')\le \frac{\ell}{4},\,\forall x'\in\{\zeta(x,a,s)|a\in\tilde{\mathcal{A}}(x,s)\},\,s'\in\mathcal{S}.$$
        Therefore, 
    $$
    \begin{aligned}
    &\mathcal{B}^{(N+1)}(F_0)(\gamma,x,s)\\
    \le& \sup_{a\in\tilde{A}(x,s)}\left[\left(r(a,x,s)+\sum_{i=1}^{I}\gamma^ig^i(a,x,s)+\lambda^i(g^i(a,x,s)-\bar{g}^i)\right)+\beta\mathcal{B}^{(N)}(F_0)(\gamma+\lambda,x',s')\right]\\<&\sup_{a\in\tilde{A}(x,s)}\left[\left(r(a,x,s)+\sum_{i=1}^{I}\gamma^ig^i(a,x,s)+\lambda^i(g^i(a,x,s)-\bar{g}^i)\right)+\beta F^*(\gamma+\lambda,x',s')+\frac{\ell}{4}\right]<F^*(\gamma,x,s)-\frac{\ell}{4},
    \end{aligned}
    $$
    contradicting the fact that $\mathcal{B}^{(N+1)}(F_0)(\gamma,x,s)\ge F^*(\gamma,x,s)$. Hence $\mathcal{B}(F^*)\ge F^*$, and therefore $\mathcal{B}(F^*)=F^*$. 
    \item \textbf{Step 4. Show that $F^*$ is the largest fixed point in $\mathcal{N}$.} For any $F\in\mathcal{N}$, which is a fixed point of $\mathcal{B}$, we know that $F\le F_0$. Therefore, according to the monotonicity of $\mathcal{B}$, we have
    $$
    F=\mathcal{B}^{(n)}(F)\le \mathcal{B}^{(n)}(F_0),\,\forall n\in\mathbb{N}_+.
    $$
    We take $n\rightarrow\infty$ and have $F\le F^*$.
    \hfill $ \Box $
    \end{itemize}

\subsubsection*{Proof of Lemma \ref{lem:verification_largestfp}}
According to the definition of the dual value function $D(\gamma,x_0,s_0)$, for any $\epsilon>0$, there exists $(\lambda_t^i(h^t))\in\Lambda$, s.t.
\begin{equation}\label{eq:proof_largestfp_1}
\sup_{a_t(s^t)\in\tilde{\mathcal{A}}^{\infty}(x_0)}L((a_t),(\lambda_t^i);(\gamma^i),x_0,s_0)
\le D(\gamma,x_0,s_0)+\epsilon.
\end{equation}
We define $M=\sum_{t=0}^{\infty}\sum_{h^t\in\mathcal{H}^t}\sum_{i=1}^{I}\beta^t\lambda_t^i(h^t)\pi^t(s^t|s_0)$. 
\begin{itemize}
\item \textbf{Step 1. }We first show that, 
\begin{equation}\label{proof_largestfp_step1}
\begin{aligned}
 &\sup_{(a_t)\in\tilde{\mathcal{A}}^{\infty}(x_0)}L((a_t),(\lambda_t^i);(\gamma^i),x_0,s_0)\\=&\lim_{T\rightarrow\infty}\sup_{(a_t)\in\tilde{\mathcal{A}}^{\infty}(x_0)}
\mathbb{E}_{s_0}\sum_{t=0}^{T}\beta^t\left(r(x_t,a_t,s_t)+\sum_{i=1}^{I}\gamma^ig^i(x_t,a_t,s_t)\right.\\
&\left.+\lambda_t^i(s_0,a_0,\cdots,s_{t-1},a_{t-1},s_t)\left(\sum_{n=0}^{T-t}\beta^n g^i(x_{t+n},a_{t+n},s_{t+n})-\bar{g}^i\right)\right).
 \end{aligned}
\end{equation}
Since $r$ and $(g^i)$ are bounded, for any $T\ge 0$ we have
$$
\begin{aligned}
&\sup_{(a_t)\in\tilde{\mathcal{A}}^{\infty}(x_0)}L((a_t),(\lambda_t^i);(\gamma^i),x_0,s_0)
\\=&\sup_{(a_t)\in\tilde{\mathcal{A}}^{\infty}(x_0)}\mathbb{E}_{s_0}\sum_{t=0}^{\infty}\beta^t\left(r(x_t,a_t,s_t)+\sum_{i=1}^{I}\gamma^ig^i(x_t,a_t,s_t)\right.\\
&\left.+\lambda_t^i(s_0,a_0,\cdots,s_{t-1},a_{t-1},s_t)\left(\sum_{n=0}^{\infty}\beta^n g^i(x_{t+n},a_{t+n},s_{t+n})-\bar{g}^i\right)\right)\\
=&\sup_{(a_t)\in\tilde{\mathcal{A}}^{\infty}(x_0)}\mathbb{E}_{s_0}\underbrace{\sum_{t=0}^{T}\beta^t\left(r(x_t,a_t,s_t)+\sum_{i=1}^{I}\gamma^ig^i(x_t,a_t,s_t)\right.}_{I}\\
&\underbrace{\left.+\lambda_t^i(s_0,a_0,\cdots,s_{t-1},a_{t-1},s_t)\left(\sum_{n=0}^{T-t}\beta^n g^i(x_{t+n},a_{t+n},s_{t+n})-\bar{g}^i\right)\right)}_{I}\\
&+\underbrace{\sum_{t=T+1}^{\infty}\beta^t\left(r(x_t,a_t,s_t)+\sum_{i=1}^{I}\gamma^ig^i(x_t,a_t,s_t)\right.}_{II}\\
&\underbrace{\left.+\lambda_t^i(s_0,a_0,\cdots,s_{t-1},a_{t-1},s_t)\left(\sum_{n=0}^{\infty}\beta^n g^i(x_{t+n},a_{t+n},s_{t+n})-\bar{g}^i\right)\right)}_{II}\\
&+\underbrace{\sum_{t=0}^{T}\beta^t\lambda_t^i(s_0,a_0,\cdots,s_{t-1},a_{t-1},s_t)\left(\sum_{n=T-t+1}^{\infty}\beta^n g^i(x_{t+n},a_{t+n},s_{t+n})\right)}_{III}.
\end{aligned}
$$
To prove \eqref{proof_largestfp_step1}, it suffices to show that $II\rightarrow0$ and $III\rightarrow 0$ uniformly for any $(a_t)\in\tilde{A}^{\infty}(x_0)$ as $T\rightarrow\infty$. For $II$, according to the boundedness of $r$ and $g^i$, 
\begin{equation}\label{eq:proof_largeestfp_II1}
\mathbb{E}_{s_0}\sum_{t=T+1}^{\infty}\beta^t\left(r(x_t,a_t,s_t)+\sum_{i=1}^{I}\gamma^ig^i(x_t,a_t,s_t)\right)\rightarrow0
\end{equation}
uniformly w.r.t. $(a_t)$. Furthermore, since $
\mathbb{E}_{s_0}\sum_{t=T+1}^{\infty}\sum_{h^t\in\mathcal{H}^t}\sum_{i=1}^{I}\beta^t\lambda_t^i(h^t)\rightarrow0,
$ and that $|\sum_{n=0}^{\infty}\beta^ng^i(x_{t+n},a_{t+n},s_{t+n})-\bar{g}^i|$ are bounded above by $\frac{\|g^i\|_{\infty}}{1-\beta}+\|\bar{g}^i\|_{\infty}$, we have
\begin{equation}\label{eq:proof_largeestfp_II2}
\mathbb{E}_{s_0}\sum_{t=T+1}^{\infty}\beta^t\sum_{i=1}^{I}\left(\lambda_t^i(s_0,a_0,\cdots,s_{t-1},a_{t-1},s_t)\left(\sum_{n=0}^{\infty}\beta^n g^i(x_{t+n},a_{t+n},s_{t+n})-\bar{g}^i\right)\right)\rightarrow0
\end{equation}
uniformly w.r.t. $(a_t)$. Combining \eqref{eq:proof_largeestfp_II1} and \eqref{eq:proof_largeestfp_II2} we know that $II\rightarrow0$ as $T\rightarrow\infty$ uniformly w.r.t. $(a_t)$.
\newline For $III$, we have
$$
\begin{aligned}
III=&\mathbb{E}_{s_0}\underbrace{\sum_{t=0}^{[\frac{T}{2}]}\sum_{i=1}^{I}\beta^t\lambda_t^i(s_0,a_0,\cdots, s_{t-1},a_{t-1},s_t)\left(\sum_{n=T-t+1}^{\infty}\beta^ng^i(x_{t+n},a_{t+n},s_{t+n})\right)}_{III(1)}\\
&+\underbrace{\sum_{t=[\frac{T}{2}]+1}^{T}\sum_{i=1}^{I}\beta^t\lambda_t^i(s_0,a_0(s_0),\cdots, s_{t-1},a_{t-1},s_t)\left(\sum_{n=T-t+1}^{\infty}\beta^ng^i(x_{t+n},a_{t+n},s_{t+n})\right)}_{III(2)}.
\end{aligned}
$$
Since $\mathbb{E}_{s_0}\sum_{t=0}^{[\frac{T}{2}]}\sum_{i=1}^{I}\beta^t\lambda_t^i(s_0,a_0,\cdots, s_{t-1},a_{t-1},s_t)$ is bounded above by $M$, and $\left(\sum_{n=T-t+1}^{\infty}\beta^ng^i(x_{t+n},a_{t+n},s_{t+n})\right)$ is bounded above by $\frac{\beta^{T-[\frac{T}{2}]+1}}{1-\beta}\|g^i\|_{\infty}$, we know that $III(1)\rightarrow0$ uniformly w.r.t. $(a_t)$ when $T\rightarrow\infty$. \newline
Since 
$$
\begin{aligned}
&\mathbb{E}_{s_0}\sum_{t=[\frac{T}{2}]+1}^{T}\sum_{i=1}^{I}\beta^t\lambda_t^i(s_0,a_0,\cdots, s_{t-1},a_{t-1},s_t)\\\le &\mathbb{E}_{s_0}\sum_{t=[\frac{T}{2}]+1}^{\infty}\sum_{i=1}^{I}\beta^t\lambda_t^i(s_0,a_0,\cdots, s_{t-1},a_{t-1},s_t)\rightarrow 0,
\end{aligned}
$$ uniformly w.r.t. $(a_t)$, and $|\sum_{n=T-t+1}^{\infty}\beta^ng^i(x_{t+n},a_{t+n},s_{t+n})|$ are bounded above by $\frac{\|g_i\|_{\infty}}{1-\beta}$, we know that $III(2)\rightarrow0$ uniformly w.r.t. $(a_t)$ when $T\rightarrow\infty$. Hence $III\rightarrow0$  uniformly w.r.t. $(a_t)$ as $T\rightarrow\infty$, and therefore, \eqref{proof_largestfp_step1} holds.
\item \textbf{Step 2. }We then show that, given $x_0\in\mathcal{X},\,s_0\in\mathcal{S}$, we have
\begin{equation}\label{proof_largestfp_step2}
\limsup_{T\rightarrow\infty}\sup_{(a_t(s^t))\in\tilde{\mathcal{A}}^{\infty}(x_0)}\mathbb{E}_{s_0}\beta^{T+1}F(\gamma+\sum_{t=0}^{T}\lambda_t(s_0,a_0,\cdots,s_{t-1},a_{t-1},s_t),x_{T+1},s_{T+1})\le 0.
\end{equation}
According to \eqref{eq:F_inequality_upper}, we know that for any $(a_t(s^t))\in\tilde{\mathcal{A}}^{\infty}(x_0)$, we have
$$
\begin{aligned}
&\mathbb{E}_{s_0}\beta^{T+1}F(\gamma+\sum_{t=0}^{T}\lambda_t(s_0,a_0,\cdots,s_{t-1},a_{t-1},s_t),x_t,s_t)\\
\le &\mathbb{E}_{s_0}\beta^{T+1}\left[1+\sum_{i=1}^{I}\left(\gamma^i+\sum_{t=0}^{T}\lambda_t^i(s_0,a_0,\cdots,s_{t-1},a_{t-1},s_t)\right)\right]L\\
=&\beta^{T+1}\left(1+\sum_{i=1}^I\gamma^i\right)L+\beta^{T+1}L\mathbb{E}_{s_0}\sum_{t=0}^{T}\sum_{i=1}^{I}\lambda_t^i(s_0,a_0,\cdots,s_{t-1},a_{t-1},s_t)\\
\le&\underbrace{\beta^{T+1}\left(1+\sum_{i=1}^I\gamma^i\right)L}_{IV}+\underbrace{\beta^{T+1}L\sum_{t=0}^{T}\sum_{h^t\in\mathcal{H}^t}\sum_{i=1}^{I}\pi(s^t|s_0)\lambda_t^i(h^t)}_{V}
\end{aligned}
$$
Obviously $IV\rightarrow0$ as $T\rightarrow\infty$, and 
$$
\begin{aligned}
V=&L\sum_{t=0}^{T}\beta^{T+1-t}\sum_{h^t\in\mathcal{H}^t}\sum_{i=1}^{I}\left(\beta^t\pi(s^t|s_0)\lambda_t^i(h^t)\right)\\
\le &L\sum_{t=0}^{[\frac{T}{2}]}\beta^{T+1-[\frac{T}{2}]}\sum_{h^t\in\mathcal{H}^t}\sum_{i=1}^{I}\left(\beta^t\pi(s^t|s_0)\lambda_t^i(h^t)\right)+L\sum_{t=[\frac{T}{2}]+1}^{\infty}\sum_{h^t\in\mathcal{H}^t}\sum_{i=1}^{I}\beta^t\pi(s^t|s_0)\lambda_t^i(h^t)\\
\le &L\beta^{T+1-[\frac{T}{2}]}M+L\sum_{t=[\frac{T}{2}]+1}^{\infty}\sum_{h^t\in\mathcal{H}^t}\sum_{i=1}^{I}\beta^t\pi(s^t|s_0)\lambda_t^i(h^t)\rightarrow0 (T\rightarrow\infty).
\end{aligned}
$$
Furthermore, observing that $IV$ and $V$ are independent on $(a_t(s^t))$, and we obtain \eqref{proof_largestfp_step2}.
\item \textbf{Step 3. } According to \eqref{eq:proof_largestfp_1} and \eqref{proof_largestfp_step1}, there exists $T_1>0$, s.t. when $T\ge T_1$, 
\begin{equation}\label{eq:proof_largestfp_step3mid1}
\begin{aligned}
D(\gamma,x_0,s_0)+2\epsilon\ge& \sup_{(a_t(s^t))\in\tilde{\mathcal{A}}^{\infty}(x_0)}\mathbb{E}_{s_0}\sum_{t=0}^{T}\beta^t\left(r(x_t,a_t,s_t)+\sum_{i=1}^{I}\gamma^ig^i(x_t,a_t,s_t)\right.\\
&\left.+\lambda_t^i(s_0,a_0,\cdots,s_{t-1},a_{t-1},s_t)\left(\sum_{n=0}^{T-t}\beta^n g^i(x_{t+n},a_{t+n},s_{t+n})-\bar{g}^i\right)\right).
\end{aligned}
\end{equation}
Combining the inequality \eqref{eq:proof_largestfp_step3mid1} and \eqref{proof_largestfp_step2}, we know that there exists $T_2>T$, s.t. when $T>\max\{T_1,T_2\}$,
\begin{equation}\label{eq:proof_largestfp_step3mid2}
\begin{aligned}
D(\gamma,x_0,s_0)+3\epsilon\ge& \sup_{(a_t)\in\tilde{\mathcal{A}}^{\infty}(x_0)}\mathbb{E}_{s_0}\sum_{t=0}^{T}\beta^t\left(r(x_t,a_t,s_t)+\sum_{i=1}^{I}\gamma^ig^i(x_t,a_t,s_t)\right.\\
&\left.+\lambda_t^i(s_0,a_0(s_0),\cdots,s_{t-1},a_{t-1}(s^{t-1}),s_t)\left(\sum_{n=0}^{T-t}\beta^n g^i(x_{t+n},a_{t+n},s_{t+n})-\bar{g}^i\right)\right)\\&+\sup_{(a_t)\in\tilde{\mathcal{A}}^{\infty}(x_0)}\mathbb{E}_{s_0}\beta^{T+1}F(\gamma+\sum_{t=0}^{T}\lambda_t(s_0,a_0,\cdots,s_{t-1},a_{t-1},s_t),x_{T+1},s_{T+1})\\
\ge &\sup_{(a_t)\in\tilde{\mathcal{A}}^{\infty}(x_0)}\left[\mathbb{E}_{s_0}\sum_{t=0}^{T}\beta^t\left(r(x_t,a_t,s_t)+\sum_{i=1}^{I}\gamma^ig^i(x_t,a_t,s_t)\right.\right.\\
&\left.+\lambda_t^i(s_0,a_0,\cdots,s_{t-1},a_{t-1},s_t)\left(\sum_{n=0}^{T-t}\beta^n g^i(x_{t+n},a_{t+n},s_{t+n})-\bar{g}^i\right)\right)\\&\left.+\mathbb{E}_{s_0}\beta^{T+1}F(\gamma+\sum_{t=0}^{T}\lambda_t(s_0,a_0,\cdots,s_{t-1},a_{t-1},s_t),x_{T+1},s_{T+1})\right].
\end{aligned}
\end{equation}
We rearrange the right hand side of \eqref{eq:proof_largestfp_step3mid2} and have
\begin{equation}\label{eq:proof_largestfp_step3mid3}
\begin{aligned}
    &\sup_{(a_t)\in\tilde{\mathcal{A}}^{\infty}(x_0)}\left[\mathbb{E}_{s_0}\sum_{t=0}^{T}\beta^t\left(r(x_t,a_t,s_t)+\sum_{i=1}^{I}\gamma^ig^i(x_t,a_t,s_t)\right.\right.\\
&\left.+\lambda_t^i(s_0,a_0,\cdots,s_{t-1},a_{t-1},s_t)\left(\sum_{n=0}^{T-t}\beta^n g^i(x_{t+n},a_{t+n},s_{t+n})-\bar{g}^i\right)\right)\\&\left.+\mathbb{E}_{s_0}\beta^{T+1}F(\gamma+\sum_{t=0}^{T}\lambda_t(s_0,a_0,\cdots,s_{t-1},a_{t-1},s_t),x_{T+1},s_{T+1})\right]\\
=&\sup_{(a_t(s^t))\in\tilde{\mathcal{A}}^{\infty}(x_0)}\left[\mathbb{E}_{s_0}\sum_{t=0}^{T}\beta^t\left(r(x_t,a_t,s_t)\right.\right.\\&\left.\left.+\sum_{i=1}^{I}\left(\gamma^i+\sum_{n=0}^{t-1}\lambda_n^i(s_0,a_0,\cdots, s_{n-1},a_{n-1},s_n)\right)g^i(x_t,a_t(s^t),s_t)\right.\right.\\
&\left.+\lambda_t^i(s_0,a_0,\cdots,s_{t-1},a_{t-1},s_t)\left(g^i(x_{t},a_{t},s_{t})-\bar{g}^i\right)\right)\\&\left.+\mathbb{E}_{s_0}\beta^{T+1}F(\gamma+\sum_{t=0}^{T}\lambda_t(s_0,a_0,\cdots,s_{t-1},a_{t-1},s_t),x_{T+1},s_{T+1})\right]
\end{aligned}
\end{equation}
Since $F$ is a fixed point of $\mathcal{B}$, we have
$$
\begin{aligned}
   &\sup_{(a_t)\in\tilde{\mathcal{A}}^{\infty}(x_0)}\left[\mathbb{E}_{s_0}\beta^T\left(r(x_T,a_T,s_T)\right.\right.\\&\left.\left.+\sum_{i=1}^{I}\left(\gamma^i+\sum_{n=0}^{T-1}\lambda_n^i(s_0,a_0,\cdots, s_{n-1},a_{n-1},s_n)\right)g^i(x_T,a_T,s_T)\right.\right.\\
&\left.+\lambda_T^i(s_0,a_0,\cdots,s_{T-1},a_{T-1},s_T)\left(g^i(x_{T},a_{T}(s^{T}),s_{T})-\bar{g}^i\right)\right)\\&\left.+\mathbb{E}_{s_0}\beta^{T+1}F(\gamma+\sum_{t=0}^{T}\lambda_t(s_0,a_0,\cdots,s_{t-1},a_{t-1},s_t),x_{T+1},s_{T+1})\right] \\
\ge&\sup_{(a_t)\in\tilde{\mathcal{A}}^{\infty}(x_0)}\mathbb{E}_{s_0}\beta^TF(\gamma+\sum_{t=0}^{T-1}\lambda_t(s_0,a_0,\cdots,s_{t-1},a_{t-1},s_t),x_T,s_T).
\end{aligned}
$$
Hence by induction, the right hand side of \eqref{eq:proof_largestfp_step3mid3} is bounded below by
$
F(\gamma,x_0,s_0)
$, and \eqref{eq:proof_largestfp_step3mid2} implies that $D(\gamma,x_0,s_0)+3\epsilon\ge F(\gamma,x_0,s_0)$. Since $\epsilon$ is arbitrarily chosen, we conclude that $D(\gamma,x_0,s_0)\ge F(\gamma,x_0,s_0)$. \hfill $ \Box $
\end{itemize}

\subsubsection*{Proof of Corollary  \ref{cor:contraction_supinf}}
We define
$$
\tilde{\mathcal{B}}(F)=\sup_{a\in{\tilde{\mathcal{A}}(x,s)}}\inf_{\lambda\in\mathbb{R}_+^I}\left[\left(r(a,x,s)+\sum_{i=1}^{I}\gamma^ig^i(a,x,s)+\lambda^i(g^i(a,x,s)-\bar{g}^i)\right)+\beta\mathbb{E}_sF(\gamma+\lambda,x',s')\right].
$$
According to Lemma \ref{lem:tech_infsup}, we have
$$
\begin{aligned}
\tilde{\mathcal{B}}(F)&=\sup_{a\in{\tilde{\mathcal{A}}(x,s)}}\inf_{\lambda\in\mathbb{R}_+^I}\left[\left(r(a,x,s)+\sum_{i=1}^{I}\gamma^ig^i(a,x,s)+\lambda^i(g^i(a,x,s)-\bar{g}^i)\right)+\beta\mathbb{E}_sF(\gamma+\lambda,x',s')\right]\\
&=\inf_{(\lambda_a)_{a\in\mathcal{A}}\in\mathbb{R}_+^{I}}\sup_{a\in{\tilde{\mathcal{A}}(x,s)}}\left[\left(r(a,x,s)+\sum_{i=1}^{I}\gamma^ig^i(a,x,s)+\lambda_a^i(g^i(a,x,s)-\bar{g}^i)\right)+\beta\mathbb{E}_sF(\gamma+\lambda_a,x',s')\right].
\end{aligned}
$$
To prove the corollary, it suffices to show that the operator $\tilde{\mathcal{B}}$ satisfies:
\begin{itemize}
    \item $\tilde{\mathcal{B}}(N)\subset N$;
    \item $\tilde{\mathcal{B}}$ satisfies monotonicity,
\end{itemize}
for processing the same proof for Theorem \ref{thm:contraction} and Lemma \ref{lem:verification_largestfp}.
It is straightforward to verify the monotonicity. It suffices to check that $\tilde{\mathcal{B}}(N)\subset N$.
\begin{enumerate}
    \item \textbf{Show that $\tilde{\mathcal{B}}$ preserves convexity.} 
    We note that when $F$ is convex in $\gamma$, then given $a\in\mathcal{A}$, $\left[\left(r(a,x,s)+\sum_{i=1}^{I}\gamma^ig^i(a,x,s)+\lambda_a^i(g^i(a,x,s)-\bar{g}^i)\right)+\beta\mathbb{E}_sF(\gamma+\lambda_a,x',s')\right]$ is convex in $(\gamma,(\lambda_a)_{a\in\mathcal{A}})$, then the same proof of Lemma 2 in \cite{pavoni2018dual} applies.
     \item \textbf{Show that $\tilde{\mathcal{B}}$ preserves \eqref{eq:F_inequality}, \eqref{eq:F_inequality_upper}, and the $L$-Lipschitz continuity.} The proof is analogous to that of Lemma \ref{lem:Bellman_restriction}, with $\lambda$ replaced by $\lambda_a$.
     \item \textbf{Show that $\tilde{\mathcal{B}}$ preserves the finiteness of the norm $\|\cdot\|_{\mathcal{M}}$.}  This is a direct corollary of the fact that $\tilde{\mathcal{B}}(F)$ preserves the properties \eqref{eq:F_inequality} and \eqref{eq:F_inequality_upper}. \hfill $ \Box $
\end{enumerate}

\subsubsection*{Proof of Theorem \ref{thm:bicon}}
\begin{itemize}
    \item \textbf{Step 1.} For step 1, we aim to show that for any $x\in X$, we have
    \begin{equation}\label{equ:halfspace}
    \begin{aligned}
    &f^{**}(x)=\sup_{(x^*,\beta)\in X^*\times \mathbb{R}} \langle x^*,x\rangle -\beta,\\
    \text{s.t.}& \langle x^*,y\rangle -\beta\le f(y),\quad \forall y\in X.
    \end{aligned}
    \end{equation}
    First, by the definition of biconjugate, for any $x\in X$, we have
    \begin{equation}\label{equ:bicon_epirepresent}
    \begin{aligned}
    f^{**}(x)&=\sup_{x^*\in X^*}\langle x^*,x\rangle-f^*(x^*)\\
    &=\sup_{(x^*,\beta)\in \text{epi}(f^*)}\langle x^*,x\rangle-\beta.
    \end{aligned}
    \end{equation}
    Second, by the definitions of epigraph and conjugate, we know that
    \begin{equation}\label{equ:equiv_epi}
        \begin{aligned}
            &(x^*,\beta)\in \text{epi}(f^*)\\
            \Leftrightarrow&f(x^*)\le \beta\\
            \Leftrightarrow&\sup_{y\in X}\langle x^*,y\rangle-f(y)\le \beta\\
            \Leftrightarrow&\langle x^*,y\rangle-f(y)\le \beta,\quad\forall y\in X\\
            \Leftrightarrow&\langle x^*,y\rangle-\beta\le f(y),\quad\forall y\in X.
        \end{aligned}
    \end{equation}
    Therefore \eqref{equ:halfspace} can be obtained by combining \eqref{equ:bicon_epirepresent} and \eqref{equ:equiv_epi}.
    \item \textbf{Step 2.} For step 2, we aim to show that
    \begin{equation}\label{equ:f**<=f}
    \text{epi}(f^{**})\subseteq \text{cl } \text{co } \text{epi}(f).
    \end{equation}
    For any $(x_0,\alpha_0)\not\in \text{cl } \text{co } \text{epi}(f)$, according to Lemma \ref{lem:ascoli}, there exists $x^*\in X$ and $\lambda\in \mathbb{R}$, such that
    \begin{equation}\label{equ:specific_dual}
    \langle x^*,x\rangle-\lambda \alpha<\gamma<\gamma_0=\langle x^*,x_0\rangle -\lambda \alpha_0,\quad\forall(x,\alpha)\in \text{cl } \text{co } \text{epi}(f).
    \end{equation}
    Since $f$ is proper, $\text{dom}(f)\not=\emptyset$, and hence there exists $x'\in \text{dom}(f)$. For any $\alpha'\ge f(x')$, $(x',\alpha')\in \text{epi}(f)\subseteq\text{cl } \text{co } \text{epi}(f)$. We take $x=x',\alpha=\alpha'\rightarrow+\infty$ on the left-hand-side in \eqref{equ:specific_dual}, and obtain $\lambda\ge 0$. \begin{enumerate}
        \item If $\lambda >0$. According to \eqref{equ:specific_dual}, we know that
        $$
            \langle \frac{x^*}{\lambda},x\rangle-\frac{\gamma}{\lambda}<\alpha,\quad \forall (x,\alpha)\in \text{cl } \text{co } \text{epi}(f).
       $$
        Therefore, $(\frac{x^*}{\lambda},\frac{\gamma}{\lambda})$ satisfies the constraint in \eqref{equ:halfspace}. Furthermore, from \eqref{equ:specific_dual} we have
    $$
    \langle \frac{x^*}{\lambda},x_0\rangle -\frac{\gamma}{\lambda}=\alpha_0+\frac{\gamma_0-\gamma}{\lambda}>\alpha_0,
    $$
    implying that 
    $$
    f^{**}(x_0)>\alpha_0,
    $$
    according to \eqref{equ:halfspace}. Hence $(x_0,\alpha_0)\not\in \text{epi}(f^{**})$.
    \item If $\lambda=0$, \eqref{equ:specific_dual} implies that
    \begin{equation}\label{equ:specific_dual_0lambda}
        \langle x^*,x\rangle<\gamma<\gamma_0=\langle x^*,x_0\rangle,\quad \forall x\in \text{epi}(f).
    \end{equation}
    We take $\underline{x}^*\in X^*$, $\underline{\beta}\in \mathbb{R}$, such that \eqref{equ:regularity} holds. Then for any $K>0$, we have
    $$
    \begin{aligned}
    &\langle \underline{x}^*+Kx^*,x\rangle-(K\gamma+\beta)\\
    =&K\left(\langle x^*,x\rangle -\gamma\right)+\langle \underline{x}^*,x\rangle -\underline{\beta}\\
    \le &\langle \underline{x}^*,x\rangle -\underline{\beta}\le f(x),\quad \forall x\in X.
    \end{aligned}
    $$
    Therefore $( \underline{x}^*+Kx^*,K\gamma+\beta)$ satisfies the constraint in \eqref{equ:halfspace}. Furthermore, from \eqref{equ:specific_dual_0lambda} we have
    $$
    \begin{aligned}
    &\langle \underline{x}^*+Kx^*,x_0\rangle-(K\gamma+\beta)\\
    =&K\left(\langle x^*,x_0\rangle -\gamma\right)+\langle \underline{x}^*,x_0\rangle -\underline{\beta}\\
    =&K(\gamma_0-\gamma)+\langle \underline{x}^*,x_0\rangle-\beta,
    \end{aligned}
    $$
    implying that
    \begin{equation}\label{equ:fstarstar_lb}
        f^{**}(x_0)\ge K(\gamma_0-\gamma)+\langle \underline{x}^*,x_0\rangle-\beta,
    \end{equation}
    according to \eqref{equ:halfspace}. We take 
    $$
    K>\frac{\alpha_0+\beta-\langle \underline{x}^*,x_0\rangle}{\gamma_0-\gamma}
    $$
    in \eqref{equ:fstarstar_lb} and obtain
    $
    f^{**}(x_0)>\alpha_0,
    $ yielding that $(x_0,\alpha_0)\not\in \text{epi}(f^{**})$.
    \end{enumerate}
   Therefore we have shown that, for any $(x_0,\alpha_0)\not\in \text{cl } \text{co } \text{epi}(f),$ we know that $(x_0,\alpha_0)\not\in \text{epi}(f^{**})$, which is equivalent to \eqref{equ:f**<=f}.
\item \textbf{Step 3.} For step 3, we aim to show that
\begin{equation}\label{equ:f<=f**}
       \text{cl } \text{co } \text{epi}(f) \subseteq \text{epi}(f^{**}).
\end{equation}
From \eqref{equ:halfspace} we know $f^{**}(x)\le f(x),\,\forall x\in X$ from the constraint. Therefore, $\text{epi}(f)\subseteq\text{epi}(f^{**}).$ It then suffices to show that $\text{epi}(f^{**})$ is a closed convex set, or equivalently, to show that $f^{**}$ is convex and lower semicontinuous, according to Proposition \ref{prop:convex} and Proposition \ref{prop:lsc}.
\begin{enumerate}
    \item \textbf{Convexity.} Given $x\in X$. Assume that $x^*\in X^*$ satisfies $\langle x^*,y\rangle -\beta\le f(y),\,\forall y\in X$. According to \eqref{equ:halfspace}, for any $x_1,\,x_2\in X$, $0\le p_1,\,p_2\le 1$ such that $p_1+p_2=1$ and $p_1x_1+p_2x_2=x$, we have
    \begin{equation}\label{equ:proof_convex}
    \begin{aligned}
        &\langle x^*,x\rangle -\beta\\
        =&p_1\left(\langle x^*,y_1\rangle-\beta\right)+p_2\left(\langle x^*,y_2\rangle -\beta\right)\\
        \le & p_1f^{**}(y_1)+p_2f^{**}(y_2).
    \end{aligned}
    \end{equation}
    Since \eqref{equ:proof_convex} holds for any $x^*\in X^*$ satisfying $\langle x^*,y\rangle -\beta\le f(y),\,\forall y\in X$, we can deduce from \eqref{equ:proof_convex} that
    $$
    f^{**}(x)\le p_1f^{**}(x_1)+p_2f^{**}(x_2).
    $$
    Hence $f^{**}$ is convex. 
    \item \textbf{Lower Semicontinuity.} To show that $f^{**}$ is l.s.c., it suffices to show that for any $\epsilon>0$ and any $\bar{x}\in X$, any sequence $\{x_n\}$ such that $x_n\rightarrow \bar{x}$ in $X$, we have
    \begin{equation}\label{equ:proof_lsc_equiv}
        f^{**}(\bar{x})-\epsilon\le \liminf_{n\rightarrow\infty}f^{**}(x_n).
    \end{equation}
    Indeed, according to \eqref{equ:halfspace}, there exists $x^*\in X^*$ such that $\langle x^*,y\rangle -\beta\le f(y),\,\forall y\in X$, and
    \begin{equation}\label{equ:prop_lsc}
    \langle x^*,\bar{x}\rangle -\beta\ge f^{**}(\bar{x})-\frac{\epsilon}{2}.
    \end{equation}
    Therefore, for any $x\in X$ such that 
    \begin{equation}\label{equ:proof_lsc_nearxbar}
    \|x-\bar{x}\|_{X}\le \frac{\epsilon}{2\max\{\|x^*\|_{X^*},1\}},
    \end{equation}
    we can deduce from \eqref{equ:prop_lsc} that
    \begin{equation}
    \begin{aligned}\label{equ:proof_lsc_nearbarx_prop}
    \langle x^*, x\rangle-\beta&=\left(\langle x^*, \bar{x}\rangle-\beta\right)+\langle x^*, x-\bar{x}\rangle\\
    &\ge f^{**}(\bar{x})-\frac{\epsilon}{2}+\|x^*\|_{X^*}\|x-\bar{x}\|_{X}\\
    &\ge f^{**}(\bar{x})-\frac{\epsilon}{2}-\frac{\epsilon}{2}\\
    &=f^{**}(\bar{x})-\epsilon.
    \end{aligned}
    \end{equation}
    Since \eqref{equ:proof_lsc_nearbarx_prop} holds for any $x^*\in X^*$ satisfying $\langle x^*,y\rangle -\beta\le f(y),\,\forall y\in X$, we then know from \eqref{equ:halfspace} that
    $$
    f^{**}(x)\ge f^{**}(\bar{x})-\epsilon,
    $$
    for all $x\in X$ satifying \eqref{equ:proof_lsc_nearxbar}. And hence we obtain \eqref{equ:proof_lsc_equiv}.
\end{enumerate}
\end{itemize}
The proof is then finished by combining \eqref{equ:f**<=f} in Step 2 and \eqref{equ:f<=f**} in Step 3.

\subsubsection*{Proof of  Theorem \ref{thm:dualgap}}
    \begin{itemize}
        \item We first show that $p=v(\theta_Y)$.
        
        For any $x\in \Omega$ such that $g(x)\le \theta_{Y}$, since $\langle y^*,g(x)\rangle\le 0,\,\forall y^*\in Y^*_{+}$ and $\langle \theta_{Y^*},g(x)\rangle=0$, we have
        \begin{equation}\label{equ:dualgap_primal_xfeas}
        \sup_{y^*\in Y^*_{+}}L(x,y^*)= f(x).
        \end{equation}
        
        For any $x\in \Omega$, such that $g(x)\le \theta_Y$ does not hold, i.e. $g(x)\not\in -P$, since $-P$ is a closed convex set, according to Lemma \ref{lem:ascoli}, we know that there exists $y^*\in Y^*$ and $\gamma\in \mathbb{R}$, such that
        \begin{equation}\label{equ_ascoli_proofdualgap}
        \langle y^*,y\rangle<\gamma<\langle y^*,g(x)\rangle ,\quad \forall y\in -P.
        \end{equation}
        Since $\theta_Y\in -P$ and $\langle y^*,\theta_Y\rangle=0$, we know that $\gamma>0$. On the other hand, if there exists $y\in -P$, such that $\langle y^*,y\rangle>0$, then one can take $\alpha>0$ sufficiently large, such that $\langle y^*,\alpha y\rangle=\alpha \langle y^*,y\rangle\ge \langle y^*,g(x)\rangle$, which contradicts \eqref{equ_ascoli_proofdualgap}. Therefore, $y^*\in Y^*_{+}$, and $\langle y^*,g(x)\rangle >\gamma>0$. Hence for any $\alpha>0$,
        $$
        \sup_{y^*\in Y_+^{*}}L(x,y^*)\ge f(x)+\alpha \langle y^*,g(x)\rangle>f(x)+\alpha\gamma,
        $$
        implying that 
         \begin{equation}\label{equ:dualgap_primal_xinfeas}
        \sup_{y^*\in Y_+^{*}}L(x,y^*)=+\infty.
        \end{equation}
        We combine \eqref{equ:dualgap_primal_xfeas} and \eqref{equ:dualgap_primal_xinfeas} to obtian
        $$
        \sup_{y^*\in Y^*_{+}}L(x,y^*)=\begin{cases}
            f(x),&g(x)\le \theta_{Y};\\
            +\infty,&g(x)\not\le \theta _Y,
        \end{cases}
        $$
        yielding that $p=\inf_{x\in \Omega}\sup_{y^*\in Y^{*}_+}L(x,y^*)=v(\theta_Y)$.
        \item For step 2, we show that $d=v^{**}(\theta_Y)$. By direct compuataion, we have
        \begin{equation}\label{equ:vstarstar_refor}
        \begin{aligned}
            v^{**}(\theta_Y)&=\sup_{y^*\in Y^*}-v^*(y^*)\\
            &=\sup_{y^*\in Y^*} -\left(\sup_{y\in Y} (\langle y^*,y\rangle -v(y))\right)\\
            &=\sup_{y^*\in Y^*} \inf_{y\in Y} (v(y)-\langle y^*,y\rangle)\\
            &=\sup_{y^*\in Y^*}\inf_{y\in Y}(\inf_{g(x)\le y}f(x)-\langle y^*,y\rangle)\\
            &=\sup_{y^*\in Y^*}\inf_{x\in \Omega}\inf_{y\ge g(x)}\left(f(x)-\langle y^*,y\rangle\right)\\
            &=\sup_{y^*\in Y^*}\inf_{x\in \Omega} \left[\left(f(x)-\langle y^*,g(x)\rangle\right) +\inf_{y\ge g(x)}\langle y^*,g(x)-y\rangle\right]\\
            &=\sup_{y^*\in Y^*}\inf_{x\in \Omega} \left[\left(f(x)-\langle y^*,g(x)\rangle\right) +\inf_{y\ge g(x)}\langle y^*,y-g(x)\rangle\right].
        \end{aligned}
        \end{equation}
        If $y^*\not\in Y^{*}_+$, then it is straightforward to verify that for any $x\in \Omega$, there exists $y\ge g(x)$, such that
        $$
        \inf_{y\ge g(x)}\langle y^*,y-g(x)\rangle=-\infty,
        $$
        and hence
        $$
        \inf_{x\in \Omega} \left[\left(f(x)+\langle y^*,g(x)\rangle\right) +\inf_{y\ge g(x)}\langle y^*,y-g(x)\rangle\right]=-\infty.
        $$
        If $y^*\in Y^*_+$, then for any $x\in \Omega$, $y\ge g(x)$, we have
        $$
        \langle y^*,y-g(x)\rangle\ge 0,
        $$
        and the equality is achieved when $y=g(x)$.
        Therefore, we conclude from \eqref{equ:vstarstar_refor} that
        $$
        v^{**}(\theta_Y)=
            \sup_{y^*\in Y^*_+}\inf_{x\in \Omega} \left(f(x)+\langle y^*,g(x)\rangle\right) =\sup_{y^*\in Y^*_+}\inf_{x\in \Omega} L(x,y^*) =d.
        $$
    \end{itemize}
    We combine the two steps to finish the proof of the first statement. 
  The fact that  $ \text{epi}(v^{**})=\text{cl } \text{co } \text{epi}(v) $ follows directly from Theorem \ref{thm:bicon}. \hfill $ \Box $

\subsubsection*{Proof of Theorem \ref{thm:exist_opt_lot}}
We consider a sequence of $P_n\in\mathcal{P}(\tilde{\mathcal{A}}^{\infty}(x_0))$ satisfies all constraints in \eqref{equ:simplified_CK_lot}, and 
$$
\mathbb{E}_{s_0}^{(a(s^t))\sim P_n}\sum_{t=0}^{\infty}\beta^t\left(r(x(s^t,a^{t-1}),a(s^t),s_t)+\sum_{i=1}^{I}\gamma^ig^i(x(s^t,a^{t-1}),a(s^t),s_t)\right)
$$
tends to the supremum of problem \eqref{equ:simplified_CK_lot}. 
 According to the *-weak compactness of $\mathcal{P}(\tilde{\mathcal{A}}^{\infty}(x_0))$, $P_n$ *-weak converges to some $P^*\in\mathcal{P}(\tilde{A}^{\infty}(x_0))$ up to a subsequence. 

 For any $a\in\tilde{A}^{\infty}(x_0)$, we define
$$
f(a)=\mathbb{E}_{s_0}\sum_{t=0}^{\infty}\beta^t\left(r(x(s^t,a^{t-1}),a(s^t),s_t)+\sum_{i=1}^{I}\gamma^ig^i(x(s^t,a^{t-1}),a(s^t),s_t)\right),
$$
and
$$
g_{t,i,h^t}(a)=\mathbb{E}_{s^t}1_{\{\tilde{a}^{t-1}=a^{t-1}\}}\left(\sum_{n=0}^{\infty}\beta^n g^i(x(s^{t+n},(\tilde{a}^{t-1},a_t^{t+n-1})),a(s^{t+n}),s_{t+n})-\bar{g}^i\right).
$$
 It is straightforward to see that $f$ and $g_{t,i,h^t}$ are bounded, and continuous in $a$ under the product topology. Hence $\mathbb{E}^{a\sim P}f(a)\rightarrow\mathbb{E}^{a\sim P^*}f(a)$, and $\mathbb{E}^{a\sim P}g_{t,i,h^t}(a)\rightarrow\mathbb{E}^{a\sim P^*}g_{t,i,h^t}(a)$, implies that $P^*$ is a feasible point to the problem \eqref{equ:simplified_CK_lot} and reaches the supremum.

\subsubsection*{Proof of Theorem \ref{thm:lot_equiv}}
    For any $P\in\mathcal{P}(\tilde{\mathcal{A}}^{\infty}(x_0))$ which is a feasible point to problem \eqref{equ:simplified_CK_lot}, it is straightforward to compute $\psi(h^t)(a)$ as\footnote{Note that when $P(a(s^{t-1})=\tilde{a}_{t-1},\cdots ,a(s_0)=\tilde{a}_0)=0$ we can arbitrarily define $\psi(h^t)$.}
    $$
    \psi(h^t)(a)=\frac{P(a(s^t)=a,a(s^{t-1})=\tilde{a}_{t-1},\cdots ,a(s_0)=\tilde{a}_0)}{P(a(s^{t-1})=\tilde{a}_{t-1},\cdots ,a(s_0)=\tilde{a}_0)}.
    $$
    It is straightforward to show that $(\psi(h^t))$ is a feasible point to \eqref{equ:simplified_CK_lot_statewise}, and reaches the same objective function value as $P$ in problem \eqref{equ:simplified_CK_lot}.

    For $(\psi(h^t))\in\Pi_{h^t\in\mathcal{H}^t}{\cal P}(\tilde{A}(h^t))$, we can construct the marginal distributions as
    $$
    \psi(a(s^t)=a,a(s^{t-1})=\tilde{a}_{t-1},\cdots,a(s_0)=\tilde{a}_0)=\psi(h_0)(\tilde{a}_0)\cdot\psi(h^1)(\tilde{a}_1)\cdots\psi(h^t)(a).
    $$
    According to Kolmogorov extension theorem, there exists $P\in\mathcal{P}(\tilde{\mathcal{A}}^{\infty}(x_0))$ has marginal distributions $\psi$, and hence $P$ reaches the same objective function value as $(\psi(h^t))$ in problem \eqref{equ:simplified_CK_lot_statewise}. \hfill $ \Box $

\subsubsection*{Proof of Theorem \ref{thm:lot_dual_equiv}}

     According to Theorem \ref{dual_for}, $D(\gamma,x_0,s_0)$ equals to the dual of the deterministic problem \eqref{equ:CK_equiv}. According to Corollary \ref{cor:infinite}, for any $\epsilon>0$, there exists $N\in\mathbb{N}_+$, $(a^k_t)\in \tilde{\mathcal{A}}^{\infty}(x_0)\,(k=1,\cdots,N)$, $p_k\ge 0,\,\sum_{k=1}^{N}p_k=1$, and $P_\epsilon\in\mathcal{P}(\tilde{\mathcal{A}}^{\infty}(x_0))$ defined by
    $$
    P_\epsilon(\{(a^k_t)\})=p_k,
    $$
    such that
    $$
    \mathbb{E}_{s_0}^{(a_t(s^t))\sim P_\epsilon}\sum_{t=0}^{\infty}\beta^t\left(r(x_t,a_t,s_t)+\sum_{i=1}^{I}\gamma^ig^i(x_t,a_t,s_t)\right)\ge D(\gamma,x_0,s_0)-\epsilon,
    $$
    and
    $$
    \begin{aligned}
       &\mathbb{E}_{s^t}^{(a(s^t))\sim P_\epsilon}1_{\{(\tilde{a}_0,\cdots,\tilde{a}_{t-1})=(a(s_0),\cdots,a(s^{t-1}))\}}\left(\sum_{n=0}^{\infty}\beta^n g^i(x_{t+n},a(s^{t+n}),s_{t+n})-\bar{g}^i\right)\ge-\epsilon,\\&
       \forall t\in\mathbb{N},\,\forall h^t=(s_0,\tilde{a}_0,\cdots,s_{t-1},\tilde{a}_{t-1},s_t)\in \mathcal{H}^t,\,\forall i\in\{1,\cdots,I\}.
       \end{aligned}
    $$
    Since $\mathcal{P}(\mathcal{\tilde{A}}^{\infty}(x_0))$ is *-weak compact, $P_{\epsilon}$ *-weak converges to some $P^*\in\mathcal{P}(\tilde{\mathcal{A}}^{\infty}(x_0))$ as $\epsilon\rightarrow0$ up to a subsequence. It is straightforward to verify that $P^*$ is a feasible probability measure to problem \eqref{equ:simplified_CK_lot} and 
    $$
    \mathbb{E}_{s_0}^{(a(s^t))\sim P^*}\sum_{t=0}^{\infty}\beta^t\left(r(x_t,a(s^t),s_t)+\sum_{i=1}^{I}\gamma^ig^i(x_t,a(s^t),s_t)\right)\ge D(\gamma,x_0,s_0).
    $$
     Therefore, $V(\gamma,x_0,s_0)\ge D(\gamma,x_0,s_0)$.
    
    Then it suffices to prove that $V(\gamma,x_0,s_0)\le D(\gamma,x_0,s_0)$. We consider the set of finite support probability measures,
    $$
    \mathcal{D}=\{\sum_{i=1}^{n}\lambda_i\delta_{x_i}|n\in\mathbb{N},\,\lambda_i\ge 0,\,\sum_{i=1}^{n}\lambda_i=1,\,x_i\in\tilde{\mathcal{A}}^{\infty}(x_0)\}.
    $$
     $\mathcal{D}$ is *-weakly dense in $\mathcal{P}(\tilde{\mathcal{A}}^{\infty}(x_0))$(see Theorem 15.10 in \cite{aliprantis2006infinite}), i.e. for any $P^*\in\mathcal{P}(\tilde{\mathcal{A}}^{\infty}(x_0))$, there exists $\{P^k_D\}_{k=1}^{\infty}\subset\mathcal{D}$, s.t.
    $$
    \int_{a\in \tilde{\mathcal{A}}^{\infty}(x_0)} f(a)dP_D^k\rightarrow\int_{a\in\tilde{\mathcal{A}}^{\infty}(x_0)}f(a)dP^*,
    $$
    for any $f\in C_b(\tilde{\mathcal{A}}^{\infty}(x_0))$. We take $P^*\in\mathcal{P}(\tilde{\mathcal{A}}^{\infty}(x_0))$ as the solution to \eqref{equ:simplified_CK_lot}, and denote
    $$
    P_D^k=\sum_{i=1}^{n_k}\lambda_{k,i}\delta_{a_{k,i}}.
    $$
    According to the fifth assumption in Assumption \ref{ass:dynamic}, for any $a\in\tilde{\mathcal{A}}^{\infty}(x_0)$, there exists $(\underline{a}_t(s^t))\in\mathcal{A}^{\infty}$, such that when defining $a^T\in\mathcal{A}^{\infty}$ as $$
    a^T=\begin{cases}
        a_t(s^t),&t\le T;\\
        \underline{a}_t(s^t),&t>T,
    \end{cases}
    $$
    $a^T\in\tilde{\mathcal{A}}^{\infty}(x_0)$ and satisfies the constraints in \eqref{equ:CK} for any $t>T$.
    Furthermore, we can define
    $$
    \underline{P}_D^k=\sum_{i=1}^{n_k}\lambda_{k,i}\delta_{a^k_{k,i}}.
    $$

    To finish the proof two lemmas are needed.    \begin{lemma}\label{lem:technique_4_thm:lot_dual_equi}
        For any $P\in\mathcal{P}(\tilde{\mathcal{A}}^{\infty}(x_0))$, we define 
        $$
        \begin{aligned}
        &G(P)=\\
        &(\min\{\mathbb{E}^{(a(s^t))\sim P}_{s^t}1_{\{\tilde{a}^{t-1}=a^{t-1}\}}\left(\sum_{n=0}^{\infty}\beta^n g^i(x_{t+n},a(s^{t+n}),s_{t+n})-\bar{g}^i\right) ,0\})_{t\in\mathbb{N},\, h^t\in \mathcal{H}^t,\, i\in\{1,\cdots,I\}}\in\ell^{\infty}.
        \end{aligned}
        $$
        Then $G(\underline{P}_D^k)$ weak converges to $\theta$ in $\ell^{\infty}$.
    \end{lemma}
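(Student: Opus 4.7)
The plan is to establish pointwise (coordinate-wise) convergence $G(\underline P_D^k)_{t,h^t,i}\to 0$ for each index $(t,h^t,i)$, and then upgrade this to weak convergence using uniform boundedness together with the $\ell^1$-summability of the multipliers in $\Lambda$. I interpret the statement ``$G(\underline P_D^k)$ weak converges to $\theta$ in $\ell^\infty$'' as convergence in $\sigma(\ell^\infty,\ell^1)$, which is the natural pairing given that Theorem \ref{dual_for} produces Lagrange multipliers in the weighted $\ell^1$ space $\Lambda$.

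First, I would observe that since $P^*$ is a maximizer of \eqref{equ:simplified_CK_lot} it is feasible, so for every $(t,h^t,i)$ the corresponding constraint value $\Psi_{t,h^t,i}(P^*):=\mathbb{E}^{(a(s^t))\sim P^*}_{s^t}\bigl[\mathbf 1_{\{\tilde a^{t-1}=a^{t-1}\}}(\sum_{n\ge 0}\beta^n g^i(\cdots)-\bar g^i)\bigr]$ is nonnegative, hence $G(P^*)=\theta$. The next step is the central one: show that $\underline P_D^k\to P^*$ in the $*$-weak topology on $\mathcal{P}(\tilde{\mathcal A}^\infty(x_0))$. Equip the compact metrizable space $\tilde{\mathcal A}^\infty(x_0)$ with the summable-weighted ultrametric $d(a,b)=\sum_{t,s^t}\beta^t\pi^t(s^t\mid s_0)\mathbf 1_{\{a(s^t)\ne b(s^t)\}}$. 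If we let $T_k\to\infty$ (e.g., $T_k=k$), then $d(a^{T_k},a)=O(\beta^{T_k})$ \emph{uniformly} in $a$. Uniform continuity of any $f\in C_b(\tilde{\mathcal A}^\infty(x_0))$ on this compact metric space then yields
$$\Bigl|\int f\, d\underline P_D^k-\int f\, dP_D^k\Bigr|\le \sup_a|f(a^{T_k})-f(a)|\to 0,$$
which combined with $P_D^k\to P^*$ gives $\underline P_D^k\to P^*$ in $*$-weak topology.

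Once this is in hand, coordinate-wise convergence follows easily: for each $(t,h^t,i)$ the functional $P\mapsto \Psi_{t,h^t,i}(P)$ is the integral of a fixed bounded continuous function $\phi_{t,h^t,i}$ on $\tilde{\mathcal A}^\infty(x_0)$ (continuity in the product topology holds because $\mathcal A$ is discrete and the sum $\sum_{n\ge 0}\beta^n g^i$ converges uniformly by boundedness and $\beta<1$). Hence $\Psi_{t,h^t,i}(\underline P_D^k)\to \Psi_{t,h^t,i}(P^*)\ge 0$, whence $G(\underline P_D^k)_{t,h^t,i}=\min\{\Psi_{t,h^t,i}(\underline P_D^k),0\}\to 0$. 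To pass from pointwise to weak convergence, I would use the uniform bound $\|G(P)\|_{\ell^\infty}\le C:=\max_i\bigl(\|g^i\|_\infty/(1-\beta)+\|\bar g^i\|_\infty\bigr)$. For any $\lambda\in\Lambda$ (weighted $\ell^1$) and any $\epsilon>0$, pick $T^*$ so that the tail $\sum_{t>T^*}\sum_{h^t,i}\beta^t\pi^t(s^t\mid s_0)|\lambda_{t,h^t,i}|<\epsilon/(2C)$; on the remaining finite head (since $\mathcal H^t$ is finite for each fixed $t$) the coordinate-wise convergence handles finitely many terms, yielding $\langle\lambda,G(\underline P_D^k)\rangle\to 0$.

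The main obstacle is the second step: preserving the $*$-weak convergence after truncating each atom. This is precisely where the geometric decay $\beta^t$ in the weights on the product space is essential, since otherwise the perturbation $a\mapsto a^{T_k}$ could fail to be uniformly small in the product metric, and bounded continuous test functions need not be uniformly controlled on all of $\tilde{\mathcal A}^\infty(x_0)$. A lesser subtlety is the interpretation of the weak topology on $\ell^\infty$; the $\sigma(\ell^\infty,\ell^1)$ reading is what the subsequent use in the proof of Theorem \ref{thm:lot_dual_equiv} demands, since the lemma will be coupled with Lagrange multipliers in $\Lambda$.
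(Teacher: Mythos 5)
Your treatment of the $\ell^1$ pairing is sound and essentially matches the paper's: coordinate-wise convergence of $G(\underline{P}_D^k)_{t,h^t,i}$ to $0$ (you obtain it by first showing $\underline{P}_D^k\rightarrow P^*$ in the $*$-weak topology via a uniform product-metric estimate, while the paper shows $G(P_D^k)_{t,h^t,i}\rightarrow 0$ and then $G(\underline{P}_D^k)_{t,h^t,i}-G(P_D^k)_{t,h^t,i}\rightarrow 0$; the two routes are interchangeable), followed by uniform boundedness in $\ell^{\infty}$ and truncation of the $\ell^1$ test functional.

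The genuine gap is your reinterpretation of ``weak convergence in $\ell^{\infty}$'' as convergence in $\sigma(\ell^{\infty},\ell^1)$ only. The lemma is invoked in the proof of Theorem \ref{thm:lot_dual_equiv} through Mazur's theorem, which asserts that a norm-closed convex set is closed for the weak topology $\sigma(\ell^{\infty},\ell^{\infty,*})$; the analogous statement is false for the weak-$*$ topology $\sigma(\ell^{\infty},\ell^1)$, under which norm-closed convex sets need not be closed. So your weakened conclusion would not allow the downstream step that places $(\theta,V(\gamma,x_0,s_0))$ in $\text{cl}\,\text{co}\,\text{hypo}(v)$. What is missing is the purely finitely additive part of $\ell^{\infty,*}$: by the Yosida--Hewitt decomposition (Lemma \ref{lem:Yosida}) every $\lambda\in\ell^{\infty,*}$ splits as $\lambda^1+\lambda^s$ with $\lambda^1\in\ell^1$ and $\lambda^s$ purely finitely additive, and one must also show $\langle\lambda^s,G(\underline{P}_D^k)\rangle\rightarrow 0$. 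The paper gets this for free from the very construction of $\underline{P}_D^k$: each spliced atom satisfies the forward-looking constraints at every $t$ beyond the truncation time, and since each $\mathcal{H}^t$ is finite, $G(\underline{P}_D^k)$ has only finitely many nonzero coordinates, hence lies in $c_0$ and is annihilated by every purely finitely additive functional, so that $\langle\lambda^s,G(\underline{P}_D^k)\rangle=0$ exactly. Without this step the lemma, in the form actually needed, is not proved.
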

\begin{proof}
By definition, it suffices to show that for any $\lambda^1\in \ell^1,\,\lambda^s\in\ell^s$, we have
\begin{equation}\label{appprove_aim}
\langle \lambda^1+\lambda^s,G(\underline{P}_D^k)\rangle\rightarrow0,\quad\text{ as }k\rightarrow\infty.
\end{equation}
By the definition of $\underline{P}_D^k$, it is easy to check that
$$
G(\underline{P}_D^k)_{t>n_k,\,h^t\in\mathcal{H}^t,\,i\in\{1,\cdots,I\}}=0,
$$
yielding that
\begin{equation}\label{appprove_aim1}
\langle \lambda^s,G(\underline{P}_D^k)\rangle =0,\quad\forall k\in\mathbb{N}_+.
\end{equation}
For any $t,\,h^t,\,i$, for $a\in\tilde{\mathcal{A}}^{\infty}(x_0)$, we define
$$
g_{t, h^t, i}(a):=\mathbb{E}_{s_t}1_{\{(\tilde{a}^{t-1}=a^{t-1})\}}\left(\sum_{n=0}^{\infty}\beta^ng^i(x_{t+n},a(s^{t+n}),s_{t+n})-\bar{g}^i\right).
$$
Therefore, $$\int_{a\in\tilde{\mathcal{A}}^{\infty}(x_0)}g_{t,h^i,i}(a)dP_D^k\rightarrow \int_{a\in\tilde{\mathcal{A}}^{\infty}(x_0)}g_{t,h^t,i}(a)dP^*\ge 0,\quad\text{ as }k\rightarrow\infty,\quad\forall t,\,h^t,\,i,$$ due to the fact that $P_D^k$ weak converges to $P^*$. Hence 
\begin{equation}\label{equ:appprove1}
    G(P_D^k)_{t,h^t,i}=\min\{\int_{a\in\tilde{\mathcal{A}}^{\infty}(x_0)}g_{t,h^t,i}(a)dP_D^k,0\}\rightarrow0,\quad\text{ as }k\rightarrow\infty,\quad \forall t,\,h^t,\,i.
    \end{equation}
It is straightforward to verify that\footnote{This is the UANA property.}
\begin{equation}\label{equ:appprove2}
G(\underline{P}_D^k)_{t,h^t,i}-G(P_D^k)_{t,h^t,i}\rightarrow0,\quad\text{ as }k\rightarrow\infty,\quad \forall t,\,h^t,\,i.
\end{equation}
We combine \eqref{equ:appprove1} and \eqref{equ:appprove2} to obtain
$$
G(\underline{P}_D^k)_{t,h^t,i}\rightarrow0.
$$
This pointwise convergence property, together with the fact that $G(\underline{P}_D^k)$ is uniformly bounded in $\ell^{\infty}$, yields that $G(\underline{P}_D^k)$ *-weak converges to $\theta$ in $\ell^{\infty}$, implying that
\begin{equation}\label{appprove_aim2}
\langle \lambda^1, G(\underline{P}_D^k)\rangle \rightarrow0,\text{ as }k\rightarrow\infty.
\end{equation}
We then conclude \eqref{appprove_aim} by combining \eqref{appprove_aim1} and \eqref{appprove_aim2}. \hfill $ \Box $
\end{proof}

\begin{lemma}\label{lem:technique_4_thm2:lot_dual_equi}
        For any $P\in\mathcal{P}(\tilde{\mathcal{A}}^{\infty}(x_0))$, we define 
        $$
        F(P)=\mathbb{E}_{s_0}^{(a(s^t))\sim P}\sum_{t=0}^{\infty}\beta^t\left(r(x_t,a(s^t),s_t)+\sum_{i=1}^{I}\gamma^ig^i(x_t,a(s^t),s_t)\right).
        $$
        Then $F(\underline{P}_D^k)$ converges to $F(P^*)=V(\gamma,x_0,s_0)$.
    \end{lemma}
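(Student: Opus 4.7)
The strategy is to bridge $F(\underline{P}_D^k)$ and $F(P^*)$ through the intermediate $F(P_D^k)$, exploiting the fact that $P_D^k \to P^*$ in the *-weak topology while $\underline{P}_D^k$ differs from $P_D^k$ only in the tail beyond time $k$. I would write
$$
F(\underline{P}_D^k) - F(P^*) = \big(F(\underline{P}_D^k) - F(P_D^k)\big) + \big(F(P_D^k) - F(P^*)\big)
$$
and show that both brackets vanish as $k\to\infty$. The identity $F(P^*) = V(\gamma,x_0,s_0)$ is then immediate from the definition of $P^*$ as the maximizer furnished by Theorem \ref{thm:exist_opt_lot}.

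To handle the second bracket, set
$$
\Phi(a) := \mathbb{E}_{s_0}\sum_{t=0}^\infty \beta^t \Big(r(x_t,a(s^t),s_t) + \sum_{i=1}^{I} \gamma^i g^i(x_t,a(s^t),s_t)\Big),
$$
so that $F(P) = \int_{\tilde{\mathcal{A}}^\infty(x_0)} \Phi\, dP$. The key observation is that $\Phi$ is bounded and continuous on $\tilde{\mathcal{A}}^\infty(x_0)$ under the product topology: each per-period summand depends on $a$ only through the finitely many coordinates $a(s^0),\ldots,a(s^t)$ (via the state recursion $x_{t+1} = \zeta(x_t,a_t,s_t)$), and this dependence is continuous since $\mathcal{A}$ is finite and thus discrete. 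Boundedness of $r,g^i$ together with $\beta<1$ makes the series converge uniformly in $a$, so $\Phi$ is continuous. The definition of *-weak convergence then gives $F(P_D^k) = \int \Phi \, dP_D^k \to \int \Phi \, dP^* = F(P^*)$.

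For the first bracket, by construction $a^k_{k,i}$ coincides with $a_{k,i}$ at every history $s^t$ with $t \le k$, so the induced physical states $x_t$ and hence the per-period integrands at $t \le k$ agree. Setting $C := \|r\|_\infty + \sum_i |\gamma^i| \|g^i\|_\infty$, a direct bound yields
$$
\big|\Phi(a^k_{k,i}) - \Phi(a_{k,i})\big| \le 2C \sum_{t=k+1}^\infty \beta^t \le \frac{2C\,\beta^{k+1}}{1-\beta},
$$
uniformly in $i$. Averaging against the weights $\lambda_{k,i}$ (with $\sum_i \lambda_{k,i}=1$) gives $|F(\underline{P}_D^k) - F(P_D^k)| \le 2C\beta^{k+1}/(1-\beta) \to 0$.

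The main technical obstacle is verifying that $\Phi$ is continuous in the product topology, since the physical state $x_t$ depends on the history of actions and not just the current one. This is resolved cleanly by finiteness of $\mathcal{A}$: each $x_t$ is a deterministic function of finitely many discrete coordinates of $a$, so every truncation of the series is locally constant in $a$, and the uniform geometric tail promotes this to continuity of $\Phi$ itself. Combining the two vanishing brackets with $F(P^*) = V(\gamma,x_0,s_0)$ yields the claim.
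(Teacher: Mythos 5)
Your proposal is correct and follows essentially the same route as the paper: both decompose $F(\underline{P}_D^k)-F(P^*)$ through the intermediate $F(P_D^k)$, use *-weak convergence of $P_D^k$ together with boundedness and product-topology continuity of the integrand to handle $F(P_D^k)\to F(P^*)$, and control $F(\underline{P}_D^k)-F(P_D^k)$ via the fact that the tail modification only affects periods $t>k$ and is damped by the geometric factor $\beta^t$. The paper leaves the second step as an omitted verification, so your explicit bound $2C\beta^{k+1}/(1-\beta)$ simply fills in detail the authors chose not to write out.
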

\begin{proof}
For $a\in\tilde{\mathcal{A}}^{\infty}(x_0)$, we define 
$$
f(a):=\mathbb{E}_{s_0}\sum_{t=0}^{\infty}\beta^t\left(r(x_t,a(s^t),s_t)+\sum_{i=1}^{I}\gamma^ig^i(x_t,a(s^t),s_t)\right).
$$
Therefore,
$$
F(P_D^k)=\int_{a\in \tilde{\mathcal{A}}^{\infty}(x_0)}f(a)dP_D^k\rightarrow\int_{a\in\tilde{\mathcal{A}}^{\infty}(x_0)}f(a)dP^*=F(P^*),\quad\text{ as }k\rightarrow\infty.
$$
Similarly as in the proof for Lemma \ref{lem:technique_4_thm:lot_dual_equi}, it is straightforward to verify that
$$
F(\underline{P}_D^k)-F(P_D^k)\rightarrow0,\quad\text{ as } k\rightarrow \infty,
$$
and we omit details here.
    \end{proof}
    
    Let $v$ be the perturbation functional of the deterministic problem \eqref{equ:CK} \footnote{We refer to Definition \ref{def:perturb} for the defintion of a perturbation functional for a minimization problem. Note that this problem is indeed a maximization problem, which is different from the minimization problem considered in standard optimization literatures. Hence the concept epigraph is replaced by hypograph, which can be similarly defined and we omit details here.}.It can be verified by definition that
    $$
    (G(\underline{P}_D^k),F(\underline{P}_D^k))\in \text{co }\text{hypo }(v).
    $$
    According to Mazur's Theorem(see Corollary 3.8 in \cite{brezis2011functional}), the weak limit of $(G(\underline{P}_D^k),F(\underline{P}_D^k))$ belongs to $\text{cl }\text{co }\text{hypo }(v)$, i.e.
    $$
    (\theta, V(\gamma,x_0,s_0))\in \text{cl }\text{co }\text{hypo }(v)=\text{hypo }(v^{**}),
    $$
    where the last equality holds according to Theorem \ref{thm:bicon}. According to Theorem \ref{thm:dualgap}, 
    $$
    D(\gamma,x_0,s_0)=v^{**}(\theta).
    $$
    Therefore $V(\gamma,x_0,s_0)\le D(\gamma,x_0,s_0)$ by definition of hypograph.

\subsubsection*{Proof of Corollary \ref{cor:verification}}
    It suffices to show that $V\in\mathcal{N}$. We denote $P^*(\gamma,x_0,s_0)$ to be a maximizer for any $\gamma\in\mathbb{R}_+^I$, $x_0\in\mathcal{X}$, $s_0\in\mathcal{S}$.
    \begin{itemize}
        \item \textbf{Convexity}. For any $\gamma_0,\,\gamma_1\in\mathbb{R}_+^I$, and $\lambda\in[0,1]$, we denote $\gamma_{\lambda}=\lambda\gamma_1+(1-\lambda)\gamma_0$. Then $P^*(\gamma_{\lambda},x_0,s_0)$ is a feasible point for  the problem \eqref{equ:simplified_CK_lot} with $x_0,s_0$. Therefore,
        $$
        \begin{aligned}
        &\lambda V(\gamma_1,x_0,s_0)+(1-\lambda)V(\gamma_0,x_0,s_0)\\
        \ge &\lambda \mathbb{E}_{s_0}^{(a(s^t))\sim P^*(\gamma_{\lambda},x_0,s_0)}\sum_{t=0}^{\infty}\beta^t\left(r(x_t,a(s^t),s_t)+\sum_{i=1}^{I}\gamma^ig^i(x_t,a(s^t),s_t)\right)\\
        &+(1-\lambda) \mathbb{E}_{s_0}^{(a(s^t))\sim P^*(\gamma_{\lambda},x_0,s_0)}\sum_{t=0}^{\infty}\beta^t\left(r(x_t,a(s^t),s_t)+\sum_{i=1}^{I}\gamma^ig^i(x_t,a(s^t),s_t)\right)\\
        =&V(\gamma_{\lambda},x_0,s_0).
        \end{aligned}
        $$
        \item \textbf{$L$-Lipschitz}
        For any $\gamma_0,\,\gamma_1\in\mathbb{R}_+^I$, we have
        $$
        \begin{aligned}
        V(\gamma_1,x_0,s_0)\ge& V(\gamma_0,x_0,s_0)+\mathbb{E}_{s_0}^{(a(s^t))\sim P^*(\gamma_0,x_0,s_0)}\sum_{t=0}^{\infty}\beta^t\sum_{i=1}^{I}(\gamma^i_1-\gamma^i_0)g^i(x_t,a(s^t),s_t)\\
        \ge& V(\gamma_0,x_0,s_0)- \frac{\|\gamma_1-\gamma_0\|_1\max_{i}\|g^i\|_{\infty}}{1-\beta}\\\ge &V(\gamma_0,x_0,s_0)-L\|\gamma_1-\gamma_0\|_1
        \end{aligned}
        $$
        By symmetric, we have
        $$
         V(\gamma_0,x_0,s_0)\ge V(\gamma_1,x_0,s_0)-L\|\gamma_1-\gamma_0\|_1.
        $$
        \item \textbf{The bound \eqref{eq:F_inequality} and \eqref{eq:F_inequality_upper}.}
        \eqref{eq:F_inequality} is obtianed by definition of $V$. \eqref{eq:F_inequality_upper} is obtained by the upper bounded of $r$ and $g^i$. \hfill $ \Box $
    \end{itemize}

\subsubsection*{Proof of Theorem \ref{thm:lot_dual_equiv-expost}}
We consider the problem
\begin{equation}\label{equ:CK_equiv_sup_inf}
    \begin{aligned}
      \sup_{(a(s^t))\in \tilde{\mathcal{A}}^{\infty}(x_0)\subset \ell^{\infty}} &\mathbb{E}_{s_0}\sum_{t=0}^{\infty}\beta^t\left(r(x(s^t),a(s^t),s_t)+\sum_{i=1}^{I}\gamma^ig^i(x(s^t),a(s^t),s_t)\right),\\
      \textbf{s.t. }&1_{\{(\tilde{a}_0,\cdots,\tilde{a}_t)=(a(s_0),\cdots,a(s^t))\}}\left(\mathbb{E}_{s_t}\sum_{n=0}^{\infty}\beta^n g^i(x(s^{t+n}),a(s^{t+n}),s_{t+n})-\bar{g}^i\right)\ge 0,\\
      &\forall t\in\mathbb{N},\,\forall \tilde{h}^t=(s_0,\tilde{a}_0,\cdots,s_t,\tilde{a}_t)\in \tilde{\mathcal{H}}^t,\,\forall i\in\{1,\cdots,I\}.
    \end{aligned}
\end{equation}
Following reasoning analogous to that in Theorem \ref{dual_for}, we find that $\tilde{D}(\gamma,x_0,s_0)$ equals the Lagrangian dual value of problem \eqref{equ:CK_equiv_sup_inf}. Then, employing a similar line of argument as in Theorem \ref{thm:lot_dual_equiv} and Corollary \ref{cor:verification}, we arrive at the conclusion of this theorem.

\subsubsection*{Proof of Theorem \ref{thm:exist_bdd_lag}}

    Since $\eqref{equ:recursive_dual}$ is convex w.r.t. $\lambda$, to prove the existence of a solution $\lambda^*$, it suffices to show that, there exists $C>0$, s.t. when $\lambda_i> C$ for some $i$, 
    \begin{equation}\label{eq:proof_exist_bdd_lag}
\sup_{a\in\mathcal{A}}\left[r(x,a,s)+\sum_{i=1}^{I}\gamma^ig^i(x,a,s)+\lambda^i(g^i(x,a,s)-\bar{g}^i)+\beta\mathbb{E}_sD(\gamma+\lambda,x',s')\right]>D(\gamma,x,s).
    \end{equation}
    
    According to the property \eqref{eq:F_inequality}, we know that the LHS of \eqref{thm:exist_bdd_lag} is not less than
    $$
    \begin{aligned}
    &\mathbb{E}_{s_0}^{a(h^t)\sim\psi(h^t)} \sum_{t=0}^{\infty}\beta^tr(x_t,a_t,s_t)+\sum_{i=1}^{I}\gamma^i\mathbb{E}_{s_0}^{a(h^t)\sim\psi(h^t)} \sum_{t=0}^{\infty}\beta^tg^i(x_t,a_t,s_t)+\sum_{i=1}^{I}\lambda^i\mathbb{E}_{s_0}^{a(h^t)\sim\psi(h^t)} \left(\sum_{t=0}^{\infty}\beta^tg^i(x_t,a_t,s_t)-\bar{g}^i\right)\\
    \ge &-\left(1+\sum_{i=1}^I\gamma^i\right)L+\sum_{i=1}^{I}\lambda^i\epsilon
    \end{aligned}
    $$
   It is then straightforward to verify that it follows from Assumption \ref{ass:Slater} that, taking 
    $$
    C=2\frac{D(\gamma,x,s)+\left(1+\sum_{i=1}^{I}\gamma^i\right)L}{\epsilon},
    $$
    when $\lambda_i>C$, \eqref{eq:proof_exist_bdd_lag} is satisfied.

\subsubsection*{Proof of Lemma \ref{lem:equiv_subgradient_lot}}
    We define the set 
    $$
    \begin{aligned}
    \mathcal{U}=\{(&u,v)\in\mathbb{R}^{I+1}| \exists P\in\mathcal{P}(\tilde{\mathcal{A}}^{\infty}(x_0)),\,P \text{ satisfies all constraints in \eqref{equ:simplified_CK_lot} for $x_0=x,\,s_0=s$;}\\
    &u=\mathbb{E}_{s_0}^{(a_t\sim P)}\sum_{t=0}^{\infty}\beta^t r(x_t,a_t,s_t);\\
    &v=\mathbb{E}_{s_0}^{(a_t\sim P)}\sum_{t=0}^{\infty}\beta^t g(x_t,a_t,s_t)\} .
    \end{aligned}
    $$
    It is straightforward to see that $\mathcal{U}$ is a non-empty convex set. Furthermore, based on the compactness of $\mathcal{P}(\tilde{\mathcal{A}}^{\infty}(x_0))$ in the *-weak topology, we know that $\mathcal{U}$ is closed. We define
    $$
    F(\gamma_1,\gamma)=\sup _{(u,v)\in\mathcal{U}}\gamma_1 u+\gamma\cdot v.
    $$
    According to Corollary 23.5.3 in \cite{rockafellar2015convex}, we know that 
    $(u^*,v^*)\in\arg \sup_{(u,v)\in \mathcal{U}} \gamma_1u+\gamma\cdot v\Leftrightarrow(u^*,v^*)\in\partial F(\gamma_1,\gamma)$.
    The dual function equals
    $$
    D(\gamma,x,s)=\sup_{(u,v)\in\mathcal{U}}u+\gamma\cdot v.
    $$
    We take $A=[0,I_{I}],\,b=[1,0_{I\times1}]$, and have
    $$
    D(\gamma,x,s)=F(1,\gamma)=F(A\gamma+b).
    $$
    According to Theorem 23.9 in \cite{rockafellar2015convex}, we know that
    $$
    \partial D(\gamma,x,s)= A^T\partial F(A\gamma+b).
    $$
    Therefore,
    $$
    \begin{aligned}
        &\phi\in\partial D(\gamma,x,s)\\
        \Leftrightarrow&\phi=A^T\phi_F,\,\text{for some }\phi_F\in \partial F(1,\gamma)\\
        \Leftrightarrow&\phi=(\phi_F(2),\phi_F(3),\cdots,\phi_F(I+1)),\, \text{for some }(\phi_F(1),\phi_F(2:I+1))\in\arg \sup_{(u,v)\in\mathcal{U}}u+\gamma v\\
        \Leftrightarrow& \exists P^*\in\mathcal{P}(\tilde{\mathcal{A}}^{\infty}(x_0)), \text{$P^*$ maximizes problem \eqref{equ:simplified_CK_lot}, and $\mathbb{E}_{s_0}^{(a\sim P)}\sum_{t=0}^{\infty}\beta^t g(x_t,a_t,s_t)=\phi$.}
    \end{aligned}
    $$
    Hence we finish the proof.

\subsubsection*{Proof of Theorem \ref{thm:recover_policy_main}}
    Since $\phi\in\partial D(\gamma,x,s)$, we know that there exists $(\psi(h^t))$, s.t. $(\psi(h^t))$ is a solution to \eqref{equ:simplified_CK_lot_statewise_promised} with
    $$
    \begin{aligned}
    \phi=&\mathbb{E}_{s_0}^{a(h^t)\sim\psi(h^t)}\sum_{t=0}^{\infty}\beta^tg(x_t,a_t,s_t)\\
    =&\sum_{a\in\mathcal{A}}\psi_0(a)\left[g(x_0,a,s_0)+\beta\mathbb{E}_{s_0}\mathbb{E}_{h^1=(s_0,a,s_1)}^{a(h^t)\sim\psi(h^t)}\sum_{t=1}^{\infty}\beta^{t-1}g(x_t,a_t,s_t)\right].
    \end{aligned}
    $$
    We take $\psi^*=\psi_0$ and $\phi’^{*}(a,s_1)=\mathbb{E}_{h^1=(s_0,a,s_1)}^{a(h^t)\sim\psi(h^t)}\sum_{t=1}^{\infty}\beta^{t-1}g(x_t,a_t,s_t)$. Hence \eqref{equ:recover_policy_cond1} is satisfied.
    
    It is straightforward to verify that, for any $\lambda^*$ be a solution to \eqref{equ:recursive_dual}, and $\mu^*=0$, we have
    $$
    \begin{aligned}
      &\sum_{a\in\mathcal{A}}\psi^*(a)\\&\left[\left(r(x,a,s)+\sum_{i=1}^{I}\gamma^ig^i(x,a,s)+\lambda^{*,i}(g^i(x,a,s)-\bar{g}^i)+\mu^{*,i}(g^i(x,a,s)-\phi^i)\right)+\beta\mathbb{E}_{s}W(\gamma+\lambda^*+\mu^*,x',s',\phi’^{*}(a,s'))\right] \\
      \ge&\mathbb{E}_{s_0}^{(a(h^t)\sim\psi(h^t))}\left[\sum_{t=0}^{\infty}\beta^tr(x_t,a_t,s_t)+ \sum_{t=0}^{\infty}\beta^t\gamma g(x_t,a_t,s_t)+\lambda^*(-\bar{g}+\sum_{t=0}^{\infty}\beta^tg(x_t,a_t,s_t))\right]\\
      =&\mathbb{E}_{s_0}^{(a(h^t)\sim\psi(h^t))}\left[\sum_{t=0}^{\infty}\beta^tr(x_t,a_t,s_t)+ \sum_{t=0}^{\infty}\beta^t\gamma g(x_t,a_t,s_t)\right]+\lambda^*\left(\sum_{a\in\mathcal{A}}\psi(a)\left(g(x,a,s)+\beta\mathbb{E}_s\phi'(a,s')-\bar{g} \right)\right)
      \\
      \ge&\mathbb{E}_{s_0}^{(a(h^t)\sim\psi(h^t))}\left[\sum_{t=0}^{\infty}\beta^tr(x_t,a_t,s_t)+ \sum_{t=0}^{\infty}\beta^t\gamma g(x_t,a_t,s_t)\right]=W(\gamma,x,s,\phi),
    \end{aligned}
    $$
    with equality only if \eqref{equ:recover_policy_cond2} is satisfied. Furthermore, $\psi^*,\phi'^*$ should be a solution to the sup problem given $\lambda^*,\mu^*=0$, and $(\psi(h^t|s_0,a,s')(t\ge 1))$ is a solution to $W(\gamma+\lambda^*,\mu^*,x',s',\phi'^*(a,s'))$, hence \eqref{equ:recover_policy_cond0} shoule be satisfied according to lemma \ref{lem:equiv_subgradient_lot}.
    Therefore $(\lambda^*,0,\psi^*,\phi'^*)$ is a solution to \eqref{equ:recursive_dual_promisedW} satisfying \eqref{equ:recover_policy_cond0}, \eqref{equ:recover_policy_cond1} and \eqref{equ:recover_policy_cond2} \hfill $ \Box $

\subsubsection*{Proof of Theorem \ref{thm:recover_policy_method}}
    From the $\ge 0$ part of  \eqref{equ:recover_policy_cond1} and \eqref{equ:recover_policy_cond2}, and the fact that $\phi^*(h^t)$ are bounded, one can verify directly that $(\psi^*(h^t))$ satisfies all the constraints.
    Furthermore, we know that
    $$
    \begin{aligned}
        &W(\gamma_0,x_0,s_0,\phi_0)\\
        =&\sum_{a_0\in\mathcal{A}}\psi_0^*(a_0)\biggl[r(x_0,a_0,s_0)+\gamma_0g(x_0,a_0,s_0)+\lambda_0^*(g(x_0,a_0,s_0)-\bar{g})+\mu_0^*(g(x_0,a_0,s_0)-\phi_0)\\
        &+\beta\mathbb{E}_{s_0}W(\gamma_0+\lambda_0^*+\mu_0^*,x_1,s_1,\phi_1^*)\biggr]\\
        =&\sum_{a_0\in\mathcal{A}}\psi_0^*(a_0)\biggl\{r(x_0,a_0,s_0)+\gamma_0g(x_0,a_0,s_0)+\lambda_0^*(g(x_0,a_0,s_0)-\bar{g})+\mu_0^*(g(x_0,a_0,s_0)-\phi_0)\\
        &+\beta\mathbb{E}_{s_0}\biggl[\sum_{a_1\in\mathcal{A}}\psi_1^*(a_1)\biggl(r(x_1,a_1,s_1)+(\gamma_0+\lambda_0^*+\mu_0^*)g(x_1,a_1,s_1)+\lambda_1^*(g(x_1,a_1,s_1)-\bar{g})+\mu_1^*(g(x_1,a_1,s_1)-\phi_1)\\
        &\beta^2\mathbb{E}_{s_1}W(\gamma_0+\lambda_0^*+\lambda_1^*+\mu_0^*+\mu_1^*,x_2,s_2,\phi_2^*)\biggr)\biggr]\biggr\}\\
        =&\cdots\\
        =&\mathbb{E}_{s_0}^{(a(h^t))\sim (\psi(h^t))}\biggl[ \sum_{t=0}^{T}\beta^tr(x_t,a_t,s_t)+\gamma_0\sum_{t=0}^T\beta^tg(x_t,a_t,s_t)+\sum_{t=0}^{T}\sum_{h^t\in\mathcal{H}^t}\lambda^*(h^t)\biggl(-\beta^t\bar{g}+\sum_{r=t}^{T}\beta^rg(x_r,a_r,s_r)\biggr)\\&+\sum_{t=0}^{T}\sum_{h^t\in\mathcal{H}^t}\mu^*(h^t)\biggl(-\beta^t\phi_t^*+\sum_{r=t}^{T}\beta^rg(x_r,a_r,s_r)\biggr)+\beta^{T+1}W(\gamma_{T+1}^*,x_{T+1},s_{T+1},\phi^*_{T+1})\biggr]
    \end{aligned}
    $$
    Note that $\phi_{T+1}^*\in\partial D(\gamma_{T+1}^*,x_{T+1},s_{T+1})$, according to Lemma \ref{lem:equiv_subgradient_lot}, 
    $$
    W(\gamma_{T+1}^*,x_{T+1},s_{T+1},\phi_{T+1}^*)=u^*_{T+1}+\gamma_{T+1}^*\phi_{T+1}^*,
    $$
    for some $u^*_{T+1}\le L$. 

    Hence
    $$
    \begin{aligned}
         &W(\gamma_0,x_0,s_0,\phi_0)\\
         =&\mathbb{E}_{s_0}^{(a(h^t))\sim (\psi(h^t))}\biggl[ \sum_{t=0}^{T}\beta^tr(x_t,a_t,s_t)+\gamma_0\sum_{t=0}^T\beta^tg(x_t,a_t,s_t)+\sum_{t=0}^{T}\sum_{h^t\in\mathcal{H}^t}\lambda^*(h^t)\biggl(-\beta^t\bar{g}+\sum_{r=t}^{T}\beta^rg(x_r,a_r,s_r)\biggr)\\&+\sum_{t=0}^{T}\sum_{h^t\in\mathcal{H}^t}\mu^*(h^t)\biggl(-\beta^t\phi^*+\sum_{r=t}^{T}\beta^rg(x_r,a_r,s_r)\biggr)+\beta^{T+1}W(\gamma_{T+1}^*,x_{T+1},s_{T+1},\phi^*_{T+1})\biggr]\\
         =&\mathbb{E}_{s_0}^{(a(h^t))\sim (\psi(h^t))}\biggl[ \sum_{t=0}^{T}\beta^tr(x_t,a_t,s_t)+\beta^{T+1}u_{T+1}^*+\gamma_0\left(\sum_{t=0}^T\beta^tg(x_t,a_t,s_t)+\beta^{T+1}\phi_{T+1}'^*\right)\\
         &+\sum_{r=0}^{T}\sum_{h^t\in\mathcal{H}^t}\lambda^*(h^t)\biggl(-\beta^t\bar{g}+\sum_{r=t}^{T}\beta^rg(x_r,a_r,s_r)+\beta^{T+1}\phi'^*_{T+1}\biggr)\\&+\sum_{t=0}^{T}\sum_{h^t\in\mathcal{H}^t}\mu^*(h^t)\biggl(-\beta^t\phi_t^*+\sum_{r=t}^{T}\beta^rg(x_r,a_r,s_r)+\beta^{T+1}\phi'^*_{T+1}\biggr)\biggr]\\
         =&\mathbb{E}_{s_0}^{(a(h^t))\sim (\psi(h^t))}\left[\sum_{t=0}^{T}\beta^tr(x_t,a_t,s_t)+\beta^{T+1}u_{T+1}^*+\gamma_0\sum_{t=0}^{T}g(x_t,a_t,s_t)+\beta^{T+1}\phi'^*_{T+1}\right],
    \end{aligned}
    $$
    according to the complementary conditions \eqref{equ:recover_policy_cond1} and \eqref{equ:recover_policy_cond2}.
    Since $$
    \beta^{T+1}u^*_{T+1}+\gamma_0\beta^{T+1}\phi_{T+1}^*
    $$
    uniformly converges to 0, we konw that
    $$
    W(\gamma_0,x_0,s_0,\phi_0)=\mathbb{E}_{s_0}^{(a_t(h^t))\sim (\psi_t(h^t))}\left(\sum_{t=0}^{\infty}\beta^tr(x_t,a_t,s_t)+\gamma_0\sum_{t=0}^{\infty}\beta^tg(x_t,a_t,s_t)\right)
    $$
    and hence finish the proof. \hfill $ \Box $

\subsubsection*{Proof of Theorem \ref{thm:policy_recover_adapt}}
\textbf{Step 1. Existence.}  Since $\phi\in\partial D(\gamma,x,s)$, we know that there exists $(\psi(h^t))$, s.t. $(\psi(h^t))$ is a solution to \eqref{equ:simplified_CK_lot_statewise_promised} with
    $$
    \begin{aligned}
    \phi=&\mathbb{E}_{s_0}^{a(h^t)\sim\psi(h^t)}\sum_{t=0}^{\infty}\beta^tg(x_t,a_t(h^t),s_t)\\
    =&\sum_{a\in\mathcal{A}}\psi_0(a)\left[g(x_0,a,s_0)+\beta\mathbb{E}_{s_0}\mathbb{E}_{h^1=(s_0,a,s_1)}^{a(h^t)\sim\psi(h^t)}\sum_{t=1}^{\infty}\beta^{t-1}g(x_t,a_t,s_t)\right].
    \end{aligned}
    $$
    We take $\psi^*=\psi_0$ and $\phi’^{*}(a,s_1)=\mathbb{E}_{h^1=(s_0,a,s_1)}^{a(h^t)\sim\psi(h^t)}\sum_{t=1}^{\infty}\beta^{t-1}g(x_t,a_t,s_t)$. Hence \eqref{equ:adaptcond1} is satisfied.
    
    It is straightforward to verify that, for any $\lambda^*$ be a solution to \eqref{equ:recursive_dual} with $\gamma=0$, and $\mu^*=0$, we have
 $$
    \begin{aligned}
      &\sum_{a\in\mathcal{A}}\psi^*(a)\\&\left[\left(r(x,a,s)+\lambda^{*}(g(x,a,s)-\bar{g})+\mu^{*}(g(x,a,s)-\phi)\right)+\beta\mathbb{E}_{s}W(\lambda^*+\mu^*,x',s',\phi’^{*}(a,s'))\right] \\
\ge&\mathbb{E}_{s_0}^{(a(h^t)\sim\psi(h^t))}\left[\sum_{t=0}^{\infty}\beta^tr(x_t,a_t,s_t)+\lambda^*(-\bar{g}+\sum_{t=0}^{\infty}\beta^tg(x_t,a_t,s_t))\right]\\    =&\mathbb{E}_{s_0}^{(a(h^t)\sim\psi(h^t))}\left[\sum_{t=0}^{\infty}\beta^tr(x_t,a_t,s_t)\right]+\lambda^*\left(\sum_{a\in\mathcal{A}}\psi^*(a)\left(g(x,a,s)+\beta\mathbb{E}_s\phi'^{*}(a,s')-\bar{g} \right)\right)
      \\
      \ge&\mathbb{E}_{s_0}^{(a(h^t)\sim\psi(h^t))}\left[\sum_{t=0}^{\infty}\beta^tr(x_t,a_t,s_t)\right]=W(0,x,s,\phi),
    \end{aligned}
    $$
    (the last equality holds because of the relationship between $W(\gamma,x,s,\phi)$ and $W(0,x,s,\phi)$)
    with equality only if \eqref{equ:adaptcond2} is satisfied. Furthermore, $\psi^*,\phi'^*$ should be a solution to the sup problem given $\lambda^*,\mu^*=0$, and $\psi^*(h^t|s_0,a,s')(t\ge 1)$ is a solution to $W(\lambda^*+\mu^*,x',s',\phi'^*(a,s'))$, hence \eqref{equ:adaptcond0} shoule be satisfied according to lemma \ref{lem:equiv_subgradient_lot}.
    Therefore $(\lambda^*,0,\psi^*,\phi'^*)$ is a solution to \eqref{equ:recursive_dual_policy_adapt_promised} satisfying \eqref{equ:adaptcond0}, \eqref{equ:adaptcond1} and \eqref{equ:adaptcond2}, and we finish the proof.

 \noindent \textbf{Step 2. Relation.} Suppose that $(\lambda^*,\mu^*,\psi^*,\phi'^*)$ is a  solution to \eqref{equ:recursive_dual_policy_adapt_promised} satisfying the \eqref{equ:adaptcond0},\eqref{equ:adaptcond1} and \eqref{equ:adaptcond2}. Since $\phi'^*(a,s')\in\partial D(\lambda^*+\mu^*,\zeta(x,a,s),s')$, there exists $\psi^*(a,s')(\cdot)$ which is a solution to $W(\lambda^*+\gamma^*,x',s',\phi'^*(a,s'))$ and satisfies
  $$
  \mathbb{E}_{s_0=s'}^{a(h^t)\sim \psi^*(a,s')(h^t)}\sum_{t=0}^{\infty}\beta^tg(x_t,a_t,s_t)=\phi'^*(a,s').
  $$ We define $\bar{\psi}^*(h^t)$ s.t. $\bar{\psi}^*(s_0)=\psi^*$ and $\bar{\psi}^*(s_0,a,s',\cdot)=\psi^*(a,s')(\cdot)$. It is then straightforward to verify that
 $$
    \begin{aligned}
     W(0,x,s,\phi)= &\sum_{a\in\mathcal{A}}\psi^*(a)\\&\left[\left(r(x,a,s)+\lambda^{*}(g(x,a,s)-\bar{g})+\mu^{*}(g(x,a,s)-\phi)\right)+\beta\mathbb{E}_{s}W(\lambda^*+\mu^*,x',s',\phi’^{*}(a,s'))\right] \\
      =&\mathbb{E}_{s_0}^{(a(h^t)\sim\bar{\psi}^*(h^t))}\left[\sum_{t=0}^{\infty}\beta^tr(x_t,a_t,s_t)+\lambda^*(-\bar{g}+\sum_{t=0}^{\infty}\beta^tg(x_t,a_t,s_t))+\mu^*(-\phi+\sum_{t=0}^{\infty}\beta^tg(x_t,a_t,s_t))\right]\\     =&\mathbb{E}_{s_0}^{(a(h^t)\sim\bar{\psi}^*(h^t))}\left[\sum_{t=0}^{\infty}\beta^tr(x_t,a_t,s_t)\right]+\lambda^*\left(\sum_{a\in\mathcal{A}}\psi^*(a)\left(g(x,a,s)+\beta\mathbb{E}_s\phi'^*(a,s')-\bar{g} \right)\right)
      \\
      &+\mu^*\left(\sum_{a\in\mathcal{A}}\psi^*(a)(g(x,a,s)+\beta\mathbb{E}_s\phi'^*(a,s')-\phi)\right)\\
      =&\mathbb{E}_{s_0}^{(a(h^t)\sim\bar{\psi}^*(h^t))}\left[\sum_{t=0}^{\infty}\beta^tr(x_t,a_t,s_t)\right],
    \end{aligned}
    $$
  
    due to the complementary conditions \eqref{equ:adaptcond1} and \eqref{equ:adaptcond2}. And 
   
$$
    \begin{aligned}
      &\sum_{a\in\mathcal{A}}\psi^*(a)\\&\biggl[\left(r(x,a,s)+\gamma g(x,a,s)+\tilde{\lambda}^{*}(g(x,a,s)-\bar{g})+\tilde{\mu}^{*}(g(x,a,s)-\phi)\right)+\beta\mathbb{E}_{s}W(\gamma+\tilde{\lambda}^*+\tilde{\mu}^*,x',s',\phi’^{*}(a,s'))\biggr] \\
      \ge&\mathbb{E}_{s_0}^{(a(h^t)\sim\bar{\psi}^*(h^t))}\biggl[\sum_{t=0}^{\infty}\beta^tr(x_t,a_t,s_t)+\sum_{t=0}^{\infty}\gamma g(x_t,a_t,s_t)+\\
      &+\tilde{\lambda}^*(-\bar{g}+\sum_{t=0}^{\infty}\beta^tg(x_t,a_t,s_t))+\tilde{\mu}^*(-\phi+\sum_{t=0}^{\infty}\beta^tg(x_t,a_t,s_t))\biggr]\\
      =&\mathbb{E}_{s_0}^{(a(h^t)\sim\bar{\psi}^*(h^t))}\left[\sum_{t=0}^{\infty}\beta^tr(x_t,a_t,s_t)\right]+\gamma \left(\sum_{a\in\mathcal{A}}\psi^*(a)\left(g(x,a,s)+\beta\mathbb{E}_s\phi'^*(a,s')\right)\right)\\
      &+\tilde{\lambda}^*\left(\sum_{a\in\mathcal{A}}\psi^*(a)\left(g(x,a,s)+\beta\mathbb{E}_s\phi'^*(a,s')-\bar{g} \right)\right)+\tilde{\mu}^*\left(\sum_{a\in\mathcal{A}}\psi^*(a)\left(g(x,a,s)+\beta\mathbb{E}_s\phi'^*(a,s')-\phi \right)\right)
      \\
      \ge&\mathbb{E}_{s_0}^{(a(h^t)\sim\bar{\psi}^*(h^t))}\left[\sum_{t=0}^{\infty}\beta^tr(x_t,a_t,s_t)\right]+\gamma \phi =W(0,x,s,\phi)+\gamma \phi =W(\gamma,x,s,\phi),
    \end{aligned}
    $$
     with equality only if \eqref{equ:recover_policy_cond1} and \eqref{equ:recover_policy_cond2} are satisfied. Furthermore, $\psi^*,\phi'^*$ is a solution to the sup problem given $(\tilde{\lambda}^*,\tilde{\mu}^*)$ which is a solution to \eqref{equ:recursive_dual_promisedW}, and $\psi^*(a,s')(\cdot)$ is a solution to $W(\gamma+\lambda^*+\mu^*,x',s',\phi'^*(a,s'))$, hence Lemma \ref{lem:equiv_subgradient_lot} implies \eqref{equ:recover_policy_cond0}.     Therefore $(\tilde{\lambda}^*,\tilde{\mu}^*,\psi^*,\phi'^*)$ is a solution to \eqref{equ:recursive_dual_promisedW} satisfying \eqref{equ:recover_policy_cond0}, \eqref{equ:recover_policy_cond1} and \eqref{equ:recover_policy_cond2}. \hfill $ \Box $

\subsubsection*{Proof of Theorem \ref{thm:algocon}}
    \begin{itemize}
        \item \textbf{0.} The convergence of Lagrangian multipliers follows from Proposition 2.7 in \cite{nedic2001convergence}.
        \item \textbf{1.}  Since the convergence of the Lagrangian multipliers, there exists $\bar{N}_1>0$, s.t. when $k\ge \bar{N}_1$, $$
        \|(\lambda^k,\mu^k)-(\lambda^*,\mu^*)\|\le \epsilon.
        $$
        \item \textbf{2.} According to the upper semi-continuity of the sub-gradient, there eixsts $\delta>0$, s.t. when $a^k=a$ and
        $$
        \|(\lambda^k,\mu^k)-(\lambda^*,\mu^*)\|\le \delta, 
        $$
        we have
        $$
        \text{{dist}}(\phi^k(s'),\partial D(\lambda^*+\mu^*,x',s'))\le \frac{\epsilon}{2}.
        $$
        According to the convergence of $(\lambda^k,\mu^k)$, there exists $\bar{M}_1> 0$, s.t. when $k\ge \bar{M}_1$, 
        $$
        \|(\lambda^k,\mu^k)-(\lambda^*,\mu^*)\|\le \delta.
        $$
       Therefore, when $N> \bar{M}_1$, we have\footnote{The first inequality uses the fact that $\partial D(\lambda^*+\mu^*,x',s')$ is a convex set.} 
        $$
        \begin{aligned}
        &\text{dist}(\phi'^N(a,s'),\partial D(\lambda^*+\mu^*,x',s'))\\
        = &\text{dist}(\frac{\sum_{k=1}^N1_{a^k=a}\sigma^k\phi^k(s')}{\sum_{k=1}^N1_{a^k=a}\sigma^k},\partial D(\lambda^*+\mu^*,x',s'))\\
        \le &\sum_{k=1}^{\bar{M}_1}1_{a^k=a}\tilde{\sigma}^k\text{dist}(\phi^k(s'),\partial D(\lambda^*+\mu^*,x',s'))+\sum_{k=\bar{M}_1+1}^N 1_{a^k=a}\tilde{\sigma}^k\text{dist}(\phi^k(s'),\partial D(\lambda^*+\mu^*,x',s'))\\
        \le&\sum_{k=1}^{\bar{M}_1}1_{a^k=a}\tilde{\sigma}^k\text{dist}(\phi^k(s'),\partial D(\lambda^*+\mu^*,x',s'))+\sum_{k=\bar{M}_1+1}^N1_{a^k=a}\tilde{\sigma}^k\frac{\epsilon}{2}\\\le&\sum_{k=1}^{\bar{M}_1}1_{a^k=a}\tilde{\sigma}^k\text{dist}(\phi^k(s'),\partial D(\lambda^*+\mu^*,x',s'))+\frac{\epsilon}{2}, 
        \end{aligned}
        $$
        where
        $$
        \tilde{\sigma}^k=\frac{\sigma^k}{\sum_{i=1}^N1_{a^i=a}\sigma^i}.
        $$
        Since
        $$
        \psi^N(a)\tilde{\sigma}^k=\frac{\sum_{i=1}^N1_{a^i=a}\sigma^i}{\sum_{i=1}^N\sigma^i}\cdot\frac{\sigma^k}{\sum_{i=1}^N1_{a^i=a}\sigma^i}=\frac{\sigma^k}{\sum_{i=1}^N\sigma^i},
        $$
        we have
        \begin{equation}\label{proof_algorithm_mid}
        \begin{aligned}
            &\psi^N(a)\text{dist}(\phi'^N(a,s'),\partial D(\lambda^*+\mu^*,x',s'))\\
            \le &\sum_{k=1}^{\bar{M}_1}\frac{\sigma^k}{\sum_{i=1}^{N}\sigma^i}\text{dist}(\phi^k(s'),\partial D(\lambda^*+\mu^*,x',s'))+\frac{\epsilon}{2}.
        \end{aligned}
        \end{equation}
        Since $\sum_{i=1}^{N}\sigma^i\rightarrow\infty$, there exits $\bar{N}_2>\bar{M}_1$, s.t. when $N>\bar{N}_2$, the RHS of \eqref{proof_algorithm_mid} $\le \epsilon$. 
        \item \textbf{3.} 
        According to the updating rule for $\mu^k$, we have
        \begin{equation}\label{eq:policy_muineq}
        \begin{aligned}
            \mu^{k+1}=&\max\{\mu^k- \sigma^k\left(g(x,a^k,s)-\phi+\beta\mathbb E_s \phi^k(s')\right),0\}\\
            \ge&\mu^k- \sigma^k\left(g(x,a^k,s)-\phi+\beta\mathbb E_s \phi^k(s')\right).
        \end{aligned}
        \end{equation}
        Adding equations \eqref{eq:policy_muineq} for $k=1,\cdots,N$, we have
        $$
        \mu^{N+1}\ge \mu^1-\sum_{k=M}^{N}\sigma^k\left(g(x,a^k,s)-\phi+\beta\mathbb E_s \phi^k(s')\right),
        $$
        implying that
        $$
        \begin{aligned}
            &\sum_{a}\psi^N(a)(g(x,a,s)+\beta\mathbb{E}_s\phi'^N(a,s'))-\phi\\
            =&\frac{\sum_{a}\left(\sum_{1\le k\le N,\,a^k=a}\sigma^k\right)\left(g(x,a,s)-\phi+\beta\mathbb{E}_s\frac{\sum_{1\le k\le N,\,a^k=a}\sigma^k\phi^k(s')}{\sum_{1\le k\le N,\,a^k=a}\sigma^k}\right)}{\sum_{k=1}^N\sigma^k}\\=&
            \frac{\sum_{a}\left[\left(\sum_{1\le k\le N,\,a^k=a}\sigma^k\right)\left(g(x,a,s)-\phi\right)+\beta\mathbb{E}_s\sum_{1\le k\le N,\,a^k=a}\sigma^k\phi^k(s')\right]}{\sum_{k=1}^N\sigma^k}\\=&\frac{\sum_{k=1}^{N}\sigma^k\left(g(x,a^k,s)-\phi+\beta\mathbb E_s \phi^k(s')\right)}{\sum_{k=1}^N\sigma^k}\\\ge&\frac{\mu^1-\mu^{N+1}}{\sum_{k=1}^N\sigma^k},
        \end{aligned}
        $$
        where the right hand side tends to $0$ as $N\rightarrow\infty$. Therefore, there exists $\bar{N}_3>0$, s.t. when $N>\bar{N}_3$, \textbf{(3.)} holds.
        \item \textbf{4. }Similar as \textbf{(3.)}, there exists $\bar{N}_4>0$, s.t. \textbf{(4.)} holds.
        \item \textbf{5. } It suffices to consider the case $\mu^*>0$. Since $\mu^k\rightarrow\mu^*$, there exists $\bar{M}_2$, such that when $k\ge \bar{M}_2$, $\mu^k>0$, and \eqref{eq:policy_muineq} holds as an equality. Therefore, 
        $$
        \begin{aligned}
        &\|\sum_{a}\psi^N(a)(g(x,a,s)+\beta\mathbb{E}_s\phi'^N(a,s'))-\phi\|\\
        =&\|\frac{\sum_{k=1}^{N}\sigma^k\left(g(x,a^k,s)-\phi+\beta\mathbb E_s \phi^k(s')\right)}{\sum_{k=1}^N\sigma^k}\|\\
        =&\|\frac{\sum_{k=1}^{\bar{M}_2}\sigma^k\left(g(x,a^k,s)-\phi+\beta\mathbb E_s \phi^k(s')\right)+\sum_{k=\bar{M}_2+1}^{N}\sigma^k\left(g(x,a^k,s)-\phi+\beta\mathbb E_s \phi^k(s')\right)}{\sum_{k=}^N\sigma^k}\|\\
        \le &\frac{\sum_{k=1}^{\bar{M}_2}\sigma^k}{\sum_{k=1}^N\sigma^k}\|g(x,a,s)-\phi+\beta\mathbb{E}_s\phi^k(s')\|_{\infty}+|\frac{\mu^{\bar{M}_2+1}-\mu^{N+1}}{\sum_{k=1}^N\sigma^k}|\rightarrow 0,
        \end{aligned}
        $$
        implying that
        $$
        |\langle\mu^*,\sum_{a}\psi^N(a)(g(x,a,s)+\beta\mathbb{E}_s\phi'^N(a,s'))-\phi \rangle|\rightarrow 0,\text{ as }N\rightarrow \infty.
        $$
        Hence there exists $\bar{N}_5>\bar{M}_2$, such that when $N\ge \bar{N}_5$, we have
        $$
        |\langle\mu^*,\sum_{a}\psi^N(a)(g(x,a,s)+\beta\mathbb{E}_s\phi'^N(a,s'))-\phi \rangle|<\epsilon.
        $$
        \item \textbf{6. }Similar as \textbf{(5.)}, there exists $\bar{N}_6>0$, s.t. \textbf{(6.)} holds. Therefore, $\bar{N}$ can be taken as $\bar{N}=\max_{1\le i\le 6}\{\bar{N}_i\}$. \hfill $ \Box $
    \end{itemize}

\subsubsection*{Proof of Lemma \ref{lem:ramsey_conssimplified}}
Since
$c_s=\ell_s-g_s,\quad s\in\{1,\cdots,S\},$
we have
    $$
    \begin{aligned}
  b=&c_s-(1-\tau_s)\ell_s=c_s+\frac{v'(\ell_s)}{u'(c_s)}\ell_s\\
   =&(\ell_s-g_s)(1-\frac{\ell_s}{2\sqrt{1-\ell_s}})=f(\ell_s;g_s),
\end{aligned}
$$
which is exactly \eqref{eq:ramsey_accumulate1}.
In period 0, since
$c_0=\ell_0-g_0,$
 we have
$$
\begin{aligned}
   b_{-1}=&c_0-(1-\tau_0)\ell_0+qb
   =c_0+\frac{v'(\ell_0)}{u'(c_0)}\ell_0+\beta\frac{p_1u'(c_1)+p_2u'(c_2)}{u'(c_0)}b\\
   =& (\ell_0-g_0)(1-\frac{\ell_0}{2\sqrt{1-\ell_0}}+\beta p\frac{b}{\ell_1-g_1}+\beta (1-p)\frac{b}{\ell_2-g_2})\\ =&(\ell_0-g_0)(1-\frac{\ell_0}{2\sqrt{1-\ell_0}}+\beta p\left(1-\frac{\ell_1}{2\sqrt{1-\ell_1}}\right)+\beta (1-p)\left(1-\frac{\ell_2}{2\sqrt{1-\ell_2}}\right))\\
   =&f(\ell_0-g_0;g_0)+(\ell_0-g_0)(\underbrace{\beta ph(\ell_1)+\beta (1-p)h(\ell_2)}_{w})).
\end{aligned}
$$
Now we show that $2\Rightarrow 1$. If $\ell_s>g_s$ satisfies \eqref{eq:ramsey_accumulate1} and
\eqref{eq:ramsey_accumulte0}  we construct
$$
\begin{aligned}
c_s&=\ell_s-g_s,\, \tau_s=1+\frac{v'(\ell_s)}{u'(\ell_s-g_s)}< 1, \\q&=\frac{\beta p u'(\ell_1-g_1)+\beta(1-p)u'(\ell_2-g_2)}{u'(\ell_0-g_0)}>0,
\end{aligned}
$$
and it is easy to see that $\{(c_s,\ell_s,\tau_s)_{s=0}^{S},q,b\}$ satisfies \eqref{eq:ramsey_agentrc0} to \eqref{eq:ramsay_gbc}. \hfill $\Box$

\section{Appendix: Preliminary Mathematical Results (online only)}
\subsection{Convex Conjugate}
\begin{definition}
    Let $X$ be a Banach space, $f:X\rightarrow\mathbb{R}\cup\{+\infty,-\infty\}$ be an extended real-valued functional. If $f(x)\not=-\infty,\,\forall x\in X$, and $f$ is not identically $+\infty$, then we call $f$ a \textbf{proper} functional. The \textbf{effective domain} of $f$, denoted $\text{dom}(f)$, is defined by
    $$
    \text{dom}(f)=\{x\in X|f(x)<+\infty\}.
    $$
\end{definition}
\begin{definition} \label{def:epigraph}
    Let $X$ be a Banach space, $f:X\rightarrow\mathbb{R}\cup\{+\infty,-\infty\}$ be an extended real-valued functional. The \textbf{epigraph} of $f$, denoted $\text{epi}(f)$, is defined by
    $$
    \text{epi}(f):=\{(x,r)\in X\times \mathbb{R}| f(x)\le r\}.
    $$
\end{definition}
\begin{proposition}\label{prop:convex}
    Let $X$ be a Banach space, $f:X\rightarrow\mathbb{R}\cup\{+\infty,-\infty\}$ be an extended real-valued functional. Then
    $$
    f \text{ is convex}\Leftrightarrow \text{epi}(f) \text{ is convex in }X\times \mathbb{R}.
    $$
\end{proposition}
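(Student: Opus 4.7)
The plan is to establish the two implications by direct unfolding of the definitions, exploiting the fact that the epigraph and the convexity inequality are dual geometric and algebraic encodings of the same information. No deep tools are required; the proof is a bookkeeping exercise once the conventions for extended-real arithmetic are pinned down.

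For the forward direction ($f$ convex $\Rightarrow \text{epi}(f)$ convex), I would fix arbitrary points $(x_1,r_1),(x_2,r_2)\in\text{epi}(f)$ and $\lambda\in[0,1]$, and verify membership of their convex combination in the epigraph. By Definition \ref{def:epigraph}, $f(x_i)\le r_i$ for $i=1,2$. Applying convexity of $f$ at $\lambda x_1+(1-\lambda)x_2$ and then monotonicity of the scalar affine combination would give the chain $f(\lambda x_1+(1-\lambda)x_2)\le \lambda f(x_1)+(1-\lambda)f(x_2)\le \lambda r_1+(1-\lambda)r_2$, placing $(\lambda x_1+(1-\lambda)x_2,\lambda r_1+(1-\lambda)r_2)$ in $\text{epi}(f)$ as required.

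For the reverse direction ($\text{epi}(f)$ convex $\Rightarrow f$ convex), I would fix $x_1,x_2\in X$ and $\lambda\in[0,1]$ and seek to prove the convexity inequality for $f$. If either $f(x_i)=+\infty$, the right-hand side is $+\infty$ and the inequality is trivial. Otherwise, for any choice of $r_i\in\mathbb{R}$ with $r_i>f(x_i)$, the pairs $(x_i,r_i)$ lie in $\text{epi}(f)$, and convexity of $\text{epi}(f)$ yields $(\lambda x_1+(1-\lambda)x_2,\lambda r_1+(1-\lambda)r_2)\in\text{epi}(f)$, i.e.\ $f(\lambda x_1+(1-\lambda)x_2)\le \lambda r_1+(1-\lambda)r_2$. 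Taking the infimum over admissible $r_1,r_2$ would then deliver the convexity inequality.

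The main point of care (rather than a real obstacle) is the treatment of the extended-real values $\pm\infty$ in both the convexity inequality and the definition of the epigraph. The forward step relies on the conventions $\lambda\cdot(+\infty)=+\infty$ for $\lambda>0$ and $\lambda\cdot(-\infty)+(1-\lambda)\cdot r=-\infty$, under which the chained inequality is well-defined in all cases. In the reverse step, when $f(x_i)=-\infty$ one must allow $r_i\downarrow -\infty$ in the infimum argument and check that such $(x_i,r_i)$ still belong to $\text{epi}(f)$, which holds by definition. Once these conventions are fixed the proof is a one-line computation in each direction, so I would relegate the case analysis on infinite values to a brief remark rather than a separate lemma.
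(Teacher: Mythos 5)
Your argument is correct and is the standard textbook proof; the paper itself states Proposition \ref{prop:convex} in the preliminary-results appendix without any proof, so there is nothing to diverge from. Both directions are handled properly, and your explicit attention to the extended-real conventions (in particular that the convexity inequality is read as trivially satisfied when the right-hand side involves $+\infty$, and that $r_i\downarrow-\infty$ is allowed when $f(x_i)=-\infty$) is exactly the care this statement needs, since $f$ is permitted to take the value $-\infty$ here.
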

\begin{proposition}\label{prop:lsc}
    Let $X$ be a Banach space, $f:X\rightarrow\mathbb{R}\cup\{+\infty,-\infty\}$ be an extended real-valued functional. Then
    $$
    f \text{ is \textbf{lower semicontinuous(l.s.c.)}}\Leftrightarrow \text{epi}(f) \text{ is closed in }X\times \mathbb{R}.
    $$
\end{proposition}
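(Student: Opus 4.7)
The plan is to prove the biconditional by establishing each direction via sequential characterizations, which suffice since $X \times \mathbb{R}$ is a Banach space and sequential closedness coincides with topological closedness in metric spaces. Throughout, I use the sequential definition: $f$ is lower semicontinuous at $x$ iff $f(x) \le \liminf_n f(x_n)$ for every sequence $x_n \to x$.

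For the forward implication, I assume $f$ is lower semicontinuous and show $\text{epi}(f)$ is closed. I take any sequence $(x_n, r_n) \in \text{epi}(f)$ with $(x_n, r_n) \to (x, r)$ in $X \times \mathbb{R}$. By Definition \ref{def:epigraph}, $f(x_n) \le r_n$ for every $n$. Applying $\liminf$ and using lower semicontinuity of $f$ at $x$, I get $f(x) \le \liminf_n f(x_n) \le \liminf_n r_n = r$, so $(x, r) \in \text{epi}(f)$.

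For the reverse implication, I assume $\text{epi}(f)$ is closed and show $f$ is lower semicontinuous at an arbitrary $x \in X$ along an arbitrary sequence $x_n \to x$. Set $L := \liminf_n f(x_n)$. If $L = +\infty$ the desired inequality $f(x) \le L$ is trivial. If $L \in \mathbb{R}$, I choose a subsequence $x_{n_k}$ with $f(x_{n_k}) \to L$, so for every $\epsilon > 0$ one has $f(x_{n_k}) \le L + \epsilon$ eventually, and therefore $(x_{n_k}, L + \epsilon) \in \text{epi}(f)$ for large $k$. Since $(x_{n_k}, L + \epsilon) \to (x, L + \epsilon)$ and $\text{epi}(f)$ is closed, $(x, L + \epsilon) \in \text{epi}(f)$, i.e.\ $f(x) \le L + \epsilon$; letting $\epsilon \downarrow 0$ yields $f(x) \le L$.

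There is essentially no obstacle beyond bookkeeping, since this is a classical characterization. The only point that deserves slight care is accommodating the extended-real-valued nature of $f$: when $L = -\infty$, I repeat the subsequence argument with an arbitrary real level $M$ in place of $L + \epsilon$, obtaining $(x_{n_k}, M) \in \text{epi}(f)$ eventually and therefore $f(x) \le M$ for every $M \in \mathbb{R}$, whence $f(x) = -\infty = L$. This last case is the only step that could superficially look problematic, but the closedness of the epigraph handles it uniformly.
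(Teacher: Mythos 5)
Your proof is correct and complete: both directions are the standard sequential argument (valid here since $X\times\mathbb{R}$ is a metric space, so sequential closedness coincides with closedness), and you correctly handle the extended-real-valued cases $L=+\infty$ and $L=-\infty$. The paper states Proposition \ref{prop:lsc} as a classical preliminary fact without proof, so there is no argument in the paper to compare against; your write-up supplies exactly the textbook proof one would expect.
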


\begin{lemma}\label{lem:ascoli}(\textit{Ascoli Theorem})\footnote{This lemma can be deduced directly from the Hahn-Banach Theorem.} Let $X$ be a Banach space, and $E\subset X$ be a closed convex set. Then for any $x_0\notin E$, there exists $x^*\in X^*$ and $\gamma\in \mathbb{R}$, such that
$$
\langle x^*,x\rangle<\gamma<\langle x^*,x_0\rangle,\quad \forall x\in E.
$$
\end{lemma}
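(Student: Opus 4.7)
The plan is to obtain the strict separation as a consequence of the first geometric form of the Hahn--Banach theorem (separation of a convex set from an open convex set by a closed hyperplane), which is standard in every Banach-space textbook, using a small open ball around $x_0$ as a buffer.

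First I would exploit the closedness of $E$: since $x_0 \notin E$ and $E$ is closed in the norm topology of $X$, the distance $d := \inf_{x \in E} \|x - x_0\|_X$ is strictly positive. Let $U := \{y \in X : \|y - x_0\|_X < d/2\}$, which is a nonempty open convex subset of $X$ that is disjoint from $E$. Applying the first geometric Hahn--Banach theorem to the disjoint convex sets $E$ and $U$ (one of which, $U$, is open), I obtain a nonzero $x^* \in X^*$ and a scalar $\alpha \in \mathbb{R}$ with
$$\langle x^*, x\rangle \;\le\; \alpha \;\le\; \langle x^*, y\rangle, \qquad \forall\, x \in E,\; \forall\, y \in U.$$

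Next I would upgrade the inequality on the $U$-side to strict inequality at every interior point, in particular at $x_0$. Suppose for contradiction that $\langle x^*, x_0\rangle = \alpha$. Since $U$ is open, there exists $r > 0$ with $B(x_0, r) \subseteq U$, so for every $z \in X$ with $\|z\|_X < r$ we have $\langle x^*, x_0 + z\rangle \ge \alpha$, which gives $\langle x^*, z\rangle \ge 0$. Replacing $z$ by $-z$ shows $\langle x^*, z\rangle = 0$ on a neighborhood of $0$, forcing $x^* = 0$ and contradicting the Hahn--Banach output. Hence $\langle x^*, x_0\rangle > \alpha$.

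Finally, I would pick any $\gamma$ strictly between $\alpha$ and $\langle x^*, x_0\rangle$, for instance $\gamma := \tfrac{1}{2}(\alpha + \langle x^*, x_0\rangle)$. Then for every $x \in E$ one has $\langle x^*, x\rangle \le \alpha < \gamma < \langle x^*, x_0\rangle$, which is exactly the strict separation asserted in the lemma. The only genuinely nontrivial ingredient is the geometric Hahn--Banach theorem itself, which the statement in the excerpt explicitly permits us to invoke; the only subtle step in the above chain is the strictness argument at $x_0$, and that is handled by the openness of the buffer ball $U$ together with $x^* \ne 0$.
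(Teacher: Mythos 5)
Your proof is correct. The paper does not actually supply a proof of this lemma; it only remarks in a footnote that the statement follows from the Hahn--Banach theorem, and your argument is precisely the standard way of making that deduction: use closedness of $E$ to get a positive distance $d$ from $x_0$, separate $E$ from the open ball $B(x_0,d/2)$ by the first geometric form of Hahn--Banach, upgrade to strict inequality at $x_0$ via the openness of the ball and the nontriviality of $x^*$, and then insert $\gamma$ strictly between $\alpha$ and $\langle x^*,x_0\rangle$. (Equivalently, one could invoke the second geometric form of Hahn--Banach directly, separating the closed convex set $E$ from the compact convex set $\{x_0\}$; your route essentially reproves that form, which is fine.) The only pedantic caveat is that the separation theorem requires $E$ nonempty, which holds in every application in the paper, so nothing is lost.
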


\subsection{Optimization Problem, Lagrangian Dual}

\begin{definition}\label{def:perturb}
    Let $X,\,\Omega,\,Y,\,f,\,g$ be defined as in Definition \ref{def:opt}. We define the \textbf{perturbation functional} $v:Y\rightarrow\mathbb{R}\cup\{+\infty,-\infty\}$ by letting $v(y)$ be the optimal value of the following \textbf{perturbed problem}
    \begin{equation}\label{equ:def_pert}
          \begin{aligned}
        &\inf_{x\in \Omega}f(x),\\
        \text{s.t. }&g(x)\le y,
    \end{aligned}
    \end{equation}
     where $g(x)\le y$ means $g(x)-y\in -P$. In particular, $v(\theta_Y)$ is the optimal value of the optimization problem \eqref{equ:def_opt}.
\end{definition}
\begin{definition}\label{def:Lagdual}
     Let $X,\,\Omega,\,Y,\,f,\,g$ be defined as in Definition \ref{def:opt}. The \textbf{Lagrangian function} $L:\Omega\times Y^*_+\rightarrow\mathbb{R}\cup\{+\infty\}$ to the optimization problem \eqref{equ:def_opt} is defined by
     \begin{equation}\label{equ:def_Lagdual}
         L(x,y^*)=f(x)+\langle y^*,g(x)\rangle.
     \end{equation}
     Here $Y^*_+$ denotes the set
     $$
     \{y^*\in Y^{*}|\langle y^*,y\rangle \ge 0,\,\forall y\ge \theta_{Y}\}.
     $$
\end{definition}
\begin{definition}\label{def:infsup_supinf}
     Let $X,\,\Omega,\,Y,\,f,\,g$ be defined as in Definition \ref{def:opt}. The \textbf{inf-sup problem}, or the \textbf{primal problem}, is defined by
     \begin{equation}\label{equ:def_infsup}
         p:=\inf_{x\in \Omega}\sup_{y^*\in Y^{*}_+}L(x,y^*).
     \end{equation}
     Similarly, the \textbf{sup-inf problem}, or the \textbf{dual problem}, is defined by
     \begin{equation}\label{equ:def_supinf}
        d:=\sup_{y^*\in Y^{*}_+} \inf_{x\in \Omega}L(x,y^*).
     \end{equation}
\end{definition}

\begin{remark}
    We can use $\Omega$ to absorb some constraints to consider the partial dual of an optimization problem. To be precise, if we have the optimization problem as following
         \begin{equation}\label{equ:def_opt_gen}
    \begin{aligned}
        &\inf_{x\in \Omega}f(x),\\
        \text{s.t. }&g(x)\le \theta_{Y},\\
        &h(x)\le \theta_Z.
    \end{aligned}
    \end{equation}
    Then we can define
    $$
    \tilde{\Omega}:=\{x\in\Omega| \,h(x)\le \theta_Z\}\subseteq \Omega\subseteq X,
    $$
    and rewrite the optimization problem \eqref{equ:def_opt_gen} as
    \begin{equation}\label{equ:def_opt_gen_rewrite}
        \begin{aligned}
        &\inf_{x\in \tilde{\Omega}}f(x),\\
        \text{s.t. }&g(x)\le \theta_{Y}.\\
    \end{aligned}
    \end{equation}
    and apply theorem \ref{thm:dualgap} to problem \eqref{equ:def_opt_gen_rewrite}.
\end{remark}
\begin{corollary}\label{cor:finite}
    Assume that $X=\mathbb{R}^m$, $\Omega=X$, $Y=\mathbb{R}^n$, and $f$ is proper and bounded from below, i.e. there exists $M\in\mathbb{R}$, s.t. $f(x)\ge M$, $\forall x\in X$. Then for any $\epsilon>0$, there exists $y_\epsilon\in Y,\,\|y_{\epsilon}\|\le \epsilon$, $\{p_i\ge 0\}_{i=1}^{n+2}$, and $\{x_i\in X\}_{i=1}^{n+2}$, s.t. 
    $$\sum_{i=1}^{n+2}p_i=1,\quad\sum_{i=1}^{n+2}p_ig(x_i)-y_{\epsilon}\le \theta_Y,
    $$
    and
    $$
    \sum_{i=1}^{n+2}p_if(x_i)\le d+\epsilon.
    $$
\end{corollary}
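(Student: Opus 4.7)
The plan is to apply Carath\'eodory's theorem to the epigraph of the perturbation functional $v$, then extract the desired $x_i$ from the representing epigraph points. Since $X=\mathbb{R}^m$ and $Y=\mathbb{R}^n$, the epigraph lives in $\mathbb{R}^{n+1}$, so Carath\'eodory will give exactly the $n+2$ points asked for in the statement.

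\textbf{Step 1 (setup via Theorems \ref{thm:dualgap} and \ref{thm:bicon}).} Because $f\ge M$ on $X$, the perturbation functional $v:\mathbb{R}^n\to\mathbb{R}\cup\{+\infty\}$ satisfies $v(y)\ge M$ whenever finite, so the regularity condition \eqref{equ:regularity} of Theorem \ref{thm:bicon} holds with $\underline{x}^*=0$, $\underline{\beta}=M$ (the case $d=+\infty$ is trivial and can be assumed away). Theorem \ref{thm:dualgap} then gives $d=v^{**}(\theta_Y)$ and $\operatorname{epi}(v^{**})=\operatorname{cl}\operatorname{co}\operatorname{epi}(v)$, so $(\theta_Y,d)\in\operatorname{cl}\operatorname{co}\operatorname{epi}(v)$.

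\textbf{Step 2 (Carath\'eodory representation).} Split the tolerance as $\delta=\eta=\epsilon/2$. By the closure characterization in Step~1, pick $(y_\epsilon,d_\delta)\in\operatorname{co}\operatorname{epi}(v)$ with $\|y_\epsilon\|\le\delta$ and $d_\delta\le d+\delta$. Since $\operatorname{epi}(v)\subset\mathbb{R}^{n+1}$, Carath\'eodory's theorem yields
$$(y_\epsilon,d_\delta)=\sum_{i=1}^{n+2}p_i(y_i,r_i),\qquad p_i\ge 0,\ \sum_{i=1}^{n+2}p_i=1,\ (y_i,r_i)\in\operatorname{epi}(v),$$
padding with zero-weight duplicates if fewer than $n+2$ points suffice.

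\textbf{Step 3 (lift to primal points).} For each $i$, since $(y_i,r_i)\in\operatorname{epi}(v)$ we have $v(y_i)\le r_i$, so by the definition of $v$ there exists $x_i\in X$ with $g(x_i)\le y_i$ and $f(x_i)\le r_i+\eta$. Aggregating,
$$\sum_{i=1}^{n+2}p_i g(x_i)\le\sum_{i=1}^{n+2}p_i y_i=y_\epsilon,\quad\text{i.e. }\sum_{i=1}^{n+2}p_i g(x_i)-y_\epsilon\le\theta_Y,$$
while
$$\sum_{i=1}^{n+2}p_i f(x_i)\le\sum_{i=1}^{n+2}p_i(r_i+\eta)=d_\delta+\eta\le d+\delta+\eta=d+\epsilon,$$
and $\|y_\epsilon\|\le\delta\le\epsilon$.

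The main obstacle is verifying that the regularity hypothesis of Theorem \ref{thm:bicon} follows from the $f\ge M$ assumption (so that $\operatorname{epi}(v^{**})=\operatorname{cl}\operatorname{co}\operatorname{epi}(v)$ is really available), and carefully budgeting the two layers of approximation error -- the $\delta$-error from taking the closure and the $\eta$-error from approximating $v(y_i)$ by $f(x_i)$ -- so that both sum to at most $\epsilon$. The rest is a routine Carath\'eodory argument.
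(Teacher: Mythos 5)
Your proof is correct and follows essentially the same route as the paper's: invoke Theorem \ref{thm:dualgap} together with Theorem \ref{thm:bicon} to place $(\theta_Y,d)$ in $\text{cl }\text{co }\text{epi}(v)$, approximate it within $\epsilon/2$ by a point of $\text{co }\text{epi}(v)$, apply Carath\'eodory in $\mathbb{R}^{n+1}$ to get $n+2$ epigraph points, and lift each to a near-minimizer $x_i$ with the remaining $\epsilon/2$ of slack. The error budgeting and the verification of the regularity condition \eqref{equ:regularity} for $v$ match the paper's argument.
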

\begin{proof}
    Since $f$ is proper and bounded from below, we know that the perturbation function $v$ is also proper and bounded from below, therefore also satisfies Assumption \ref{ass:regularity}. According to Theorem \ref{thm:bicon}, we have
    $$
    \text{epi} (v^{**})=\text{cl } \text{co }\text{epi} (v).
    $$
    Note that $(\theta_Y,d)\in\text{epi}(v^{**})$ according to Theorem \ref{thm:dualgap}, hence there exists $(\bar{y}, \bar{v})\in \text{co }\text{epi} (v)\subset Y\times \mathbb{R}$, such that $\|(\bar{y},\bar{v})-(\theta_Y,d)\|_{Y\times \mathbb{R}}<\frac{\epsilon}{2},$ implying that
    $$
    \|\bar{y}\|_{Y}<\frac{\epsilon}{2}<\epsilon,\quad \|\bar{v}-d\|<\frac{\epsilon}{2}.
    $$
    By Carath\'eodory's Theorem, since $(\bar{y}, \bar{v})\in \text{co }\text{epi} (v)\subset \mathbb{R}^{n+1}$, there exists $\{p_i\ge 0\}_{i=1}^{n+2}$, $\{(y_i,v_i)\in \text{epi}(v)\}_{i=1}^{n+2}$, s.t.
    $$\sum_{i=1}^{n+2}p_i=1,\quad \sum_{i=1}^{n+2}p_i(y_i,v_i)=(\bar{y},\bar{v}).$$
    By the definition of $v$, for any $i\in\{1,\cdots,n+2\}$, there exisits $x_i\in X$, such that
    $$
    g(x_i)\le y_i,\quad f(x_i)\le v_i+\frac{\epsilon}{2}.
    $$
    Therefore, we have
    $$
    \sum_{i=1}^{n+2}p_ig_i(x_i)-\bar{y}\le \sum_{i=1}^{n+2}p_iy_i-\bar{y}= \theta_Y,
    $$
    and
    $$
\sum_{i=1}^{n+2}p_if(x_i)\le\sum_{i=1}^{n_2}p_i(v_i+\frac{\epsilon}{2})=\bar{v}+\frac{\epsilon}{2}\le d+\epsilon. 
    $$
    Hence we finish the proof.
\end{proof}
\begin{corollary}\label{cor:infinite}
    Assume that $X$ is a Banach space, $\Omega=X$, $Y$ is a Banach space with a closed positive cone $P\subset Y$, and $f$ is proper and bounded from below,  i.e. there exists $M\in\mathbb{R}$, s.t. $f(x)\ge M$, $\forall x\in X$. Then for any $\epsilon>0$, there exists $y_\epsilon\in Y,\,\|y_{\epsilon}\|\le \epsilon$, $N\in \mathbb{N}_+$, $\{p_i\ge 0\}_{i=1}^{N}$, and $\{x_i\in X\}_{i=1}^{N}$, s.t. 
    $$\sum_{i=1}^{N}p_i=1,\quad\sum_{i=1}^{N}p_ig(x_i)-y_{\epsilon}\le \theta_Y,
    $$
    and
    $$
    \sum_{i=1}^{N}p_if(x_i)\le d+\epsilon.
    $$
\end{corollary}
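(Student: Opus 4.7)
The plan is to adapt the proof of Corollary \ref{cor:finite} almost verbatim, replacing the single use of Carath\'{e}odory's theorem (which supplied the explicit bound $N=n+2$ in the finite-dimensional case) by the bare algebraic definition of the convex hull. Since the statement here only asks for \emph{some} finite $N$, the loss of a dimension-independent bound is inconsequential.

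First, I would check that the perturbation functional $v:Y\to\mathbb{R}\cup\{+\infty,-\infty\}$ from Definition \ref{def:perturb} satisfies the hypotheses of Theorem \ref{thm:bicon}. Properness is immediate: picking any $x_0\in\mathrm{dom}(f)$ gives $v(g(x_0))\le f(x_0)<+\infty$, while the uniform lower bound $f\ge M$ propagates to $v\ge M$ on all of $Y$. The constant affine minorant $(\underline{x}^*,\underline{\beta})=(\theta_{Y^*},M)$ then verifies \eqref{equ:regularity}, and in passing $d\in[M,p]\subset\mathbb{R}$ is finite.

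Next, combining Theorems \ref{thm:bicon} and \ref{thm:dualgap} yields
\[
(\theta_Y,\,d)\in\text{epi}(v^{**})=\text{cl } \text{co } \text{epi}(v)\subseteq Y\times\mathbb{R}.
\]
Given $\epsilon>0$, the definition of closure in the product Banach space $Y\times\mathbb{R}$ produces $(\bar y,\bar v)\in\text{co }\text{epi}(v)$ with $\|\bar y\|_{Y}<\epsilon$ and $|\bar v-d|<\epsilon/2$. By the algebraic definition of the convex hull, one may write $(\bar y,\bar v)=\sum_{i=1}^{N}p_i(y_i,v_i)$ for some $N\in\mathbb{N}_+$, weights $p_i\ge 0$ with $\sum p_i=1$, and points $(y_i,v_i)\in\text{epi}(v)$. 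For each $i$, the infimum defining $v(y_i)\le v_i$ supplies $x_i\in X$ with $g(x_i)\le y_i$ and $f(x_i)\le v_i+\epsilon/2$.

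Setting $y_\epsilon:=\bar y$, the three conclusions follow routinely. The estimate $\|y_\epsilon\|_Y<\epsilon$ is immediate; closedness of $-P$ under nonnegative linear combinations gives $\sum_{i=1}^{N} p_i g(x_i)-y_\epsilon=\sum_{i=1}^{N} p_i(g(x_i)-y_i)\in -P$, i.e.\ $\sum p_i g(x_i)-y_\epsilon\le \theta_Y$; and $\sum p_i f(x_i)\le\sum p_i v_i+\epsilon/2=\bar v+\epsilon/2\le d+\epsilon$. There is no genuine obstacle here, since Theorems \ref{thm:bicon} and \ref{thm:dualgap} already carry all the nontrivial content. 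The only point that requires care is bookkeeping the two sources of error, one from approximating $(\theta_Y,d)$ in the closure and one from approximating each $v(y_i)$ via its defining infimum, so that they sum to less than $\epsilon$.
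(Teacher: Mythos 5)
Your proposal is correct and follows essentially the same route as the paper: both pass from $(\theta_Y,d)\in\text{epi}(v^{**})=\text{cl}\,\text{co}\,\text{epi}(v)$ (via Theorems \ref{thm:bicon} and \ref{thm:dualgap}) to a nearby point of $\text{co}\,\text{epi}(v)$, write it as a finite convex combination using the algebraic definition of the convex hull in place of Carath\'eodory, and then select approximate minimizers $x_i$ from the defining infimum of $v(y_i)$. The error bookkeeping and the verification of the minorant hypothesis \eqref{equ:regularity} match the paper's argument as well.
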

\begin{proof}
    Similarly to the proof for Corollary \ref{cor:finite}, we can still find $(\bar{y},\bar{v})\in \text{co }\text{epi}(v)\subset Y\times \mathbb{R}$, such that
     $$
    \|\bar{y}\|_{Y}<\frac{\epsilon}{2}<\epsilon,\quad \|\bar{v}-d\|<\frac{\epsilon}{2}.
    $$
Note that $\text{co }\text{epi}(v)$ can be expressed by
$$
\text{co }\text{epi}(v)=\{\sum_{i=1}^{n}p_i(y_i,v_i)|n\in\mathbb{N}_+,\,p_i\ge 0,\,(y_i,v_i)\in \text{epi}(v),\,\sum_{i=1}^{n}p_i=1\}.
$$
Hence there exists $N\in \mathbb{N}_+$,  $\{p_i\ge 0\}_{i=1}^{N}$, $\{(y_i,v_i)\in \text{epi}(v)\}_{i=1}^{N}$, s.t.
    $$\sum_{i=1}^{N}p_i=1,\quad \sum_{i=1}^{N}p_i(y_i,v_i)=(\bar{y},\bar{v}).$$
The rest of the proof is the same as the proof for Corollary \ref{cor:finite}, and the details are omitted here.
\end{proof}
\begin{corollary}\label{cor:optimal_lot}
     Assume that $X$ is a Banach space, $\Omega=X$, $Y$ is a Banach space with a closed positive cone $P\subset Y$, and $f$ is proper and bounded from below,  i.e. there exists $M\in\mathbb{R}$, s.t. $f(x)\ge M$, $\forall x\in X$. Then for any  $N\in \mathbb{N}_+$, $\{p_i\ge 0\}_{i=1}^{N}$, and $\{x_i\in X\}_{i=1}^{N}$, s.t. 
$$\sum_{i=1}^{N}p_i=1,\quad\sum_{i=1}^{N}p_ig(x_i)\le \theta_Y,
    $$
    we have
    $$
    \sum_{i=1}^{N}p_if(x_i)\ge d.
    $$
\end{corollary}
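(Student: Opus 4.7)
The plan is to exploit the identification of the dual value with the biconjugate of the perturbation functional, namely $d=v^{**}(\theta_Y)$ from Theorem \ref{thm:dualgap}, together with the Fenchel--Moreau-type identity $\mathrm{epi}(v^{**})=\mathrm{cl\,co\,epi}(v)$ from Theorem \ref{thm:bicon}. Any finite lottery $\{(p_i,x_i)\}_{i=1}^N$ produces a convex combination of points in $\mathrm{epi}(v)$, which therefore lies in $\mathrm{epi}(v^{**})$; combining this with the monotonicity of $v^{**}$ should give the inequality.

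More concretely, I would proceed in three steps. First, verify that $v$ satisfies the regularity assumption \eqref{equ:regularity}: since $f\ge M$ pointwise, the perturbation functional inherits $v(y)\ge M$ for all $y\in Y$, so \eqref{equ:regularity} holds with $\underline{x}^*=\theta_{Y^*}$ and $\underline{\beta}=M$. This legitimates the use of Theorem \ref{thm:bicon}, and Theorem \ref{thm:dualgap} then yields $d=v^{**}(\theta_Y)$ and $\mathrm{epi}(v^{**})=\mathrm{cl\,co\,epi}(v)$. Second, observe that for every $i$ the pair $(g(x_i),f(x_i))$ lies in $\mathrm{epi}(v)$ directly by the definition of $v$, since $x_i$ is feasible for the perturbed problem with parameter $g(x_i)$. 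Consequently
\[
\Bigl(\sum_{i=1}^{N}p_i\,g(x_i),\ \sum_{i=1}^{N}p_i\,f(x_i)\Bigr)\in \mathrm{co\,epi}(v)\subseteq\mathrm{cl\,co\,epi}(v)=\mathrm{epi}(v^{**}),
\]
which gives $v^{**}\bigl(\sum_i p_i g(x_i)\bigr)\le \sum_i p_i f(x_i)$.

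Third, invoke monotonicity. The functional $v$ is non-increasing in the natural order on $Y$: if $y_1\le y_2$ then the feasible set $\{x:g(x)\le y_1\}$ is contained in $\{x:g(x)\le y_2\}$, so $v(y_1)\ge v(y_2)$. Equivalently, $\mathrm{epi}(v)$ is stable under the translation $(y,r)\mapsto(y+q,r)$ for $q\in P$, and this property is preserved by convex combinations and closure because $P$ is a closed convex cone. Hence $\mathrm{epi}(v^{**})$ enjoys the same stability, so $v^{**}$ is likewise non-increasing. Since $\sum_i p_i g(x_i)\le\theta_Y$, we obtain
\[
v^{**}\Bigl(\sum_i p_i g(x_i)\Bigr)\ge v^{**}(\theta_Y)=d,
\]
and chaining with the previous display yields $\sum_i p_i f(x_i)\ge d$.

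The only non-mechanical step is the monotonicity argument in step three; although intuitively clear, it needs the explicit verification that the translation-stability of $\mathrm{epi}(v)$ along the cone $P$ survives the operation $\mathrm{cl\,co}(\cdot)$. This is where I would be most careful, though it is essentially immediate from the fact that $P$ is a closed convex cone so both convex combinations and weak/strong limits of translations by elements of $P$ remain translations by elements of $P$.
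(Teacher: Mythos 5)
Your argument is correct and follows essentially the same route as the paper: both reduce the claim to the membership of a convex combination of epigraph points of $v$ in $\mathrm{epi}(v^{**})=\mathrm{cl}\,\mathrm{co}\,\mathrm{epi}(v)$ and then invoke $d=v^{**}(\theta_Y)$ from Theorem \ref{thm:dualgap}. The only difference is bookkeeping: the paper absorbs the slack $\theta_Y-\sum_i p_i g(x_i)$ into the individual epigraph points (using that $(y_i,f(x_i))\in\mathrm{epi}(v)$ for every $y_i\ge g(x_i)$) so as to land exactly at $(\theta_Y,\sum_i p_i f(x_i))$, whereas you land at $(\sum_i p_i g(x_i),\sum_i p_i f(x_i))$ and then separately verify that $v^{**}$ is non-increasing along the cone $P$ --- a step you carry out correctly via the translation-stability of $\mathrm{cl}\,\mathrm{co}\,\mathrm{epi}(v)$.
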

\begin{proof}
    For any $N\in\mathbb{N}_+$, s.t.$$\sum_{i=1}^{N}p_i=1,\quad\sum_{i=1}^{N}p_ig(x_i)\le \theta_Y,
    $$ 
    it is straightforward to verify that $(y_i,f(x_i))\in \text{epi}(v)$ for any $i\in\{1,\cdots,N\},\,y_i\ge g(x_i)$ by the definition of the perturbation functional $v$. Therefore, 
    $$
    (\sum_{i=1}^{N}p_iy_i,\sum_{i=1}^{N}p_if(x_i))\in \text{co }\text{epi}(v),\quad\forall y_i\ge g(x_i).
    $$
    Since
    $$
    \sum_{i=1}^{N}p_ig(x_i)\le \theta_Y,
    $$
    we have
    $$
    (\theta_Y,\sum_{i=1}^{N}p_if(x_i))\in \text{co }\text{epi}(v)\subset \text{cl }\text{co }\text{epi}=\text{epi}(v^{**}),
    $$
    implying that
    $$
    \sum_{i=1}^{N}p_if(x_i)\ge v^{**}(\theta_Y)=d,
    $$
    according to Theorem \ref{thm:dualgap}.
\end{proof}

\section{Existence of Lagrange Multiplier}
\label{lagrange}
In this appendix we give a detailed review on the existence of Lagrange mulltilpiers in appropriate function spaces. Most of the material can be found in \cite{dechert1982lagrange} and \cite{pavonionline}. It is included here for completeness.

\begin{definition}\label{def:linfopt}
Let $f:\ell^{\infty}\rightarrow \mathbb{R}$ be a functional bounded from below, i.e. there exists $M\in\mathbb{R}$, s.t. $f(x)\ge M$, $\forall x\in \ell^{\infty}$; and $g:\ell^{\infty}\rightarrow \ell^{\infty}$ be an arbitrary functional. The \textbf{$\ell^{\infty}$-optimization problem}, is defined by
\begin{equation}\label{equ:def_linfopt}
\begin{aligned}
     &\inf_{x\in \ell^{\infty}}f(x),\\
    \text{s.t. }&g(x)\le 0,
    \end{aligned}
\end{equation}
where $g(x)\le 0$ means $g_t(x)\le 0,\,\forall t\in\mathbb{Z}^+$.
\end{definition}
\begin{lemma}\label{lem:Yosida}{(Yosida-Hewitt Decomposition Theorem, see \cite{yosida1952finitely}(Theorem 1.23, Theorem 1.24, Theorem 2.3))}
    For any $\lambda\in \ell^{\infty,*}(\text{ or }\ell^{\infty,*}_+)$, there exists a unique $\lambda^1\in \ell^1(\text{ or }\ell^1_+)$, and a unique $\lambda^s\in \ell^s(\text{ or }\ell^s_+)$, s.t. $\lambda=\lambda^1+\lambda^s$. Here $\ell^{s}$ is the set of \textbf{purely finitely additive measures}.
\end{lemma}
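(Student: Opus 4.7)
} The plan is to reduce the statement to the classical Yosida--Hewitt decomposition of bounded finitely additive measures on $\mathbb{N}$ and then give a direct, elementary construction in the $\ell^{\infty}$ setting. First I would recall the isometric embedding $\iota:\ell^{1}\hookrightarrow \ell^{\infty,*}$ given by $\iota(a)(x)=\sum_{n}a_{n}x_{n}$ and make explicit the meaning of \emph{purely finitely additive} in this context: a functional $\lambda^{s}\in\ell^{\infty,*}$ lies in $\ell^{s}$ precisely when $\lambda^{s}(e_{n})=0$ for every standard basis vector $e_{n}\in\ell^{\infty}$ (equivalently, the associated bounded finitely additive set function vanishes on every singleton, hence on every finite subset of $\mathbb{N}$). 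Under this identification, the claim becomes that $\ell^{\infty,*}=\iota(\ell^{1})\oplus\ell^{s}$ as Banach lattices.

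Next I would prove existence and positivity in the cone $\ell^{\infty,*}_{+}$. Given $\lambda\in\ell^{\infty,*}_{+}$, define the candidate countably additive part coordinatewise by $\lambda^{1}:=(\lambda(e_{n}))_{n\Ge 1}$. Summability $\lambda^{1}\in\ell^{1}_{+}$ follows from
\begin{equation*}
\sum_{n=1}^{N}\lambda(e_{n})=\lambda\Bigl(\sum_{n=1}^{N}e_{n}\Bigr)\Le \|\lambda\|_{\ell^{\infty,*}},\qquad\forall N,
\end{equation*}
by positivity of $\lambda$ applied to $\sum_{n=1}^{N}e_{n}\Le \mathbf{1}$. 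Setting $\lambda^{s}:=\lambda-\iota(\lambda^{1})$, one has $\lambda^{s}(e_{n})=0$ for all $n$ by construction. To show $\lambda^{s}\Ge 0$, take $x\in\ell^{\infty}_{+}$; from the pointwise inequality $x\Ge \sum_{n=1}^{N}x_{n}e_{n}$ and positivity of $\lambda$, one gets $\lambda(x)\Ge \sum_{n=1}^{N}x_{n}\lambda(e_{n})$, so letting $N\to\infty$ (using $x_{n}\Ge 0$ and $\lambda^{1}\in\ell^{1}$) yields $\lambda(x)\Ge\iota(\lambda^{1})(x)$, hence $\lambda^{s}(x)\Ge 0$.

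For the general (signed) case I would invoke the Riesz space structure of $\ell^{\infty,*}$ to write $\lambda=\lambda_{+}-\lambda_{-}$ with $\lambda_{\pm}\in\ell^{\infty,*}_{+}$ and apply the positive construction to each; additivity of the decomposition is automatic. Uniqueness is the cleanest part: if $\lambda=\mu^{1}+\mu^{s}$ with $\mu^{1}\in\ell^{1}$ and $\mu^{s}\in\ell^{s}$, then evaluating at $e_{n}$ forces $\mu^{1}_{n}=\lambda(e_{n})$, pinning down $\mu^{1}$, and hence $\mu^{s}$ as well. Positivity in the positive case follows since $\lambda^{1}_{n}=\lambda(e_{n})\Ge 0$ and $\lambda^{s}\Ge 0$ by the argument above.

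The main obstacle is conceptual rather than technical: one must correctly identify that in the $\ell^{\infty}(\mathbb{N})$ setting the characterization ``purely finitely additive'' collapses to the simple condition $\lambda^{s}(e_{n})=0$ for all $n$, because this equivalence (between vanishing on singletons and singularity with respect to every countably additive measure) is what makes the decomposition unique and the formula $\lambda^{1}_{n}=\lambda(e_{n})$ canonical. Once this equivalence is in hand the remaining arguments are routine verifications of summability, positivity, and additivity. A fully general Yosida--Hewitt statement on arbitrary $\sigma$-algebras would require substantially more work (Hahn--Jordan decomposition, a maximality argument for $\lambda^{1}$, and the use of $\sigma$-additive envelopes), but this heavier machinery is unnecessary in the countable setting used throughout the paper.
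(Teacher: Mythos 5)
Your proposal is correct, but it is worth noting that the paper does not prove Lemma \ref{lem:Yosida} at all: it simply cites the original Yosida--Hewitt results and uses the decomposition as a black box. You instead give a self-contained, elementary construction that exploits the special structure of the index set $\mathbb{N}$: defining $\lambda^1_n:=\lambda(e_n)$, verifying summability from positivity and $\sum_{n=1}^N e_n\le \mathbf{1}$, and checking positivity of the remainder $\lambda^s=\lambda-\iota(\lambda^1)$ via the truncation inequality $x\ge\sum_{n=1}^N x_ne_n$. This buys transparency and avoids the general-measure-theoretic machinery (Hahn--Jordan plus a maximality argument) that the cited theorems rely on, at the cost of being tied to the countable discrete setting — which is exactly the setting the paper needs, since $\Lambda$ and the constraint maps live in sequence spaces. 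The one step you flag but do not fully discharge is the equivalence, for elements of $\mathrm{ba}(\mathbb{N})$, between ``purely finitely additive'' and ``vanishes on every singleton''; this does hold (a positive countably additive $\nu$ with $0\le\nu\le\mu$ and $\mu(\{n\})=0$ for all $n$ must vanish by $\sigma$-additivity, and for signed $\mu$ the Jordan parts inherit vanishing on singletons because singletons have only trivial subsets), and spelling it out in two lines would make the argument complete. Your uniqueness argument via evaluation at $e_n$ and the reduction of the signed case to the positive cone are both sound.
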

\begin{remark}
\begin{enumerate}
    \item A well-known example of $\lambda\in \ell^{\infty,*}-\ell^{1}$ is constructed by Hahn-Banach Theorem. To be precise, we define the set $c\subset \ell^{\infty}$ as
    $$
    c=\{x\in \ell^{\infty}|\lim_{t\rightarrow\infty}x_t\text{ exists}\}.
    $$
    Therefore the operator $f_c$ defined by
    $$
    f_c(x)=\lim_{t\rightarrow\infty}x_t,\quad\forall x\in c,
    $$
    is a bounded linear operator on $c$. According to Hahn-Banach Theorem, this operator $f_c$ can be extended to be a bounded linear operator on $\ell^{\infty}$, or equivalently, to be an element of $\ell^{\infty,*}$, denoted as $f$. It is straightforward to check that $f\notin \ell^{1}$. 
    \item Another approach to construct an element belonging to $\ell^{\infty,*}-\ell^1$ is to use \textbf{ultrafilters}. To be precise, for any $A\subseteq \mathbb{Z}^+$, let $e_A$ be the indicator function of $A$, i.e., $$
    e_A(t)=\begin{cases}
        1,&t\in A;\\
        0,&t\notin A.
    \end{cases}
    $$We consider the set\footnote{As shown in \cite{dechert1982lagrange}(Page 289), $M$ is indeed the set of extreme points of
    $$
    G:=\{\lambda\in\ell^{\infty,*}|\langle \lambda,e_{\mathbb{Z}^+}\rangle =1\}.
    $$.} 
    $$
    M:=\{\lambda\in \ell^{\infty,*}|\langle\lambda,e_{\mathbb{Z}^+}\rangle=1;\,\langle\lambda,e_A\rangle=0\text{ or }1,\,\forall A\subset\mathbb{Z}^+\}.$$
    We note that when $\lambda\in M$, such that $\langle \lambda,e_{A}\rangle=0$ for any finite sets $A\subset \mathbb{Z}^+$, then $\lambda\in \ell^{\infty,*}-\ell^1$.
    \item We define $c_0\subset c\subset \ell^{\infty}$ as
    $$
    c_0=\{x\in \ell^{\infty}|\lim_{t\rightarrow\infty}x_t=0\}.
    $$
    Then $c_0^{*}=\ell^{1}$. Various asymptotic behaviors of $x\in\ell^{\infty}$ bring bounded linear functionals in $\ell^{\infty,*}-\ell^1$. This is why we need restrictions on asymptotic behaviors of $g$ below for further discussions.
 \end{enumerate}
\end{remark}
\begin{definition}\label{def:AI_ANA}
Let $g:\ell^{\infty}\rightarrow \ell^{\infty}$ be a functional. For any $u,\,v\in\ell^{\infty}$, let 
$$
x^T(u,v):=\begin{cases}
    u_t,&t\le T;\\
    v_t,&t>T.
\end{cases}
$$
We say the functional $g$ is \textbf{Asymptotically Insensitive(AI)}, if
$$
\lim_{t\rightarrow \infty}\left[g_t(x^T(u,v))-g_t(v)\right]=0,\quad \forall u,\,v\in \ell^{\infty},\,T\in\mathbb{Z}^+.
$$
We say the funtional $g$ is \textbf{Asymptotoically Non-Anticipatory(ANA)} if 
$$
\lim_{T\rightarrow \infty}\left[g_t(x^T(u,v))-g_t(u)\right]=0,\quad \forall u,\,v\in \ell^{\infty},\,t\in\mathbb{Z}^+.
$$
\end{definition}
\begin{remark}
    \begin{enumerate}
        \item These terminologies were first discussed in \cite{dechert1982lagrange} and were adapted for the dual approach to dynamic problems in \cite{pavoni2018dual}. Here we use the definitions in \cite{pavoni2018dual}.
        \item Intuitively, $g$ is AI means that for any $t\in \mathbb{Z}^+$, $g_t$ is not quite related to $x_r$ with $r\ll t$; and $g$ is ANA means that for any $t\in \mathbb{Z}^+$, $g_t$ is not quite related to $x_r$ with $r\gg t$. A typical type of constraint in dynamic problems is
        $$
        g^{D}_t(x):=\sum_{r=0}^{\infty}\beta^{r}h_{t+r}(x_{t+r})\le 0,\quad \forall t\in \mathbb{Z}^+,
        $$
        where $0<\beta<1$ denotes the discount factor. It is straightforward to check that $g^D$ is both AI and ANA under some mild regularity assumptions on $h$.
    \end{enumerate}
\end{remark}
\begin{lemma}\label{lem:AI_lambdas}
    Let $g:\ell^{\infty}\rightarrow\ell^{\infty}$ be AI. Then for any $\lambda^s\in\ell^{s}$, $u,\,v\in\ell^{\infty}$, $T\in\mathbb{Z}^+$, we have
    $$
    \langle \lambda^s,g(x^{T}(u,v))\rangle=\langle\lambda^s,g(v)\rangle.
    $$
\end{lemma}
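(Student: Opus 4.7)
The plan is to reduce the statement to the standard fact that purely finitely additive measures annihilate $c_0$. Define the difference sequence $w := g(x^{T}(u,v)) - g(v)$. Since $g$ maps into $\ell^{\infty}$, we have $w \in \ell^{\infty}$; and the AI property applied to this particular pair $(u,v)$ and prefix $T$ gives $w_t \to 0$ as $t \to \infty$, so in fact $w \in c_0$. By linearity of $\lambda^s$, the conclusion
$$\langle \lambda^s, g(x^T(u,v)) \rangle = \langle \lambda^s, g(v) \rangle$$
will follow the moment we establish $\langle \lambda^s, w \rangle = 0$ for every $w \in c_0$.

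The auxiliary claim $\ell^s \subseteq c_0^{\perp}$ can be handled in two short steps. First, for each singleton indicator $e_n = \mathbf{1}_{\{n\}}$, I would argue $\langle \lambda^s, e_n \rangle = 0$: were $\langle \lambda^s, e_n \rangle = c \neq 0$, then $\lambda^s - c\,\delta_n$ would still lie in $\ell^{\infty,*}$ while agreeing with $\lambda^s$ off $\{n\}$, so the Yosida--Hewitt decomposition of $\lambda^s$ would contain the countably additive piece $c\,\delta_n \in \ell^1$, contradicting $\lambda^s \in \ell^s$. (Equivalently, one invokes Lemma \ref{lem:Yosida} together with the well-known identification $\ell^s = c_0^{\perp}$.) Second, for any $w \in c_0$, the truncations $w^{(N)} := \sum_{t=1}^{N} w_t e_t$ converge to $w$ in $\ell^{\infty}$-norm, because $\|w - w^{(N)}\|_{\infty} = \sup_{t>N}|w_t| \to 0$; continuity of $\lambda^s$ then gives
$$\langle \lambda^s, w \rangle = \lim_{N \to \infty} \sum_{t=1}^{N} w_t \langle \lambda^s, e_t \rangle = 0.$$
Applying this with the particular $w = g(x^T(u,v)) - g(v) \in c_0$ yields the lemma.

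The only nontrivial ingredient is the vanishing of $\lambda^s$ on singletons; everything else is linearity, continuity, and the norm-approximation of elements of $c_0$ by their truncations. I expect the whole argument to fit in a short paragraph once Lemma \ref{lem:Yosida} is cited, with no computation required beyond the chain displayed above.
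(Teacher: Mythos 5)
Your argument is correct and is essentially the standard proof that the paper itself defers to by citing \cite{pavonionline}: the AI property places $g(x^{T}(u,v))-g(v)$ in $c_0$, and purely finitely additive functionals annihilate $c_0$. The only soft spot is your first justification that $\langle\lambda^s,e_n\rangle=0$ (the uniqueness-of-decomposition reasoning as written does not quite close, since it presupposes that $\lambda^s-c\,\delta_n$ remains purely finitely additive), but your parenthetical fallback to the identification $\ell^s=c_0^{\perp}$ --- or, more directly, the observation that $c\,\delta_n$ would be a nonzero countably additive minorant of $|\lambda^s|$, contradicting pure finite additivity --- repairs it.
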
  
\begin{proof}
    See \cite{pavonionline}(Page 11, Lemma C.1).
\end{proof}
\begin{lemma}\label{lem:ANA_lambdas}
     Let $g:\ell^{\infty}\rightarrow\ell^{\infty}$ be ANA. Given $u,\,v\in\ell^{\infty}$. Moreover, assume that $g(x^T(u,v))$ is uniformly bounded in $\ell^{\infty}$ with respect to $T\in\mathbb{Z}^{+}$. Then for any $\lambda^1\in\ell^{1}$, we have
     $$
     \lim_{T\rightarrow\infty}\langle \lambda^1,g(x^T(u,v))\rangle=\langle\lambda^1,g(u)\rangle.
     $$
\end{lemma}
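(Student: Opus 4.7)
The plan is to recognize the statement as a straightforward application of the dominated convergence theorem, using the counting measure on $\mathbb{Z}^+$ (so that integration against $\lambda^1 \in \ell^1$ is just the absolutely convergent series $\sum_t \lambda^1_t g_t(\cdot)$). Everything needed is already packaged in the two hypotheses: ANA provides pointwise convergence of the integrand along the sequence indexed by $T$, and the uniform $\ell^\infty$ bound on $\{g(x^T(u,v))\}_T$ supplies a summable dominating function.

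More precisely, I would proceed as follows. First, unwind the pairing as $\langle \lambda^1, g(x^T(u,v))\rangle = \sum_{t=0}^{\infty} \lambda^1_t \, g_t(x^T(u,v))$, which is legitimate since $\lambda^1 \in \ell^1$ and $g(x^T(u,v)) \in \ell^\infty$. Second, for each fixed $t \in \mathbb{Z}^+$ invoke the ANA property of $g$ to conclude
\[
\lim_{T \to \infty} \lambda^1_t \, g_t(x^T(u,v)) = \lambda^1_t \, g_t(u).
\]
Third, use the uniform bound: by hypothesis there exists $M > 0$ such that $\|g(x^T(u,v))\|_\infty \le M$ for every $T$, hence
\[
\bigl| \lambda^1_t \, g_t(x^T(u,v)) \bigr| \le M \, |\lambda^1_t|, \qquad \forall\, t,\, T,
\]
and the right-hand side is summable because $\lambda^1 \in \ell^1$. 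Fourth, apply the dominated convergence theorem (for the counting measure) to exchange the limit and the sum, which yields
\[
\lim_{T \to \infty} \sum_{t=0}^{\infty} \lambda^1_t \, g_t(x^T(u,v)) = \sum_{t=0}^{\infty} \lambda^1_t \, g_t(u) = \langle \lambda^1, g(u) \rangle.
\]

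There is no real obstacle here: the only subtlety worth flagging is to confirm that $g(u) \in \ell^\infty$ so that the limit pairing $\langle \lambda^1, g(u)\rangle$ is well defined, but this follows immediately by taking the pointwise limit in $t$ of the uniform bound $|g_t(x^T(u,v))| \le M$ and using the ANA convergence $g_t(x^T(u,v)) \to g_t(u)$ to obtain $|g_t(u)| \le M$ for all $t$. In short, the lemma is the $\ell^1$--$\ell^\infty$ version of the standard observation that countably additive functionals commute with bounded pointwise limits, and the proof is a one-line invocation of dominated convergence once the hypotheses are parsed.
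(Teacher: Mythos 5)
Your proof is correct. The paper does not actually prove this lemma itself (it cites Lemma C.2 of the Pavoni--Sleet--Messner online appendix), but its proof of the sibling Lemma \ref{lem:UANA_lambdas} uses exactly the explicit $\epsilon/2$ truncation-and-tail-estimate argument, which is the unpacked form of your dominated-convergence invocation for the counting measure; the two are the same argument at different levels of packaging, with ANA supplying the pointwise convergence and the uniform $\ell^{\infty}$ bound supplying the summable dominating sequence $M|\lambda^1_t|$.
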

\begin{proof}
    See \cite{pavonionline}(Page 11, Lemma C.2).
\end{proof}
\begin{theorem}\label{thm:linfdual=l1}
    Let $f:\ell^{\infty}\rightarrow\mathbb{R}$ satisfy
    $$
    \lim_{T\rightarrow\infty}f(x^T(u,v))=f(u),
    $$
    for any $u,v\in\ell^{\infty}$. Let $g:\ell^{\infty}\rightarrow\ell^{\infty}$ be AI and ANA, and that there exists $u_0\in \ell^{\infty}$, such that $\sup\lim_{t\rightarrow\infty}g_t(u_0)\le 0$\footnote{This assumption is weaker than the assumption that there exists at least one feasible point to the problem \eqref{equ:def_linfopt}. Indeed, if there exists a feasible point $v_0$ s.t. $g(v_0)\le 0$, then obviously $v_0$ satisfies $\sup\lim_{t\rightarrow\infty}g_t(u_0)\le 0$.}. Assume also that, for any $u,\,v\in\ell^{\infty}$, $g(x^T(u,v))$ is uniformly bounded in $\ell^{\infty}$ with respect to $T\in\mathbb{Z}^{+}$. Then the dual problem of \eqref{equ:def_linfopt} satisfies
    \begin{equation}\label{equ:thm_linfdual=l1}
    d=\sup_{\lambda\in\ell^{\infty,*}_+}\inf_{x\in \ell^{\infty}}f(x)+\langle\lambda,g(x)\rangle=\sup_{\lambda\in\ell^{1}_+}\inf_{x\in \ell^{\infty}}f(x)+\langle\lambda,g(x)\rangle.
    \end{equation}
\end{theorem}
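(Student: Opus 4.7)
\noindent\textbf{Proof proposal for Theorem \ref{thm:linfdual=l1}.}
The plan is to establish the equality by proving two inequalities. The inequality $\sup_{\lambda\in\ell^{1}_+}\inf_x L(x,\lambda)\le\sup_{\lambda\in\ell^{\infty,*}_+}\inf_x L(x,\lambda)=d$ is immediate because $\ell^1_+\subseteq\ell^{\infty,*}_+$ under the canonical embedding. The content of the theorem is the reverse inequality, which I intend to prove by showing that for every $\lambda\in\ell^{\infty,*}_+$ there is an $\lambda^1\in\ell^1_+$ (namely, the $\ell^1$-part of the Yosida-Hewitt decomposition) with $\inf_x L(x,\lambda)\le\inf_x L(x,\lambda^1)$.

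First I would invoke Lemma \ref{lem:Yosida} to decompose $\lambda=\lambda^1+\lambda^s$ with $\lambda^1\in\ell^1_+$ and $\lambda^s\in\ell^s_+$. The key step is to control the singular part acting on $g(u_0)$. Using $\limsup_{t\to\infty} g_t(u_0)\le 0$, for any $\epsilon>0$ the sequence $(g_t(u_0)-\epsilon)_+$ has finite support and therefore lies in $c_0$; since every purely finitely additive $\lambda^s\in\ell^s_+$ annihilates $c_0$, this yields $\langle\lambda^s,(g(u_0)-\epsilon e)_+\rangle=0$ (where $e=(1,1,\ldots)$), whence $\langle\lambda^s,g(u_0)\rangle\le\epsilon\langle\lambda^s,e\rangle$. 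Letting $\epsilon\downarrow 0$ gives $\langle\lambda^s,g(u_0)\rangle\le 0$.

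Next, for an arbitrary $x\in\ell^{\infty}$, I would consider the truncation $x^T:=x^T(x,u_0)$ (equal to $x$ on $\{0,\ldots,T\}$ and to $u_0$ afterwards). By the AI hypothesis combined with Lemma \ref{lem:AI_lambdas},
\begin{equation*}
\langle\lambda^s,g(x^T)\rangle=\langle\lambda^s,g(u_0)\rangle\le 0, \qquad\forall\,T\in\mathbb{Z}^+.
\end{equation*}
Consequently,
\begin{equation*}
\inf_{y\in\ell^{\infty}}L(y,\lambda)\le f(x^T)+\langle\lambda^1,g(x^T)\rangle+\langle\lambda^s,g(x^T)\rangle\le f(x^T)+\langle\lambda^1,g(x^T)\rangle.
\end{equation*}
The uniform boundedness of $g(x^T(x,u_0))$ in $\ell^{\infty}$ and the ANA hypothesis allow me to apply Lemma \ref{lem:ANA_lambdas} to conclude $\langle\lambda^1,g(x^T)\rangle\to\langle\lambda^1,g(x)\rangle$, while the asymptotic-continuity hypothesis on $f$ gives $f(x^T)\to f(x)$. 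Passing to the limit $T\to\infty$ yields $\inf_y L(y,\lambda)\le f(x)+\langle\lambda^1,g(x)\rangle$, and taking the infimum over $x\in\ell^{\infty}$ gives $\inf_y L(y,\lambda)\le\inf_x L(x,\lambda^1)\le\sup_{\mu\in\ell^1_+}\inf_x L(x,\mu)$. Since $\lambda\in\ell^{\infty,*}_+$ was arbitrary, taking the supremum on the left finishes the proof.

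The main technical obstacle is the careful handling of the singular part: verifying that $\lambda^s$ vanishes on sequences that are nonpositive eventually (which is where the $\limsup$ condition on $g(u_0)$ is used together with the $c_0$-annihilation property of $\ell^s$), and justifying the interchange of $\lim_{T\to\infty}$ with the dual pairing for $\lambda^1$ (which is where the uniform boundedness and ANA enter through Lemma \ref{lem:ANA_lambdas}). Everything else is bookkeeping once the decomposition is in place.
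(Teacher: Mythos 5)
Your proposal is correct and follows essentially the same route as the paper's proof: Yosida--Hewitt decomposition of $\lambda$, the AI property (via Lemma \ref{lem:AI_lambdas}) to kill the singular part along the tail-substituted sequences $x^T(u,u_0)$, and the ANA property plus uniform boundedness (via Lemma \ref{lem:ANA_lambdas}) together with the continuity hypothesis on $f$ to pass to the limit $T\to\infty$. The only difference is that you supply an explicit $c_0$-annihilation argument for why $\langle\lambda^s,g(u_0)\rangle\le 0$ follows from $\limsup_{t\to\infty}g_t(u_0)\le 0$, a step the paper's proof asserts without detail; your justification is correct and strengthens the exposition.
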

\begin{proof}
\begin{itemize}
    \item For step 1, we aim to show that
    \begin{equation}\label{proof:linfdual=l1_step1}
    \inf_{x\in \ell^{\infty}}f(x)+\langle\lambda^1+\lambda^s,g(x)\rangle\le \inf_{x\in \ell^{\infty}}f(x)+\langle\lambda^1,g(x)\rangle,\quad\forall \lambda^1\in \ell^1_+,\,\lambda^s\in\ell^{s}_+.
    \end{equation}
    Given $\lambda^1\in\ell^1_{+}$, $\lambda^s\in \ell^{s}_+$. For any $u\in\ell^{\infty}$, according to Lemma \ref{lem:AI_lambdas}, we know that
    $$
    \langle \lambda^s,g(x^T(u,u_0))\rangle=\langle \lambda^s,g(u_0)\rangle\le 0.
    $$
    Therefore,
    \begin{equation}\label{proof:linfdual=l1_step1_mid1}
        \begin{aligned}
        &\inf_{x\in\ell^{\infty}}f(x)+\langle\lambda^1+\lambda^s,g(x)\rangle\\
          \le&  f(x^T(u,u_0))+\langle  \lambda^1+\lambda^s, g(x^T(u,u_0))\rangle\\
          \le& f(x^T(u,u_0))+\langle\lambda^1,g(x^T(u,u_0))\rangle,\quad\forall u\in \ell^{\infty},\,T\in\mathbb{Z}^+.
        \end{aligned}
    \end{equation}
    According to Lemma \ref{lem:ANA_lambdas}, we know that
    $$
    \lim_{T\rightarrow\infty}\langle \lambda^1,g(x^T(u,u_0))\rangle=\langle \lambda^1,g(u)\rangle.
    $$
    Therefore, we take the limit $T\rightarrow\infty$ in \eqref{proof:linfdual=l1_step1_mid1} and obtain
    \begin{equation}\label{proof:linfdual=l1_step1_mid2}
    \begin{aligned}
        &\inf_{x\in\ell^{\infty}}f(x)+\langle\lambda^1+\lambda^s,g(x)\rangle\\
        \le &\lim_{T\rightarrow\infty}f(x^T(u,u_0))+\langle\lambda^1,g(x^T(u,u_0))\rangle\\
        =&f(u)+\langle\lambda^1,g(u)\rangle,\quad\forall u\in \ell^{\infty}.
    \end{aligned}
    \end{equation}
Since $(\lambda^1,\lambda^s)$ in \eqref{proof:linfdual=l1_step1_mid2} is arbitrarily chosen from $(\ell^1_+,\ell^{s}_+)$, we can conclude \eqref{proof:linfdual=l1_step1} from \eqref{proof:linfdual=l1_step1_mid2}.

    \item  For step 2, we utilize Yosida-Hewitt Decomposition to finish the proof. According to Lemma \ref{lem:Yosida}, any $\lambda\in\ell_+^{\infty,*}$ can be decomposed as $\lambda=\lambda^1+\lambda^s$, where $\lambda^1\in\ell_+^1$ and $\lambda^s\in\ell_+^s$. It is then straightforward to conclude \eqref{equ:thm_linfdual=l1} from \eqref{proof:linfdual=l1_step1}.
\end{itemize}
\end{proof}

\begin{remark}
    Under the same assumptions as Theorem \ref{thm:linfdual=l1}, but with the assumption of the existence of $u_0\in \ell^{\infty}$ s.t. $\sup\lim_{t\rightarrow\infty}g_t(u_0)\le 0$ replaced by the \textbf{Slater's condition}, it is shown in \cite{pavonionline}(Page 11, Theorem C.2) that if $(x^*,\lambda^*)\in \ell^{\infty}\times \ell^{\infty,*}$ is a saddle point of $L(x,\lambda)=f(x)+\langle \lambda,g(x)\rangle$, then  $(x^*,\lambda^*)\in \ell^{\infty}\times \ell^{1}$. The properties of the saddle points for the Lagrangian are studied in \cite{dechert1982lagrange} and \cite{pavoni2018dual}, while Theorem \ref{thm:linfdual=l1} addresses the dual problem.
\end{remark}

    In practical applications, it is more common to consider optimization problems with a bounded feasible set $\mathcal{A}\subset\ell^{\infty}$. The requirement for $f$ in Theorem \ref{thm:linfdual=l1} and the ANA property for $g$ are easily satisfied \textbf{uniformly} with respect to  $u\in\mathcal{A},\,v\in\ell^{\infty}$,  s.t. $x^T(u,v)\in\mathcal{A}$ based on the existence of a discount factor $0<\beta<1$ over periods. However, the AI property for $g$ is mainly satisfied by problems with only forward constraints and backward constraints that only affect a few periods. Therefore, for problems with a bounded feasible set $\mathcal{A}\subset \ell^{\infty}$, a discount factor $0<\beta<1$ across periods, forward-looking constraints, and general \textbf{backward-looking constraints}, different assumptions should be made.

    Let $\mathcal{A}\subset\ell^{\infty}$. For any $u\in\mathcal{A}$, we define
    $$
    \mathcal{B}_T(u)=\{v\in\ell^{\infty}|x^T(u,v)\in\mathcal{A}\}.
    $$
    In this section, we consider a slightly different problem to \eqref{equ:def_linfopt} as following
    \begin{equation}\label{equ:def_linfopt_bddA}
\begin{aligned}
     &\inf_{x\in \mathcal{A}}f(x),\\
    \text{s.t. }&g(x)\le 0.
    \end{aligned}
    \end{equation}
    
\begin{definition}\label{def:UANA}
    We say the functional $g:\mathcal{A}\rightarrow\ell^{\infty}$ is \textbf{Uniformly Asymptotically Non-Anticipatory(UANA)}, if for any $t\in \mathbb{Z}^+$ and $u\in\mathcal{A}$,
    $$
    \lim_{T\rightarrow \infty}\sup_{v\in\mathcal{B}_T(u)}|g_t(x^T(u,v))-g_t(u)|=0
    $$
    \end{definition}
\begin{lemma}\label{lem:UANA_lambdas}
     Let $g:\mathcal{A}\rightarrow\ell^{\infty}$ be UANA. Given $u\in\mathcal{A}$, assume additionaly that $g(x^T(u,v))$ is uniformly bounded in $\ell^{\infty}$ with respect to $T\in\mathbb{Z}^{+}$ and $v\in \mathcal{B}_T(u)$. Then for any $\lambda^1\in\ell^{1}$, we have
     $$
     \lim_{T\rightarrow\infty}\sup_{v\in\mathcal{B}_T(u)}|\langle \lambda^1,g(x^T(u,v))\rangle-\langle\lambda^1,g(u)\rangle|=0
     $$
     \end{lemma}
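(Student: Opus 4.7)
The plan is to mimic the proof of Lemma \ref{lem:ANA_lambdas} but upgrade the pointwise convergence to uniform convergence over $v \in \mathcal{B}_T(u)$ by exploiting the $\ell^1$ summability of $\lambda^1$ together with the uniform boundedness hypothesis on $g(x^T(u,v))$. The standard trick is to split the pairing $\langle \lambda^1, g(x^T(u,v)) - g(u)\rangle$ into a finite head (where UANA gives termwise uniform control) and an $\ell^1$ tail (where summability of $\lambda^1$ plus the uniform $\ell^\infty$-bound on $g$ makes the contribution uniformly small in both $T$ and $v$).

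More concretely, fix $u \in \mathcal{A}$ and let $M := \sup_{T \in \mathbb{Z}^+,\, v \in \mathcal{B}_T(u)} \|g(x^T(u,v))\|_\infty + \|g(u)\|_\infty < \infty$ by hypothesis. Given $\varepsilon > 0$, since $\lambda^1 \in \ell^1$, there exists $N \in \mathbb{Z}^+$ such that
\[
\sum_{t > N} |\lambda^1_t| \;<\; \frac{\varepsilon}{2M}.
\]
This bounds the tail contribution uniformly:
\[
\sup_{T,\, v \in \mathcal{B}_T(u)} \Bigl| \sum_{t > N} \lambda^1_t \bigl( g_t(x^T(u,v)) - g_t(u)\bigr) \Bigr| \;\le\; M \sum_{t > N} |\lambda^1_t| \;<\; \frac{\varepsilon}{2}.
\]
For the head, UANA gives, for each $t \le N$,
\[
\lim_{T \to \infty} \sup_{v \in \mathcal{B}_T(u)} |g_t(x^T(u,v)) - g_t(u)| \;=\; 0,
\]
and since $\{0,1,\dots,N\}$ is finite, there exists $T_0$ such that for all $T \ge T_0$,
\[
\sup_{v \in \mathcal{B}_T(u)} |g_t(x^T(u,v)) - g_t(u)| \;\le\; \frac{\varepsilon}{2(\|\lambda^1\|_1 + 1)}, \quad \forall t \le N.
\]
Summing yields $\sup_{v \in \mathcal{B}_T(u)} \bigl| \sum_{t \le N} \lambda^1_t (g_t(x^T(u,v)) - g_t(u)) \bigr| \le \varepsilon/2$ for $T \ge T_0$, and combining with the tail estimate gives the claimed uniform limit.

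The proof is essentially routine given UANA and $\ell^1$ summability; the only subtlety, and arguably the only place to be careful, is the order of quantifiers: the tail estimate must be chosen \emph{before} the finite-$N$ uniform convergence step, and one must verify that the supremum over $v \in \mathcal{B}_T(u)$ commutes with the finite sum (which is immediate since $\sum_{t \le N}$ is a finite linear combination). No deep idea is required beyond reproducing the classical argument that $\ell^1$-weak-star limits of uniformly bounded sequences pass through, adapted to handle the supremum over $v$.
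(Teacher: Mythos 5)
Your proof is correct and follows essentially the same head/tail decomposition as the paper: the tail is controlled by $\ell^1$-summability of $\lambda^1$ together with the uniform $\ell^\infty$ bound, and the finitely many head terms are controlled uniformly via UANA. The only cosmetic difference is your use of $\|\lambda^1\|_1+1$ in the denominator, which harmlessly covers the degenerate case $\lambda^1=0$ that the paper leaves implicit.
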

\begin{proof}
    Since $g(x^T(u,v))$ is uniformly bounded in $\ell^{\infty}$, there exists $M>0$, s.t.
    $$
    \|g(x^T(u,v))-g(u)\|_{\ell^{\infty}}<M, \quad \forall T\in \mathbb{Z}^+,\,v\in \mathcal{B}_T(u).
    $$
    Since $\lambda^1\in \ell^1$, for any $\epsilon>0$, there exists $\bar{t}>0$, such that
    $$
    \sum_{t=\bar{t}+1}^{\infty}|\lambda^1_t|<\frac{\epsilon}{2M}.
    $$
    Since $g$ is UANA, there exists $\bar{T}>0$, such that
    $$
    |g_t(x^T(u,v))-g_t(u)|<\frac{\epsilon}{2\|\lambda^1\|_{\ell^1}},\quad \forall  T\ge\bar{T},\,v\in\mathcal{B}_T(u),\,t\le\bar{t}.
    $$
    Therefore,
    $$
    \begin{aligned}
    &|\langle \lambda^1, g(x^T(u,v))-g(u)\rangle|\\
    \le&\sum_{t=1}^{\bar{t}}\|\lambda^1\|_{\ell^1}|g_t(x^T(u,v))-g_t(u)|+\sum_{t=\bar{t}+1}^{\infty}|\lambda_t^1|\|g(x^T(u,v))-g(u)\|_{\ell^{\infty}}\\
    \le&\|\lambda^1\|_{\ell^1}\frac{\epsilon}{2\|\lambda^1\|_{\ell}^1}+\frac{\epsilon}{2M}\cdot M=\epsilon,\quad \forall T\ge \bar{T},\, v\in \mathcal{B}_T(u). 
    \end{aligned}
    $$
    Since $\epsilon$ is chosen arbitrarily, we finish the proof.
\end{proof}
\begin{assumption}\label{ass:replace_AI}
    We assume that for any $u\in \mathcal{A}$ and $\bar{T}\in\mathbb{Z}^+$, there exists $T\ge \bar{T}$ and $v\in\mathcal{B}_T(u)$, s.t.
    $$
    \limsup_{t\rightarrow\infty}g_t(x^T(u,v))\le 0.
    $$
\end{assumption}
\begin{theorem}\label{thm:linfdual=l1_adaption}
    Let $f:\mathcal{A}\rightarrow\mathbb{R}$ satisfy
    \begin{equation}\label{cond:f_dual=l1_adaption}
    \lim_{T\rightarrow\infty}\sup_{v\in\mathcal{B}_T(u)}|f(x^T(u,v))-f(u)|=0.
    \end{equation}
    Let $g:\mathcal{A}\rightarrow\ell^{\infty}$ be UANA, and satisfy Assumption \ref{ass:replace_AI}. Assume additionaly that, for any $u\in\mathcal{A}$, $g(x^T(u,v))$ is uniformly bounded in $\ell^{\infty}$ with respect to $T\in\mathbb{Z}^{+}$ and $v\in\mathcal{B}_T(u)$. Then the dual problem of \eqref{equ:def_linfopt} satisfies
    \begin{equation}\label{equ:thm_linfdual=l1_adaption}
    d=\sup_{\lambda\in\ell^{\infty,*}_+}\inf_{x\in \mathcal{A}}f(x)+\langle\lambda,g(x)\rangle=\sup_{\lambda\in\ell^{1}_+}\inf_{x\in \mathcal{A}}f(x)+\langle\lambda,g(x)\rangle.
    \end{equation}
\end{theorem}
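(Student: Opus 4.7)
The plan is to mirror the proof of Theorem \ref{thm:linfdual=l1}, with two substitutions that account for the bounded feasible set $\mathcal{A}$ and the weaker asymptotic hypotheses. First, I would invoke the Yosida--Hewitt decomposition (Lemma \ref{lem:Yosida}) to write an arbitrary $\lambda \in \ell^{\infty,*}_+$ as $\lambda = \lambda^1 + \lambda^s$ with $\lambda^1 \in \ell^1_+$ and $\lambda^s \in \ell^s_+$. Since $\ell^1_+ \subset \ell^{\infty,*}_+$, the inequality $\sup_{\lambda \in \ell^1_+} \le \sup_{\lambda \in \ell^{\infty,*}_+}$ is immediate, so the content of \eqref{equ:thm_linfdual=l1_adaption} is the reverse, which in turn reduces to showing
\begin{equation}\label{eq:plan_key_ineq}
\inf_{x \in \mathcal{A}} f(x) + \langle \lambda^1 + \lambda^s, g(x)\rangle \;\le\; \inf_{x \in \mathcal{A}} f(x) + \langle \lambda^1, g(x)\rangle
\end{equation}
for every pair $(\lambda^1,\lambda^s) \in \ell^1_+ \times \ell^s_+$.

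To prove \eqref{eq:plan_key_ineq}, fix $u \in \mathcal{A}$ and $\epsilon > 0$; the goal is to construct a point of the form $x^T(u,v) \in \mathcal{A}$ whose Lagrangian value under $\lambda$ is close to $f(u) + \langle \lambda^1, g(u)\rangle$. By \eqref{cond:f_dual=l1_adaption} and Lemma \ref{lem:UANA_lambdas}, there exists $\bar T$ such that for every $T \ge \bar T$ and every $v \in \mathcal{B}_T(u)$,
\[
|f(x^T(u,v)) - f(u)| < \epsilon \quad \text{and} \quad |\langle \lambda^1, g(x^T(u,v)) - g(u)\rangle| < \epsilon.
\]
Assumption \ref{ass:replace_AI} then furnishes some $T \ge \bar T$ and $v \in \mathcal{B}_T(u)$ with $\limsup_{t\to\infty} g_t(x^T(u,v)) \le 0$. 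For such a choice I would argue
\begin{equation}\label{eq:plan_ls_bound}
\langle \lambda^s, g(x^T(u,v))\rangle \;\le\; 0
\end{equation}
using the characterizing property of purely finitely additive positive measures: $\lambda^s$ annihilates $c_0$ (since it vanishes on every singleton and hence on all finitely supported sequences, which are dense in $c_0$), and therefore satisfies $\langle \lambda^s, y\rangle \le \|\lambda^s\| \limsup_{t\to\infty} y_t$ for every $y \in \ell^\infty$. The latter inequality is obtained by subtracting a constant multiple of $e_{\mathbb{Z}^+}$, truncating the finitely supported part (on which $\lambda^s$ vanishes), and bounding the tail by $\limsup y_t + \epsilon'$ for arbitrary $\epsilon' > 0$. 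Combining the three estimates gives
\[
f(x^T(u,v)) + \langle \lambda, g(x^T(u,v))\rangle \;\le\; f(u) + \langle \lambda^1, g(u)\rangle + 2\epsilon,
\]
and taking the infimum over $u \in \mathcal{A}$ and then $\epsilon \to 0$ yields \eqref{eq:plan_key_ineq}.

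The main obstacle is the replacement for Lemma \ref{lem:AI_lambdas}, which in the original proof allowed one to collapse $\langle \lambda^s, g(x^T(u,u_0))\rangle$ to $\langle \lambda^s, g(u_0)\rangle$ through a single ``asymptotically feasible'' point $u_0$ chosen independently of $u$. Here AI fails, so the modifier $v$ certifying $\limsup g_t \le 0$ must depend on $u$ (and may require $T$ to be further enlarged); Assumption \ref{ass:replace_AI} is tailored to permit this point-dependent choice while still allowing $T \ge \bar T$, where $\bar T$ is dictated by the uniform $f$- and $\lambda^1$-continuity estimates. The only delicate bookkeeping is to ensure these three selections (for $f$, for $\lambda^1$, and for $\lambda^s$ via Assumption \ref{ass:replace_AI}) can be made compatible in a single $(T,v)$, which the ``for all $T\ge\bar T$'' form of UANA/continuity and the ``there exists $T\ge\bar T$'' form of Assumption \ref{ass:replace_AI} are designed to guarantee. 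I do not anticipate any further technical difficulty; the required fact about purely finitely additive measures is standard but not isolated as a lemma in the paper, so it would be worth recording it explicitly in the course of the proof.
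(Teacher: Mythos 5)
Your proposal is correct and follows essentially the same route as the paper's proof: Yosida--Hewitt decomposition, the uniform estimates from \eqref{cond:f_dual=l1_adaption} and Lemma \ref{lem:UANA_lambdas} to fix $\bar T$, and Assumption \ref{ass:replace_AI} to select a compatible $(T,v)$ on which $\langle \lambda^s, g(x^T(u,v))\rangle\le 0$. The only difference is that you spell out the fact that a positive purely finitely additive functional annihilates $c_0$ and is controlled by the $\limsup$, which the paper uses implicitly; making it explicit is a reasonable addition but not a departure from the argument.
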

\begin{proof}
The spirit of the proof is indeed the same as the proof for Theorem \ref{thm:linfdual=l1}. We again divided it into two steps.
\begin{itemize}
    \item For step 1, we aim to show that
    \begin{equation}\label{proof:linfdual=l1_step1_adaption}
    \inf_{x\in \mathcal{A}}f(x)+\langle\lambda^1+\lambda^s,g(x)\rangle\le \inf_{x\in \mathcal{A}}f(x)+\langle\lambda^1,g(x)\rangle,\quad\forall \lambda^1\in \ell^1_+,\,\lambda^s\in\ell^{s}_+.
    \end{equation}
    Given $\lambda^1\in\ell^1_{+}$, $\lambda^s\in \ell^{s}_+$. For any $u\in\mathcal{A}$ and $\bar{T}>0$, according Assumption \ref{ass:replace_AI}, we know that there exists $T\ge \bar{T}$ and $v(u,\bar{T})\in\mathcal{B}_T(u)$, s.t.
    $$
    \langle \lambda^s,g(x^T(u,v(u,\bar{T})))\rangle\le 0.
    $$
    Therefore,
    \begin{equation}\label{proof:linfdual=l1_step1_mid1_adaption}
        \begin{aligned}
        &\inf_{x\in\mathcal{A}}f(x)+\langle\lambda^1+\lambda^s,g(x)\rangle\\
          \le&  f(x^T(u,v(u,\bar{T})))+\langle  \lambda^1+\lambda^s,g(x^T(u,v(u,\bar{T})))\rangle\\
          \le& f(x^T(u,v(u,\bar{T})))+\langle\lambda^1,g(x^T(u,v(u,\bar{T})))\rangle,\quad\forall u\in \mathcal{A},\,\bar{T}\in\mathbb{Z}^+.
        \end{aligned}
    \end{equation}
    According to Lemma \ref{lem:UANA_lambdas}, and the assumption that 
     $$
    \lim_{T\rightarrow\infty}\sup_{v\in\mathcal{B}_T(u)}|f(x^T(u,v))-f(u)|=0,
    $$
    we take the limit $T\rightarrow\infty$ in \eqref{proof:linfdual=l1_step1_mid1_adaption} and obtain
    \begin{equation}\label{proof:linfdual=l1_step1_mid2_adaption}
    \begin{aligned}
        &\inf_{x\in\mathcal{A}}f(x)+\langle\lambda^1+\lambda^s,g(x)\rangle\\
        \le
        &f(u)+\langle\lambda^1,g(u)\rangle,\quad\forall u\in \mathcal{A}.
    \end{aligned}
    \end{equation}
Since $(\lambda^1,\lambda^s)$ in \eqref{proof:linfdual=l1_step1_mid2_adaption} is arbitrarily chosen from $(\ell^1_+,\ell^{s}_+)$, we can conclude \eqref{proof:linfdual=l1_step1_adaption} from \eqref{proof:linfdual=l1_step1_mid2_adaption}.

    \item  For step 2, we utilize Yosida-Hewitt Decomposition to finish the proof. According to Lemma \ref{lem:Yosida}, any $\lambda\in\ell_+^{\infty,*}$ can be decomposed as $\lambda=\lambda^1+\lambda^s$, where $\lambda^1\in\ell_+^1$ and $\lambda^s\in\ell_+^s$. It is then straightforward to conclude \eqref{equ:thm_linfdual=l1_adaption} from \eqref{proof:linfdual=l1_step1_adaption}.
\end{itemize}
\end{proof}
 The following theorem establishes the existence for the dual problem.

\begin{theorem}\label{dual_for}
    Under Assumption \ref{ass:dynamic}, the dual Lagrangian problem of \eqref{equ:CK_equiv} can be formulated as
    \begin{equation}\label{equ:dual_CK}d=\inf_{(\lambda_t^i(h^t))\in\Lambda}\sup_{(a(s^t))\in \tilde{\mathcal{A}}^{\infty}(x_0)}  L((a(s^t)),(\lambda^i(h^t));(\gamma^i),x_0,s_0),
    \end{equation}
    where $L$ is the Lagrangian functional defined in \eqref{equ:Lag_CK}.
\end{theorem}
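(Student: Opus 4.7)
The plan is to invoke Theorem \ref{thm:linfdual=l1_adaption} on the reformulated problem \eqref{equ:CK_reformulate}, with $\mathcal{A}=\tilde{\mathcal{A}}^{\infty}(x_0)$ viewed as a bounded subset of $\ell^{\infty}$, and with $f$, $g$ as in \eqref{equ:deff_lot}, \eqref{equ:defg_lot}. That theorem collapses the $\ell^{\infty,*}_+$ dual to an $\ell^1_+$ dual; afterwards, a change of variables absorbing the weights $\beta^t\pi^t(s^t\mid s_0)$ identifies the resulting $\ell^1_+$ multipliers with elements of $\Lambda$, and recognizes the inner supremum as $L((a(s^t)),(\lambda^i(h^t));(\gamma^i),x_0,s_0)$ from \eqref{equ:Lag_CK}.

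First I would check the structural hypotheses of Theorem \ref{thm:linfdual=l1_adaption}. Boundedness of $f$ on $\mathcal{A}$ is immediate from boundedness of $r,g^i$ and $\beta<1$. The continuity-type condition \eqref{cond:f_dual=l1_adaption} follows from the tail estimate $|f(u)-f(x^T(u,v))|\le 2(\|r\|_\infty+\sum_i|\gamma^i|\|g^i\|_\infty)\beta^{T+1}/(1-\beta)$, which is uniform in $v$. Uniform boundedness of $g(x^T(u,v))$ in $T$ and $v\in\mathcal{B}_T(u)$ follows similarly because each coordinate $g_{t,h^t,i}$ is an indicator times a bounded quantity. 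Assumption \ref{ass:replace_AI} holds by the fifth part of Assumption \ref{ass:dynamic}: given any $u\in\tilde{\mathcal{A}}^{\infty}(x_0)$ and any $\bar{T}$, concatenate the prefix of $u$ up to some $T\ge\bar T$ with a feasible continuation from the resulting state $x(s^{T+1})$ (whose existence is guaranteed by the fifth condition in Assumption \ref{ass:dynamic}); then for $t$ large, the continuation satisfies the forward-looking constraints, so $\limsup_{t\to\infty}g_t(x^T(u,v))\le 0$.

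The main obstacle is verifying the UANA property for $g$ defined in \eqref{equ:defg_lot}. Fix $t\in\mathbb{N}$, $h^t\in\mathcal{H}^t$, $i\in\{1,\ldots,I\}$, and $u\in\mathcal{A}$. For $T\ge t$ and any $v\in\mathcal{B}_T(u)$, the indicator factor $1_{\{(\tilde a_0,\ldots,\tilde a_{t-1})=(a(s_0),\ldots,a(s^{t-1}))\}}$ depends only on coordinates $\le T$ and therefore equals the corresponding factor for $u$. The only discrepancy between $g_{t,h^t,i}(x^T(u,v))$ and $g_{t,h^t,i}(u)$ comes from the tail sum, and is bounded by
\begin{equation*}
\Bigl|\mathbb{E}_{s_t}\!\!\sum_{n=T-t+1}^{\infty}\!\!\beta^{n}\bigl(g^i(x(s^{t+n},\cdot),a(s^{t+n}),s_{t+n})|_{x^T(u,v)}-g^i(\cdot)|_{u}\bigr)\Bigr|\le \tfrac{2\|g^i\|_\infty\,\beta^{T-t+1}}{1-\beta},
\end{equation*}
which vanishes as $T\to\infty$ uniformly in $v\in\mathcal{B}_T(u)$. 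This yields UANA.

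With all hypotheses verified, Theorem \ref{thm:linfdual=l1_adaption} gives
\begin{equation*}
d=\sup_{\lambda\in\ell^1_+}\inf_{(a(s^t))\in\tilde{\mathcal{A}}^{\infty}(x_0)}\!\bigl[-f((a(s^t)))-\langle\lambda,g((a(s^t)))\rangle\bigr],
\end{equation*}
(after adjusting signs because \eqref{equ:CK_reformulate} is a maximization). Indexing $\lambda$ by $(t,h^t,i)$ and writing its components as $\beta^t\pi^t(s^t\mid s_0)\lambda^i(h^t)$ transforms the pairing $\langle\lambda,g\rangle$ into the weighted sum appearing inside $L$ in \eqref{equ:Lag_CK}; the summability condition $\lambda\in\ell^1_+$ becomes exactly the defining condition of $\Lambda$ in \eqref{Lag:space}. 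Flipping the sign convention back to the maximization form yields the stated identity \eqref{equ:dual_CK}.
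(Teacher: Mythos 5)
Your proposal is correct and follows essentially the same route as the paper's proof: reformulate \eqref{equ:CK_equiv} via $f$ and $g$ from \eqref{equ:deff_lot}--\eqref{equ:defg_lot}, verify the hypotheses of Theorem \ref{thm:linfdual=l1_adaption} (with Assumption \ref{ass:replace_AI} supplied by the fifth part of Assumption \ref{ass:dynamic}), and apply that theorem. You supply explicit tail estimates for the UANA property and condition \eqref{cond:f_dual=l1_adaption}, and make explicit the final reindexing of the $\ell^1_+$ multipliers by the weights $\beta^t\pi^t(s^t\mid s_0)$ to land in $\Lambda$, where the paper simply asserts these steps are direct; apart from a harmless sign-orientation slip in stating the conclusion of Assumption \ref{ass:replace_AI} (which your later sign adjustment resolves), the argument matches the paper's.
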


\begin{proof}
 The problem \eqref{equ:CK_equiv} can be reformulated as
\begin{equation}\label{equ:CK_reformulate}
\begin{aligned}
    &\max_{(a(s^t))\in\mathcal{\tilde{A}}^{\infty}(x_0)\subset\mathcal{A}^{\infty}} f((a(s^t))),\\
    \textbf{s.t. }&g((a(s^t)))\ge 0,
\end{aligned}
\end{equation}
where $f$ and $g$ are defined in \eqref{equ:deff_lot} and \eqref{equ:defg_lot}.
It then suffices to verify all the conditions in Theorem \ref{thm:linfdual=l1_adaption}:
 Due to $0<\beta<1$, and the boundedness of $r$ and $g^i$, it is direct to verify that $f$ satisfies \eqref{cond:f_dual=l1_adaption}, $g$ is UANA, and  $g$ is bounded in $\ell^{\infty}$.
 For any $u=(u_t(s^t))\in \tilde{\mathcal{A}}^{\infty}(x_0)$, we denote $x^u_t$ the state induced by $u$. For any  $T\in \mathbb{N}, s^T\in\mathcal{S}^T$, according to the fifth assumption in Assumption \ref{ass:dynamic}, the problem with $(x_0'=x^u_T(s^T),s_0'=s_{T+1})$ has a feasible point. Therefore, there exists $v=(v_t(s^t))$, s.t. $v_t(s^T,s^{t-T})_{t\ge T}\in \tilde{A}^{\infty}(x_T^u(s^T))$ for any $s^T\in\mathcal{S}^T$, such that when considering $a^T(u,v)=([a^T(u,v)]_t(s^t))\in \tilde{\mathcal{A}}^{\infty}(x_0)$ defined as
    $$
    [a^T(u,v)]_t(s^t)=\begin{cases}
        u_t(s^t),&t\le T;\\
        v_t(s^t),&t>T,
    \end{cases}
    $$
$(g(a^T(u,v)))_{t,\,h^t,\,i}\ge 0$ when $t\ge T$. Therefore, Theorem \ref{thm:linfdual=l1} applies.
\end{proof}

Note that in this formulation, the multipliers that correspond to the forward looking constraint starting that time $t$ depend on the history up to time $t$ but cannot be conditioned on the action taken at time $t$.

\section{Relation to \cite{bloise2022negishi}} \label{Bloise}
We consider a one-principle one-agent problem
$$
     \begin{aligned}
			&V(x_0,s_0,\mu_0)=\sup_{\{a_t\}}\mathbb{E}_0\sum_{t=0}^{\infty}\beta^t(u(x_t,a_t,s_t)+\mu _0v(x_t,a_t,s_t))\label{pp}\\
		\textbf{s.t. }&x_{t+1}=\zeta(x_t,a_t,s_{t}),\quad p(x_t,a_t,s_t)\ge 0,\quad \forall t\ge 0,\,s^t\in\mathcal{S}^t\label{dynamic_constraint}\\
		&\mathbb{E}_t\sum_{n=0}^{\infty}\beta^nv(x_{t+n},a_{t+n},s_{t+n})\ge \bar{v},\quad\forall t\ge 0,\,s^t\in\mathcal{S}^t.\label{pp_lookingforward}
	\end{aligned}
    $$
   We define
    $$
    W^1(\mathbb{E}_su',x,a,s)=u(x,a,s)+\beta \mathbb{E}_su',\, W^2(\mathbb{E}_sv',x,a,s)=v(x,a,s)+\beta \mathbb{E}_sv'.
    $$
    Applying the Negishi's method in \cite{bloise2022negishi}, we have the recursive formula as follows
    $$
    \begin{aligned}
&J(\theta^1,\theta^2,x,s)=\sup_{a,u',v'}\theta^1W^1(\mathbb{E}_su',x,a,s)+\theta^2W^2(\mathbb{E}_sv',x,a,s)),\\
    \textbf{s.t. }&p(x,a,s)\ge 0, x'=\zeta(x,a,s),\\
    &v(x,a,s)+\beta \mathbb{E}_{s}v'\ge \bar{v},\\
    &(u'(x',s'),v'(x',s'))\in(\mathcal{U}(x',s'))=\{\tilde\theta^1u'(x',s')+\tilde\theta^2v'(x',s')\le J(\tilde{\theta}^1,\tilde\theta^2,x',s'),\,\forall \tilde\theta^1,\tilde\theta^2\ge 0\}
    \end{aligned}
    $$

    It is straightforward to see that $\mathcal{U}(x',s')$ is closed and convex for any $x'\in\mathcal{X},\,s'\in\mathcal{S}$. Furthermore, $J$ is homogeneous of degree 1 w.r.t. $(\theta^1,\theta^2)$, i.e. $J(t\theta^1,t\theta^2,x,s)=tJ(\theta^1,\theta ^2,x,s)$ for all $t\ge 0$; and it is straightforward to see that $J$ is convex w.r.t. $(\theta_1,\theta_2)$. Fix $\theta^1=1$, We define $D(\theta^2,x,s)=J(1,\theta^2,x,s)$. Then $D$ is convex w.r.t. $\theta^2$. The Negishi's recursive formula for $D$ is written
    \begin{equation}\label{appD_NegishiD}
    \begin{aligned}
&D(\theta^2,x,s)=\sup_{a,u',v'}W^1(\mathbb{E}_su',x,a,s)+\theta^2W^2(\mathbb{E}_sv',x,a,s)),\\
    \textbf{s.t. }&p(x,a,s)\ge 0, x'=\zeta(x,a,s),\\
    &v(x,a,s)+\beta \mathbb{E}_{s}v'\ge \bar{v},\\
    &(u'(x',s'),v'(x',s'))\in(\mathcal{U}(x',s'))=\{u'(x',s')+\tilde{\theta}^2v'(x',s')\le D(\tilde{\theta}^2,x',s'),\,\forall \tilde{\theta}^2\ge 0\}
    \end{aligned}
    \end{equation}
    We define the set of all feasible $(a,u',v')$ as $\mathcal{F}$. Problem \eqref{appD_NegishiD} can be rewritten as
     \begin{equation}\label{D_rec}
    D(\theta^2,x,s)=\sup_{(a,u',v')\in\mathcal{F}}F(a,u,v'):={u(x,a,s)+\theta^2v(x,a,s)+\beta \mathbb{E}_su'+\theta^2\beta\mathbb{E}_sv'}
    \end{equation}
    
\begin{enumerate}
    \item \textbf{We show that 
    \begin{equation}\label{appD_Negishi_step1}
    D(\theta^2,x,s)\le \sup_{a\in\tilde{\mathcal{A}}(x,s)}\inf_{\lambda\ge0}u(x,a,s)+\theta^2(v(x,a,s)+\lambda(v(x,a,s)-\bar{v})+\beta\mathbb{E}_sD(\theta^2+\lambda,x,s).
    \end{equation}}
    Noticing that $\forall (a,u',v')\in\mathcal{F}$ and
    $\lambda\ge 0$, we have
    $$
    \begin{aligned}
    F(a,u',v')&=u(x,a,s)+\theta^2v(x,a,s)+\beta \mathbb{E}_su'+\theta^2\beta\mathbb{E}_sv'\\
    &\le u(x,a,s)+\theta^2v(x,a,s)+\lambda(v(x,a,s)+\beta\mathbb{E}_sv'-\bar{v})+\beta\mathbb{E}_su'+\theta^2\beta\mathbb{E}_sv'
    \\
    &= u(x,a,s)+\theta^2(v(x,a,s)+\lambda(v(x,a,s)-\bar{v})+\beta\mathbb{E}_s(u'+(\theta^2+\lambda)v')\\
    &\le u(x,a,s)+\theta^2(v(x,a,s)+\lambda(v(x,a,s)-\bar{v})+\beta\mathbb{E}_sD(\theta^2+\lambda,x',s').
    \end{aligned}
    $$
    Since $\lambda$ is arbitrarily chosen, we have
    $$
    F(a,u',v')\le \inf_{\lambda}u(x,a,s)+\theta^2(v(x,a,s)+\lambda(v(x,a,s)-\bar{v})+\beta\mathbb{E}_sD(\theta^2+\lambda,x,s),
    $$
    for all $(a,u',v')\in\mathcal{F}$, implying that
    $$
    D(\theta^2,x,s)=\sup_{(a,u',v')\in\mathcal{F}}F(a,u',v')\le \sup_{(a,u',v')\in\mathcal{F}}\inf_{\lambda}u(x,a,s)+\theta^2(v(x,a,s)+\lambda(v(x,a,s)-\bar{v})+\beta\mathbb{E}_sD(\theta^2+\lambda,x,s)
    $$
    Therefore,
       $$
    D(\theta^2,x,s)\le \sup_{a\in\tilde{\mathcal{A}}(x,s)}\inf_{\lambda\ge0}u(x,a,s)+\theta^2(v(x,a,s)+\lambda(v(x,a,s)-\bar{v})+\beta\mathbb{E}_sD(\theta^2+\lambda,x,s).
    $$
    \item \textbf{We show that}
    \begin{equation}\label{appD_Negishi_step2}
    D(\theta^2,x,s)\ge \sup_{a\in\tilde{\mathcal{A}}(x,s)}\inf_{\lambda\ge0}u(x,a,s)+\theta^2(v(x,a,s)+\lambda(v(x,a,s)-\bar{v})+\beta\mathbb{E}_sD(\theta^2+\lambda,x,s).
    \end{equation}
By the definition of $D$, for any $x'\in\mathcal{X},\,s'\in\mathcal{S},\,\lambda\ge 0$, there exists $(u'_{\lambda}(x',s'),v'_{\lambda}(x',s'))\in\mathcal{U}(x',s')$, s.t. $D(\lambda,x',s')=u_{\lambda}'(x',s')+\lambda v_{\lambda}(x',s')$. For any $a\in\tilde{\mathcal{A}}(x,s)$, we have
    $$
    \begin{aligned}
&\inf_{\lambda}u(x,a,s)+\theta^2v(x,a,s)+\lambda(v(x,a,s)-\bar{v})+\beta\mathbb{E}_sD(\theta^2+\lambda,x',s')\\
=&\inf_{\lambda} u(x,a,s)+\theta^2v(x,a,s)+\lambda(v(x,a,s)-\bar{v})+\beta\mathbb{E}_s(u'_{\theta_2+\lambda}+(\theta^2+\lambda)v'_{\theta_2+\lambda}))\\
=&\inf_{\lambda} u(x,a,s)+\theta^2v(x,a,s)+\lambda(v(x,a,s)+\beta\mathbb{E}_sv'_{\theta_2+\lambda}-\bar{v})+\beta\mathbb{E}_s(u'_{\theta_2+\lambda}+\theta^2v'_{\theta_2+\lambda})).
\end{aligned}
    $$
If $(a,u'_{\theta^2},v'_{\theta^2})\in\mathcal{F}$, then 
\begin{equation}\label{appD_Negishi_step2mid}
\begin{aligned}
    &\inf_{\lambda} u(x,a,s)+\theta^2v(x,a,s)+\lambda(v(x,a,s)+\beta\mathbb{E}_sv'_{\theta_2+\lambda}-\bar{v})+\beta\mathbb{E}_s(u'_{\theta_2+\lambda}+\theta^2v'_{\theta_2+\lambda}))\\
    \le &u(x,a,s)+\theta^2v(x,a,s)+\beta\mathbb{E}_s(u'_{\theta^2}+\theta^2v'_{\theta^2})\le D(\theta^2,x,s).
\end{aligned}
\end{equation}
If $(a,u'_{\theta^2},v'_{\theta^2})\not\in\mathcal{F}$ and then one of the following happens:
\begin{itemize}
    \item For all $s'\in\mathcal{S}$, $(u'(\zeta(x,a,s),s'),v'(\zeta(x,a,s),s'))\in\mathcal{U}(\zeta(x,a,s),s')$, $(a,u',v')\not\in\mathcal{F}$. According to the closedness of $\mathcal{U}$, there exists $\underline{L}> 0$, s.t.
    $$
v(x,a,s)+\beta\mathbb{E}_sv'_{\theta_2+\lambda}-\bar{v}\le  \underline{L},
    $$
    implying that
    $$
    \inf_{\lambda} u(x,a,s)+\theta^2v(x,a,s)+\lambda(v(x,a,s)+\beta\mathbb{E}_sv'_{\theta_2+\lambda}-\bar{v})+\beta\mathbb{E}_s(u'_{\theta_2+\lambda}+\theta^2v'_{\theta_2+\lambda}))=-\infty.
    $$
    \item We could find a $\lambda^*>0$, s.t. $v'_{\theta_2+\lambda^*}$ binds the incentive constraint $v(x,a,s)+\beta\mathbb{E}_sv'_{\theta_2+\lambda^*}-\bar{v}=0$ due to the convexity of $D$. Similar arguments gives the same inequality \eqref{appD_Negishi_step2mid}.
\end{itemize}
Therefore,
    $$
\sup_{a\in\tilde{\mathcal{A}}(x,s)}\inf_{\lambda}u(x,a,s)+\theta^2v(x,a,s)+\lambda(v(x,a,s)-\bar{v})+\beta\mathbb{E}_sD(\theta^2+\lambda,x',s')\le D(\theta^2,x,s).
    $$
\end{enumerate}
We combine \textbf{1.} and \textbf{2.} and see that the largest fixed point of \eqref{appD_NegishiD} coincides with the largest fixed point of the sup-inf formula.

\section{Finite horizon forward looking constraints} \label{app:finho}
In the general formulation, incentive constraints only involve finite periods. In this section, we consider the two-period case. The incentive constraints in the deterministic problems are replaced by
$$
\mathbb{E}_{s_t}\sum_{n=0}^{1}\beta^ng^i(x_{t+n},a_{t+n}(s^{t+n}),s_{t+n})\ge \bar{g}^i,\quad \forall t\in\mathbb{N},\,\forall s\in \mathcal{S}^t,\,\forall i\in\{1,\cdots,I\},
$$
and the deterministic problem is\footnote{Note that the objective function is also different from problem \eqref{equ:CK}.}
\begin{equation}\label{equ:CK_2p}
    \begin{aligned}
      \max_{(a_t(s^t))\in \mathcal{A}^{\infty}\subset \ell^{\infty}} &\sum_{i=1}^{I}\gamma^ig^i(x_0,a_0(s^0),s_0)+\mathbb{E}_{s_0}\sum_{t=0}^{\infty}\beta^tr(x_t,a_t(s^t),s_t)\\
      \textbf{s.t. }&\mathbb{E}_{s_t}\sum_{n=0}^{1}\beta^n g^i(x_{t+n},a_{t+n}(s^{t+n}),s_{t+n})\ge\bar{g}^i,\quad \forall t\in\mathbb{N},\,\forall s^t\in \mathcal{S}^t,\,\forall i\in\{1,\cdots,I\},\\
      \text{where }&x_{t+1}=\zeta(x_t,a_t(s^t),s_t)\text{ and }p(x_t,a_t(s^t),s_t)\ge 0,\quad\forall t\in \mathbb{N},\,s^t\in \mathcal{S}^t
    \end{aligned}
\end{equation}
The Lagrangian is then defined as
\begin{equation}\label{eq:Lag_CKextend_2p}
\begin{aligned}
    &L((a_t(s^t)),(\lambda_t^i(h^t));(\gamma^i),x_0,s_0)\\
=&\sum_{i=1}^{I}\gamma^ig^i(x_0,a_0(s^0),s_0)+\mathbb{E}_{s_0}\sum_{t=0}^{\infty}\beta^t\biggl(r(x_t,a_t(s^t),s_t)\biggr.\\
&\left.+\lambda_t^i(s_0,a_0(s_0),\cdots,s_{t-1},a_{t-1}(s^{t-1}),s_t)\left(\sum_{n=0}^{1}\beta^n g^i(x_{t+n},a_{t+n}(s^{t+n}),s_{t+n})-\bar{g}^i\right)\right).\\
\end{aligned}
\end{equation}
\begin{theorem}\label{thm:recursive_extend}
    For any $x\in \mathcal{X},\,s\in \mathcal{S},\,\gamma\in\mathbb{R}_+^{I}$, the dual value function $D(x,\gamma,s_0)$ defined as
    $$
D(\gamma,x,s_0):=\inf_{(\lambda_t^i(h^t))\in\Lambda}\sup_{(a_t(s^t))\in\tilde{\mathcal{A}}^{\infty}}L((a_t(s^t)),(\lambda_t^i(h^t));(\gamma^i),x_0,s_0)
    $$
    satisfies the following recursive equation
    \begin{equation}\label{equ:recursive_dual_extend55}
    \begin{aligned}
        &D(\gamma,x,s)=\inf_{\lambda\in \mathbb{R}_+^I}\sup_{a\in\tilde{\mathcal{A}}(x,s)}\left[\left(r(x,a,s)+\sum_{i=1}^{I}\left( \gamma^ig^i(x,a,s)+\lambda^i(g^i(x,a,s)-\bar{g}^i)\right)\right)+\beta\mathbb{E}_{s}D(\lambda,x',s')\right],\\
        \text{where }&x'=\zeta(x,a,s).
    \end{aligned}
    \end{equation}
\end{theorem}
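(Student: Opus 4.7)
The plan is to mirror the proof of Theorem \ref{thm:recursive1}, adjusting for the fact that under a two-period horizon each multiplier $\lambda_t^i(h^t)$ multiplies $g^i$ only at dates $t$ and $t{+}1$, and that the inherited $\gamma^i$ in \eqref{eq:Lag_CKextend_2p} enters only the period-$0$ term of the objective. First I would expand $\lambda_t^i(h^t)\sum_{n=0}^{1}\beta^n g^i(x_{t+n},a_{t+n},s_{t+n})$ inside the Lagrangian and regroup by the time index of the function $g^i$. The coefficient of $g^i(x_0,a_0,s_0)$ becomes $\gamma^i+\lambda_0^i$, while for $t\ge 1$ the coefficient of $\beta^t g^i(x_t,a_t,s_t)$ is $\lambda_{t-1}^i(h^{t-1})+\lambda_t^i(h^t)$. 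Only $\lambda_0^i$, not $\gamma^i+\lambda_0^i$, survives into the continuation problem.

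Next, following \eqref{equ:proof_recursive_mid1}--\eqref{equ:proof_recursive_mid2}, I would split $L$ into the period-$0$ block
$$ h_1(\lambda_0,a_0)=r(x_0,a_0,s_0)+\sum_{i=1}^I\Bigl[(\gamma^i+\lambda_0^i)g^i(x_0,a_0,s_0)-\lambda_0^i\bar g^i\Bigr] $$
and a continuation block $h_2$. After factoring out $\beta$ and conditioning on $s_1$, the continuation block has exactly the form of the Lagrangian \eqref{eq:Lag_CKextend_2p} with initial condition $(x_1,s_1)$ and inherited multiplier $\lambda_0$. I would then invoke Lemma \ref{lem:tech_infsup} — viewing the tail multipliers $(\lambda_t^i(h^t))_{t\ge 1}$ as maps defined on the first-period action $a_0$, with values in $\tilde\Lambda$ — to commute $\inf_{(\lambda_t^i)_{t\ge 1}}$ with $\sup_{a_0}$ exactly as in \eqref{equ:proof_recursive_mid2}.

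Third, repeating the conditioning computation in \eqref{equ:proof_recursive_mid3} and using stationarity of the Markov chain together with the fifth item of Assumption \ref{ass:dynamic} to ensure finiteness, I would obtain
$$ \inf_{(\lambda_t^i(h^t))_{t\ge 1}}\sup_{(a(s^t))_{t\ge 1}} h_2 \;=\; \beta\,\mathbb{E}_{s}\,D(\lambda_0,x',s'), $$
where the key and only substantive departure from Theorem \ref{thm:recursive1} is that the inherited multiplier fed into the next-period value function is $\lambda_0$ rather than $\gamma+\lambda_0$, precisely because the two-period constraint starting at $t=0$ has no residual effect past $t=1$. Combining with $\inf_{\lambda_0}\sup_{a_0}h_1$ yields \eqref{equ:recursive_dual_extend55}. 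The main obstacle is purely bookkeeping: one must carefully track which $g^i$ receive the inherited $\gamma^i$ and ensure that no $\lambda_t^i$ with $t\ge 1$ enters the period-$0$ block. The measure-theoretic estimates underpinning the interchange of $\inf$ and $\sup$ and the decomposition transfer verbatim from Theorem \ref{thm:recursive1}, and the tail estimates analogous to those in the proof of Lemma \ref{lem:verification_largestfp} in fact become easier because each multiplier influences only two periods.
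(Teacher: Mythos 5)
Your proposal is correct and follows essentially the same route as the paper: the paper's proof performs exactly the algebraic regrouping you describe (isolating the period-$0$ block $r+\sum_i(\gamma^i g^i+\lambda_0^i(g^i-\bar g^i))$ and recognizing the continuation as the same Lagrangian started at $(x_1,s_1)$ with inherited multiplier $\lambda_0$ rather than $\gamma+\lambda_0$), and then appeals to the interchange argument of Theorem \ref{thm:recursive1} via Lemma \ref{lem:tech_infsup}. Your bookkeeping of the coefficients ($\gamma^i+\lambda_0^i$ at $t=0$, $\lambda_{t-1}^i+\lambda_t^i$ for $t\ge 1$) and the identification of $\lambda_0$ as the sole multiplier passed forward match the paper's computation.
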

\begin{proof}
By straightforward algebric simplification, we have
  \begin{equation}
\begin{aligned}
    &L((a_t(s^t)),(\lambda_t^i(h^t));(\gamma^i),x_0,s_0)\\
=&\sum_{i=1}^{I}\gamma^ig^i(x_0,a_0(s^0),s_0)+\mathbb{E}_{s_0}\sum_{t=0}^{\infty}\beta^t\biggl(r(x_t,a_t(s^t),s_t)\biggr.\\
&\left.+\sum_{i=1}^I \lambda_t^i(s_0,a_0(s_0),\cdots,s_{t-1},a_{t-1}(s^{t-1}),s_t)\left(\sum_{n=0}^{1}\beta^n g^i(x_{t+n},a_{t+n}(s^{t+n}),s_{t+n})-\bar{g}^i\right)\right)\\
=&\left(r(x_0,a_0,s_0)+\sum_{i=1}^{I} \left( \gamma^ig^i(x_0,a_0,s_0)+\lambda^i_0(g^i(x_0,a_0,s_0)-\bar{g})\right) \right)\\
&+\beta\mathbb{E}_{s_0}\left(\sum_{i=1}^{I}\lambda_0^ig^i(x_1,a_1,s_1)+\mathbb{E}_{s_1}\sum_{t=1}^{\infty}\beta^{t-1}\biggl(r(x_t,a_t(s^t),s_t)\right.\\
&+\left.\left. \sum_{i=1}^I\lambda_t^i(s_0,a_0(s_0),\cdots,s_{t-1},a_{t-1}(s^{t-1}),s_t)\left(\sum_{n=0}^{1}\beta^n g^i(x_{t+n},a_{t+n}(s^{t+n}),s_{t+n})-\bar{g}^i\right)\right)\right).
\end{aligned}
\end{equation}  
The rest of the proof follows the same procedure as in the proof for Theorem \ref{thm:recursive1}, and we omit the details here.
\end{proof}

\end{appendices}
\end{document}